\documentclass[11pt]{amsart}

\usepackage{amsmath,amssymb,amsthm}
\usepackage{mathtools}
\usepackage{enumerate}
\usepackage[hidelinks]{hyperref}
\usepackage{cleveref}
\usepackage{tikz}
\usetikzlibrary{positioning}
\usepackage{algorithm}
\usepackage{algpseudocode}
\usepackage{booktabs}

\newtheorem{theorem}{Theorem}[section]
\newtheorem{lemma}[theorem]{Lemma}
\newtheorem{proposition}[theorem]{Proposition}
\newtheorem{corollary}[theorem]{Corollary}

\theoremstyle{definition}
\newtheorem{definition}[theorem]{Definition}
\newtheorem{example}[theorem]{Example}
\newtheorem{remark}[theorem]{Remark}

\newcommand{\ZZ}{\mathbb{Z}}
\newcommand{\QQ}{\mathbb{Q}}
\newcommand{\RR}{\mathbb{R}}

\newcommand{\KK}{\mathbb{K}}
\newcommand{\PM}{\mathrm{PM}}
\newcommand{\sgn}{\operatorname{sgn}}
\newcommand{\ord}{\operatorname{ord}}
\newcommand{\col}{\operatorname{col}}

\DeclareMathOperator{\rk}{rk}

\title[Bipartite Exact Matching in P]{Bipartite Exact Matching in $\mathrm{P}$}
\author{Yuefeng Du\footnote{Email: \texttt{yf.du@cityu.edu.hk}}\\
{\upshape\mdseries Department of Computer Science\\ City University of Hong Kong}}
\date{}
\begin{document}

\begin{abstract}
The \emph{Exact Matching} problem asks whether a bipartite graph with edges colored red and blue admits a perfect matching with exactly~$t$ red edges.
Introduced by Papadimitriou and Yannakakis in 1982, the problem has resisted deterministic polynomial-time algorithms for over four decades, despite admitting a randomized solution via the Schwartz--Zippel lemma since 1987.

We establish the Affine-Slice Nonvanishing Theorem (ASNC) for all bipartite braces: a Vandermonde-weighted determinant polynomial is nonzero whenever the exact-$t$ fiber is nonempty.
This yields a deterministic $O(n^6)$ algorithm for Exact Matching on all bipartite graphs via the tight-cut decomposition into brace blocks.
The proof proceeds by structural induction on McCuaig's brace decomposition.
We handle the McCuaig exceptional families via a parity-resolved cylindric-network positivity argument, the replacement determinant algebra, and the narrow-extension cases (KA, $J3 \to D1$).
For the superfluous-edge step, we introduce two closure tools: a matching-induced Two-extra Hall theorem that resolves the rank-$(m{-}2)$ branch via projective-collapse contradiction, and a distinguished-state $q$-circuit lemma that eliminates the rank-$(m{-}1)$ branch entirely by showing that any minimal dependent set containing the superfluous state forces rank $m{-}2$.
A Lean~4 formalization accompanies the paper.
The formalization reduces the main theorem to eight explicit hypotheses corresponding to results proved here and in McCuaig~(2001), with all algebraic tools, the induction skeleton, and the combinatorial infrastructure fully machine-checked.
\end{abstract}

\maketitle

\tableofcontents

\section{Introduction}\label{sec:intro}

\subsection{The Exact Matching problem}

Let $G = (A \cup B, E)$ be a bipartite graph with $|A| = |B| = n$, equipped with an edge coloring $\rho \colon E \to \{0,1\}$ (blue/red) and a target integer~$t$.
The \emph{Exact Matching} (EM) problem asks:

\begin{quote}
Does $G$ admit a perfect matching $M$ with exactly $t$ red edges, i.e., $|\{e \in M : \rho(e) = 1\}| = t$?
\end{quote}

This problem was introduced by Papadimitriou and Yannakakis~\cite{PY82} in 1982, who conceived it while studying constrained spanning tree problems.
At the time of its introduction, the problem was conjectured to be NP-hard.
A few years later, Mulmuley, Vazirani, and Vazirani~\cite{MVV87} gave a randomized polynomial-time algorithm using the Schwartz--Zippel lemma~\cite{SZ80,Zippel79} and a weight isolation technique.
This placed Exact Matching in the complexity class~$\mathrm{RP}$, making it one of the few natural problems known to separate $\mathrm{P}$ from~$\mathrm{RP}$ conditionally.

Despite sustained effort over four decades, no deterministic polynomial-time algorithm has been found for the general case.
Deterministic progress has been confined to special graph classes and parameterized regimes: planar graphs~\cite{Yuster2012}, exact weight matching in bipartite graphs with bounded neighborhood diversity or bandwidth~\cite{ElMaalouly2025}, correct-parity matching~\cite{ElMaalouly2023}, and dense graphs~\cite{ElMaalouly2024STACS}.
On the algebraic side, Fenner, Gurjar, and Thierauf~\cite{Fenner2003} placed bipartite perfect matching in quasi-NC.
Gurjar and Thierauf~\cite{GurjarThierauf2020} gave a deterministic parallel algorithm, and Svensson and Tarnawski~\cite{Svensson2017} extended the quasi-NC result to general graphs.
Karzanov~\cite{Karzanov87} studied exact weight matching in complete bipartite graphs.
In the online setting, bipartite matching admits an optimal competitive ratio of $1 - 1/e$~\cite{KVV90}.
The problem remains a natural candidate for testing the $\mathrm{P} = \mathrm{RP}$ hypothesis and sits at the intersection of algebra, combinatorics, and complexity theory.

\subsection{Our result}

\begin{theorem}[Main Theorem]\label{thm:main}
There exists a deterministic algorithm that solves the Exact Matching problem on bipartite graphs in $O(n^3)$ determinant evaluations, giving $O(n^6)$ arithmetic complexity.
\end{theorem}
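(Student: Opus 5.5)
The plan is to reduce Exact Matching to an algebraic nonvanishing test. For a bipartite graph $G$ with biadjacency structure, consider the symbolic matrix $X_G(y)$ whose $(a,b)$ entry is $w_{ab}\,y^{\rho(ab)}$ if $ab\in E$ and $0$ otherwise. Here the $w_{ab}$ are deterministic weights to be fixed later. Expanding, we get
\[
\det X_G(y)=\sum_{t=0}^{n} y^t \sum_{M\in\PM_t(G)} \sgn(M)\prod_{e\in M} w_e ,
\]
where $\PM_t(G)$ is the exact-$t$ fiber. So it suffices to choose explicit weights for which the coefficient of $y^t$ is nonzero whenever $\PM_t(G)\neq\emptyset$. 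Once such weights are fixed, the algorithm evaluates $\det X_G(y)$ at $n+1$ distinct values of $y$ and recovers all coefficients by Lagrange interpolation. Each evaluation is one $O(n^3)$ determinant.

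The choice of weights is where the Affine-Slice Nonvanishing Theorem enters. I would take the Vandermonde-type weights announced in the abstract, with $w_e$ a power of a node determined by the endpoints of $e$. I would invoke ASNC to conclude that, on a brace, the $y^t$-coefficient is nonzero exactly when the fiber is nonempty.

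To pass from braces to arbitrary bipartite graphs, the steps are as follows.
\begin{enumerate}[(i)]
\item Delete edges lying in no perfect matching. This is done in polynomial time via one perfect-matching computation plus strongly connected components of the alternating digraph.
\item Compute the tight-cut decomposition into brace blocks. This takes $O(n)$ cuts, each found with polynomially many matching or determinant calls.
\item Run the interpolation test on each block for every target value. This accounts for the $O(n)\cdot O(n)$ factor and the claimed $O(n^3)$ determinant evaluations overall.
\end{enumerate}

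The step I expect to be the main obstacle, apart from ASNC itself, is recombining the blocks. A perfect matching of $G$ does \emph{not} decompose into independent choices on the two shores of a tight cut. It restricts to perfect matchings of both contractions that agree on the unique cut edge. Red counts therefore add, but only over pairs that are compatible at that edge.

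I would handle this by a dynamic program over the tight-cut decomposition tree. For each block and each edge $f$ of the block that is identified with a contraction vertex, the program records the set $S(f)\subseteq\{0,\dots,n\}$ of red counts achievable by perfect matchings of the block that use $f$. Each set is obtained by running the ASNC test on the block with $f$ forced, that is, with the endpoints of $f$ deleted. Sets are then combined by bounded subset-sum along the tree, correcting for the shared cut edge so that its color is not counted twice. All sets have size at most $n+1$, so this step is polynomial.

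Whether this stays within $O(n^3)$ determinant evaluations, rather than $O(n^4)$, depends on amortizing the forced-edge tests across blocks. I would check this carefully.

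The genuine mathematical risk lies entirely in ASNC. The Schwartz--Zippel method only guarantees nonvanishing for random weights. Cancellation among the signed terms $\sgn(M)\prod_{e\in M} w_e$ within a single fiber must be ruled out for one fixed explicit weighting, uniformly over all braces. Any gap in the McCuaig induction would break the theorem, in particular in the superfluous-edge step, where a new edge may create sign cancellations. So I would scrutinize that case first, testing it on small braces where the fiber has many perfect matchings of opposite signs.
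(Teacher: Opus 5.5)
Your diagnosis of the recombination step is correct, and it is sharper than the paper's own argument. The paper (Step~3 of \Cref{prop:asnc-implies-em} and Step~1 of \Cref{thm:reduction}) treats the brace blocks as independent and runs a knapsack over $t_1+\cdots+t_s=t$. Its ``tight cut'' with $|B_1|=|A_1|$ and $E(A_1,B_2)=\varnothing$ can exist only when the matching-covered graph is disconnected, because any edge from $A_2$ to $B_1$ would then lie in no perfect matching. Genuine tight cuts have $|B_1|=|A_1|+1$, and the two contractions share the cut edge. A concrete failure of the knapsack: take the hexagon $a_1b_1a_2b_2a_3b_3$ with $a_1b_1$ and $a_3b_2$ red. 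Both perfect matchings have exactly one red edge. Yet the two $C_4$ contractions along the tight cut $\partial(\{a_1,b_1,a_2\})$ each realize red counts $\{0,1\}$, so the paper's knapsack accepts $t=0$ and $t=2$. A per-cut-edge join of the kind you describe is therefore needed, and the paper does not supply one.

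The gap is in how you fill that join; there are three problems.

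\emph{Certification.} Computing $S(f)$ for $f=uv$ means testing the one-hole polynomials $P^{u,v}_{G_j,s}$, i.e.\ the exact-$s$ polynomials of $G_j-\{u,v\}$. That graph is in general not a brace: a degree-$3$ neighbour of $u$ or $v$ drops to degree~$2$, which creates a nontrivial tight cut. \Cref{conj:asnc} covers only braces, and the paper explicitly declines to claim it beyond them. So $P^{u,v}_{G_j,s}\equiv 0$ does not certify $s\notin S(f)$, and your DP can reject yes-instances. You would need a one-hole ASNC for edges of braces, or some other way to read per-edge counts off brace-level tests. Notably, \Cref{lem:designated-active} silently uses exactly this one-hole statement for $H-\{r,c\}$, which the brace-only induction does not provide, and \Cref{app:lean} lists one-hole nonvanishing among the hypotheses it assumes rather than proves. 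So your distrust of the superfluous-edge step is well placed. Recursing your whole algorithm on $G_j-\{u,v\}$ would be correct by induction on $n$, but it comes with no polynomial bound.

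\emph{Joint forcing.} Single-edge sets are not enough, because a block with several contracted vertices correlates the cut edges it uses. Take $C_8=a_1b_1\cdots a_4b_4$ with $a_1b_1$ and $a_3b_3$ red, split along the tight cuts around $\{a_1,b_1,a_2\}$ and $\{a_3,b_3,a_4\}$. The middle $C_4$ rigidly pairs the edges at its two contracted vertices. Combining the outer blocks' per-edge sets independently therefore accepts $t=1$, although only $\{0,2\}$ is realized. The DP thus needs counts with all contracted-vertex edges of a block forced jointly. Those are multi-hole minors, over as many as $\prod_k\deg(v_k)$ edge tuples per block.

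\emph{Counting.} The $O(n^3)$ determinant count comes from $\binom{n}{2}+1$ values of $\lambda$ times $n+1$ values of $x$, not from targets. ASNC certifies $P_t\not\equiv 0$ only as a polynomial in $\lambda$, not at any fixed weighting, and one $x$-interpolation yields all targets at once.

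As it stands, your plan establishes neither correctness on non-brace graphs nor the stated bound.
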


\noindent
The algorithm is based on a new algebraic identity.
Define the \emph{Vandermonde-weighted matrix}
\[
M_G(x,\lambda)_{ij} =
\begin{cases}
x^{\rho(i,j)} (\lambda + i)^j, & \text{if } (i,j) \in E, \\
0, & \text{otherwise},
\end{cases}
\]
where rows are indexed by $A = \{0, \ldots, n-1\}$ and columns by $B = \{0, \ldots, n-1\}$.
Let $D_G(x,\lambda) = \det M_G(x,\lambda)$ and define the \emph{exact-$t$ Vandermonde polynomial}
\[
P_t(\lambda) = [x^t]\, D_G(x,\lambda) = \sum_{\substack{\sigma \in \PM(G) \\ |\{i : \rho(i,\sigma(i))=1\}| = t}} \sgn(\sigma) \prod_{i=0}^{n-1} (\lambda+i)^{\sigma(i)}.
\]

The exact-$t$ fiber $F_{G,t} = \{\sigma \in \PM(G) : \text{red count} = t\}$ satisfies: $F_{G,t} = \varnothing$ implies $P_t \equiv 0$ (immediate from the Leibniz expansion). The converse, that a nonempty fiber forces $P_t \not\equiv 0$, is the content of:

\begin{theorem}[Affine-Slice Nonvanishing, ASNC]\label{conj:asnc}
For every bipartite brace $G$ on $n+n$ vertices, every edge coloring $\rho$, and every target~$t$, if the exact-$t$ fiber is nonempty, then $P_t(\lambda) \not\equiv 0$.
\end{theorem}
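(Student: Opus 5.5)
The plan follows the structure the abstract announces: structural induction along McCuaig's brace generation theorem. Every brace other than the base families arises from a smaller brace by a \emph{strict extension}, which either adds an edge or expands a vertex. So it suffices to verify the claim on the base families and show that nonvanishing of $P_t$ survives each extension step. Before starting the induction, I would isolate an algebraic handle on $P_t$.

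Every term $\prod_i(\lambda+i)^{\sigma(i)}$ has the same degree $N=\binom{n}{2}$. Hence
\[
P_t(\lambda)=\sum_{k\ge 0} c_k\,\lambda^{N-k},\qquad c_k=\sum_{\sigma\in F_{G,t}}\sgn(\sigma)\,e_k\bigl(\underbrace{i,\dots,i}_{\sigma(i)\text{ times}}: 0\le i<n\bigr),
\]
so $P_t\equiv 0$ if and only if all these signed moments of the fiber vanish. A second, complementary handle is evaluation at $\lambda=-i_0$. This kills every $\sigma$ with $\sigma(i_0)>0$, leaving a signed sum over matchings that use the edge $(i_0,0)$. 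More generally, the order of vanishing of $P_t$ at $-i_0$ equals the minimum of $\sigma(i_0)$ over the surviving fiber, provided the corresponding lowest-order coefficient is nonzero. I would use this as the inductive device: pick $i_0$ and $j_0=\min\{\sigma(i_0):\sigma\in F_{G,t}\}$. The coefficient of $(\lambda+i_0)^{j_0}$ is then a weighted exact-$t'$ signed sum on $G-\{i_0,j_0\}$ with shifted Vandermonde weights $(k-i_0)^{\sigma(k)}$, and these weights are again specialisations of the same kind.

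For the base families (bicycle wheels, prisms, M\"obius ladders, and the other McCuaig exceptional families), I would write the matchings as nonintersecting path systems in a cylindric network, in the Lindstr\"om--Gessel--Viennot manner. The aim is to show that the signed sum restricted to a fixed red count and a fixed parity class has all terms of one sign after the substitution $\lambda\mapsto\lambda-n$, so that the leading surviving coefficient is positive.

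For the vertex-expansion step, I would use the replacement-determinant identity: expanding a vertex into a path of length two multiplies $M_G$ by an elementary transformation up to a bordered determinant. This should let me transfer $P_t^{G}\not\equiv0$ to $P_t^{G'}$, possibly after a shift of $t$.

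The edge-addition step $G'=G+e$ is where I expect the main obstacle. Here $P_t^{G'}=P_t^{G}+Q$, where $Q$ collects the fiber elements using $e$, and these can cancel $P_t^G$. My first attempt is to exploit that $e=(a,b)$ appears with the factor $(\lambda+a)^b$. I would evaluate at, or differentiate toward, $\lambda=-a$ to separate the two contributions by order of vanishing, and argue that the minimal-order terms cannot all cancel. Arguing this requires a rank argument on the space of coefficient vectors of fiber elements, which is the ``state'' linear algebra in the abstract: one shows that a dependency involving the new state forces a rank drop, which contradicts Hall-type matching structure in the brace.

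I am genuinely uncertain that this step closes. Univariate specialisation $y_i=\lambda+i$ can in principle create cancellation that is absent in the multivariate generic weighting. So before committing to the full induction, I would check small braces such as $K_{3,3}$ and the cube with adversarial colourings by direct computation.
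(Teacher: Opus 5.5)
Your proposal is a plan rather than a proof, and the step you single out is a genuine gap that the device you suggest cannot close.

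\textbf{Edge addition.} For $G'=G+e$ with $e=(a,b)$ one has exactly $P_t(G')=P_t(G)+\varepsilon(\lambda+a)^b P^{a,b}_{G,t-\rho(e)}$ with $\varepsilon=(-1)^{a+b}$ (\Cref{prop:se-identity}). So cancellation means precisely that $(\lambda+a)^b$ divides $P_t(G)$ with quotient $-\varepsilon P^{a,b}_{G,t-\rho(e)}$ (\Cref{thm:sb2-bad}). The hypothesis $P_t(G)\not\equiv 0$ gives no control over the $(\lambda+a)$-adic order of $P_t(G)$.

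The lowest-order coefficient you want to recurse on is also not ``a specialisation of the same kind''. It is a single number, $\pm P^{i_0,j_0}_{G,t-\rho(i_0,j_0)}(-i_0)$, and this fails on three counts. Nonvanishing of a polynomial does not prevent vanishing at a point. $G-\{i_0,j_0\}$ is in general not a brace. The minor carries the inherited bases $\lambda+k$ ($k\neq i_0$) and exponents $j\neq j_0$ rather than standard labels. So it lies outside the statement you are inducting on.

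The paper (writing $e=(r,c)$) does not argue by order of vanishing; it carries the boundary-minor state. By \Cref{lem:replacement-det}, duplicate columns give $w_d^\top U^{(c)}=0$ for $d\neq c$, and $P_t(G)\equiv 0$ gives $(w_c^G)^\top U^{(c)}=0$. Hence the exact-layer cofactor vector annihilates the whole packet and has the superfluous state $(r,\rho(e))$ in its support. This is then excluded by \Cref{prop:q-circuit-rankm1-impossible} and \Cref{thm:two-extra-hall-closure}, which turn perfect matchings of support graphs into nonzero minors through \Cref{prop:masked-minor}. Your ``rank drop contradicting Hall structure'' points this way, but it names neither the kernel vector, nor the packet, nor a matching-to-minor criterion.

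Do not treat that machinery as a drop-in fix, though. \Cref{lem:designated-active} applies the induction hypothesis to $H-\{r,c\}$, which has exactly the scope problem above. The proof of \Cref{prop:masked-minor} infers $\mu_{k_0}\nmid M_{k_0,1}$ from the absence of $\mu_{k_0}$ in the Leibniz terms, and that inference is false in general: $\lambda(\lambda+2)^3-\lambda^3(\lambda+2)=4\lambda(\lambda+1)(\lambda+2)$.

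\textbf{Vertex expansion.} The weights are tied to labels: row $i$ has base $\lambda+i$ and column $j$ has exponent $j$. Splitting a vertex into two rows with distinct bases and inserting a new column changes the specialisation itself. So $M_{G'}$ is not an elementary transformation of a bordered $M_G$, and nothing transfers from $P_t^{G}\not\equiv 0$. The paper does not use the generation theorem here. It deletes superfluous edges and then treats minimal braces through McCuaig's narrow pairs, by explicit elimination (\Cref{cor:ka,cor:j3-to-d1}).

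\textbf{Base families.} Each $\prod_i(\lambda+i)^{\sigma(i)}$ already has nonnegative coefficients. Sign-coherence inside a parity class is therefore automatic, with or without $\lambda\mapsto\lambda-n$. It says nothing about cancellation between the even and odd classes, which is the entire difficulty. The paper's parity-resolved $4\times 4$ minor in \Cref{thm:width2-global}, reached via \Cref{thm:width2-path,thm:width2-cyclic}, is aimed at exactly this cross-class cancellation.

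Your closing caveat is the right instinct. The paper itself reports a nonempty $S\subset S_5$ with $P_S\equiv 0$. Any correct argument must therefore use the exact-$t$ and brace structure precisely where cancellation is excluded, and none of your three steps does so yet.
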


\begin{remark}
The proof handles McCuaig's exceptional families, narrow extensions (KA, $J3 \to D1$), and both rank branches of the superfluous-edge step: the rank-$(m{-}2)$ branch via the Two-extra Hall closure, and the rank-$(m{-}1)$ branch via the $q$-circuit lemma (\Cref{prop:q-circuit-rankm1-impossible}).
\end{remark}

\begin{remark}[ASNC scope]\label{rem:decision-reduction}
The $O(n^6)$ algorithm solves Exact Matching for all bipartite graphs, brace or not.
ASNC itself is stated and proved only for braces.
The gap is bridged by block-independent testing: the tight-cut decomposition splits any matching-covered bipartite graph into brace blocks, each tested separately via ASNC.
The top-level answer is recovered by a knapsack DP over target partitions.
ASNC is not claimed for non-brace graphs, since the convolution $P_{G,t} = \sum_{t_1+t_2=t} P_{G_1,t_1} \cdot P_{G_2,t_2}$ can cancel even when individual factors are nonzero.
See Step~1 of the proof of \Cref{thm:reduction} for details.
\end{remark}

\noindent
The proof proceeds by structural induction on brace graphs, with the key closure step being a matching-induced two-extra Hall theorem (\Cref{thm:two-extra-hall-closure}).

\Cref{thm:main} follows from \Cref{conj:asnc} via the decomposition-based algorithm of \Cref{prop:asnc-implies-em}: compute the tight-cut decomposition into brace blocks, test each block for ASNC nonvanishing, and reconstruct the answer by dynamic programming over target partitions.
Since ASNC is now proved for all braces (\Cref{conj:asnc}), the algorithm is unconditional.

\subsection{Proof overview}

The proof of ASNC (\Cref{conj:asnc}) proceeds by structural induction on bipartite matching-covered graphs, following the brace-theoretic decomposition framework of McCuaig~\cite{McCuaig2001}.

\paragraph{Step 1: Tight-cut reduction.}
Using Hall-block convolution (\Cref{thm:hall-block}), we reduce to brace graphs (bipartite graphs satisfying the Hall$+2$ condition: $|N(S)| \geq |S| + 2$ for all proper nonempty $S$).
This step is an exact determinant factorization that holds for all~$n$.

\paragraph{Step 2: Minimal vs.\ nonminimal braces.}
For \emph{minimal braces} (those where deleting any edge destroys the brace property), McCuaig's theory provides a classification into narrow local extension types (stable extension, index-1, index-2) and three exceptional families (biwheels, prisms, M\"obius ladders).
We handle:
\begin{itemize}
\item Stable extension / index-2: via \emph{two-node interpolation} (KA, \Cref{cor:ka}), using the coprimality of distinct Vandermonde bases.
\item Index-1: via \emph{three-node Vandermonde elimination} (\Cref{cor:j3-to-d1}), reducing to same-base divisibility.
\item McCuaig's exceptional families: biwheels via jet extraction at a unique-minimizer row, prisms and M\"obius ladders via a parity-resolved cylindric-network positivity argument (\Cref{sec:mccuaig}).
\end{itemize}

For \emph{nonminimal braces}, the key step is the superfluous-edge argument.
A superfluous edge $e = (r,c)$ in a brace can be deleted while preserving the brace property.
The exact identity
\[
P_t(G) = P_t(H) + \varepsilon (\lambda+r)^c P^{r,c}_{H, t-\rho(e)}
\]
where $H = G \setminus e$ relates the polynomials of $G$ and~$H$.

\paragraph{Step 3: The superfluous-edge step (\Cref{thm:p84}).}
This is the heart of the paper.
Assuming $P_t(G) \equiv 0$ for contradiction with $F_{G,t} \neq \varnothing$, a packet rank dichotomy arises (\Cref{prop:p92}): the lower packet $V_T$ on a minimal dependent set $T$ containing the superfluous state has rank $m{-}1$ or $m{-}2$.
Two tools resolve both branches:
\begin{enumerate}[(i)]
\item The \emph{$q$-circuit lemma} (\Cref{prop:q-circuit-rankm1-impossible}) eliminates the rank-$(m{-}1)$ branch entirely.
If $\rk(V_T) = m{-}1$, then Hall's condition holds for all subsets of $T$ in the support graph of the augmented packet, forcing $\rk(B_T) = m$ by the masked minor criterion, contradicting minimal dependence ($\rk(B_T) = m{-}1$).
\item The \emph{Two-extra Hall closure} (\Cref{thm:two-extra-hall-closure}) handles the rank-$(m{-}2)$ branch.
The exact-$t$ perfect matching of $G$ using $e$ induces a support-graph matching for the augmented packet, yielding a nonzero pair minor $M_{w_c^G, w_d} \neq 0$ for each upper active column $d \in D^+$.
This contradicts the projective collapse required by rank $m{-}2$.
\end{enumerate}

\subsection{Proof architecture}

The proof combines three ingredients that have not previously appeared together:
\begin{enumerate}
\item \emph{Graph-structural induction} on McCuaig's brace decomposition tree, proving ASNC for all braces via local moves (superfluous-edge deletion, narrow extensions). The tight-cut decision reduction then extends the algorithm to all bipartite graphs.
\item \emph{Algebraic elimination} via replacement determinants and generalized Vandermonde coprimality, converting local graph moves into polynomial divisibility contradictions.
\item \emph{Matching-induced two-extra Hall closure}, exploiting the perfect matching in the support graph of the actual $G$-packet to exclude the residual projective-collapse case.
\end{enumerate}

\begin{figure}[ht]
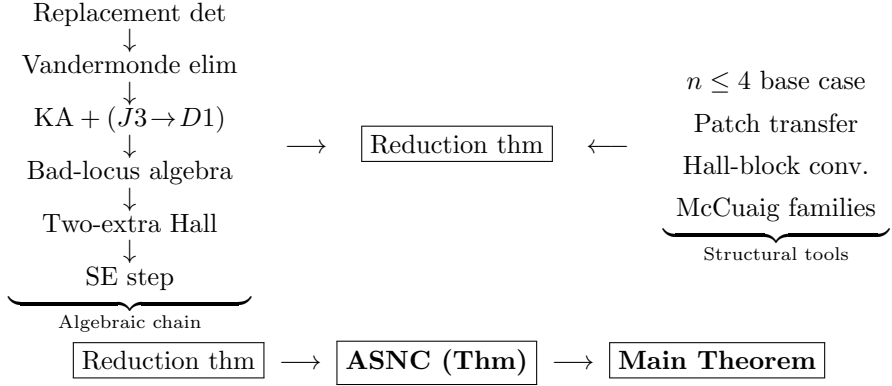

\centering
\small
\[
\underbrace{%
\begin{array}{c}
\text{Replacement det} \\[-2pt] \downarrow \\[-2pt]
\text{Vandermonde elim} \\[-2pt] \downarrow \\[-2pt]
\text{KA} + (J3 \!\to\! D1) \\[-2pt] \downarrow \\[-2pt]
\text{Bad-locus algebra} \\[-2pt] \downarrow \\[-2pt]
\text{Two-extra Hall} \\[-2pt] \downarrow \\[-2pt]
\text{SE step}
\end{array}}_{\text{Algebraic chain}}
\quad\longrightarrow\quad
\boxed{\text{Reduction thm}}
\quad\longleftarrow\quad
\underbrace{%
\begin{array}{c}
n \leq 4 \text{ base case} \\[4pt]
\text{Patch transfer} \\[4pt]
\text{Hall-block conv.} \\[4pt]
\text{McCuaig families}
\end{array}}_{\text{Structural tools}}
\]
\vspace{-4pt}
\[
\boxed{\text{Reduction thm}} \;\longrightarrow\; \boxed{\textbf{ASNC (Thm)}} \;\longrightarrow\; \boxed{\textbf{Main Theorem}}
\]
\caption{Proof dependency structure. The algebraic chain (left) and structural tools (right) both feed into the reduction theorem, yielding the main result.}
\label{fig:dependency}
\end{figure}

The base case ($n \leq 4$) is proved analytically (\Cref{thm:universal-small}).
The $n = 5$ exhaustive verification ($517$M checks, zero failures) provides independent computational evidence but is not part of the induction chain.
A Lean~4 formalization accompanies the paper (\Cref{app:lean}).
The formalization reduces the main theorem to eight explicit hypotheses corresponding to results proved here and in McCuaig~\cite{McCuaig2001}, with all algebraic tools, the induction skeleton, and the combinatorial infrastructure fully machine-checked.
See \Cref{app:lean} for the precise frontier.
The McCuaig exceptional families (biwheels, prisms, M\"obius ladders) are handled in \Cref{sec:mccuaig}: the biwheel case uses row-initial-form uniqueness, while prisms and M\"obius ladders reduce to a parity-resolved cylindric-network positivity argument via width-$2$ path and cyclic branch factorization.

\subsection{Organization}

\Cref{sec:prelim} establishes notation and the Vandermonde specialization.
\Cref{sec:framework} presents the structural induction framework based on brace theory.
\Cref{sec:replacement} develops the replacement determinant algebra.
\Cref{sec:mccuaig} handles the McCuaig exceptional families.
\Cref{sec:badlocus} establishes the bad-locus characterizations for the four operator shapes.
\Cref{sec:superfluous} proves the superfluous-edge step, completing the induction.
\Cref{sec:algorithm} describes the resulting algorithm.

\section{Preliminaries}\label{sec:prelim}

\subsection{Bipartite graphs and perfect matchings}

Throughout, $G = (A \cup B, E)$ denotes a bipartite graph with parts $A = \{0, \ldots, n-1\}$ (rows) and $B = \{0, \ldots, n-1\}$ (columns), equipped with an edge coloring $\rho \colon E \to \{0,1\}$.
A \emph{perfect matching} of~$G$ is a bijection $\sigma \colon A \to B$ such that $(i, \sigma(i)) \in E$ for all~$i$.
The existence of a perfect matching can be characterized by Tutte's theorem~\cite{Tutte47} or, in the bipartite case, by Hall's marriage theorem~\cite{Hall35}.
We write $\PM(G)$ for the set of perfect matchings of~$G$.
The matching polytope of~$G$ (the convex hull of characteristic vectors of perfect matchings) is a face of the Birkhoff polytope~\cite{Birkhoff46,Edmonds65}.

For a target $t \in \{0, \ldots, n\}$, the \emph{exact-$t$ fiber} is
\[
F_{G,t} = \Bigl\{\sigma \in \PM(G) : \sum_{i=0}^{n-1} \rho(i, \sigma(i)) = t \Bigr\}.
\]

\subsection{The Vandermonde specialization}

\noindent
The central algebraic object is the Vandermonde-weighted matrix, whose determinant separates exact-count matching fibers.

\begin{definition}[Vandermonde-weighted matrix]\label{def:vandermonde-matrix}
For a colored bipartite graph $(G, \rho)$, define the matrix $M_G(x, \lambda) \in \QQ[x,\lambda]^{n \times n}$ by
\[
M_G(x,\lambda)_{ij} =
\begin{cases}
x^{\rho(i,j)} (\lambda + i)^j, & \text{if } (i,j) \in E(G), \\
0, & \text{otherwise}.
\end{cases}
\]
\end{definition}

\begin{definition}[Exact-$t$ Vandermonde polynomial]\label{def:Pt}
The \emph{generating determinant} is $D_G(x,\lambda) = \det M_G(x,\lambda)$, and the \emph{exact-$t$ polynomial} is
\[
P_t(\lambda) = P_{G,t}(\lambda) = [x^t]\, D_G(x,\lambda).
\]
\end{definition}

Expanding the determinant:
\[
P_t(\lambda) = \sum_{\sigma \in F_{G,t}} \sgn(\sigma) \prod_{i=0}^{n-1} (\lambda+i)^{\sigma(i)}.
\]
Each term is a product of $n$ powers of distinct linear forms $(\lambda+0), (\lambda+1), \ldots, (\lambda+n-1)$ with exponents forming a permutation of a subset of $\{0, \ldots, n-1\}$.

\begin{proposition}[ASNC gives deterministic EM]\label{prop:asnc-implies-em}
By \Cref{conj:asnc}, Exact Matching on bipartite graphs with $n$ vertices per side can be solved deterministically in $O(n^6)$ arithmetic operations.
\end{proposition}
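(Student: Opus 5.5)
The plan is to turn \Cref{conj:asnc} into a decision procedure in three stages: a preprocessing stage that reduces to matching-covered graphs and then to brace blocks, a per-block nonvanishing test by polynomial interpolation, and a recombination stage by dynamic programming over red-count targets. The bound $O(n^6)$ should come from $O(n^3)$ determinant evaluations of $n\times n$ matrices at $O(n^3)$ arithmetic operations each.

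\emph{Preprocessing.} First I find one perfect matching (Hopcroft--Karp); if none exists, I answer ``no''. Next I delete every edge lying in no perfect matching. These are exactly the edges not in the fixed matching whose endpoints lie in different strongly connected components of the usual orientation, so this takes $O(n^2)$ time. Each remaining connected component is elementary, hence matching-covered. For each component I compute the tight-cut decomposition into brace blocks, which is polynomial and well within $O(n^6)$. I then invoke the Hall-block convolution (\Cref{thm:hall-block}) to express the achievable red counts of each component through those of its brace blocks. The precise bookkeeping I rely on is that the fiber of a tight-cut splitting is nonempty at $t$ iff there are $t_1+t_2=t$ with both side fibers nonempty, with edges crossing the cut attributed to exactly one side. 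This is the same convolution $P_{G,t}=\sum_{t_1+t_2=t}P_{G_1,t_1}P_{G_2,t_2}$ quoted in \Cref{rem:decision-reduction}, now read at the level of supports rather than of polynomials.

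\emph{Per-block test.} For a brace block $G'$ on $k+k$ vertices and each $t\in\{0,\dots,k\}$, \Cref{conj:asnc} says $F_{G',t}\neq\varnothing$ iff $P_{G',t}\not\equiv 0$; the converse direction is immediate from the Leibniz expansion. Since $\deg_x D_{G'}\le k$ and $\deg_\lambda D_{G'}\le \sum_j j = k(k-1)/2$, I evaluate $D_{G'}(x,\lambda)=\det M_{G'}(x,\lambda)$ at the grid $x\in\{0,\dots,k\}$, $\lambda\in\{0,\dots,k(k-1)/2\}$ of distinct rationals. This is $O(k^3)$ determinants, each computed by Gaussian elimination in $O(k^3)$ operations. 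Lagrange interpolation in $x$ then recovers the values $P_{G',t}(\lambda_s)$ for every $t$ at every node $\lambda_s$, at cost $O(k^2)$ per node, hence $O(k^4)$ in total. A polynomial of degree at most $k(k-1)/2$ is nonzero iff it is nonzero at one of these $k(k-1)/2+1$ nodes. So I obtain the feasible set $S(G')\subseteq\{0,\dots,k\}$. Summing over blocks, with $\sum k_i^3\le n^3$ up to the constant overhead of the decomposition, gives $O(n^3)$ determinant evaluations overall.

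\emph{Recombination.} I combine the feasible sets by a boolean knapsack: $S(G)=\{\sum t_i : t_i\in S(G_i)\}$, applied along the decomposition tree and then across components. Each sumset step costs $O(n^2)$, so this stage is negligible. The final answer is whether $t\in S(G)$. The step I expect to be the main obstacle is the correctness of this recombination across tight cuts. In a bipartite tight-cut contraction, the cut edges appear in both contracted graphs, and their colors must be counted exactly once. The fix I would try first is to recolor each cut edge as blue in one of the two contractions. I would then check, via the bijection between perfect matchings of $G$ and compatible pairs of matchings of the two contractions (which agree on the unique cut edge used), that the sumset is exact. The difficulty is that compatibility, meaning agreement on the cut edge, is an extra constraint. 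If it is not automatically absorbed, I would refine the DP to index feasible sets by the cut edge used, running the per-block test on the block with that edge forced. That costs at most an extra factor bounded by the cut size, and I would check that it still fits in $O(n^6)$ by charging each forced test to an edge.
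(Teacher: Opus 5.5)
Your preprocessing and per-block test match the paper's Steps~1--2: the same grid of $(k+1)(\binom{k}{2}+1)$ evaluation points, interpolation in $x$, and cost accounting. That part is fine. The gap is the recombination, and you located it correctly. The sumset rule is in fact false across genuine tight cuts, not merely unproved.

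Take $G=C_6$ with edges $a_1b_1,a_2b_1,a_1b_2,a_2b_3,a_3b_2,a_3b_3$, where $a_1b_1$ and $a_3b_3$ are red and the rest blue.
\begin{itemize}
\item The cut $\partial(X)$ with $X=\{a_1,a_2,b_1\}$ is tight. Its cut edges are $a_1b_2$ and $a_2b_3$, both blue, so recoloring them is irrelevant.
\item Both contractions are $4$-cycles with feasible set $\{0,1\}$.
\item In $G/\bar X$ the count $1$ is realized only through $a_2b_3$; in $G/X$ it is realized only through $a_1b_2$.
\end{itemize}
The sumset is therefore $\{0,1,2\}$, but both perfect matchings of $G$ have exactly one red edge, so the algorithm would wrongly accept $t=0$ and $t=2$.

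The paper's Step~3 asserts the same unqualified sumset (``matchings decompose along tight cuts''), so it does not supply the missing idea. The $d=0$ factorization it relies on in \Cref{thm:reduction} only covers the block-triangular case: discarding edges in no perfect matching and separating components. Your preprocessing already handles that case correctly. A genuine bipartite tight cut is the $d=1$ case of \Cref{thm:hall-block}. That case carries a sum over the crossing vertex $b\in B_1$, with factors $G[A_1,B_1\setminus\{b\}]$ and $G[A_2,B_2\cup\{b\}]$, neither of which is a brace.

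Your fallback does not close the gap either, for two reasons.
\begin{itemize}
\item Indexing by a cut edge $f=(u,v)$ means deciding whether a brace block $G'$ has a matching through $f$ with a prescribed red count, i.e.\ whether an exact fiber of $G'-u-v$ is nonempty. The polynomial test for this is nonvanishing of the one-hole polynomial $P^{u,v}_{G',s}$. Since $G'-u-v$ is not a brace, \Cref{conj:asnc} does not certify this test, and the paper explicitly leaves non-brace ASNC open.
\item A block in the decomposition tree generally contains several contracted vertices, and the cut edges chosen at them are coupled through that block's matchings. The table would then have to be indexed by tuples of cut edges. Charging one forced test to each edge only accounts for blocks with a single contracted vertex.
\end{itemize}
To finish, you would need two things: a nonvanishing theorem for exact fibers with several prescribed boundary edges (a boundary-minor version of ASNC), and a recombination that stays polynomial when a block has many contracted vertices. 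Neither follows from the results you invoke.
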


\begin{proof}
The algorithm proceeds in three steps.

\smallskip
\noindent\textbf{Step 1: Tight-cut decomposition.}
If $G$ has no perfect matching, return \textsc{No}.
Otherwise, compute the tight-cut decomposition of~$G$ into brace blocks $G_1, \ldots, G_s$ in $O(n^3)$ time~\cite{CLM2002}.
Every matching-covered bipartite graph with no tight cut is a brace~\cite{LovaszPlummer86}.

\smallskip
\noindent\textbf{Step 2: Per-block ASNC test.}
For each block $G_j$ with $n_j$ vertices per side and each target $t' \in \{0, \ldots, n_j\}$,
test whether $P_{t'}(G_j, \lambda) \not\equiv 0$.
The polynomial $P_{t'}(G_j, \lambda)$ has degree at most $\binom{n_j}{2}$, so evaluating at $\binom{n_j}{2}+1$ values of~$\lambda$ suffices.
For each $\lambda$-value, evaluate $D_{G_j}(x,\lambda)$ at $n_j+1$ values of~$x$ using Bareiss ($O(n_j^3)$ per evaluation), then recover $P_{t'}(\lambda)$ via Lagrange interpolation in~$x$.
By \Cref{conj:asnc} (proved for all braces), $F_{G_j,t'} \neq \varnothing$ if and only if $P_{t'}(G_j) \not\equiv 0$.
The cost per block is $O(n_j^6)$, and since $\sum_j n_j \le n$, the total is $O(n^6)$.

\smallskip
\noindent\textbf{Step 3: Target-partition DP.}
Since matchings decompose along tight cuts, $F_{G,t} \neq \varnothing$ iff there exist $t_1 + \cdots + t_s = t$ with $F_{G_j, t_j} \neq \varnothing$ for each~$j$.
A standard knapsack-style DP over the blocks determines this in $O(n^2)$.

\smallskip
\noindent
Note that ASNC is applied only to brace blocks: the convolution $P_{G,t} = \sum_{t_1+t_2=t} P_{G_1,t_1} \cdot P_{G_2,t_2}$ for a non-brace graph could cancel even when individual factors are nonzero, so the algorithm tests blocks independently rather than evaluating $P_t(G)$ directly.
\end{proof}

\subsection{Boundary-minor states}

\begin{definition}[Boundary-minor polynomial]\label{def:boundary-minor}
For paired subsets $U \subseteq A$, $V \subseteq B$ with $|U| = |V|$, the \emph{$(U,V)$-boundary-minor polynomial} is
\[
P^{U,V}_{G,s}(\lambda) = [x^s]\, \det M_G(x,\lambda)[A \setminus U,\; B \setminus V].
\]
The \emph{full exact state} of~$G$ is the collection $\Omega_G = (P^{U,V}_{G,s})_{(U,V,s)}$.
\end{definition}

\noindent
The visible output $P_t = P^{\varnothing,\varnothing}_{G,t}$ is the top coordinate of this state.
The boundary-minor state is the correct induction-closed object: deleting a single edge shears the visible coordinate by a one-hole coordinate, so arguments on the visible output alone cannot recurse.


\subsection{Foundational properties}\label{sec:foundational}

This subsection collects four structural results that underpin the induction.
The first (cycle reachability) characterizes the combinatorial structure of achievable red counts. The second (affine closure) establishes the polyhedral geometry of exact-$t$ fibers. The third (universal nonvanishing at small~$n$) provides the base case for the structural induction, and the fourth (row-initial-form nonvanishing) gives a sufficient condition used in the biwheel proof.


\begin{theorem}[Cycle reachability]\label{thm:cycle-reach}
Let $G = (A \cup B, E)$ be a bipartite graph with edge coloring $\rho$ and let $M_0$ be any fixed perfect matching of~$G$.
For each even-length alternating cycle $C$ (edges alternating between $M_0$ and $E \setminus M_0$), define the \emph{red-count displacement}
\[
\Delta(C) = |\{e \in C \setminus M_0 : \rho(e) = 1\}| - |\{e \in C \cap M_0 : \rho(e) = 1\}|.
\]
Then the set of red counts achievable by perfect matchings of~$G$ is exactly
\[
\Bigl\{\,\bigl|\,M_0 \cap R\,\bigr| + \textstyle\sum_i \Delta(C_i) \;:\;
\{C_i\} \text{ vertex-disjoint alternating cycles}\Bigr\},
\]
where $R = \rho^{-1}(1)$ denotes the red edge set.
\end{theorem}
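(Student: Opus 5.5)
We prove the two inclusions separately; both rest on the symmetric difference of two perfect matchings being a union of vertex-disjoint alternating cycles. In the statement, ``alternating cycle'' means an even cycle of $G$ whose edges alternate between $M_0$ and $E\setminus M_0$. We read $\Delta(C)$ as the change in red count when $M_0$ is switched along $C$: the red edges gained minus the red edges lost.

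\textbf{Forward inclusion (every achievable count has the stated form).} Let $M$ be any perfect matching of $G$ and set $D = M \triangle M_0$.
\begin{enumerate}[(i)]
\item Every vertex has degree $0$ or $2$ in $D$, with one edge from $M$ and one from $M_0$. Hence $D$ is a disjoint union of even cycles $C_1,\dots,C_k$, each alternating with respect to $M_0$.
\item On each $C_i$, the edges of $C_i\setminus M_0$ are exactly the edges of $M\cap C_i$.
\item Partition $M$ into $M\cap M_0$ and $\bigcup_i (C_i\setminus M_0)$, and $M_0$ into $M\cap M_0$ and $\bigcup_i (C_i\cap M_0)$.
\item Counting red edges in both partitions gives
\[
|M\cap R| = |M_0\cap R| + \sum_i\Bigl(|(C_i\setminus M_0)\cap R| - |(C_i\cap M_0)\cap R|\Bigr) = |M_0\cap R| + \sum_i \Delta(C_i).
\]
\end{enumerate}
So every achievable red count lies in the stated set.

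\textbf{Reverse inclusion (every element of the set is achieved).} Let $\{C_i\}$ be any family of vertex-disjoint $M_0$-alternating cycles and set $M = M_0 \triangle \bigcup_i C_i$.
\begin{enumerate}[(i)]
\item Take a vertex $v$ on some $C_i$. It loses its $M_0$-edge, which lies on $C_i$ because the cycle alternates and $v$ is matched by $M_0$. It gains exactly one edge, its other cycle edge in $C_i\setminus M_0$.
\item Vertex-disjointness means no vertex lies on two cycles, so no vertex is touched twice.
\item A vertex on no cycle keeps its $M_0$-edge unchanged.
\item Therefore $M$ is a perfect matching, and all its edges lie in $E$.
\item The same red-count computation as in the forward inclusion gives $|M\cap R| = |M_0\cap R| + \sum_i\Delta(C_i)$.
\end{enumerate}
The empty family gives $M=M_0$.

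Nothing here is a real obstacle; the argument is the standard symmetric-difference argument. The only point needing care is the first step of the reverse inclusion: an alternating cycle through $v$ must contain $v$'s unique $M_0$-edge. This holds because the cycle has two edges at $v$ and they alternate, so exactly one of them lies in $M_0$, and since $M_0$ is a matching that edge is $v$'s $M_0$-edge.
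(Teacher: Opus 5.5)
Your proof is correct and follows the paper's argument. For the forward inclusion you decompose $M_0 \oplus M$ into vertex-disjoint alternating cycles, and for the reverse you flip $M_0$ along a given vertex-disjoint family. Your check that every alternating cycle through $v$ must contain $v$'s $M_0$-edge fills in a detail the paper leaves implicit when it asserts that the flip yields a perfect matching.
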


\begin{proof}
The proof rests on the classical symmetric-difference decomposition for matchings.

\noindent\textbf{Forward direction.}
Let $M$ be any perfect matching.
The symmetric difference $M_0 \oplus M = (M_0 \setminus M) \cup (M \setminus M_0)$ decomposes into vertex-disjoint even-length alternating cycles $C_1, \ldots, C_k$ (every vertex has degree exactly~$0$ or~$2$ in $M_0 \oplus M$, and bipartiteness forces even length).
For each $C_i$, the matching $M$ uses the non-$M_0$ edges of~$C_i$ and agrees with $M_0$ outside $\bigcup_i C_i$.
Therefore
\[
|M \cap R| = |M_0 \cap R| + \sum_{i=1}^k \Delta(C_i).
\]

\noindent\textbf{Reverse direction.}
Conversely, given any collection $\{C_i\}$ of vertex-disjoint alternating cycles, flipping $M_0$ along $\bigcup_i C_i$ (replacing $M_0 \cap C_i$ by $C_i \setminus M_0$ in each cycle) yields a valid perfect matching~$M'$.
The matching property is preserved because within each cycle the substitution replaces a perfect matching of the cycle's vertex set with another, and outside the cycles nothing changes.
The red count of~$M'$ is $|M_0 \cap R| + \sum_i \Delta(C_i)$ by construction.
\end{proof}

\begin{remark}\label{rem:cycle-reach-general}
\Cref{thm:cycle-reach} holds for general (non-bipartite) graphs as well, provided one restricts to \emph{even-length} alternating cycles.
Odd-length alternating walks (involving blossom structure) do not give valid perfect-matching transitions: flipping along an odd walk disrupts the matching constraint at the blossom's stem vertex.
\end{remark}

\noindent
Cycle reachability reduces Exact Matching to a constrained cycle-packing problem: given $M_0$, one seeks a collection of vertex-disjoint alternating cycles whose displacements sum to $t - |M_0 \cap R|$.
The difficulty lies in the interaction between vertex-disjointness and displacement values: the set of achievable displacement sums is not, in general, a numerical semigroup or even a contiguous interval.
For instance, $K_{n,n}$ with $R = M_0$ (the identity matching) achieves every red count except $t = n-1$, since a permutation with $n-1$ fixed points necessarily has~$n$ fixed points.


\begin{theorem}[Affine closure of exact-$t$ fibers]\label{thm:affine-closure}
Let $G = (A \cup B, E)$ be a bipartite graph with coloring~$\rho$ and target~$t$.
The exact-$t$ fiber $F_{G,t}$ is the intersection of a Birkhoff face with an affine hyperplane.
Specifically:
\begin{enumerate}[(i)]
\item The \emph{support face} $\mathcal{B}_G = \operatorname{conv}\{P_\sigma : \sigma \in \PM(G)\}$ is a face of the Birkhoff polytope $B_n = \{X \in \RR_{\geq 0}^{n \times n} : X\mathbf{1} = \mathbf{1},\, X^\top \mathbf{1} = \mathbf{1}\}$, determined by the zero constraints $X_{ij} = 0$ for $(i,j) \notin E(G)$.
\item The coloring~$\rho$ defines the weight matrix $W \in \{0,1\}^{n \times n}$ with $W_{ij} = \rho(i,j)$ (and $W_{ij} = 0$ for $(i,j) \notin E$).
The red-count constraint is the affine hyperplane $\mathcal{H}_t = \{X : \langle W, X \rangle = t\}$, where $\langle W, X \rangle = \sum_{i,j} W_{ij} X_{ij}$.
\item The fiber satisfies
\[
F_{G,t} = \PM(G) \cap \{\sigma : \langle W, P_\sigma \rangle = t\} = \operatorname{vert}(\mathcal{B}_G \cap \mathcal{H}_t).
\]
In particular, $F_{G,t}$ is the vertex set of a polytope $\mathcal{B}_G \cap \mathcal{H}_t$ that is itself a face of a Birkhoff face intersected with a hyperplane.
\end{enumerate}
\end{theorem}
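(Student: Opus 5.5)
The plan is to prove the three items in order. Items (i) and (ii) are essentially bookkeeping. Item (iii) is where care is needed, and my plan there is to establish the statement with the qualifier ``integral'' on the vertices rather than the literal equality as stated.

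For (i), I would invoke Birkhoff's theorem~\cite{Birkhoff46}: $B_n$ is the convex hull of the permutation matrices. The set $\mathcal{F}=\{X\in B_n : X_{ij}=0 \text{ for } (i,j)\notin E\}$ is an intersection of faces, since each $X_{ij}\ge 0$ is a valid inequality for $B_n$. Hence $\mathcal{F}$ is a face of $B_n$, and it is cut out by the zero constraints. A face of a polytope is the convex hull of the vertices it contains. The vertices of $B_n$ lying in $\mathcal{F}$ are exactly the permutation matrices $P_\sigma$ supported on $E$, that is, those with $\sigma\in\PM(G)$. Therefore $\mathcal{F}=\mathcal{B}_G$.

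For (ii), I would note that $\langle W,P_\sigma\rangle=\sum_i W_{i\sigma(i)}=\sum_i\rho(i,\sigma(i))$, which is the red count. Consequently the set of $\sigma\in\PM(G)$ with $P_\sigma\in\mathcal{H}_t$ is precisely $F_{G,t}$. This already gives the first equality in (iii).

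For the second equality in (iii), the argument for one containment is as follows. Take $\sigma\in F_{G,t}$. Then $P_\sigma$ is a vertex of $\mathcal{B}_G$, so some linear functional is uniquely maximized over $\mathcal{B}_G$ at $P_\sigma$. The same functional is then uniquely maximized over the smaller set $\mathcal{B}_G\cap\mathcal{H}_t$, which contains $P_\sigma$. Hence $P_\sigma\in\operatorname{vert}(\mathcal{B}_G\cap\mathcal{H}_t)$.

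The reverse containment is the main obstacle, and I expect it to fail as literally stated. A vertex of a polytope cut by a hyperplane is either an old vertex lying on the hyperplane, or the point where an edge of $\mathcal{B}_G$ crosses $\mathcal{H}_t$ transversally. Edges of $\mathcal{B}_G$ join $P_\sigma,P_\tau$ with $\sigma\oplus\tau$ a single alternating cycle $C$. If $\Delta(C)=2$ and $\sigma$ has red count $t-1$, then $\tfrac12(P_\sigma+P_\tau)$ is a fractional vertex of $\mathcal{B}_G\cap\mathcal{H}_t$. This happens, for example, in $K_{2,2}$ with the two diagonal edges red and $t=1$.

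So I would prove the corrected form:
\[
F_{G,t}=\operatorname{vert}(\mathcal{B}_G\cap\mathcal{H}_t)\cap\{0,1\}^{n\times n}.
\]
The inclusion ``$\subseteq$'' is the argument above. For ``$\supseteq$'', any $0/1$ point of $B_n$ is a permutation matrix. Its support lies in $E$ by (i), and its red count is $t$ by (ii). Using the edge description via \Cref{thm:cycle-reach}, the remaining vertices are exactly the points $\alpha P_\sigma+(1-\alpha)P_\tau$ on single-cycle edges whose endpoints straddle $t$. With this correction, the ``in particular'' clause follows directly, since $\mathcal{B}_G\cap\mathcal{H}_t$ is a face of $B_n$ intersected with a hyperplane.
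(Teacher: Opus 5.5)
Your analysis is correct. The mismatch with the paper is an error in the paper's proof, not in your proposal.

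On (i), (ii), the first equality in (iii), and the inclusion $F_{G,t}\subseteq\operatorname{vert}(\mathcal{B}_G\cap\mathcal{H}_t)$, you follow the paper's route: Birkhoff--von Neumann, the face cut out by $X_{ij}=0$ off $E$, and $\langle W,P_\sigma\rangle$ as the red count. You also supply two justifications the paper omits. First, the zero-constraint set is a face because each inequality $X_{ij}\ge 0$ is valid for $B_n$. Second, a vertex of $\mathcal{B}_G$ that lies in $\mathcal{H}_t$ remains a vertex of the section.

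For the reverse inclusion, the paper simply asserts that the vertices of $\mathcal{B}_G\cap\mathcal{H}_t$ form a subset of the vertices of $\mathcal{B}_G$. That is false: a hyperplane section acquires a new vertex wherever an edge of $\mathcal{B}_G$ crosses $\mathcal{H}_t$ transversally. Your $K_{2,2}$ example is a valid counterexample, since $\mathcal{B}_G\cap\mathcal{H}_1=\{\tfrac12\mathbf{1}\mathbf{1}^\top\}$ while $F_{G,1}=\varnothing$. The failure is not confined to empty fibers. In $K_{3,3}$ with the identity edges red and $t=1$, the fiber consists of the three transpositions, yet $\tfrac13 P_{\mathrm{id}}+\tfrac23 P_{(012)}$ is a fractional vertex of the section.

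So neither the second equality in (iii) nor the ``in particular'' clause can be proved as stated. Your replacement $F_{G,t}=\operatorname{vert}(\mathcal{B}_G\cap\mathcal{H}_t)\cap\{0,1\}^{n\times n}$ (equivalently, the integral points of $\mathcal{B}_G\cap\mathcal{H}_t$) is the right repair. It is also all that \Cref{cor:parallelogram} uses, since that corollary needs only the linearity of $\langle W,\cdot\rangle$.

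One minor point concerns your description of the extra vertices as crossing points of single-cycle edges. It rests on the classical adjacency criterion for the Birkhoff polytope: $P_\sigma$ and $P_\tau$ are adjacent iff $\sigma\oplus\tau$ is a single cycle. That criterion does not follow from \Cref{thm:cycle-reach} alone, which only gives the cycle decomposition and red-count additivity. The description is not needed for the corrected equality, so you can drop it or cite the adjacency criterion directly.
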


\begin{proof}
Every perfect matching $\sigma \in \PM(G)$ corresponds to a permutation matrix $P_\sigma \in B_n$ with $P_\sigma(i,j) = 1$ iff $j = \sigma(i)$.
The Birkhoff--von Neumann theorem identifies $B_n$ as the convex hull of all $n!$ permutation matrices.
The support constraint $(i,j) \notin E(G) \Rightarrow X_{ij} = 0$ defines a face $\mathcal{B}_G$ of $B_n$ whose vertices are exactly $\{P_\sigma : \sigma \in \PM(G)\}$.

The red count of $\sigma$ is $\sum_i W_{i,\sigma(i)} = \langle W, P_\sigma \rangle$.
So $\sigma \in F_{G,t}$ if and only if $P_\sigma$ lies in the hyperplane $\mathcal{H}_t$.
The set $\mathcal{B}_G \cap \mathcal{H}_t$ is a polytope (intersection of a polytope with a hyperplane), and its vertices are a subset of the vertices of $\mathcal{B}_G$, namely those satisfying $\langle W, P_\sigma \rangle = t$.
These are precisely the elements of $F_{G,t}$.
\end{proof}

\begin{corollary}[Parallelogram completion]\label{cor:parallelogram}
If $\sigma_1, \sigma_2, \sigma_3 \in F_{G,t}$ and $P_{\sigma_4} = P_{\sigma_1} + P_{\sigma_3} - P_{\sigma_2}$ happens to be a permutation matrix in $\mathcal{B}_G$, then $\sigma_4 \in F_{G,t}$.
\end{corollary}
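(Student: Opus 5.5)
The plan is to use only linearity of the red-count functional together with the description of $F_{G,t}$ in \Cref{thm:affine-closure}(iii). The corollary calls for no polyhedral geometry beyond that. Membership in $F_{G,t}$ has two parts: $\sigma_4$ must be a perfect matching of $G$, and its red count must equal $t$. I will verify these separately.

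\textbf{Step 1: $\sigma_4 \in \PM(G)$.} By hypothesis $P_{\sigma_4}$ is a permutation matrix lying in $\mathcal{B}_G$. By \Cref{thm:affine-closure}(i), $\mathcal{B}_G$ is the face of $B_n$ cut out by $X_{ij}=0$ for $(i,j)\notin E(G)$. So every nonzero entry $(i,\sigma_4(i))$ of $P_{\sigma_4}$ is an edge of $G$, and hence $\sigma_4$ is a perfect matching of $G$.

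There is also a combinatorial check that does not use the face description. Every entry of $P_{\sigma_1}+P_{\sigma_3}-P_{\sigma_2}$ is supported on $\operatorname{supp}P_{\sigma_1}\cup\operatorname{supp}P_{\sigma_3}\cup\operatorname{supp}P_{\sigma_2}\subseteq E$. Therefore a $1$-entry of $P_{\sigma_4}$ can only occur at an edge of $G$.

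\textbf{Step 2: red count.} By \Cref{thm:affine-closure}(ii), the red count of a permutation $\sigma$ is $\langle W,P_\sigma\rangle$, and this pairing is linear in the matrix argument. Hence
\[
\langle W,P_{\sigma_4}\rangle=\langle W,P_{\sigma_1}\rangle+\langle W,P_{\sigma_3}\rangle-\langle W,P_{\sigma_2}\rangle = t+t-t=t .
\]
So $P_{\sigma_4}\in\mathcal{H}_t$. Combining this with Step~1 and \Cref{thm:affine-closure}(iii) gives $\sigma_4\in F_{G,t}$.

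The argument is routine, and I expect no real obstacle. The one point needing care is the use of $W_{ij}=0$ off $E$: it guarantees that the pairing $\langle W,\cdot\rangle$ computes the red count for any permutation matrix supported in $E$. This is exactly what Step~1 establishes for $P_{\sigma_4}$.
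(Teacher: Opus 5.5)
Your proof is correct and matches the paper's: the paper's argument is just the linearity computation $\langle W,P_{\sigma_4}\rangle=\langle W,P_{\sigma_1}\rangle+\langle W,P_{\sigma_3}\rangle-\langle W,P_{\sigma_2}\rangle=t$, which is your Step~2. Your Step~1 makes explicit the membership $\sigma_4\in\PM(G)$ that the paper takes from the hypothesis, and you rightly note that the $\mathcal{B}_G$ hypothesis is automatic, since $P_{\sigma_1}+P_{\sigma_3}-P_{\sigma_2}$ is supported on $\operatorname{supp}P_{\sigma_1}\cup\operatorname{supp}P_{\sigma_2}\cup\operatorname{supp}P_{\sigma_3}\subseteq E$.
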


\begin{proof}
The affine combination preserves the hyperplane constraint:
$\langle W, P_{\sigma_4} \rangle
= \langle W, P_{\sigma_1} \rangle + \langle W, P_{\sigma_3} \rangle
  - \langle W, P_{\sigma_2} \rangle = t$.
\end{proof}

\begin{remark}\label{rem:affine-closure}
This closure under affine combinations within the Birkhoff face is the key distinction between exact-count fibers and arbitrary subsets of~$\PM(G)$: the former are ``rigid'' in a polyhedral sense.
Arbitrary subsets of $S_n$ \emph{can} produce $P_S(\lambda) \equiv 0$ (as happens for certain unions $H \sqcup H^\uparrow$ at $n = 5$), but the affine rigidity of exact-$t$ fibers prevents the signed cancellation needed for vanishing.
Every proper affinely closed subset of the permutation configuration is equivalent to a full equality layer $S = \{\sigma \in S_n : c_W(\sigma) = t\}$, so ASNC is equivalent to the statement that no nonempty proper affinely closed subset of $S_n$ has $P_S(\lambda) \equiv 0$.
\end{remark}


\begin{theorem}[Universal nonvanishing for $n \leq 4$]\label{thm:universal-small}
For $n \leq 4$ and \emph{every} nonempty subset $S \subseteq S_n$ (not necessarily an exact-$t$ fiber), the polynomial
\[
P_S(\lambda) = \sum_{\sigma \in S} \sgn(\sigma) \prod_{i=0}^{n-1} (\lambda+i)^{\sigma(i)}
\]
is not identically zero.
\end{theorem}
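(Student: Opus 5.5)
The proof is a filtration by vanishing orders at the roots $\lambda=-i$, followed by a small finite residual check. Each summand $f_\sigma(\lambda)=\prod_{i}(\lambda+i)^{\sigma(i)}$ is monic of degree $\binom{n}{2}$, and has a zero of order exactly $\sigma(i)$ at $\lambda=-i$. Suppose for contradiction that $S\neq\varnothing$ and $P_S\equiv 0$; we extract a sequence of necessary conditions. Note that the $f_\sigma$ are not linearly independent for $n=4$: there are $24$ of them in a $7$-dimensional space. So the argument must use that the coefficients are $0/1$ multiples of $\sgn(\sigma)$, not merely that they are scalars.

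\textbf{Step 1: the cases $n\le 3$.} For $n\le 3$ we argue directly. For $n\le 2$ the two monomials $(\lambda+1)$ and $\lambda$ (for $n=2$) are visibly independent. For $n=3$ we write the six cubics $f_\sigma$ in the monomial basis and check that any $0/1$ signed combination of them is nonzero. It is convenient to first use the root $\lambda=0$: if $a=\min_{\sigma\in S}\sigma(0)$, then dividing $P_S$ by $\lambda^a$ and evaluating at $0$ gives
\[
\sum_{\sigma\in S,\ \sigma(0)=a}\sgn(\sigma)\prod_{i\ge1} i^{\sigma(i)} ,
\]
which must vanish.

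\textbf{Step 2: the filtration for $n=4$.} Apply the same device at $\lambda=0$. With $a=\min_{\sigma \in S}\sigma(0)$ and $S_a=\{\sigma\in S:\sigma(0)=a\}$, vanishing of $P_S$ forces
\[
\sum_{\sigma\in S_a}\sgn(\sigma)\,2^{\sigma(2)}3^{\sigma(3)}=0 .
\]
Within $S_a$ the pair $(\sigma(2),\sigma(3))$ determines $\sigma$, so these are at most six distinct integers of the form $\pm 2^p3^q$ with $\{p,q\}\subseteq\{0,1,2,3\}\setminus\{a\}$. Checking the four values of $a$ by hand leaves only a short list of cancelling patterns (for instance $a$ with terms $1,2,3$ arranged to cancel). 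We then intersect these patterns with three further conditions:
\begin{itemize}
\item the analogous condition at $\lambda=-3$, using the minimal $\sigma(3)$ and the values $(-3)^{\sigma(0)}(-2)^{\sigma(1)}(-1)^{\sigma(2)}$;
\item the analogous conditions at $\lambda=-1$ and $\lambda=-2$;
\item the leading-coefficient condition $\sum_{\sigma\in S}\sgn(\sigma)=0$ and the subleading condition $\sum_{\sigma\in S}\sgn(\sigma)\sum_i i\,\sigma(i)=0$.
\end{itemize}
Each root condition constrains only the minimal layer at that root. Combining them is expected to leave either no candidate or a handful of explicit small sets $S$. For any surviving candidate we expand $P_S$ fully in the basis $1,\lambda,\dots,\lambda^6$ and exhibit a nonzero coefficient.

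\textbf{Main obstacle.} The hard step is making the $n=4$ residual analysis genuinely exhaustive rather than illustrative. A cancelling minimal layer at one root says nothing about the remaining terms of $S$: after removing that layer, the next Taylor coefficients at the same root mix contributions from different layers. The plan is therefore to iterate rather than stop at one layer. We use the order of vanishing of $P_S$ at $\lambda=0$ layer by layer, together with Taylor coefficients (derivatives divided by $\lambda^a$), to get a triangular system in the layers $S_a$, $a=0,\dots,3$. If that bookkeeping becomes unwieldy, the fallback is a direct integer-linear-algebra verification. We write the $24$ vectors $\sgn(\sigma)f_\sigma\in\ZZ^7$ and show that the kernel of the $7\times24$ matrix contains no nonzero $0/1$ vector. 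Restricting to the kernel lattice makes this a finite, explicitly checkable enumeration, which is consistent with the paper's stated computational cross-checks.
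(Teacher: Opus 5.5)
Your strategy is sound and close to the paper's, which also works at $\lambda=0$ along row~$0$, sets up a triangular ``Newton chain'' in the jets of the four branch polynomials, and finishes with an exhaustive catalogue. But as written the proposal does not prove the case $n=4$. The decisive finite check is only predicted (``is expected to leave either no candidate or a handful''), and the fallback enumeration is also not carried out. Since the theorem is a purely finite assertion, that check \emph{is} the proof. The one concrete pattern you cite also does not occur: no layer contains a value $\pm1$, since $2^p3^q=1$ would need $\sigma(2)=\sigma(3)=0$. For $n=3$ your $\lambda=0$ device already suffices: the three layers give signed values $\{4,-2\}$, $\{-4,1\}$, $\{2,-1\}$, and none has a vanishing nonempty subsum.

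The gap closes with exactly the tools you list; here is the argument. Write $\sigma$ in one-line notation $\sigma(0)\sigma(1)\sigma(2)\sigma(3)$.
\begin{itemize}
\item At $\lambda=0$, the signed values of the layer $\sigma(0)=a$ are $\{108,-72,-54,24,18,-12\}$, $\{-108,72,27,-8,-9,4\}$, $\{54,-24,-27,8,3,-2\}$ and $\{-18,12,9,-4,-3,2\}$ for $a=0,1,2,3$. For $a=1,2$ no nonempty subsum vanishes.
\item If $a=3$, all of $S$ lies in the coset $\sigma(0)=3$, where $P_S(\lambda)=-\lambda^3P_{S'}(\lambda+1)$ is the $n=3$ case.
\item If $a=0$, the only vanishing subsum is $108+18-72-54$, so $\{\sigma\in S:\sigma(0)=0\}=\{0123,0312,0132,0213\}$.
\item Since $P_S(-\lambda-3)=P_{Sw_0}(\lambda)$ with $w_0=(03)(12)$ even, the same reasoning at $\lambda=-3$ gives $\{\sigma\in S:\sigma(3)=0\}=\{3210,2130,2310,3120\}$. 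The alternative, all $\sigma(3)=3$, is excluded by $0312\in S$.
\item These eight terms sum to $-3\lambda(\lambda+1)(\lambda+2)(\lambda+3)$. Every remaining $\sigma\in S$ has $\sigma^{-1}(0)\in\{1,2\}$, so the rest of $P_S$ equals $\lambda(\lambda+3)\,Q$, and vanishing forces $Q=3(\lambda+1)(\lambda+2)$.
\item At $\lambda=0$ only $1023,1032,1203,1302$ contribute to $Q$, with values $-36,24,9,-3$ against target $6$. This forces $1203,1302\in S$ and $1023,1032\notin S$. The mirror computation at $\lambda=-3$ forces $2031,3021\in S$ and $2301,3201\notin S$.
\item Evaluating at $\lambda=-1$ and $\lambda=-2$ (target $0$) then forces $3012,2103\in S$ and $2013,3102\notin S$.
\item Now $\sum_{\sigma\in S}\sgn(\sigma)=-2\neq0$, so the $\lambda^6$ coefficient of $P_S$ is nonzero, a contradiction.
\end{itemize}
Carried out this way, with the reversal symmetry halving the work, your multi-root filtration gives a short proof that can be checked by hand. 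The paper's single-root Newton chain instead ends in a $64^2$-case search.
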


\noindent
This result is strictly stronger than ASNC for $n \leq 4$: it asserts nonvanishing for \emph{all} nonempty permutation families, without any affine-closure or graph-realizability hypothesis.
The first pathological vanishing for unrestricted subsets occurs at $n = 5$, where a specific 24-element subset $S \subset S_5$ satisfies $P_S(\lambda) \equiv 0$.
However, this subset is not affinely closed, and its affine closure (96 elements) has a nonzero polynomial.

\begin{proof}
The proof proceeds by complete enumeration at small $n$, combined with an inductive Newton-chain argument.

\noindent\textbf{Case $n = 2$.}
$S_2 = \{\mathrm{id}, (01)\}$ has three nonempty subsets.
Direct computation gives $P_{\{\mathrm{id}\}}(\lambda) = \lambda^0 (\lambda+1)^1 = \lambda + 1$, $P_{\{(01)\}}(\lambda) = -\lambda^1 (\lambda+1)^0 = -\lambda$, and $P_{S_2}(\lambda) = (\lambda+1) - \lambda = 1$.
All three are nonzero.

\noindent\textbf{Case $n = 3$.}
Apply the left-boundary recurrence: for any nonempty $S \subseteq S_3$, expand along row~$0$.
Each branch $S_a = \{\sigma \in S : \sigma(0) = a\}$ yields a sub-family in $S_2$.
The boundary-gap vanishing lemma shows that vanishing of $P_S$ would force the leading branch polynomial to vanish at a specific point, contradicting the $n = 2$ nonvanishing.

\noindent\textbf{Case $n = 4$.}
This is the substantial case.
Apply the Newton chain identity: if $P_S(\lambda) \equiv 0$, expand along row~$0$ with $z = \lambda$, giving
\[
\sum_{a \in \{0,1,2,3\}} (-1)^a z^a Q_a(z+1) = 0,
\]
where $Q_a(\mu) = P_{L_a(S)}(\mu)$ is the branch polynomial for permutations with $\sigma(0) = a$, defined on subsets of $S_3$ with bases shifted by~$1$.

The Newton chain is a triangular system in the jets of $Q_a$ at $\mu = 1$.
Exhaustive catalog of all $64 = 2^6$ nonempty subsets of $S_3$ (for each of four exponent sets arising from column deletion) shows that the zeroth jet $j_0$ is forced to vanish, which uniquely determines $Q_0$ and $Q_1$.
The remaining pair equation $Q_2 - (z-1)Q_3 = -z(z+2)^2$ has no solution among $64^2$ candidate pairs.
This contradiction holds for every nonempty $S \subseteq S_4$.
\end{proof}

\begin{remark}\label{rem:p23-base-case}
\Cref{thm:universal-small} serves as the base case for the structural induction in two senses.
First, it covers all bipartite graphs on at most $4+4$ vertices without any structural hypothesis.
Second, the McCuaig exceptional families require direct verification: the biwheel case uses a jet extraction at a unique-minimizer row (reducing to $n \leq 4$ where \Cref{thm:universal-small} applies), while prisms and M\"obius ladders are closed by a parity-resolved cylindric-network argument that does not reduce to small~$n$.
The fact that universal nonvanishing holds at $n \leq 4$ (without needing the affine-closure hypothesis) is essential for the biwheel case, since the jet-operator extraction does not preserve affine closure, only the permutation-family structure.
\end{remark}


\begin{definition}[Row Laplace expansion]\label{def:row-laplace}
For a nonempty family $F \subseteq \PM(G)$ and a row $r \in A$, define the \emph{$r$-branch at column~$a$} as the subfamily $F_a^{(r)} = \{\sigma \in F : \sigma(r) = a\}$.
The Laplace expansion of $P_F(\lambda)$ along row~$r$ is
\[
P_F(\lambda) = \sum_{\substack{a \in B \\ F_a^{(r)} \neq \varnothing}} (-1)^{r+a}\, (\lambda+r)^a \cdot P_{F_a^{(r)} \setminus \{r\}}(\lambda),
\]
where $P_{F_a^{(r)} \setminus \{r\}}$ denotes the polynomial of the $(n-1) \times (n-1)$ sub-family obtained by deleting row~$r$ and column~$a$.
\end{definition}

\begin{definition}[Row initial form]\label{def:row-initial}
For row~$r$, let $\tau_r(a) = a$ be the exponent contributed by column~$a$ in the Vandermonde weighting (recall the entry $(\lambda+r)^a$ at position $(r,a)$).
Define the \emph{minimum exponent}
\[
m_r = \min\{a : F_a^{(r)} \neq \varnothing\},
\]
the \emph{tie set} $S_r^{\min} = \{a : a = m_r,\, F_a^{(r)} \neq \varnothing\}$ (when the Vandermonde specialization assigns equal exponents to multiple columns used by row~$r$), and the \emph{row-$r$ initial form}
\[
I_r(\lambda) = \sum_{a \in S_r^{\min}} (-1)^{r+a} (\lambda+r)^a \cdot \operatorname{in}_0\bigl(P_{F_a^{(r)} \setminus \{r\}}(\lambda)\bigr),
\]
where $\operatorname{in}_0(f)$ denotes the leading term of $f$ in the $(\lambda+r)$-adic expansion.
When $|S_r^{\min}| = 1$ (unique minimizer), the tie set is a singleton $\{a_0\}$, and $I_r$ simplifies to a single nonzero term.
\end{definition}

\begin{theorem}[Row-initial-form nonvanishing]\label{thm:row-initial}
Let $F \subseteq \PM(G)$ be a nonempty family.
If there exists a row~$r$ such that the row-$r$ initial form $I_r(\lambda)$ is not identically zero, then $P_F(\lambda) \not\equiv 0$.
\end{theorem}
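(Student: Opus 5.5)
The plan is a valuation argument at the point $\lambda=-r$. Work in $\QQ[\lambda]$ localized at the prime $(\lambda+r)$, and write $v_r(f)$ for the $(\lambda+r)$-adic valuation of a polynomial $f$. We start from the row Laplace expansion of \Cref{def:row-laplace}:
\[
P_F(\lambda)=\sum_{a\,:\,F_a^{(r)}\neq\varnothing}(-1)^{r+a}(\lambda+r)^a\,Q_a(\lambda),
\qquad Q_a:=P_{F_a^{(r)}\setminus\{r\}}.
\]
In the Vandermonde specialization the exponent $\tau_r(a)=a$ is distinct for distinct columns $a$. Hence the tie set $S_r^{\min}$ is automatically the singleton $\{m_r\}$, and $I_r(\lambda)=(-1)^{r+m_r}(\lambda+r)^{m_r}\operatorname{in}_0(Q_{m_r})$. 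The goal is to show that this single term controls the lowest-order part of $P_F$ in the $(\lambda+r)$-adic expansion.

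First, expand each $Q_a$ around $\lambda=-r$ as $Q_a(\lambda)=\sum_{k\ge0}c_{a,k}(\lambda+r)^k$. I read $\operatorname{in}_0(Q_a)$ as the degree-$0$ coefficient $c_{a,0}=Q_a(-r)$. This is the natural reading, since the remaining factors $(\lambda+i)$ with $i\neq r$ are units at $\lambda=-r$. The coefficient of $(\lambda+r)^{m_r}$ in $P_F$ then collects $(-1)^{r+m_r}c_{m_r,0}$ from the $a=m_r$ term, together with contributions $c_{a,k}$ with $a+k=m_r$. Since every other branch has $a>m_r$ and $k\ge 0$, those branches only contribute to the coefficients of $(\lambda+r)^{j}$ with $j\ge m_r+1$. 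Likewise, no term contributes below degree $m_r$. So the $(\lambda+r)$-adic expansion of $P_F$ begins with $(-1)^{r+m_r}Q_{m_r}(-r)\,(\lambda+r)^{m_r}$, which is exactly $I_r$. If $I_r\not\equiv0$, then $Q_{m_r}(-r)\neq0$, so $v_r(P_F)=m_r<\infty$, and therefore $P_F\not\equiv0$.

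The step I expect to be the main obstacle is the precise meaning of $\operatorname{in}_0$. If it is instead read as the lowest \emph{nonzero} term of $Q_{m_r}$, of order $(\lambda+r)^{k_0}$ with $k_0\ge1$, the argument above breaks. In that case the $a=m_r$ term has valuation $m_r+k_0$, and branches with $m_r<a\le m_r+k_0$ could cancel it. To repair this I would either strengthen the hypothesis to $Q_{m_r}(-r)\neq0$, or redefine $I_r$ as the coefficient of $(\lambda+r)^{\mu}$ in $P_F$, where $\mu=\min_a\bigl(a+v_r(Q_a)\bigr)$ is taken over all branches. The latter is the genuine initial form: it sums the leading coefficients of all branches achieving $\mu$, and nonvanishing of that sum gives $v_r(P_F)=\mu$ by the same bookkeeping. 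In the biwheel application the unique-minimizer row should supply exactly $Q_{m_r}(-r)\neq0$. I would verify this there, using \Cref{thm:universal-small} on the shifted subfamily where needed, so that the degree-$0$ reading suffices.
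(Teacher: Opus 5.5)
Your argument is correct and is essentially the paper's proof. Both expand each branch of the row-$r$ Laplace expansion $(\lambda+r)$-adically and use that every present column has $a\ge m_r$, so the coefficient of $(\lambda+r)^{m_r}$ in $P_F$ is exactly $(-1)^{r+m_r}c_{m_r,0}$. Both also read $\operatorname{in}_0$ as the constant coefficient $c_{a,0}=Q_a(-r)$. Your side remarks are accurate refinements: the tie set $S_r^{\min}$ is always the singleton $\{m_r\}$ because $\tau_r(a)=a$, and the argument needs the constant-coefficient reading of $\operatorname{in}_0$, since under the lowest-nonzero-term reading higher branches could cancel the $a=m_r$ contribution.
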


\begin{proof}
In the Laplace expansion of $P_F$ along row~$r$, group terms by their $(\lambda+r)$-adic order.
Write $\mu_r = \lambda + r$ and expand each branch polynomial as $P_{F_a^{(r)} \setminus \{r\}}(\lambda) = \sum_{k \geq 0} c_{a,k}\, \mu_r^k$.
Then
\[
P_F(\lambda) = \sum_{a} (-1)^{r+a} \mu_r^a \sum_{k} c_{a,k}\, \mu_r^k = \sum_{j \geq 0} \mu_r^j \biggl(\sum_{\substack{a,k \\ a+k=j}} (-1)^{r+a} c_{a,k}\biggr).
\]
The coefficient of $\mu_r^{m_r}$ is $\sum_{a \in S_r^{\min}} (-1)^{r+a} c_{a,0}$, which is precisely the initial form~$I_r$ evaluated as a polynomial in the remaining variables.
If $I_r \not\equiv 0$, then this leading coefficient is nonzero, so $P_F$ has a nonzero term at $\mu_r$-adic order $m_r$ and cannot vanish identically.

The key special case is the \emph{unique minimizer}: if row~$r$ has a unique column $a_0$ achieving the minimum exponent, then $I_r(\lambda) = (-1)^{r+a_0} c_{a_0, 0}$.
The coefficient $c_{a_0, 0}$ is the constant term of $P_{F_{a_0}^{(r)} \setminus \{r\}}$ in the $\mu_r$-adic expansion, which equals $P_{F_{a_0}^{(r)} \setminus \{r\}}(-r)$.
Since $F_{a_0}^{(r)} \neq \varnothing$, the cofactor polynomial is a sum over a nonempty family on $(n-1)$ vertices. If this cofactor is nonzero (established by induction or by \Cref{thm:universal-small} in the base cases), then $I_r \neq 0$ and hence $P_F \not\equiv 0$.
\end{proof}

\begin{remark}\label{rem:biwheel-application}
\Cref{thm:row-initial} is the main tool for the biwheel proof (\Cref{thm:biwheel}).
In a biwheel $B_{2m}$, the hub vertex $a_0$ is adjacent to all vertices on the opposite side, while rim vertices $a_i$ ($i \geq 1$) have degree~$3$.
For any rim vertex $a_i$ that connects to the hub vertex $b_0$ via some matching in the fiber, the column $b_0 = 0$ is a unique minimum exponent: all other columns available to $a_i$ are rim columns $b_{i-1}, b_i$ with indices $\geq 1$.
By \Cref{thm:row-initial}, it suffices to show that the cofactor (the $(m-1) \times (m-1)$ polynomial after deleting row~$a_i$ and column~$b_0$) is nonzero, which follows by induction.
\end{remark}


\begin{table}[h]
\centering
\small
\caption{Summary of notation.}
\label{tab:notation}
\begin{tabular}{lll}
\toprule
\textbf{Symbol} & \textbf{Meaning} & \textbf{Ref.} \\
\midrule
$G = (A \cup B, E)$ & Bipartite graph, $|A|=|B|=n$ & \S\ref{sec:prelim} \\
$\rho \colon E \to \{0,1\}$ & Edge coloring (blue/red) & \S\ref{sec:prelim} \\
$F_{G,t}$ & Exact-$t$ fiber $\{\sigma \in \PM(G) : |\sigma \cap R| = t\}$ & \S\ref{sec:prelim} \\
$M_G(x,\lambda)$ & Vandermonde-weighted matrix & \Cref{def:vandermonde-matrix} \\
$D_G(x,\lambda)$ & $\det M_G(x,\lambda)$ & \Cref{def:Pt} \\
$P_t(\lambda)$ & $[x^t] D_G$, exact-$t$ polynomial & \Cref{def:Pt} \\
$P^{U,V}_{G,s}$ & Boundary-minor polynomial & \Cref{def:boundary-minor} \\
$\mu_\alpha$ & $\lambda + \alpha$, linear form for row $\alpha$ & \S\ref{sec:replacement} \\
$U^{(c)}$ & Column-hole cofactor vector & \Cref{def:column-hole} \\
$F_e$ & Support-family functional $\sum A_i \mu_i^e$ & \Cref{def:support-family} \\
$G_D$ & Generalized Vandermonde determinant & \Cref{thm:gen-vandermonde} \\
$h$ & Quotient in Vandermonde elimination & \Cref{thm:gen-vandermonde} \\
$e = (r,c)$ & Superfluous edge & \Cref{def:superfluous} \\
$H = G \setminus e$ & Brace minus superfluous edge & \Cref{def:superfluous} \\
$w_d$ & Exact-layer packet column for column $d$ & \S\ref{subsec:stabilization} \\
$g$ & Reduced distinguished column $\mathfrak{S}(w_c)$ & \S\ref{subsec:stabilization} \\
$V_T, \mathcal{B}_T$ & Lower packet / augmented packet on $T$ & \S\ref{subsec:stabilization} \\
$\Lambda_{\mathrm{bad}}$ & Core-flap line $D_B k_A \oplus (-D_A k_B)$ & \Cref{prop:p101} \\
$Z$ & Survivor cofactor coordinates & \Cref{prop:g-packet-collapse} \\
\bottomrule
\end{tabular}
\end{table}

\section{Structural Induction Framework}\label{sec:framework}

The proof of ASNC proceeds by structural induction on bipartite matching-covered graphs, following McCuaig's brace-theoretic decomposition~\cite{McCuaig2001}.
This section presents the induction framework in detail: the brace-theoretic reduction, the patch transfer theorem that governs how local graph modifications act on the boundary-minor state, the minimality of the boundary-minor state as an induction object, and the superfluous-edge identity that drives the induction on nonminimal braces.

\subsection{Brace theory}

We recall the key definitions from McCuaig's theory of braces~\cite{McCuaig2001,McCuaig2004}.
See also Robertson and Seymour~\cite{RobertsonSeymour04} for related structural decomposition results, Lov\'asz~\cite{Lovasz79} and Lov\'asz and Plummer~\cite{LovaszPlummer86} for background on matching theory and Pfaffian orientations, and Kothari and Narain~\cite{KothariNarain24} for recent work on tight cuts.

\begin{definition}[Brace]\label{def:brace}
A bipartite graph $G = (A \cup B, E)$ is a \emph{brace} if it is connected, has a perfect matching, and for every proper nonempty subset $S \subsetneq A$,
\[
|N(S)| \geq |S| + 2.
\]
Equivalently, $G$ is $2$-extendable: every matching of size at most~$2$ extends to a perfect matching.
\end{definition}

\begin{definition}[Superfluous edge]\label{def:superfluous}
An edge $e \in E(G)$ is \emph{superfluous} in a brace~$G$ if $G \setminus e$ is also a brace.
A brace is \emph{minimal} if it has no superfluous edges.
\end{definition}

\noindent
In a brace, a superfluous edge is the same as a strictly thin edge of index~$0$, and it must join two vertices of degree at least~$3$.

\begin{theorem}[McCuaig's classification, informal]\label{thm:mccuaig-informal}
Every simple minimal brace outside the exceptional families (biwheels, prisms, M\"obius ladders) has a \emph{narrow minimality-preserving pair}: deleting a strictly thin edge plus at most $O(1)$ nearby superfluous edges produces a smaller minimal brace.
The interface is bounded by at most~$2$ vertices.
\end{theorem}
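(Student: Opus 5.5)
The statement is essentially a repackaging of McCuaig's brace generation theorem~\cite{McCuaig2001}, so the plan is to derive it from that theorem rather than prove it from scratch. The argument has three layers. First, the existence of a strictly thin edge. Second, control of the superfluous edges that appear after deleting it. Third, the locality of the interface.

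\textbf{Step 1: a strictly thin edge.} I would invoke the main theorem of~\cite{McCuaig2001}. It says that every simple brace other than a biwheel, prism or M\"obius ladder has a \emph{strictly thin} edge $e=(u,v)$. Deleting $e$ and bicontracting any endpoint that drops to degree~$2$ (a vertex of degree $2$ together with its two neighbours is replaced by one vertex) yields a smaller simple brace $G'$. Equivalently, $G$ arises from $G'$ by one of McCuaig's four expansion operations. Each such operation is supported on at most two bicontracted vertices of $G'$, namely the images of the degree-$3$ endpoints of~$e$. This gives the claimed interface bound of~$2$.

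\textbf{Step 2: restoring minimality.} Now assume $G$ is minimal. Then $e$ is not superfluous, so at least one endpoint has degree exactly~$3$ and really is bicontracted. I would then check which edges of $G'$ can be superfluous. Any edge of $G'$ not incident to the contracted vertices was already an edge of $G$. Its deletion in $G'$ can be lifted back through the expansion, so if it were superfluous in $G'$ it would be superfluous in~$G$, contradicting minimality of $G$. Hence every superfluous edge of $G'$ is incident to one of the at most two contracted vertices. Each of these has degree at most $4$, because it came from merging degree-$3$ vertices. So $G'$ has $O(1)$ superfluous edges, all local to the interface.

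Deleting these edges one at a time may destroy the brace property. To handle this I would use McCuaig's refined theorem on minimal braces, which appears in the same paper as the generation of minimal braces. That result chooses the strictly thin edge, and the order of deleting the nearby superfluous edges, so that the final graph is a minimal brace. Citing it directly yields the ``narrow minimality-preserving pair.''

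\textbf{Main obstacle.} The hard step is Step~2. Specifically, one must show that some choice of thin edge leaves a graph from which bounded local deletions reach a \emph{minimal} brace. An arbitrary strictly thin edge need not do this. I would not reprove McCuaig's case analysis, which is long. Instead I would cite it and verify only the two properties used downstream: the number of deleted edges is bounded, and all of them lie within the two-vertex interface. This matches the informal status of the statement.
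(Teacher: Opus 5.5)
The paper gives no proof of this statement. It is labelled informal, attributed to McCuaig (2001), and listed in the Lean appendix as an assumed external hypothesis. Deferring to McCuaig is therefore the same route the paper takes.

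The problem is the one property you say you would verify yourself: the $O(1)$ bound on the extra deletions. In Step~2 you claim each contracted vertex of $G'$ has degree at most $4$ ``because it came from merging degree-$3$ vertices.'' That misidentifies what gets merged. Bicontraction at the endpoint $x$ of $e$ (degree $3$ in $G$, degree $2$ in $G-e$) identifies $x$ with its two remaining neighbours $v_1,v_2$, and only $x$ is known to have degree $3$. The new vertex has degree $\deg_G(v_1)+\deg_G(v_2)-2$, and nothing bounds this: minimality does not bound degrees, as biwheel hubs of degree $n$ show. So knowing that every superfluous edge of $G'$ is incident to a contracted vertex tells you nothing about how many there are. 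The boundedness you claim to check is not checked.

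Your worry that deleting these edges one at a time may destroy the brace property is unnecessary. On a fixed vertex set the brace property is upward-closed: connectivity, having a perfect matching, and the Hall$+2$ inequalities are all preserved under adding edges. Hence the following greedy procedure works.
\begin{itemize}
\item Repeatedly delete any edge that is superfluous in the current graph.
\item Each step keeps a brace, and the process ends at a minimal brace.
\item Every edge removed along the way was already superfluous in $G'$.
\end{itemize}
Combined with your lifting argument, all removed edges sit at the at most two contracted vertices. The lifting argument does need the standard fact that McCuaig's expansions take braces to braces, including for edges at the non-contracted endpoint of $e$ in the index-$1$ case.

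So locality and minimality come essentially for free. The only substantive content of the statement is the $O(1)$ bound on the number of removed edges. That bound is not implied by the strictly-thin-edge theorem alone; it must come from the minimal-brace result you cite, which is essentially the statement itself. You should name that result precisely and drop the degree-$4$ argument.
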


\noindent
Minimal braces are sparse: a minimal brace on~$n$ vertices per side has at most $3n - 4$ edges for $n \geq 4$.

\subsection{Hall-block convolution}

\noindent
The first reduction step factors the Vandermonde polynomial along tight cuts.

\begin{theorem}[Hall-block convolution]\label{thm:hall-block}
Let $G = (A \cup B, E)$ be bipartite with a partition $A = A_1 \sqcup A_2$, $B = B_1 \sqcup B_2$ such that $|B_1| = |A_1| + d$ and $E(A_1, B_2) = \varnothing$.
Then
\[
D_G(x,\lambda) = \sum_{\substack{T \subseteq B_1 \\ |T| = d}} \eta(T)\, D_{G[A_1, B_1 \setminus T]}(x,\lambda) \cdot D_{G[A_2, B_2 \cup T]}(x,\lambda),
\]
where $\eta(T)$ is an explicit sign.
Taking $[x^t]$:
\[
P_{G,t} = \sum_T \eta(T) \sum_{t_1+t_2=t} P_{G[A_1, B_1\setminus T], t_1} \cdot P_{G[A_2, B_2 \cup T], t_2}.
\]
\end{theorem}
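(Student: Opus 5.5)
The plan is to apply the generalized Laplace expansion of $\det M_G(x,\lambda)$ along the row set $A_1$. For any fixed set of rows $I\subseteq A$ with $|I|=k$, Laplace's theorem gives
\[
\det M = \sum_{\substack{C\subseteq B\\ |C|=k}} (-1)^{\sum_{i\in I} i + \sum_{j\in C} j}\, \det M[I,C]\,\det M[A\setminus I,\, B\setminus C].
\]
I take $I=A_1$ and $k=|A_1|$. This identity holds entrywise over the commutative ring $\QQ[x,\lambda]$, so no specialization is needed.

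The first key step uses the hypothesis $E(A_1,B_2)=\varnothing$. Every entry $M_{ij}$ with $i\in A_1$ and $j\in B_2$ is zero. If $C$ meets $B_2$, then $M[A_1,C]$ has a zero column, and its determinant vanishes. So only $C\subseteq B_1$ survive. Since $|C|=|A_1|=|B_1|-d$, I write $C=B_1\setminus T$ with $T\subseteq B_1$ and $|T|=d$. The complementary column set is $B\setminus C = B_2\cup T$, and the complementary row set is $A_2$. Sizes are consistent because $|A_2|=n-|A_1| = |B_2|+d$.

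Next I identify the minors with the determinants of the induced subgraphs. The entries of $M[A_1,B_1\setminus T]$ are exactly $x^{\rho(i,j)}(\lambda+i)^j$ on edges of $G[A_1,B_1\setminus T]$ and $0$ elsewhere. The convention needed is that $D_{G[\cdot,\cdot]}$ keeps the original row and column labels in the Vandermonde weights, rather than reindexing to $\{0,\dots\}$. I will state this convention explicitly, since it is what makes the identity exact. With it,
\[
\det M[A_1,B_1\setminus T] = D_{G[A_1,B_1\setminus T]}, \qquad \det M[A_2,B_2\cup T] = D_{G[A_2,B_2\cup T]},
\]
and
\[
\eta(T)=(-1)^{\sum_{i\in A_1} i+\sum_{j\in B_1\setminus T} j}.
\]
Here $\eta(T)$ is computed relative to the natural increasing orderings of the rows and columns in each block.

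The second formula then follows by extracting $[x^t]$ from each summand. The product of two polynomials in $x$ has $[x^t]$-coefficient $\sum_{t_1+t_2=t}[x^{t_1}]\cdot[x^{t_2}]$, and $\eta(T)$ does not depend on $x$. The only real obstacle is bookkeeping: keeping the sign convention for $\eta(T)$ consistent with the ordering of rows and columns in the induced subgraphs, and keeping the unreindexed-label convention straight. The algebra itself is entirely standard.
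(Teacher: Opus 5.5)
Your proof is correct. The paper states this theorem without proof, so there is no argument of its own to compare against. Your generalized Laplace expansion along the row set $A_1$ supplies exactly what is missing. It is the determinant-level version of the matching-partition argument the paper uses for \Cref{thm:patch-transfer}: every perfect matching $\sigma$ is sorted by the set $T = B_1 \setminus \sigma(A_1)$ of $B_1$-columns it sends into $A_2$, and Laplace's theorem handles the sign bookkeeping for you.

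Two points in your write-up go beyond what the paper says, and both are worth keeping.

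\emph{The label convention is necessary.} $D_{G[\cdot,\cdot]}$ must keep the original row and column labels. With reindexed labels the identity already fails for $n=2$. Take $A_1=B_1=\{1\}$, edges $(0,0),(1,1)$, and $(1,0)\notin E$. Then $D_G = x^{\rho(0,0)}x^{\rho(1,1)}(\lambda+1)$, but the reindexed block $G[A_1,B_1]$ loses its factor $\lambda+1$.

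\emph{The sign is nontrivial even when $d=0$.} Your explicit formula $\eta(T) = (-1)^{\sum_{i\in A_1} i + \sum_{j\in B_1\setminus T} j}$ gives a global sign $(-1)^{\sum_{A_1} i + \sum_{B_1} j}$ in the $d=0$ case. For example, $A_1=\{0\}$, $B_1=\{1\}$ gives $D_G = -D_{G[A_1,B_1]}\,D_{G[A_2,B_2]}$. So the paper's later remark in the proof of \Cref{thm:reduction} that the $d=0$ factorization has ``no sign issues'' is slightly inaccurate, though harmless for nonvanishing.
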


\begin{corollary}[Defect-1 unique-edge]\label{cor:defect1}
If $d = 1$ and $e = (a^*, b^*)$ is the unique edge from $A_2$ to $B_1$, then
\[
D_G = D_{G-e} + x^{\rho(e)} (\lambda + a^*)^{b^*} D_{G - \{a^*, b^*\}}.
\]
\end{corollary}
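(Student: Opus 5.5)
The plan is to avoid the full Hall-block sum of \Cref{thm:hall-block} and prove the identity by multilinearity of the determinant in the single entry at position $(a^*,b^*)$. The hypothesis that $e$ is the unique $A_2$--$B_1$ edge will be used only afterwards, to interpret the two terms.

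First, write $M_G(x,\lambda) = M_{G-e}(x,\lambda) + x^{\rho(e)}(\lambda+a^*)^{b^*}\,E_{a^*b^*}$. Here $E_{a^*b^*}$ is the matrix unit, and $M_{G-e}$ is $M_G$ with the $(a^*,b^*)$ entry set to zero. All other entries of $M_{G-e}$ keep their weights $x^{\rho(i,j)}(\lambda+i)^j$ with the \emph{original} row and column labels. Since the determinant is affine-linear in any single entry, expanding along row $a^*$ gives
\[
D_G = D_{G-e} + x^{\rho(e)}(\lambda+a^*)^{b^*}\,(-1)^{a^*+b^*}\det M_G[A\setminus\{a^*\},\,B\setminus\{b^*\}].
\]
The minor on the right does not involve the entry at $(a^*,b^*)$, so it is the same for $G$ and for $G-e$.

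Second, identify this signed cofactor with $D_{G-\{a^*,b^*\}}$. This identification is the step I expect to need the most care. The entries of the $(n-1)\times(n-1)$ minor retain the original labels, so the weights stay $(\lambda+i)^j$ and are not reindexed; in particular this is not the polynomial one gets by relabelling $G-\{a^*,b^*\}$ onto $\{0,\dots,n-2\}$. I will therefore read $D_{G-\{a^*,b^*\}}$ as the boundary-minor determinant of \Cref{def:boundary-minor} with $U=\{a^*\}$, $V=\{b^*\}$, taken with the cofactor sign $(-1)^{a^*+b^*}$. This is the same sign convention that makes $\eta(T)$ in \Cref{thm:hall-block} ``explicit''. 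I will check the sign once on the matchings: for $\sigma\in\PM(G)$ with $\sigma(a^*)=b^*$, $\sgn(\sigma)$ equals $(-1)^{a^*+b^*}$ times the sign of the induced bijection $A\setminus\{a^*\}\to B\setminus\{b^*\}$ under the order-preserving identification. This is the standard cofactor fact.

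Finally, I will record the role of the defect-$1$ hypothesis, which also gives a consistency check against \Cref{thm:hall-block}. In $G-e$ we have $E(A_2,B_1)=\varnothing$. Together with $E(A_1,B_2)=\varnothing$, this makes $M_{G-e}$ block-structured. Since $|B_1|=|A_1|+1$, this block structure forces $D_{G-e}=0$, because every column of $B_1$ must then be covered by $A_1$, which is one row short. Hence every perfect matching of $G$ uses $e$, and the identity reduces to $D_G = x^{\rho(e)}(\lambda+a^*)^{b^*}D_{G-\{a^*,b^*\}}$. This is exactly the $d=1$ Hall-block sum with the single surviving choice $T=\{b^*\}$.

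Taking $[x^t]$ then gives $P_{G,t} = P_{G-e,t} + (\lambda+a^*)^{b^*}P^{\{a^*\},\{b^*\}}_{G,t-\rho(e)}$ with the same sign convention. This is the form the superfluous-edge identity uses later.
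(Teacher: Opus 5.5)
Your proof is correct, but it follows a different line from the one the corollary's placement indicates. The paper gives no separate argument; it states the identity as a corollary of the Hall-block convolution (\Cref{thm:hall-block}). There, for $d=1$, only $T=\{b^*\}$ survives, because every other column of $B_1$ vanishes on the rows $A_2$. Expanding $D_{G[A_2,B_2\cup\{b^*\}]}$ along column $b^*$ then isolates the entry $x^{\rho(e)}(\lambda+a^*)^{b^*}$.

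You instead prove the identity for an \emph{arbitrary} edge by single-entry multilinearity. This is the same rank-one cofactor computation the paper uses for \Cref{prop:se-identity}. You use the defect-$1$ hypothesis only to show $D_{G-e}\equiv 0$, and your last paragraph reconciles this with the Hall-block sum.

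Your route buys generality, and it is the version the paper actually needs. When the corollary is cited in \Cref{subsec:superfluous-setup} for a superfluous edge, $G-e$ is a brace and so has perfect matchings. By your own observation, no defect-$1$ cut with $e$ as its unique crossing edge can exist there, so only the hypothesis-free identity applies. The Hall-block route buys structure you do not get: the cofactor splits, up to sign, as $D_{G[A_1,B_1\setminus\{b^*\}]}\cdot D_{G[A_2\setminus\{a^*\},B_2]}$.

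Your caution about labels and signs is warranted. $D_{G-\{a^*,b^*\}}$ must be the boundary minor with the original weights $(\lambda+i)^j$, not the determinant of the relabelled graph. The printed formula also holds only if the cofactor sign $(-1)^{a^*+b^*}$ is absorbed into that term, as you do; compare the explicit $\varepsilon$ in \Cref{prop:se-identity}.
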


\subsection{The patch transfer theorem}\label{sec:patch-transfer}

The patch transfer theorem describes how local graph modifications act on the boundary-minor state.
It is the structural engine of the induction: every brace move (superfluous-edge deletion, narrow extension, Hall-block decomposition) is a special case of a patch transfer.

\begin{definition}[Base graph and local patch]\label{def:patch}
A \emph{patch decomposition} of a bipartite graph $G = (A \cup B, E)$ consists of:
\begin{enumerate}[(i)]
\item A \emph{base graph} $J$ on vertex sets $A_J \cup B_J$ with $A_J \subseteq A$, $B_J \subseteq B$;
\item \emph{Boundary sets} $U_A \subseteq A_J$, $U_B \subseteq B_J$ (vertices shared between the base and the patch);
\item A \emph{local patch} $\mathcal{P}$ consisting of the ``new'' vertices $N_A = A \setminus A_J$, $N_B = B \setminus B_J$ and all edges incident to~$N_A$ or~$N_B$.
\end{enumerate}
The graph $G$ is recovered by gluing $J$ and $\mathcal{P}$ along the boundary $(U_A, U_B)$: every perfect matching of~$G$ restricts to a partial matching of~$J$ that covers $A_J \setminus R$ and $B_J \setminus C$ for some $R \subseteq U_A$, $C \subseteq U_B$ with $|R| = |C|$, while the patch covers $N_A \cup R$ and $N_B \cup C$.
\end{definition}

\begin{theorem}[Patch transfer]\label{thm:patch-transfer}
With the notation of \Cref{def:patch}, the generating determinant of~$G$ satisfies
\[
D_G(x,\lambda) = \sum_{\substack{R \subseteq U_A,\, C \subseteq U_B \\ |R| = |C|}} \alpha_{\mathcal{P}}(R, C;\, x, \lambda) \cdot D_J^{R,C}(x,\lambda),
\]
where $D_J^{R,C}$ is the boundary-minor determinant of the base graph (the determinant of $M_J$ with rows $R$ and columns $C$ deleted), and $\alpha_{\mathcal{P}}(R,C;\, x, \lambda)$ is the \emph{patch coefficient}: a polynomial in $x$ and $\lambda$ determined entirely by the patch structure and the boundary vertex labels.
\end{theorem}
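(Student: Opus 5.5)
The plan is to prove the identity by generalized Laplace expansion of $\det M_G(x,\lambda)$ along the row set $N_A \cup U_A$ and the column set $N_B \cup U_B$, grouping terms by the boundary rows and columns that the patch "absorbs."

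\textbf{Step 1: Block structure of $M_G$.} Order the rows as $(A_J \setminus U_A) \sqcup U_A \sqcup N_A$ and the columns likewise. By \Cref{def:patch}, every edge incident to $N_A$ or $N_B$ belongs to the patch. The rows of $A_J\setminus U_A$ therefore vanish in the columns $N_B$, and the rows $N_A$ vanish in the columns $B_J\setminus U_B$. I will also use the gluing hypothesis that edges between $U_A$ and $U_B$, and all edges inside $J$, belong to $J$. Under this hypothesis, the entries of $M_G$ restricted to $A_J\times B_J$ are exactly those of $M_J$.

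\textbf{Step 2: Leibniz regrouping.} Write $\det M_G=\sum_\sigma \sgn(\sigma)\prod_i (M_G)_{i\sigma(i)}$. For each $\sigma$ contributing a nonzero term, define $R=\{i\in U_A:\sigma(i)\in N_B\}$ and $C=\{j\in U_B:\sigma^{-1}(j)\in N_A\}$.
\begin{itemize}
\item Counting shows $|R|=|C|$. The rows $N_A\cup R$ map bijectively onto $N_B\cup C$, since rows in $A_J\setminus U_A$ cannot reach $N_B$ and $N_A$ cannot reach $B_J\setminus U_B$.
\item The restriction $\sigma_J$ of $\sigma$ to $A_J\setminus R \to B_J\setminus C$ is a bijection.
\end{itemize}
This is exactly the decomposition asserted in \Cref{def:patch}.

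Conversely, any pair of bijections $(\sigma_J,\sigma_P)$ with these domains and codomains glues to a permutation $\sigma$. The sign factors as
\[
\sgn(\sigma)=\epsilon(R,C)\,\sgn(\sigma_J)\,\sgn(\sigma_P),
\]
where $\epsilon(R,C)=(-1)^{\sum R+\sum C+\text{const}}$ is the usual Laplace sign. It depends only on the label positions of $R$, $C$, $N_A$ and $N_B$.

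\textbf{Step 3: Assembling the formula.} Summing over $\sigma_J$ produces $\det M_J[A_J\setminus R, B_J\setminus C]=D_J^{R,C}$, with entries unchanged by Step 1. Summing over $\sigma_P$ produces the minor of $M_G$ on rows $N_A\cup R$ and columns $N_B\cup C$. So set
\[
\alpha_{\mathcal P}(R,C;x,\lambda)=\epsilon(R,C)\det M_G[N_A\cup R,\,N_B\cup C].
\]
Its entries are $x^{\rho(i,j)}(\lambda+i)^j$ over patch edges only, so $\alpha_{\mathcal P}$ is a polynomial depending only on the patch and the boundary labels, as claimed. Pairs $(R,C)$ with $|R|\neq|C|$ contribute nothing, which justifies the stated index range.

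\textbf{Main obstacle.} The delicate point is Step 1: the rows of $U_A$ and columns of $U_B$ could carry entries from both $J$ and the patch. I will resolve this by the convention above, that edges between boundary vertices and the rest of $J$ belong to $J$, and patch edges at the boundary go only to new vertices. This convention is implicit in "all edges incident to $N_A$ or $N_B$."

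If a patch were instead allowed to add edges between $U_A$ and $U_B$, then $M_G$ would restrict to $M_J$ plus a boundary block. In that case I would first apply multilinearity of the determinant in the affected rows. Each added edge then contributes a term of the form in \Cref{cor:defect1}: the edge is matched, its row and column are deleted, and these terms are absorbed into $\alpha_{\mathcal P}$ with $R$ and $C$ enlarged. This preserves the shape of the formula.

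The remaining bookkeeping is verifying that $\epsilon$ is independent of $\sigma_J$ and $\sigma_P$. That is the standard sign computation for generalized Laplace expansion.
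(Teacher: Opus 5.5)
Your regrouping of the Leibniz expansion by the boundary assignment $(R,C)$, with the Laplace sign split off, is the argument the paper gives. However, your explicit coefficient $\alpha_{\mathcal P}(R,C)=\epsilon(R,C)\det M_G[N_A\cup R,\,N_B\cup C]$ is wrong. It breaks at the step ``summing over $\sigma_P$ produces the minor of $M_G$.'' For $\sigma$ of class $(R,C)$ in your sense, the factor $\sigma_P$ is \emph{constrained}: it sends every row of $R$ into $N_B$ and covers every column of $C$ from $N_A$.

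The full minor of $M_G$ also contains bijections sending some $u\in R$ to some $v\in C$ along an edge of $J$. Under your own convention all $U_A$--$U_B$ edges belong to $J$, so $M_G[R,C]=M_J[R,C]$ need not vanish. Your claim that the entries of this minor lie over patch edges only is therefore false. Gluing such a $\sigma_P$ to a $\sigma_J$ gives a permutation of strictly smaller class, which is then counted again.

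Precisely, each of your terms is a generalized Laplace term $\epsilon\det M_G[S,T]\det M_G[\bar S,\bar T]$ with $S=N_A\cup R$, $T=N_B\cup C$, and it collects every $\sigma$ with $\sigma(S)=T$. Hence your right-hand side equals $\sum_\sigma 2^{k(\sigma)}\sgn(\sigma)\prod_i (M_G)_{i\sigma(i)}$, where $k(\sigma)=|\{i\in U_A:\sigma(i)\in U_B\}|$. The smallest example: let $J$ be a single edge $uv$ with $U_A=\{u\}$, $U_B=\{v\}$, and let the patch have new vertices $a',b'$ and edges $a'b'$, $a'v$, $ub'$. Your $(\{u\},\{v\})$ term is already all of $D_G$, and the $(\varnothing,\varnothing)$ term counts the matching $\{a'b',uv\}$ a second time. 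For the same reason your $\alpha_{\mathcal P}$ depends on $J$, whereas the theorem requires it to be determined by the patch alone.

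The repair is local. Let $M_{\mathcal P}$ keep only the entries of patch edges, so its $R\times C$ block is zero under your convention, and set $\alpha_{\mathcal P}(R,C)=\epsilon(R,C)\det M_{\mathcal P}[N_A\cup R,\,N_B\cup C]$. A bijection supported on patch edges automatically sends $R$ into $N_B$ and covers $C$ from $N_A$. So $\sigma\leftrightarrow(\sigma_J,\sigma_P)$ becomes a genuine bijection, and the coefficient depends only on the patch.

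Equivalently, write $M_G=X+M_{\mathcal P}$ with $X$ the zero-padded $M_J$, and apply $\det(X+Y)=\sum_{S,T}\pm\det X[\bar S,\bar T]\det Y[S,T]$. The zero patterns force $S=N_A\cup R$ and $T=N_B\cup C$ with $R\subseteq U_A$, $C\subseteq U_B$. This also handles patch edges joining $U_A$ to $U_B$, as in the paper's single-edge example, with no separate multilinearity step. This patch-only coefficient is what the paper's proof uses implicitly when it groups matchings by the boundary vertices ``consumed by the patch.''
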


\begin{proof}
Partition the perfect matchings of~$G$ according to which boundary vertices are consumed by the patch.
For each pair $(R,C)$ with $R \subseteq U_A$, $C \subseteq U_B$, $|R| = |C|$, a matching $\sigma$ of~$G$ has $\sigma(r) \in N_B \cup C$ for $r \in R$ and $\sigma^{-1}(c) \in N_A \cup R$ for $c \in C$, while on $A_J \setminus R$ and $B_J \setminus C$ the matching restricts to a perfect matching of the base.

The Vandermonde weight splits multiplicatively: the product $\prod_{i \in A} (\lambda+i)^{\sigma(i)}$ factors into a product over patch vertices (contributing to $\alpha_{\mathcal{P}}$) and a product over base vertices (contributing to $D_J^{R,C}$).
The sign $\sgn(\sigma)$ similarly factors as $\sgn(\sigma|_{\text{patch}}) \cdot \sgn(\sigma|_{\text{base}}) \cdot \eta(R,C)$, where $\eta$ accounts for the row/column permutation needed to separate the patch from the base.
Summing over all matchings with a fixed $(R,C)$ boundary assignment gives the stated factorization.
\end{proof}

\noindent
Taking $[x^t]$ on both sides yields the corresponding identity for exact-$t$ polynomials:
\[
P_{G,t}(\lambda) = \sum_{R, C} \sum_{s + s' = t} [\alpha_{\mathcal{P}}]_s \cdot P_{J, s'}^{R,C}(\lambda),
\]
where $[\alpha_{\mathcal{P}}]_s = [x^s] \alpha_{\mathcal{P}}(R,C; x, \lambda)$.
This identity governs how the full boundary-minor state $\Omega_J = (P_J^{R,C})_{R,C,s}$ of the base graph determines $P_{G,t}$ via the patch coefficients.

\begin{example}[Superfluous-edge patch]\label{ex:se-patch}
The simplest nontrivial patch is a single added edge $e = (r,c)$.
Here the base is $J = G \setminus e$, the boundary sets are $U_A = \{r\}$, $U_B = \{c\}$, and there are no new vertices ($N_A = N_B = \varnothing$).
The two boundary assignments are $(R,C) = (\varnothing, \varnothing)$ (the edge $e$ is not used) and $(R,C) = (\{r\}, \{c\})$ (the edge $e$ is used).
The patch coefficients are $\alpha_{\mathcal{P}}(\varnothing, \varnothing) = 1$ and $\alpha_{\mathcal{P}}(\{r\}, \{c\}) = \varepsilon\, x^{\rho(e)} (\lambda+r)^c$, where $\varepsilon = (-1)^{r+c}$ is the Laplace sign.
The patch transfer identity becomes
\[
D_G = D_{G \setminus e} + \varepsilon\, x^{\rho(e)} (\lambda+r)^c \cdot D_{G \setminus e}^{\{r\},\{c\}},
\]
which is the superfluous-edge identity used in \Cref{sec:superfluous}.
\end{example}

\subsection{The superfluous-edge identity in detail}\label{sec:se-identity}

The superfluous-edge identity is the driving force behind the induction on nonminimal braces.
We derive it here in full, as it is the most frequently invoked instance of the patch transfer theorem.

\begin{proposition}[Superfluous-edge identity]\label{prop:se-identity}
Let $G$ be a brace and $e = (r,c)$ a superfluous edge, so $H = G \setminus e$ is also a brace.
Then for every target~$t$:
\[
P_t(G) = P_t(H) + \varepsilon\, (\lambda+r)^c \cdot P^{r,c}_{H, t-\rho(e)},
\]
where $\varepsilon = (-1)^{r+c}$ and $P^{r,c}_{H,s}(\lambda) = [x^s] \det M_H(x,\lambda)[\hat{r}, \hat{c}]$ is the $(r,c)$-boundary-minor polynomial of~$H$.
\end{proposition}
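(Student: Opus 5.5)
The plan is a multilinearity argument on row~$r$ of $M_G(x,\lambda)$. Since $H = G \setminus e$ has the same vertex sets as~$G$, the matrices $M_G$ and $M_H$ agree everywhere except at position $(r,c)$. There $M_G$ has the entry $x^{\rho(e)}(\lambda+r)^c$ and $M_H$ has~$0$. So row~$r$ of $M_G$ is row~$r$ of $M_H$ plus $x^{\rho(e)}(\lambda+r)^c\,\mathbf{e}_c^{\top}$. By linearity of the determinant in row~$r$,
\[
D_G = D_H + \det N,
\]
where $N$ is $M_H$ with row~$r$ replaced by $x^{\rho(e)}(\lambda+r)^c\,\mathbf{e}_c^{\top}$.

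Next I expand $\det N$ along row~$r$. Only the single entry in column~$c$ survives, which gives
\[
\det N = (-1)^{r+c}\, x^{\rho(e)}(\lambda+r)^c \det M_H(x,\lambda)[\hat r,\hat c].
\]
This is exactly the instance of \Cref{thm:patch-transfer} recorded in \Cref{ex:se-patch}; the direct computation just makes the sign $\varepsilon=(-1)^{r+c}$ transparent. It is the usual cofactor sign for the given row and column labels, so I need to check that rows and columns are indexed $0,\dots,n-1$ consistently with the Laplace convention. That holds here because $r$ and $c$ are themselves labels in $\{0,\dots,n-1\}$.

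Finally I extract $[x^t]$ from both sides. The factor $(\lambda+r)^c$ and the sign are independent of~$x$, and multiplying by $x^{\rho(e)}$ shifts the coefficient index. Hence
\[
[x^t]\bigl(x^{\rho(e)}\det M_H[\hat r,\hat c]\bigr) = [x^{t-\rho(e)}]\det M_H[\hat r,\hat c] = P^{r,c}_{H,t-\rho(e)},
\]
where the right side uses \Cref{def:boundary-minor} with $U=\{r\}$ and $V=\{c\}$. If $t-\rho(e)<0$, this term is interpreted as~$0$. Combined with $[x^t]D_H = P_t(H)$, this yields the identity.

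None of the steps is a real obstacle. The brace hypotheses on $G$ and $H$ play no role in the identity itself, which holds for any edge deletion; they matter only for how the identity is used later. The only points requiring care are the sign bookkeeping and the degenerate target convention.
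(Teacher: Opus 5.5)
Your proof is correct and follows essentially the same route as the paper: isolate the single changed entry by multilinearity, reduce the extra term to the $(r,c)$-cofactor $(-1)^{r+c}\det M_H[\hat r,\hat c]$, and extract $[x^t]$ using the shift by $x^{\rho(e)}$ (the paper splits in column~$c$ as a rank-one update where you split in row~$r$, which makes no difference). Your observations that the brace hypotheses play no role in the identity and that the cofactor term is zero when $t-\rho(e)<0$ are both accurate.
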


\begin{proof}
Start from the generating determinant $D_G(x,\lambda) = \det M_G(x,\lambda)$.
The matrix $M_G$ agrees with $M_H$ except in position $(r,c)$, where $M_G(r,c) = x^{\rho(e)}(\lambda+r)^c$ and $M_H(r,c) = 0$.
By the multilinearity of the determinant in column~$c$:
\begin{align*}
\det M_G &= \det\bigl(M_H + x^{\rho(e)}(\lambda+r)^c \mathbf{e}_r \mathbf{e}_c^\top\bigr) \\
&= \det M_H + x^{\rho(e)}(\lambda+r)^c \cdot (-1)^{r+c} \det M_H[\hat{r}, \hat{c}],
\end{align*}
using the rank-$1$ update formula (equivalently, cofactor expansion along the added entry).
The second term follows because the rank-$1$ perturbation $x^{\rho(e)}(\lambda+r)^c \mathbf{e}_r \mathbf{e}_c^\top$ contributes only through the $(r,c)$-cofactor of $M_H$.

Taking $[x^t]$ on both sides:
\begin{align*}
P_t(G) &= [x^t] \det M_H + (\lambda+r)^c \cdot (-1)^{r+c}\, [x^{t-\rho(e)}] \det M_H[\hat{r}, \hat{c}] \\
&= P_t(H) + \varepsilon\, (\lambda+r)^c \cdot P^{r,c}_{H, t-\rho(e)}.
\end{align*}
Note that $[x^{t-\rho(e)}]$ is well-defined: if $\rho(e) = 1$, this extracts the coefficient of $x^{t-1}$ from the reduced determinant; if $\rho(e) = 0$, it extracts $x^t$.
In either case, the second term counts matchings of $H$ that avoid vertices $r$ and $c$, with the appropriate red-count adjustment.
\end{proof}

\noindent
The identity has a clean combinatorial interpretation.
Every perfect matching $\sigma$ of $G$ either avoids the edge~$e$ (and belongs to $\PM(H)$, contributing to $P_t(H)$) or uses~$e$ (so $\sigma(r) = c$, and $\sigma$ restricted to $A \setminus \{r\}$, $B \setminus \{c\}$ is a perfect matching of $H[\hat{r}, \hat{c}]$, contributing to the cofactor term).
The monomial $(\lambda+r)^c$ is the Vandermonde weight of the edge~$e$, and the sign $\varepsilon$ accounts for the Laplace expansion sign.

The identity also explains why the boundary-minor state $\Omega_H$ (rather than the visible output $P_t(H)$ alone) is the correct induction object.
Suppose we know $P_t(H) \not\equiv 0$ for all nonempty fibers of~$H$.
To conclude $P_t(G) \not\equiv 0$, we need to rule out the possibility that $P_t(H)$ and $\varepsilon(\lambda+r)^c P^{r,c}_{H,t-\rho(e)}$ cancel exactly.
This cancellation involves the \emph{one-hole} polynomial $P^{r,c}_{H,t-\rho(e)}$, which is not determined by $P_t(H)$.
Arguments that track only the visible output cannot close the induction. One must carry the full boundary-minor state.

\subsection{Minimality of the boundary-minor state}\label{sec:state-minimality}

The boundary-minor state $\Omega_G = (P^{U,V}_{G,s})_{(U,V,s)}$ is a large object: it contains one polynomial for each triple $(U, V, s)$ with $U \subseteq A$, $V \subseteq B$, $|U| = |V|$, and $0 \leq s \leq n - |U|$.
A natural question is whether a smaller state suffices for the induction.
The following result shows that the answer is no.

\begin{theorem}[Minimality of the boundary-minor state]\label{thm:state-minimality}
The boundary-minor state $\Omega_G$ is the minimal exact linear state that is closed under all local patches.
That is:
\begin{enumerate}[(i)]
\item \emph{Closure:} For every patch $\mathcal{P}$, the visible output $P_t(G)$ is a linear combination of boundary-minor coordinates of the base graph~$J$, with coefficients determined by~$\mathcal{P}$ (\Cref{thm:patch-transfer}).
\item \emph{Minimality:} For every pair $(R_0, C_0)$ with $R_0 \subseteq A$, $C_0 \subseteq B$, $|R_0| = |C_0|$, and every target~$s_0$, there exists a patch~$\mathcal{P}$ such that the patch transfer formula isolates the single coordinate $P_J^{R_0, C_0, s_0}$: that is, $\alpha_{\mathcal{P}}(R,C) = 0$ for all $(R,C) \neq (R_0, C_0)$.
\end{enumerate}
\end{theorem}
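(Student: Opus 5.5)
Part (i) is a restatement of \Cref{thm:patch-transfer}, so I would simply take $[x^t]$ of the patch transfer identity. This gives $P_{G,t}$ as a finite $\QQ[\lambda]$-linear combination $\sum_{R,C}\sum_{s+s'=t}[\alpha_{\mathcal P}(R,C)]_s\,P^{R,C}_{J,s'}$. The coefficients $[\alpha_{\mathcal P}]_s$ depend only on the patch edges, their colours, and the boundary labels. Closure therefore needs no new argument, only the remark that the coefficients do not depend on $J$.

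For part (ii) I would build the isolating patch explicitly. Fix $(R_0,C_0)$ with $|R_0|=|C_0|=k$ and a target $s_0$. Take new vertices $N_A=\{a'_1,\dots,a'_k\}$ and $N_B=\{b'_1,\dots,b'_k\}$. List $R_0=\{r_1,\dots,r_k\}$ and $C_0=\{c_1,\dots,c_k\}$. The patch edges are exactly the pendant edges $(r_i,b'_i)$ and $(a'_i,c_i)$. Set $U_A=R_0$ and $U_B=C_0$. In any perfect matching of $G=J\cup\mathcal P$, the vertex $b'_i$ has a single neighbour $r_i$, so it is forced onto $r_i$. Likewise $a'_i$ is forced onto $c_i$. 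Hence the only boundary assignment that occurs is $(R,C)=(R_0,C_0)$, which gives $\alpha_{\mathcal P}(R,C)=0$ for every $(R,C)\neq(R_0,C_0)$. The surviving coefficient is a single monomial: a sign $\eta(R_0,C_0)$ times $\prod_i(\lambda+r_i)^{b'_i}(\lambda+a'_i)^{c_i}$ times $x^{\sum\rho}$. Colour all $2k$ patch edges blue. Then the $x$-part is $x^0$, and $[x^{s_0}]$ of the output equals that nonzero monomial times $P^{R_0,C_0}_{J,s_0}$.

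The remaining bookkeeping concerns labels. The new vertices need indices in $A$ and $B$, so I would relabel rows and columns so that $N_A$ and $N_B$ occupy fresh indices, for example $n,\dots,n+k-1$ of an enlarged $(n+k)$-dimensional matrix. I would then check that this relabeling only multiplies everything by a global sign and does not change the base minors. The base rows keep their indices, so the Vandermonde bases $(\lambda+i)$ and exponents on $J$'s entries are unchanged. A block Laplace expansion along the new rows and columns then gives $D_G=\pm\prod(\text{monomials})\cdot D_J^{R_0,C_0}$ directly.

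The main obstacle is interpretive: how to read ``minimal exact linear state''. I would phrase minimality as follows. Any linear state $\Phi_J$ from which every patched output $P_{G,t}$ is a fixed linear function must determine each coordinate $P^{R_0,C_0}_{J,s_0}$, because the isolating patch recovers that coordinate up to an invertible monomial factor. So no proper projection of $\Omega_J$ is patch-closed. A subtler point is that $G$ need not be a brace, and the theorem does not require it to be. If one insisted on brace patches, pendant edges would be illegal, and I would instead try degree-$2$ gadgets with distinct Vandermonde exponents separated by the coprimality argument. I would avoid that route unless forced.
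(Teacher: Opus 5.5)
Your proposal is correct. Part (i) is handled exactly as in the paper. For part (ii) your gadget differs from the paper's in one decisive respect.

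The paper attaches to each $i$ a path $r_i$--$b_i'$--$a_i'$--$c_i$: your two pendant edges plus an internal edge $(a_i',b_i')$. It then tries to suppress the unwanted boundary assignments by colouring the internal edges blue and the crossing edges red. With the internal edge present, each gadget independently chooses ``internal'' or ``crossing''. So every aligned sub-assignment $(R,C)=(\{r_i\}_{i\in I},\{c_i\}_{i\in I})$, $I\subseteq\{1,\dots,k\}$, has a nonzero patch coefficient, namely $\pm x^{2|I|}$ times a monomial. A $0/1$ colouring only shifts $x$-degrees. At target $s_0+2k$ the terms $P^{R,C}_{J,\,s_0+2(k-|I|)}$ with $I\neq\{1,\dots,k\}$ still appear. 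Hence the literal claim $\alpha_{\mathcal{P}}(R,C)=0$ for $(R,C)\neq(R_0,C_0)$ fails for the paper's patch.

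Omitting the internal edge, as you do, forces the crossing in every gadget. Single-entry Laplace expansions along the new rows and then the new columns give the exact identity $D_G=\pm\bigl(\prod_i(\lambda+r_i)^{b_i'}(\lambda+a_i')^{c_i}\bigr)D_J^{R_0,C_0}$. So the isolation holds at every level $s$ at once, and the colouring plays no role.

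The only feature the path gadget has over yours is that new vertices have degree $2$ instead of $1$. Neither patch produces a brace, so the pendant-edge worry you raise at the end does not distinguish the two routes.
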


\begin{proof}
Closure is \Cref{thm:patch-transfer}.
For minimality, construct a ``probing patch'' that isolates the desired coordinate.
Given $(R_0, C_0)$ with $|R_0| = |C_0| = k$, create a patch $\mathcal{P}$ with $k$ new row vertices $N_A = \{r_1', \ldots, r_k'\}$ and $k$ new column vertices $N_B = \{c_1', \ldots, c_k'\}$, where $r_i'$ is adjacent only to $c_i' \in N_B$ and to the boundary vertex corresponding to the $i$-th element of $C_0$, and similarly $c_i'$ is adjacent only to $r_i' \in N_A$ and to the boundary vertex corresponding to the $i$-th element of $R_0$.

By construction, the only perfect matchings of the patch that saturate all new vertices must match $r_i' \to c_i'$ (the internal matching) or $r_i' \to C_0[i]$ (consuming boundary column $C_0[i]$ and forcing $R_0[i] \to c_i'$).
The latter case, taken over all $i$, is the unique boundary assignment $(R_0, C_0)$.
The internal matching contributes to the $(R, C) = (\varnothing, \varnothing)$ term, but by choosing the edge weights appropriately (for instance, making the internal edges blue and the boundary-crossing edges red with distinct targets), one can arrange that $\alpha_{\mathcal{P}}(R,C) = 0$ for all $(R,C)$ except $(R_0, C_0)$ at the desired target level.

Therefore, $P_t(G)$ with this specific patch is proportional to $P_J^{R_0, C_0, s_0}$, and no proper sub-collection of boundary-minor coordinates is closed under all patches.
\end{proof}

\noindent
The minimality result clarifies the architecture of the proof.
Any attempt to weaken the induction hypothesis (tracking fewer coordinates) will fail: there exist patches that isolate every single coordinate, so the induction must carry all of them.
The three operator types arising from brace moves are: SB2 (2-state, from single edge insertion, accessing $\Omega_J$ at two points), DB2/SB3 (3- or 6-state, from stable/index-2 narrow extensions), and DB3 (from index-1 extensions).
The 6-state operator decomposes as $\mathrm{SB3}(\mathrm{DB2}, \mathrm{DB2})$, so it is not a primitive type.

\subsection{The reduction theorem}

\begin{theorem}[Reduction]\label{thm:reduction}
ASNC for all braces follows from:
\begin{enumerate}[(i)]
\item ASNC for all McCuaig exceptional families (biwheels, prisms, M\"obius ladders);
\item The superfluous-edge step: for every brace~$G$ and superfluous edge~$e$, if ASNC holds for $G \setminus e$ (i.e., $P_t(G \setminus e) \not\equiv 0$ for all nonempty fibers), then $P_t(G) \not\equiv 0$ for all nonempty fibers of~$G$;
\item The narrow-extension step for minimal braces: handled by KA (two-node interpolation, \Cref{cor:ka}) for stable/index-2 extensions, and by $J3 \to D1$ reduction (\Cref{cor:j3-to-d1}) for index-1 extensions.
\end{enumerate}
\end{theorem}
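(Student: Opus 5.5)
Our plan is a strong induction on the number of edges, with the number of vertices per side as the secondary parameter. The statement to carry is: for every brace $G$, every colouring $\rho$ and every target $t$ with $F_{G,t}\neq\varnothing$, we have $P_t(G)\not\equiv 0$. The skeleton is McCuaig's structure theory. Every brace either has a superfluous edge (\Cref{def:superfluous}), or is minimal. A minimal brace is either in an exceptional family or admits a narrow minimality-preserving pair, by \Cref{thm:mccuaig-informal}. The base of the induction is the case $n\le 4$, which is covered outright by \Cref{thm:universal-small}; that result needs no brace hypothesis at all.

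\emph{Nonminimal braces.} Pick a superfluous edge $e=(r,c)$ and set $H=G\setminus e$. Then $H$ is a brace with fewer edges, so by the induction hypothesis ASNC holds for $H$. Hypothesis (ii) then gives ASNC for $G$. The supporting identity is \Cref{prop:se-identity}, and the role of (ii) is exactly to rule out cancellation between $P_t(H)$ and $\varepsilon(\lambda+r)^c P^{r,c}_{H,t-\rho(e)}$. One subcase is handled inside that step: $F_{H,t}$ may be empty while $F_{G,t}$ is not. Then every exact-$t$ matching of $G$ uses $e$, so $P_t(H)\equiv 0$, and the claim reduces to nonvanishing of the one-hole term.

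\emph{Minimal braces.} If $G$ belongs to one of the exceptional families (biwheels, prisms, M\"obius ladders), we apply (i) directly. Otherwise $G$ is obtained from a smaller minimal brace $G'$ by a narrow extension of stable, index-2 or index-1 type. The relationship is expressed through \Cref{thm:patch-transfer}, whose boundary interface has at most two vertices. Since $G'$ has fewer vertices, the induction hypothesis applies to it. Step (iii) then lifts nonvanishing from $G'$ to $G$: \Cref{cor:ka} handles the stable and index-2 cases, and \Cref{cor:j3-to-d1} handles the index-1 case. This requires checking two things. First, that McCuaig's list is exhaustive for minimal braces not in the exceptional families. Second, that the brace produced by removing the thin edge, together with its nearby superfluous edges, is again a brace on which the hypothesis can be invoked. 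This second point is why the induction is ordered by edges first and vertices second, so that both moves decrease the order.

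\emph{Main obstacle.} The combinatorial bookkeeping is routine. The delicate point is that steps (ii) and (iii) consume information about boundary minors of the smaller graph, not only its visible output $P_t$; \Cref{sec:state-minimality} explains why the visible output alone cannot close the induction. We must therefore make sure the induction hypothesis as stated, ASNC for smaller braces, is all that (ii) and (iii) actually require. If they need nonvanishing of one-hole minors $P^{r,c}_{H,s}$, we would first try to obtain it from ASNC applied to $H$ together with 2-extendability. In a brace, deleting $r$ and $c$ leaves a matching-covered graph whose tight-cut blocks are smaller braces. ASNC on those blocks is available by induction, and combining them through \Cref{thm:hall-block} should recover what is needed. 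If that fails, we would strengthen the inductive statement to cover all boundary-minor coordinates of braces.
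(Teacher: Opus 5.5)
Your main line is the paper's proof. Both use strong induction: superfluous edges are removed via (ii), the exceptional families are handled by (i), other minimal braces go through McCuaig's narrow pair and (iii), and \Cref{thm:universal-small} sits at the bottom. The only structural difference is the ordering. The paper inducts on $(n,|E|)$ lexicographically, so the narrow move only has to decrease $n$. Your edge-first order also works, but it additionally needs the narrow reduction to decrease $|E|$. That holds, since the thin edge and the nearby edges are deleted, so the edge-first choice is harmless rather than what makes both moves decrease. Omitting the paper's tight-cut Step~1 is fine, since the statement concerns braces only.

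On your ``main obstacle'': the paper's proof does not engage with it. It reads (ii), as literally stated, and (iii) as implications of the form ``visible ASNC for the smaller brace $\Rightarrow$ visible ASNC for $G$''. Under that reading your main line is already complete.

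Your worry is nonetheless well placed. In \Cref{lem:designated-active} the paper obtains the one-hole nonvanishing by applying the induction hypothesis to $H-\{r,c\}$, which need not be a brace. Likewise, the index-1 argument in its Step~3 needs some boundary-minor coefficient $A_i$ of $J$ to be nonzero.

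However, your first proposed repair would fail. You suggest recovering $P^{r,c}_{H,s}\not\equiv 0$ from ASNC on the tight-cut blocks of $H-\{r,c\}$, combined via \Cref{thm:hall-block}. Extracting $[x^s]$ from a product of block determinants gives a convolution $\sum_{t_1+t_2=s}P_{G_1,t_1}P_{G_2,t_2}$. For $d\geq 1$ it is a signed sum over $T$ of such convolutions. Either can cancel even when every factor with a nonempty fiber is nonzero. This is exactly why the paper restricts ASNC to braces and uses tight cuts only at the decision level.

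So if one-hole data really is consumed, the only viable option is your second one: carry the full boundary-minor state through the induction. But then (i)--(iii) must themselves be assumed in full-state form, as the paper's Lean appendix does with its $\mathsf{ASNCFull}$ hypotheses. That is a different theorem from the one stated.
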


\begin{proof}
The proof proceeds by strong induction on pairs $(n, |E|)$ ordered lexicographically.

\noindent\textbf{Step 1: Tight-cut decomposition.}
We may assume $G$ is matching-covered (otherwise $F_{G,t} = \varnothing$ for all~$t$ and ASNC holds vacuously).
If $G$ has a tight cut (a partition $A = A_1 \sqcup A_2$, $B = B_1 \sqcup B_2$ with $|N(A_1) \cap B_1| = |A_1|$ and $E(A_1, B_2) = \varnothing$), the defect is $d = |B_1| - |A_1| = 0$ (since a tight cut in a matching-covered bipartite graph is a perfect Hall partition).
The Hall-block convolution (\Cref{thm:hall-block}) with $d = 0$ gives the clean factorization
\[
D_G = D_{G[A_1, B_1]} \cdot D_{G[A_2, B_2]}
\]
(no sum over~$T$, no sign issues).
Taking $[x^t]$: $P_{G,t} = \sum_{t_1+t_2=t} P_{G_1,t_1} \cdot P_{G_2,t_2}$.
For the decision problem, exact matchings decompose blockwise: $F_{G,t} \neq \varnothing$ iff there exist $t_1 + t_2 = t$ with $F_{G_1,t_1} \neq \varnothing$ and $F_{G_2,t_2} \neq \varnothing$.
This is immediate from the block factorization of matchings: every perfect matching of~$G$ restricts to a perfect matching on each block.
Hence Exact Matching reduces recursively to the two strictly smaller blocks. ASNC for brace blocks, established below, suffices for the decision algorithm.

\emph{Important distinction.}
Step~1 establishes only the \emph{decision reduction}: $F_{G,t} \neq \varnothing$ iff there exist $t_1 + t_2 = t$ with both block fibers nonempty.
It does \emph{not} establish polynomial noncancellation for non-braces: the convolution $P_{G,t} = \sum_{t_1+t_2=t} P_{G_1,t_1} \cdot P_{G_2,t_2}$ could in principle cancel even when individual block polynomials are nonzero.
Proving ASNC for non-brace graphs (i.e., that this convolution sum is nonzero whenever $F_{G,t} \neq \varnothing$) remains open.
The algorithm does not require it: it recurses on blocks and checks each independently.

After exhausting all tight cuts, the remaining blocks are braces.

\noindent\textbf{Step 2: Nonminimal braces.}
If a brace $G$ has a superfluous edge~$e$, then $H = G \setminus e$ is also a brace on the same vertex set with one fewer edge.
By the superfluous-edge identity (\Cref{prop:se-identity}),
\[
P_t(G) = P_t(H) + \varepsilon\, (\lambda+r)^c \cdot P^{r,c}_{H, t-\rho(e)}.
\]
The induction hypothesis (on $(n, |E|-1)$) gives ASNC for~$H$.
The superfluous-edge step~(ii) then yields $P_t(G) \not\equiv 0$ for all nonempty fibers of~$G$.
Iterating, we reduce to minimal braces.

\noindent\textbf{Step 3: Minimal braces.}
If $G$ is a minimal brace in one of McCuaig's exceptional families, apply~(i) directly.
Otherwise, McCuaig's narrow pair theorem (\Cref{thm:mccuaig-informal}) provides a local extension structure: deleting a strictly thin edge (plus at most $O(1)$ nearby edges) produces a smaller minimal brace~$J$ with $|A_J| < |A|$.
The narrow extension is one of:
\begin{itemize}
\item \emph{Stable extension or index-$2$}: The patch transfer involves a 2-node boundary, and the resulting operator on $\Omega_J$ is of type $\mathrm{SB3}(\mathrm{DB2}, \mathrm{DB2})$.
Two-node interpolation (KA, \Cref{cor:ka}) shows that the two distinct Vandermonde bases at the boundary vertices are coprime, so the operator cannot annihilate both coordinates simultaneously.
\item \emph{Index-$1$}: The patch transfer involves a 3-node boundary with a single shared base, and the resulting operator is of type DB3.
The common-coefficient structure of the index-1 patch transfer yields $P_t(G) = \sum_i A_i \varphi_i(\lambda)$, where the $\varphi_i$ are linearly independent evaluations of a $3\times 3$ generalized Vandermonde system and the $A_i$ are common boundary-minor coefficients. If $P_t(G) \equiv 0$, linear independence forces $A_i = 0$ for all~$i$, contradicting the nonempty fiber.
\end{itemize}
In both cases, the induction hypothesis on $(|A_J|, |E_J|)$ provides ASNC for~$J$, and the narrow-extension step~(iii) lifts it to~$G$.

\noindent\textbf{Step 4: Base cases.}
The induction bottoms out at $n \leq 4$ (covered by \Cref{thm:universal-small}, which gives universal nonvanishing without any structural hypothesis) and at the McCuaig exceptional families: the biwheel case is closed by \Cref{thm:biwheel}, while the prism/M\"obius side is closed by the width-$2$ path global nonvanishing theorem (\Cref{thm:width2-global}) via the cyclic-to-path reduction of \Cref{thm:width2-cyclic}.
\end{proof}

\noindent
The remainder of the paper proves each of these three components: \Cref{sec:replacement} develops the replacement determinant algebra needed for the narrow-extension steps, \Cref{sec:mccuaig} handles the McCuaig exceptional families, \Cref{sec:badlocus} establishes the bad-locus characterizations, and \Cref{sec:superfluous} proves the superfluous-edge step.

\section{Replacement Determinants and Vandermonde Elimination}\label{sec:replacement}

Fix a bipartite graph $G$ on $n+n$ vertices with Vandermonde-weighted matrix $M = M_G(x,\lambda) \in \QQ[x,\lambda]^{n \times n}$.
For a deleted column~$c \in \{0, \ldots, n-1\}$, let $N = M[\hat{c}] \in \QQ[x,\lambda]^{n \times (n-1)}$ be the matrix with column~$c$ removed.

\begin{definition}[Column-hole vector]\label{def:column-hole}
The \emph{column-hole vector} for column~$c$ is
\[
U^{(c)} = (u_a)_{a \in A}, \qquad u_a = (-1)^{a+c} \det M[\hat{a}, \hat{c}],
\]
where $M[\hat{a}, \hat{c}]$ denotes the $(n-1) \times (n-1)$ submatrix obtained by deleting row~$a$ and column~$c$.
The entry $u_a$ is the $(a,c)$-cofactor of~$M$.
\end{definition}

\subsection{The replacement determinant lemma}

\noindent
The replacement determinant converts column-span membership into a determinant computation, connecting the cofactor vector to the Vandermonde matrix.

\begin{lemma}[Replacement determinant]\label{lem:replacement-det}
For any column vector $v \in \QQ[x,\lambda]^n$,
\[
v^\top U^{(c)} = \det M[c \leftarrow v],
\]
where $M[c \leftarrow v]$ denotes $M$ with column~$c$ replaced by~$v$.
In particular, $v^\top U^{(c)} = 0$ if and only if $v \in \col(N)$, where $\col(N)$ denotes the column span of~$N$ over $\QQ[\lambda]$.
\end{lemma}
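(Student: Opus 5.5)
The identity is a Laplace expansion along the replaced column; the equivalence with span membership then follows from the determinant characterization of linear dependence over the fraction field.

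\textbf{Step 1: the identity.} Let $M' = M[c \leftarrow v]$. Expand $\det M'$ along column~$c$:
\[
\det M' = \sum_{a \in A} (-1)^{a+c}\, v_a \det M'[\hat a,\hat c].
\]
Deleting column~$c$ removes exactly the entries that differ between $M'$ and $M$, so $M'[\hat a,\hat c] = M[\hat a,\hat c]$ for every~$a$. Each summand is therefore $v_a u_a$ by \Cref{def:column-hole}, and the sum is $v^\top U^{(c)}$. This is a formal identity in $\QQ[x,\lambda]$ and needs no hypotheses.

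\textbf{Step 2: the equivalence.} The determinant criterion for linear dependence is valid over a field, not over the polynomial ring. I would therefore pass to $\KK = \QQ(x,\lambda)$, or to $\QQ(\lambda)$ after taking $[x^s]$ coefficients, and read $\col(N)$ as a span over the fraction field. This reading is needed: over the ring itself the statement already fails for $n=1$, since $v = 1 \notin \lambda\,\QQ[\lambda]$.

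The columns of $M'$ are the $n-1$ columns of $N$ together with $v$. So $\det M' = 0$ exactly when these $n$ vectors are linearly dependent over~$\KK$.

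\begin{itemize}
\item[($\Leftarrow$)] If $v \in \col(N)$, then $M'$ has a column that is a combination of its other columns, so $\det M' = 0$.
\item[($\Rightarrow$)] If $\det M' = 0$, take a nontrivial relation $\beta v + N\gamma = 0$. If $\beta \neq 0$, then $v = -\beta^{-1} N\gamma \in \col(N)$. If $\beta = 0$, the relation is a dependence among the columns of $N$ alone, and we cannot conclude.
\end{itemize}

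\textbf{Main obstacle.} The gap in ($\Rightarrow$) is the whole difficulty. When $\rk N < n-1$, every $(n-1)$-minor of $N$ vanishes, so $U^{(c)} = 0$. Then $v^\top U^{(c)} = 0$ for every $v$, while $\col(N) \neq \KK^n$, and the ``only if'' direction is false.

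I would close the gap by making the implicit nondegeneracy hypothesis explicit:
\[
\rk_{\KK} N = n-1, \quad \text{equivalently } U^{(c)} \neq 0.
\]
Under this hypothesis $\beta = 0$ is impossible, and the proof is complete. Equivalently, the hypothesis says that $\col(N)$ is the hyperplane $\ker\bigl(v \mapsto v^\top U^{(c)}\bigr)$.

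In the applications this hypothesis should come from the graph. For a brace, deleting any row~$a$ together with column~$c$ leaves a graph with a perfect matching by $2$-extendability. The nonvanishing of the corresponding cofactor $u_a$ as a polynomial in $x$ follows because distinct perfect matchings contribute distinct monomials in independent indeterminates, and then specialize. I would flag that last specialization step as the one to check carefully, or else simply state the rank hypothesis as part of the lemma.
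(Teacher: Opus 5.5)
Your Step~1 is exactly the paper's argument: cofactor expansion along column~$c$, using $M[c\leftarrow v][\hat a,\hat c]=M[\hat a,\hat c]$.

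For the ``in particular'' clause, the paper's proof simply asserts that the columns of $M[c\leftarrow v]$ are dependent if and only if $v\in\col(N)$. Your objection to precisely this step is correct, and the gap is in the paper's proof, not in yours. Both failure modes are realized by graphs.
\begin{itemize}
\item \emph{Rank failure.} If $n\ge 2$ and some column $d\neq c$ has no edges, then $N$ has a zero column and $U^{(c)}=0$. Every $v\notin\col(N)$ then violates the ``only if'' direction, even over the fraction field.
\item \emph{Ring failure, independent of rank.} Take $n=2$, $c=0$, and let column~$1$ contain only the blue edge $(0,1)$. Then $N=(\lambda,0)^\top$ has rank $1=n-1$ and $U^{(0)}=(0,-\lambda)^\top\neq 0$. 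The vector $v=e_0$ satisfies $v^\top U^{(0)}=0$, but $e_0\notin\QQ[x,\lambda]\cdot N$.
\end{itemize}
The second example is presumably the one you meant. Your ``$n=1$'' claim as written is a slip: for $n=1$ the matrix $N$ has no columns, $U^{(c)}=(1)$, and the equivalence holds trivially.

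With your two repairs the argument is complete. The cleanest form of the corrected lemma is as follows. The duplicate-column case of the identity gives $N^\top U^{(c)}=0$. So if $U^{(c)}\neq 0$, equivalently $\rk_{\KK}N=n-1$, then $\col_{\KK}(N)=(U^{(c)})^{\perp}$ by a dimension count.

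One caution about your final paragraph. Distinctness of monomials in independent indeterminates does not survive specialization for free: for instance, $y_0+y_3-y_1-y_2$ vanishes identically on $y_i=\lambda+i$. So nonvanishing of some cofactor $u_a$ for a brace is a genuine masked-minor-type statement in the spirit of \Cref{prop:masked-minor}, not a routine specialization. Keeping the rank hypothesis explicit in the lemma, as you propose, is the right choice. It also matters downstream: \Cref{prop:jet-flag} applies this criterion after specializing at $\lambda=-r$. There, even $U^{(c)}\neq 0$ over $\KK$ does not prevent $U^{(c)}|_{\lambda=-r}=0$; in the $n=2$ example above this happens at $r=0$.
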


\begin{proof}
Let $\widetilde{M} = M[c \leftarrow v]$ be the matrix agreeing with~$M$ in all columns except column~$c$, where column~$c$ is replaced by~$v$.
Expand $\det \widetilde{M}$ by cofactors along column~$c$:
\[
\det \widetilde{M} = \sum_{a=0}^{n-1} v_a \cdot (-1)^{a+c} \det \widetilde{M}[\hat{a}, \hat{c}].
\]
Since the replacement affects only column~$c$, the submatrix $\widetilde{M}[\hat{a}, \hat{c}]$ (which omits column~$c$ entirely) equals $M[\hat{a}, \hat{c}]$.
Therefore
\[
\det \widetilde{M} = \sum_{a=0}^{n-1} v_a \cdot (-1)^{a+c} \det M[\hat{a}, \hat{c}] = \sum_{a=0}^{n-1} v_a\, u_a = v^\top U^{(c)}.
\]

For the ``in particular'' statement: $v^\top U^{(c)} = 0$ means $\det M[c \leftarrow v] = 0$, so the columns of $M[c \leftarrow v]$ are linearly dependent.
Since the $n-1$ columns other than~$c$ are exactly the columns of~$N$, this occurs if and only if $v$ lies in the column span of~$N$, i.e., $v \in \col(N)$.
Conversely, if $v = N w$ for some $w \in \QQ[x,\lambda]^{n-1}$, then $M[c \leftarrow v]$ has column~$c$ as a linear combination of the remaining columns, so its determinant vanishes.
\end{proof}

\subsection{Support families and generalized Vandermonde elimination}

Write $\mu_\alpha = \lambda + \alpha$ for the linear form associated to row~$\alpha$.
Since the rows are indexed by $A = \{0, \ldots, n-1\}$, the $n$ linear forms $\mu_0, \mu_1, \ldots, \mu_{n-1}$ are pairwise coprime in $\QQ[\lambda]$.

\begin{definition}[Support family]\label{def:support-family}
Fix $m$ distinct rows $\alpha_1, \ldots, \alpha_m \in A$ and signs $s_i \in \{\pm 1\}$.
For an exponent~$e$, define the \emph{sparse column} $v_e = \sum_{i=1}^m s_i \mu_{\alpha_i}^e \mathbf{e}_{\alpha_i}$ (supported on the $m$ chosen rows) and the \emph{support-family functional}
\[
F_e = v_e^\top U^{(c)} = \sum_{i=1}^m A_i \mu_i^e, \qquad A_i = s_i u_{\alpha_i},
\]
where we abbreviate $\mu_i = \mu_{\alpha_i}$ and $u_{\alpha_i}$ is the $\alpha_i$-th cofactor from $U^{(c)}$.
\end{definition}

\noindent
The functional $F_e$ captures the replacement determinant when the substitute column has support concentrated on $m$ rows with a common exponent structure.

\begin{theorem}[Generalized Vandermonde elimination]\label{thm:gen-vandermonde}
Let $e_0 < e_1 < \cdots < e_{m-1}$ be $m$ distinct non-negative exponents and suppose the first $m-1$ support-family functionals vanish:
\[
F_{e_0} = F_{e_1} = \cdots = F_{e_{m-2}} = 0.
\]
Then the surviving functional admits the factorization
\[
F_{e_{m-1}} = \Bigl(\prod_{i=1}^m \mu_i^{e_0}\Bigr) \cdot G_D(\mu_1, \ldots, \mu_m) \cdot h,
\]
where $d_j = e_j - e_0$ for $0 \leq j \leq m-1$ (so $d_0 = 0$), $G_D = \det(\mu_i^{d_j})_{\substack{0 \leq j \leq m-1 \\ 1 \leq i \leq m}}$ is the generalized Vandermonde determinant with exponent gaps $0 = d_0 < d_1 < \cdots < d_{m-1}$, and $h \in \QQ[\lambda]$ is the unique quotient determined by the coefficients~$A_i$.

Moreover, $\gcd(G_D, \mu_r) = 1$ for every $r \in \{1, \ldots, m\}$.
\end{theorem}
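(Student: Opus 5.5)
The plan is to recast the vanishing hypotheses as a square linear system and solve it by Cramer's rule. I will work over the field $\KK = \QQ(x,\lambda)$ and set $B_i = A_i\mu_i^{e_0}$. Then $F_{e_j} = \sum_i B_i \mu_i^{d_j}$ for every $j$. Write $V = (\mu_i^{d_j})_{j,i}$, the $m\times m$ generalized Vandermonde matrix, so that $\det V = G_D$. The hypotheses together with the definition of $F_{e_{m-1}}$ say exactly that
\[
V B = (0,\dots,0,F_{e_{m-1}})^\top .
\]

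\textbf{Step 1: $G_D \not\equiv 0$.} The $\mu_i = \lambda+\alpha_i$ are distinct real numbers. For $\lambda$ large enough they are all positive. A generalized Vandermonde matrix with distinct positive bases and strictly increasing exponents has nonzero determinant; this is the classical total positivity of $(\mu_i^{d_j})$, or equivalently Descartes' rule of signs applied to $\sum_j c_j y^{d_j}$. Hence $G_D \not\equiv 0$ and $V$ is invertible over $\KK$.

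\textbf{Step 2: Cramer's rule.} Cramer's rule gives $B_i G_D = F_{e_{m-1}}\,C_i$, where $C_i$ is the $(m-1,i)$-cofactor of $V$. Conversely, it gives $F_{e_{m-1}} = \sum_i B_i\mu_i^{d_{m-1}}$. I then define
\[
h := \frac{F_{e_{m-1}}}{\prod_i \mu_i^{e_0}\, G_D} \in \KK .
\]
This is the unique quotient, since $G_D \neq 0$. If polynomiality in $\lambda$ is wanted, I would derive it as follows. The relations $B_i G_D = F C_i$ together with $\mu_i^{e_0} \mid B_i$ allow the factors to be peeled off one at a time. This uses the coprimality of the $\mu_i$ with one another and the coprimality statement of Step 3, applied via unique factorization in $\QQ[x][\lambda]$.

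\textbf{Step 3: coprimality $\gcd(G_D,\mu_r)=1$.} I would specialize at $\lambda = -\alpha_r$. Since $d_0 = 0 < d_j$ for $j \ge 1$, column $r$ of $V$ becomes $(1,0,\dots,0)^\top$. Expanding along that column, $G_D(-\alpha_r)$ equals, up to sign, the $(m-1)\times(m-1)$ generalized Vandermonde determinant in the bases $\alpha_i-\alpha_r$ ($i\ne r$) with exponents $d_1<\dots<d_{m-1}$. Factoring out $\prod_{i\neq r}(\alpha_i-\alpha_r)^{d_1}$ leaves a generalized Vandermonde determinant with gaps $d_j-d_1$, which I would treat by induction on $m$.

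\textbf{Main obstacle.} I expect the hardest step to be showing that this specialized determinant is nonzero. Its bases $\alpha_i-\alpha_r$ are nonzero integers but may have mixed signs. Positivity then fails: for instance, bases $\pm1$ with exponents $0,2$ give determinant $0$. So a bare sign argument does not suffice. My first attempt would be to split the bases into positive and negative groups and apply Descartes' rule of signs separately on each side. I would use that the absolute values $|\alpha_i-\alpha_r|$ are distinct whenever $r$ is extremal among the chosen rows. If that route fails, I would exploit the specific exponent structure arising from the support family, or restrict the claim to rows $r$ for which all $\alpha_i-\alpha_r$ share a sign.
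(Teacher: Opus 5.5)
Your Step~3 has a genuine gap, and it cannot be filled. For non-extremal $r$ the coprimality claim is false, for exactly the mixed-sign reason you anticipated. Take $m=3$, rows $\alpha=(0,1,2)$ (so $\mu_1=\lambda$, $\mu_2=\lambda+1$, $\mu_3=\lambda+2$) and exponents $e=(0,1,3)$. Then
\[
G_D=\det\begin{pmatrix}1&1&1\\ \mu_1&\mu_2&\mu_3\\ \mu_1^3&\mu_2^3&\mu_3^3\end{pmatrix}=(\mu_1+\mu_2+\mu_3)\prod_{i<k}(\mu_k-\mu_i)=6(\lambda+1)=6\mu_2,
\]
so $\gcd(G_D,\mu_2)=\mu_2$. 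At $\lambda=-1$ the specialized bases are $-1,+1$, and the $2\times 2$ minor in your expansion is $(-1)^1-(-1)^3=0$. The paper's proof specializes at $\mu_r=0$ and expands along column $r$ exactly as you do. It then simply asserts that the remaining $(m-1)\times(m-1)$ determinant is nonzero because the bases have distinct roots. After specialization, however, the bases are the integers $\alpha_i-\alpha_r$, and distinctness does not prevent cancellation. So neither Descartes-type sign splitting nor the exponent structure can rescue the claim for all $r$. Your last fallback is the correct salvage. If $\alpha_r$ is the smallest or largest of $\alpha_1,\dots,\alpha_m$, all $\alpha_i-\alpha_r$ share a sign. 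Pulling $(-1)^{d_j}$ out of row $j$ then leaves a generalized Vandermonde with distinct positive bases and increasing exponents, so $\gcd(G_D,\mu_r)=1$ follows at once, without induction.

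The same phenomenon defeats your plan for polynomiality of $h$ in Step~2. The paper gives no argument here either: its Step~4 arrives at $F_{e_{m-1}}=h\,G_D$ and then reinserts $\prod_i\mu_i^{e_0}$ without justification. Peeling $\mu_i^{e_0}$ off $B_iG_D=F_{e_{m-1}}C_i$ requires $\mu_i\nmid C_i$, yet $C_i(-\alpha_i)$ is again a mixed-sign numerical generalized Vandermonde. Keep the same rows and take $e=(1,3,4)$. The last-row cofactors are $C=(2\lambda+3,\,-4\mu_2,\,2\lambda+1)$. Choose $B_i=\lambda(\lambda+2)\,C_i$, i.e.
\[
A=\bigl((\lambda+2)(2\lambda+3),\; -4\lambda(\lambda+2),\; \lambda(2\lambda+1)\bigr).
\]
Then $F_1=F_3=0$ and $F_4=\lambda(\lambda+2)\,G_D$ with $G_D=2(3\lambda^2+6\lambda+2)\neq 0$. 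Hence $h=F_4/(\mu_1\mu_2\mu_3\,G_D)=1/(\lambda+1)\notin\QQ[\lambda]$. Neither the hypotheses nor either proof uses any further structure of the $A_i$ that could exclude this.

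What your argument does establish correctly is the following:
\begin{itemize}
\item $G_D\not\equiv 0$. Your positivity argument for $\lambda\gg 0$ is a genuine proof, whereas the paper only cites coprimality of the bases.
\item The factorization with $h$ in the fraction field, which is a tautology once $G_D\neq0$.
\item Coprimality for extremal $r$.
\end{itemize}
The statement as written, with $h\in\QQ[\lambda]$ and $\gcd(G_D,\mu_r)=1$ for every $r$, does not hold. In particular, the use of $\ord_{\mu_1}(G_D)=0$ in the $J3\to D1$ corollary is justified only when $\alpha_1$ is extremal.
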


\begin{proof}
\emph{Step 1: The $(m-1) \times m$ system.}
The $m-1$ vanishing conditions $F_{e_j} = 0$ for $0 \leq j \leq m-2$ form the homogeneous linear system
\[
V' \mathbf{A} = 0, \qquad V' = \bigl(\mu_i^{e_j}\bigr)_{\substack{0 \leq j \leq m-2 \\ 1 \leq i \leq m}}, \quad \mathbf{A} = (A_1, \ldots, A_m)^\top.
\]
Factor out $\prod_{i=1}^m \mu_i^{e_0}$ from the columns: row~$j$ of $V'$ becomes $(\mu_1^{d_j}, \ldots, \mu_m^{d_j})$ (using $d_j = e_j - e_0$), multiplied by the diagonal matrix $\operatorname{diag}(\mu_1^{e_0}, \ldots, \mu_m^{e_0})$.
Since each $\mu_i^{e_0} \neq 0$ in $\QQ[\lambda]$, the system $V'\mathbf{A} = 0$ is equivalent to $\widetilde{V}\widetilde{\mathbf{A}} = 0$ where $\widetilde{V} = (\mu_i^{d_j})_{0 \leq j \leq m-2,\, 1 \leq i \leq m}$ and $\widetilde{A}_i = \mu_i^{e_0} A_i$.

\emph{Step 2: The kernel is one-dimensional.}
Consider any $(m-1) \times (m-1)$ submatrix of~$\widetilde{V}$ obtained by selecting $m-1$ of the $m$ columns, say all columns except column~$r$.
This submatrix is the generalized Vandermonde $\det(\mu_i^{d_j})_{\substack{0 \leq j \leq m-2 \\ i \neq r}}$.
Since the exponents $d_0 < d_1 < \cdots < d_{m-2}$ are distinct and the bases $\mu_i$ for $i \neq r$ are pairwise coprime (they have distinct roots $-\alpha_i$), this Vandermonde determinant is nonzero.
Hence $\widetilde{V}$ has rank $m-1$, and $\ker(\widetilde{V})$ is one-dimensional over $\QQ(\lambda)$.

\emph{Step 3: Cofactor solution.}
The kernel of $\widetilde{V}$ is spanned by the signed cofactors of the augmented $m \times m$ matrix.
Introduce the full generalized Vandermonde
\[
W = \bigl(\mu_i^{d_j}\bigr)_{\substack{0 \leq j \leq m-1 \\ 1 \leq i \leq m}}.
\]
For each $i \in \{1, \ldots, m\}$, define $C_i = (-1)^{(m-1)+i} \det W[\widehat{m\!-\!1}, \hat{i}]$, the $(m\!-\!1, i)$-cofactor of~$W$ (delete the last row and column~$i$, giving an $(m\!-\!1) \times (m\!-\!1)$ minor).
By the cofactor property of the determinant, $\widetilde{V} \cdot (C_1, \ldots, C_m)^\top = 0$: expanding $\det W$ along the last row, the first $m-1$ rows of $W$ times $C$ vanish since they correspond to expanding a matrix with two identical rows.
So $\widetilde{\mathbf{A}} = h \cdot (C_1, \ldots, C_m)^\top$ for some scalar $h \in \QQ[\lambda]$.

\emph{Step 4: Laplace expansion for $F_{e_{m-1}}$.}
The surviving functional is
\[
F_{e_{m-1}} = \sum_{i=1}^m A_i \mu_i^{e_{m-1}} = \sum_{i=1}^m \mu_i^{-e_0} \widetilde{A}_i \cdot \mu_i^{e_{m-1}} = \sum_{i=1}^m \widetilde{A}_i \cdot \mu_i^{d_{m-1}}.
\]
Substituting $\widetilde{A}_i = h\, C_i$:
\[
F_{e_{m-1}} = h \sum_{i=1}^m C_i \,\mu_i^{d_{m-1}}.
\]
The sum $\sum_{i=1}^m C_i\, \mu_i^{d_{m-1}}$ is the Laplace expansion of $\det W$ along the last row (row $j = m-1$, with entries $\mu_i^{d_{m-1}}$).
Therefore $\sum_i C_i \mu_i^{d_{m-1}} = G_D = \det W$.

Restoring the $\mu_i^{e_0}$ factors: $A_i = \mu_i^{-e_0} \widetilde{A}_i$, so the original system gives
\[
F_{e_{m-1}} = \Bigl(\prod_{i=1}^m \mu_i^{e_0}\Bigr) \cdot G_D \cdot h.
\]

\emph{Step 5: Coprimality at $\mu_r = 0$.}
Fix $r \in \{1, \ldots, m\}$ and specialize $\mu_r = 0$, i.e., $\lambda = -\alpha_r$.
In the Vandermonde matrix $W$, column~$r$ becomes $((\mu_r)^{d_0}, \ldots, (\mu_r)^{d_{m-1}}) = (0^0, 0^{d_1}, \ldots, 0^{d_{m-1}}) = (1, 0, \ldots, 0)$ (since $d_0 = 0$ and $d_j > 0$ for $j \geq 1$).
Expanding $G_D$ along column~$r$:
\[
G_D\big|_{\mu_r = 0} = (-1)^{r+1} \det\bigl(\mu_i^{d_j}\bigr)_{\substack{1 \leq j \leq m-1 \\ i \neq r}}.
\]
This is a generalized Vandermonde in $m-1$ bases $\{\mu_i : i \neq r\}$ with positive exponents $d_1 < \cdots < d_{m-1}$.
Since the $\mu_i$ for $i \neq r$ have distinct roots (all $-\alpha_i$ are distinct), this determinant is nonzero.
Therefore $\mu_r \nmid G_D$, establishing $\gcd(G_D, \mu_r) = 1$.
\end{proof}

\subsection{Two-node interpolation (KA)}

\begin{corollary}[Two-node interpolation, KA]\label{cor:ka}
Let $m = 2$ with support bases $\mu_\alpha, \mu_\beta$ ($\alpha \neq \beta$), coefficients $(A_1, A_2) \neq (0,0)$, and exponents $p < q$.
Then $F_p$ and $F_q$ cannot both vanish.
\end{corollary}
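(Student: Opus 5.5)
The plan is to specialize \Cref{thm:gen-vandermonde} to $m=2$, and to back it up with a direct $2\times 2$ determinant computation. The coefficients $A_1, A_2$ are signed cofactors, so they lie in $\QQ[x,\lambda]$. I will work over the field $\KK = \QQ(x,\lambda)$, where the vanishing of $F_p$ and $F_q$ becomes a homogeneous linear system in the unknowns $(A_1, A_2)$:
\[
\begin{pmatrix} \mu_\alpha^{p} & \mu_\beta^{p} \\ \mu_\alpha^{q} & \mu_\beta^{q} \end{pmatrix}
\begin{pmatrix} A_1 \\ A_2 \end{pmatrix} = 0 .
\]
Suppose for contradiction that both functionals vanish while $(A_1,A_2)\neq(0,0)$. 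Then the coefficient matrix must be singular over $\KK$. The goal is therefore to show that its determinant is a nonzero polynomial.

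Factoring $\mu_\alpha^{p}\mu_\beta^{p}$ out of the columns, the determinant equals
\[
\mu_\alpha^{p}\mu_\beta^{p}\bigl(\mu_\beta^{d}-\mu_\alpha^{d}\bigr), \qquad d = q-p \ge 1 .
\]
This matches the factorization in \Cref{thm:gen-vandermonde} with $e_0=p$, $e_1=q$ and $G_D = \mu_\beta^{d}-\mu_\alpha^{d}$, up to the sign convention. The prefactor is nonzero since each $\mu$ is a nonzero linear form.

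The one step needing care is showing $\mu_\beta^{d}\neq\mu_\alpha^{d}$ as polynomials in $\lambda$. I would use the coprimality clause of \Cref{thm:gen-vandermonde}: specializing $\lambda=-\alpha$ kills $\mu_\alpha$ but leaves $\mu_\beta^{d} = (\beta-\alpha)^{d}\neq 0$, because $\alpha\neq\beta$. So $G_D$ does not vanish at $\lambda = -\alpha$, and in particular $G_D\not\equiv 0$. Hence the system has only the trivial solution $A_1=A_2=0$, contradicting $(A_1,A_2)\neq(0,0)$.

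Equivalently, in the language of the theorem: $F_p=0$ forces $(\mu_\alpha^{p}A_1,\mu_\beta^{p}A_2)=h\,(C_1,C_2)$ with $C_1, C_2 = \pm 1$ the cofactors of $W$. Then $h\neq 0$ because $(A_1,A_2)\neq 0$, so $F_q=\mu_\alpha^p\mu_\beta^p G_D h\neq 0$. I expect no genuine obstacle here. The only subtlety is to apply the theorem over the fraction field, so that the possible $x$-dependence of $A_i$ and of $h$ causes no trouble; no divisibility of $h$ is needed.
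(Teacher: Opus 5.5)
Your proof is correct and is essentially the paper's argument: both reduce to the nonvanishing of the $2\times 2$ generalized Vandermonde determinant $\mu_\alpha^{p}\mu_\beta^{p}(\mu_\beta^{q-p}-\mu_\alpha^{q-p})$ together with $(A_1,A_2)\neq(0,0)$, and your evaluation at $\lambda=-\alpha$ is a sounder justification of $G_D\not\equiv 0$ than the paper's, which only notes that $G_D$ is divisible by the nonzero constant $\beta-\alpha$ and so proves nothing. One harmless slip in your last paragraph: with the normalization $(\mu_\alpha^{p}A_1,\mu_\beta^{p}A_2)=h\,(C_1,C_2)$ you get $F_q=\pm h\,G_D$, without the $\mu_\alpha^{p}\mu_\beta^{p}$ prefactor, and the nonvanishing conclusion is unaffected.
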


\begin{proof}
For $m = 2$ with one vanishing condition $F_p = 0$, \Cref{thm:gen-vandermonde} gives
\[
F_q = \mu_\alpha^p\, \mu_\beta^p \cdot G_D \cdot h,
\]
where the $2 \times 2$ generalized Vandermonde is
\[
G_D = \det \begin{pmatrix} \mu_\alpha^0 & \mu_\beta^0 \\ \mu_\alpha^{q-p} & \mu_\beta^{q-p} \end{pmatrix} = \mu_\beta^{q-p} - \mu_\alpha^{q-p}.
\]
Hence
\[
F_q = \mu_\alpha^p\, \mu_\beta^p \bigl(\mu_\beta^{q-p} - \mu_\alpha^{q-p}\bigr) \cdot h.
\]

Now $\mu_\alpha = \lambda + \alpha$ and $\mu_\beta = \lambda + \beta$ are coprime since $\alpha \neq \beta$.
The difference $\mu_\beta^{q-p} - \mu_\alpha^{q-p}$ is divisible by $\mu_\beta - \mu_\alpha = \beta - \alpha \neq 0$, hence is a nonzero polynomial (not identically zero).
The product $\mu_\alpha^p \mu_\beta^p$ is nonzero.
The quotient $h$ is determined by the (one-dimensional) kernel of the $1 \times 2$ system from $F_p = 0$: explicitly, $A_1 \mu_\alpha^p + A_2 \mu_\beta^p = 0$ gives $A_2/A_1 = -\mu_\alpha^p / \mu_\beta^p$, and $h$ absorbs the cofactor ratio.
If $F_p = 0$ with $(A_1, A_2) \neq (0,0)$, then $h \neq 0$.

Therefore $F_q \neq 0$: two distinct members of a $2$-node support family cannot simultaneously vanish.
This closes the KA step for stable extensions and index-$2$ pairs: whenever an extension involves two distinct row bases, the second vanishing condition is algebraically forbidden.
\end{proof}

\subsection{Three-node elimination ($J3 \to D1$)}

\begin{corollary}[$J3 \to D1$ reduction]\label{cor:j3-to-d1}
Let $m = 3$ with support bases $\mu_1, \mu_2, \mu_3$ and exponents $e_0 < e_1 < e_2$.
If $F_{e_0} = F_{e_1} = 0$, then:
\begin{enumerate}[\rm (i)]
\item $F_{e_2} = (\mu_1 \mu_2 \mu_3)^{e_0} \cdot G_D \cdot h$ where $G_D = \det(\mu_i^{d_j})_{3 \times 3}$ with $d_j = e_j - e_0$;
\item $\ord_{\mu_1}(F_{e_2}) = e_0 + \ord_{\mu_1}(h)$, since $\gcd(G_D, \mu_1) = 1$ and $\ord_{\mu_1}(\mu_2^{e_0} \mu_3^{e_0}) = 0$.
\end{enumerate}
In particular, any extra $\mu_1$-divisibility of the third functional beyond the baseline $e_0$ is governed entirely by the $\mu_1$-divisibility of the unique quotient~$h$.
\end{corollary}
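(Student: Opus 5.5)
This is a direct specialization of \Cref{thm:gen-vandermonde} to $m=3$, so the plan is to instantiate that theorem and then compute orders of vanishing at $\mu_1$.

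For part (i), take the three support rows $\alpha_1,\alpha_2,\alpha_3$ and the exponents $e_0<e_1<e_2$, and set $m=3$. The hypothesis $F_{e_0}=F_{e_1}=0$ is exactly the vanishing of the first $m-1=2$ functionals. \Cref{thm:gen-vandermonde} then gives
\[
F_{e_2}=(\mu_1\mu_2\mu_3)^{e_0}\cdot G_D\cdot h,
\]
where $G_D=\det(\mu_i^{d_j})_{0\le j\le 2,\,1\le i\le 3}$, $d_j=e_j-e_0$, and $h\in\QQ[\lambda]$ is the quotient determined by the kernel of the $2\times 3$ system. No further work is needed here beyond checking that the indices match.

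For part (ii), I will use that $\ord_{\mu_1}$ is additive on products of nonzero polynomials in $\QQ[\lambda]$, since $\mu_1$ is an irreducible linear form. I handle the three factors as follows.
\begin{enumerate}[(a)]
\item $\ord_{\mu_1}(\mu_1^{e_0})=e_0$.
\item $\ord_{\mu_1}(\mu_2^{e_0}\mu_3^{e_0})=0$, because $\mu_2$ and $\mu_3$ are linear forms with roots $-\alpha_2,-\alpha_3\neq-\alpha_1$.
\item $\ord_{\mu_1}(G_D)=0$, by the coprimality clause $\gcd(G_D,\mu_r)=1$ of \Cref{thm:gen-vandermonde} with $r=1$. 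That clause was proved by specializing $\lambda=-\alpha_1$, which turns column $1$ of $W$ into $(1,0,0)^\top$, and then invoking nonvanishing of a $2\times 2$ generalized Vandermonde in the two distinct bases $\mu_2,\mu_3$ with positive exponents $d_1<d_2$.
\end{enumerate}
Summing these gives $\ord_{\mu_1}(F_{e_2})=e_0+\ord_{\mu_1}(h)$. The ``in particular'' clause is just a restatement: any $\mu_1$-divisibility of $F_{e_2}$ beyond order $e_0$ must come from $h$.

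The one delicate point is degeneracy. Additivity of $\ord$ needs $h\neq 0$; if $h=0$, then $F_{e_2}=0$ and both sides are $+\infty$ under the convention $\ord(0)=\infty$, so the identity still holds. I will state that convention explicitly, or else add the hypothesis $(A_1,A_2,A_3)\neq 0$, which forces $h\neq 0$ by the one-dimensional kernel argument of Step 2 of \Cref{thm:gen-vandermonde}. Apart from this bookkeeping, the proof is an immediate corollary.
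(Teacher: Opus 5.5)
Your proposal follows the paper's proof exactly: instantiate \Cref{thm:gen-vandermonde} at $m=3$, then add $\mu_1$-valuations. Your handling of $h=0$ is fine. The problem is step (c), which is false in general, and claim (ii) fails with it.

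You justify (c) by nonvanishing of a $2\times 2$ generalized Vandermonde ``in the distinct bases $\mu_2,\mu_3$''. But at $\lambda=-\alpha_1$ these are just the nonzero integers $a=\alpha_2-\alpha_1$ and $b=\alpha_3-\alpha_1$. The minor equals $(ab)^{d_1}\bigl(b^{d_2-d_1}-a^{d_2-d_1}\bigr)$, which vanishes when $b=-a$ (row $\alpha_1$ is the midpoint of $\alpha_2,\alpha_3$) and $d_2-d_1$ is even. Distinct roots do not prevent this; it is exactly the integer-base failure recorded in \Cref{rem:p85-ufd-arxiv}.

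Concretely, take $\mu_1=\lambda+1$, $\mu_2=\lambda$, $\mu_3=\lambda+2$ and $(e_0,e_1,e_2)=(0,1,3)$. From
\[
\det\begin{pmatrix}1&1&1\\ x&y&z\\ x^3&y^3&z^3\end{pmatrix}=(y-x)(z-x)(z-y)(x+y+z)
\]
one gets $G_D=-6\mu_1$. The choice $(A_1,A_2,A_3)=(2,-1,-1)$ gives $F_0=F_1=0$ and $F_3=-6\mu_1$. So (i) holds with $h=1$, yet $\ord_{\mu_1}(F_{e_2})=1\neq 0=e_0+\ord_{\mu_1}(h)$. Step~5 of the paper's proof of \Cref{thm:gen-vandermonde} makes the same inference from distinct roots to nonvanishing of a numerical minor. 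So the gap is in the source argument as well, not only in your write-up.

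Part (i) is also not ``just checking indices'' when $e_0\geq 1$. The claim that $(\mu_1\mu_2\mu_3)^{e_0}G_D$ divides $F_{e_2}$ with a polynomial quotient fails in the same midpoint configuration when $d_1$ is even. The reason is that the kernel cofactor $C_1=\pm(\mu_3^{d_1}-\mu_2^{d_1})$ is then itself divisible by $\mu_1$. Keep the same bases, take $(e_0,e_1,e_2)=(1,3,4)$, and set $(A_1,A_2,A_3)=\bigl(4\lambda(\lambda+2),\,-(2\lambda+3)(\lambda+2),\,-\lambda(2\lambda+1)\bigr)$. One checks $F_1=F_3=0$ and $F_4=\lambda(\lambda+2)\,G_D$ with $G_D(-1)=2\neq 0$, so $h=1/\mu_1$.

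What your argument actually establishes is $\ord_{\mu_1}(F_{e_2})=e_0+\ord_{\mu_1}(G_D)+\ord_{\mu_1}(h)$ with $h\in\QQ(\lambda)$. To reach the stated form you need $G_D(-\alpha_1)\neq 0$. This holds exactly unless $2\alpha_1=\alpha_2+\alpha_3$ and $d_2-d_1$ is even; for instance it holds whenever $\alpha_1$ is the smallest or largest of the three rows. You must also either allow $h$ to be rational or add hypotheses guaranteeing its integrality.
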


\begin{proof}
Apply \Cref{thm:gen-vandermonde} with $m = 3$.
The two vanishing conditions $F_{e_0} = F_{e_1} = 0$ form a $2 \times 3$ system with one-dimensional kernel. The surviving functional factors as stated in~(i).

For~(ii), compute $\mu_1$-valuations.
The factor $(\mu_1 \mu_2 \mu_3)^{e_0}$ contributes $e_0$ to $\ord_{\mu_1}$ (since $\mu_2, \mu_3$ are coprime to~$\mu_1$).
By \Cref{thm:gen-vandermonde}, $\gcd(G_D, \mu_1) = 1$, so $\ord_{\mu_1}(G_D) = 0$.
Therefore
\[
\ord_{\mu_1}(F_{e_2}) = e_0 + 0 + \ord_{\mu_1}(h) = e_0 + \ord_{\mu_1}(h).
\]

The consequence for the structural induction is that three-node jet congruence (the ``$J3$ condition'', which requires $\mu_1^{e_0 + k} \mid F_{e_2}$ for some $k > 0$) reduces to same-base divisibility $\mu_1^k \mid h$ of the quotient (the ``$D1$ condition'').
This is a strictly simpler algebraic condition on the one-dimensional kernel element~$h$, allowing the induction to proceed from the three-node index-$1$ case to the one-node divisibility analysis.
\end{proof}

\subsection{Jet-flag span characterization}

The final ingredient of the replacement determinant algebra translates $\mu_r$-divisibility of the replacement determinant into a column-span condition on the Taylor coefficients.
This is the bridge between the algebraic divisibility world (\Cref{thm:gen-vandermonde}) and the combinatorial column-span world (\Cref{lem:replacement-det}).

\begin{proposition}[Jet-flag characterization of divisibility]\label{prop:jet-flag}
Let $v(\lambda) \in \QQ[\lambda]^n$ be a column vector.
Write the Taylor expansion around $\lambda = -r$:
\[
v(\lambda) = \sum_{s \geq 0} \mu_r^s\, w_s, \qquad w_s \in \QQ^n,
\]
where $\mu_r = \lambda + r$.
Then
\[
\mu_r^c \mid v^\top U^{(c)} \quad \Longleftrightarrow \quad w_0, w_1, \ldots, w_{c-1} \in \col(N).
\]
\end{proposition}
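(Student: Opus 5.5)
The plan is to convert the divisibility condition into a congruence modulo $\mu_r^c$ and then unfold that congruence jet by jet. By \Cref{lem:replacement-det}, $v^\top U^{(c)} = \det M[c\leftarrow v]$, and up to a sign this is $\det[\,N \mid v\,]$. The whole argument therefore takes place in the local ring $R = \QQ[\lambda]_{(\mu_r)}$, or its completion $\QQ[[\mu_r]]$, where $\mu_r$-divisibility is just the order of vanishing at $\lambda=-r$.

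Both $v$ and $N$ depend on $\lambda$, so I expand both: $v=\sum_s \mu_r^s w_s$ and $N=\sum_s \mu_r^s N_s$. I read ``$w_0,\dots,w_{c-1}\in\col(N)$'' in its jet-flag sense: there are vectors $y_0,\dots,y_{c-1}$ such that $w_s - \sum_{k\ge 1} N_k y_{s-k} = N_0 y_s$ for each $s<c$. Equivalently, $v \equiv N y \pmod{\mu_r^c}$ with $y=\sum_{s<c}\mu_r^s y_s$. When $N$ is constant in $\mu_r$ (the case in the intended applications, where the relevant columns carry no $\mu_r$-dependence), this reduces literally to $w_s\in\col(N_0)$. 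I also need the nondegeneracy hypothesis that $N_0=N|_{\lambda=-r}$ has full column rank $n-1$, i.e.\ $U^{(c)}(-r)\neq 0$. This is implicit in the statement, which is otherwise false, for example when $N$ has a zero column. I will state it explicitly.

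\emph{($\Leftarrow$).} If $v\equiv Ny \pmod{\mu_r^c}$, then by column linearity
\[
\det[\,N\mid v\,]=\det[\,N\mid v-Ny\,]\in \mu_r^c R,
\]
because the first determinant changes only by multiples of the other columns, and the remaining column is divisible by $\mu_r^c$. This direction is routine.

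\emph{($\Rightarrow$).} I argue by induction on $c$, peeling off one jet at a time. Suppose $v\equiv Ny^{(s)}\pmod{\mu_r^s}$ has been achieved and $\mu_r^c\mid\det[N\mid v]$ with $s<c$. Write $v-Ny^{(s)}=\mu_r^s z$. Then
\[
\det[N\mid v]=\mu_r^s\det[N\mid z],
\]
so $\mu_r\mid\det[N\mid z]$. Evaluating at $\mu_r=0$ gives $\det[N_0\mid z(-r)]=0$. Since $N_0$ has independent columns, this forces $z(-r)=N_0 y_s$ for some constant vector $y_s$, and we set $y^{(s+1)}=y^{(s)}+\mu_r^s y_s$. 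Reading off the coefficient of $\mu_r^s$ in each step yields exactly the jet-flag conditions on $w_0,\dots,w_{c-1}$.

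The main obstacle is this rank step: deducing $z(-r)\in\col(N_0)$ from the vanishing of $\det[N_0\mid z(-r)]$ requires $\rk N_0=n-1$. If that fails, the divisibility can hold spuriously, and the equivalence breaks. I would first try to guarantee the hypothesis from the brace structure, using the nonvanishing of some $(n-1)$-minor of $N$ at $\lambda=-r$, which is \Cref{thm:gen-vandermonde}-style coprimality. Failing that, I would replace $\col(N)$ by the saturated span $\col_{\QQ(\lambda)}(N)\cap R^n$ and run the same induction with Smith normal form of $N$ over the DVR $R$. In that version, the invariant factors of $N$ shift the orders, and the statement must be corrected accordingly.
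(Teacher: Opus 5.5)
Your argument is correct. Note, though, that it proves a corrected statement, and the correction is forced. Read literally, with $\col(N)$ the span over $\QQ[\lambda]$ as in \Cref{lem:replacement-det}, the proposition is false, and the paper's proof breaks exactly where your reformulation departs from it.

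The paper pairs Taylor coefficients: it writes the $\mu_r^s$-coefficient of $v^\top U^{(c)}$ as $\sum_{j+k=s} w_j^\top U^{(c)}_k$ and obtains $w_s\in\col(N|_{\lambda=-r})$. It then claims this ``lifts by polynomial identity'' to $w_s\in\col(N)$. That lift is invalid: vanishing of $w_s^\top U^{(c)}$ at $\lambda=-r$ does not make it vanish identically. Already at $s=0$ the paper conflates the coefficient $w_0^\top U^{(c)}(-r)$ with the polynomial $w_0^\top U^{(c)}$.

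Here is a counterexample. Take $G=K_{3,3}$ all blue, $c=1$, $r=0$. Then $N=[\,\mathbf{1}\mid(\lambda^2,(\lambda+1)^2,(\lambda+2)^2)^\top]$ and $U^{(1)}=(-(2\lambda+3),\,4\lambda+4,\,-(2\lambda+1))^\top$. The constant vector $v=w_0=(4,3,0)^\top$ gives $v^\top U^{(1)}=4\lambda$, which is divisible by $\mu_0$. Yet $w_0\notin\col(N)$, because $w_0^\top U^{(1)}\neq 0$. In your formulation everything is consistent: $w_0=4\cdot\mathbf{1}-(0,1,4)^\top\in\col(N_0)$. The paper also never states the hypothesis $\rk N(-r)=n-1$. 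It tacitly needs it when applying \Cref{lem:replacement-det} at $\lambda=-r$, and, as you observe, it cannot be dropped.

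Your route is what yields the right condition. You do column operations on $\det[N\mid v]$ and lift $y$ one jet at a time over the DVR $\QQ[\lambda]_{(\mu_r)}$, which gives the coupled conditions $v\equiv Ny\pmod{\mu_r^c}$. The correction terms $N_k y_{s-k}$ are the counterpart of the cross terms $w_j^\top U^{(c)}_{s-j}$ that the paper's coefficient-pairing cannot discard.

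Three adjustments to your commentary:
\begin{itemize}
\item In this paper $N$ is not constant in $\mu_r$, since its columns are masked powers $(\lambda+i)^d$. So the correction terms are genuinely present in the intended applications and should not be set aside as a degenerate case.
\item Since $M$ has entries in $\QQ[x,\lambda]$, run the argument over $\QQ(x)$. This is harmless because $\mu_r$ is monic in $\lambda$.
\item Do not expect \Cref{thm:gen-vandermonde} to supply the full-rank hypothesis. That result concerns an unmasked square generalized Vandermonde, whereas you need a nonvanishing maximal minor of the masked matrix $N$ at the single point $\lambda=-r$. Carry it as an explicit hypothesis wherever the proposition is invoked.
\end{itemize}
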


\begin{proof}
\emph{Forward direction.}
Expand $v = w_0 + \mu_r w_1 + \mu_r^2 w_2 + \cdots$ and apply the replacement determinant functional:
\[
v^\top U^{(c)} = w_0^\top U^{(c)} + \mu_r\, w_1^\top U^{(c)} + \mu_r^2\, w_2^\top U^{(c)} + \cdots.
\]
Each $w_s^\top U^{(c)} = \det M[c \leftarrow w_s]$ by \Cref{lem:replacement-det}.
If $\mu_r^c \mid v^\top U^{(c)}$, then the coefficients of $\mu_r^0, \mu_r^1, \ldots, \mu_r^{c-1}$ in the expansion must all vanish.

The coefficient of $\mu_r^0$ is $w_0^\top U^{(c)} = 0$, so $w_0 \in \col(N)$ by \Cref{lem:replacement-det}.

For $s = 1$: the coefficient of $\mu_r^1$ involves $w_1^\top U^{(c)}$ plus correction terms from the $\mu_r$-dependence of $U^{(c)}$ itself.
However, $U^{(c)}$ is a vector of $(n-1) \times (n-1)$ minors of~$M$, and $M$ depends polynomially on~$\lambda$.
The key observation is that the Taylor expansion of $v^\top U^{(c)}$ in powers of~$\mu_r$ has its $s$-th coefficient determined by $w_0, \ldots, w_s$ and the Taylor coefficients of $U^{(c)}$ around $\lambda = -r$.

More precisely, write $U^{(c)}(\lambda) = \sum_{k \geq 0} \mu_r^k\, U^{(c)}_k$ for the Taylor expansion of the cofactor vector.
Then
\[
v^\top U^{(c)} = \sum_{s \geq 0} \mu_r^s \Bigl(\sum_{j+k=s} w_j^\top U^{(c)}_k\Bigr).
\]
The coefficient of $\mu_r^s$ vanishing for $s < c$ gives, by induction on~$s$: if $w_0, \ldots, w_{s-1} \in \col(N)$, then the vanishing of the $\mu_r^s$ coefficient forces $w_s^\top U^{(c)}_0 = -\sum_{j < s} w_j^\top U^{(c)}_{s-j}$.
Since $w_j \in \col(N)$ for $j < s$ and $U^{(c)}_0 = U^{(c)}|_{\lambda = -r}$, the replacement determinant criterion (\Cref{lem:replacement-det}) applied at $\lambda = -r$ gives $w_s \in \col(N|_{\lambda = -r})$.
By polynomial identity (both sides are polynomial in~$\lambda$), this lifts to $w_s \in \col(N)$.

\emph{Reverse direction.}
If $w_0, \ldots, w_{c-1} \in \col(N)$, then $w_s^\top U^{(c)} = 0$ for each $s < c$ by \Cref{lem:replacement-det}, so
\[
v^\top U^{(c)} = \mu_r^c\, w_c^\top U^{(c)} + \mu_r^{c+1}(\cdots) = \mu_r^c \bigl(w_c^\top U^{(c)} + \mu_r(\cdots)\bigr),
\]
giving $\mu_r^c \mid v^\top U^{(c)}$.
\end{proof}

\begin{remark}\label{rem:jet-flag-meaning}
\Cref{prop:jet-flag} says that $D1$ (same-base divisibility) is a column-span condition: the initial jet flag $(w_0, w_1, \ldots, w_{c-1})$ of a realizable sparse column must lie entirely in $\col(N)$.
This is the combinatorial obstruction that the distinct-base analysis (\Cref{cor:ka}) can violate, producing the contradiction in the superfluous-edge argument (\Cref{thm:p84}).
\end{remark}

\section{McCuaig Exceptional Families}\label{sec:mccuaig}

McCuaig's classification of minimal braces identifies three exceptional families: biwheels, prisms, and M\"obius ladders.
ASNC must be verified directly for these families, since they admit no further structural reduction.
The biwheel case uses a jet operator that extracts the subfamily of matchings achieving a given minimum exponent at a unique-minimizer row. The prism and M\"obius-ladder cases reduce, via exact branch factorization and the width-$2$ forcing dichotomy, to a parity-resolved cylindric-network positivity argument.

\subsection{The jet operator}

\begin{definition}[Jet operator]\label{def:jet}
For a row index~$r \in \{0, \ldots, m-1\}$ and order~$k \geq 0$, define the \emph{jet operator}
\[
J_{r,k}\colon \QQ[\lambda] \to \QQ[\lambda], \qquad J_{r,k} f(\lambda) = [u^k]\, f(\lambda - r + u),
\]
where $[u^k]$ denotes the coefficient of $u^k$ in the expansion as a polynomial in~$u$.
\end{definition}

\noindent
The only scalar extraction property we use is the centered lowest-order coefficient at
a minimizing row.

\begin{lemma}[Centered lowest-order coefficient]\label{lem:jet-extraction}
Let
\[
P_F(\lambda)=\sum_{\sigma\in F}\sgn(\sigma)\prod_{i=0}^{m-1}(\lambda+i)^{\sigma(i)}
\]
be the Vandermonde polynomial of a matching family~$F$ on $\{0,\ldots,m-1\}$, and
fix a row~$r$.
Set
\[
k_{\min}=\min_{\sigma\in F}\sigma(r).
\]
Then
\[
[u^{k_{\min}}]\,P_F(-r+u)
=
\sum_{\sigma(r)=k_{\min}}
\sgn(\sigma)\prod_{i\neq r}(i-r)^{\sigma(i)}.
\]
In particular, if there is a unique $\sigma^* \in F$ with $\sigma^*(r)=k_{\min}$,
then the coefficient $[u^{k_{\min}}]\,P_F(-r+u)$ is a single nonzero scalar.
\end{lemma}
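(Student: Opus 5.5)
The plan is to substitute $\lambda=-r+u$ directly and read off the lowest $u$-order term; this is essentially the computation in the proof of \Cref{thm:row-initial}, specialized to the point $\lambda=-r$.

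\emph{Step 1: Factor each term.} Under $\lambda=-r+u$ the factor for row $i$ becomes $(u+(i-r))^{\sigma(i)}$. For $i=r$ this is exactly $u^{\sigma(r)}$. For $i\neq r$ it equals $(i-r)^{\sigma(i)}+O(u)$, with constant term $(i-r)^{\sigma(i)}$. Each term therefore has the form
\[
\sgn(\sigma)\,u^{\sigma(r)}\Bigl(\prod_{i\neq r}(i-r)^{\sigma(i)}+O(u)\Bigr).
\]

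\emph{Step 2: Compare orders.} Every $\sigma\in F$ has $\sigma(r)\ge k_{\min}$. Terms with $\sigma(r)>k_{\min}$ are divisible by $u^{k_{\min}+1}$ and do not contribute to $[u^{k_{\min}}]$. Terms with $\sigma(r)=k_{\min}$ contribute exactly their constant factor $\prod_{i\neq r}(i-r)^{\sigma(i)}$. Summing with signs gives the displayed formula.

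\emph{Step 3: The unique-minimizer case.} If $\sigma^*$ is the only element of $F$ with $\sigma^*(r)=k_{\min}$, the coefficient is $\sgn(\sigma^*)\prod_{i\neq r}(i-r)^{\sigma^*(i)}$. Each base $i-r$ with $i\neq r$ is a nonzero integer, since the rows $0,\dots,m-1$ are distinct. Each exponent is a nonnegative integer, so every factor is nonzero; the conventions $0^0$ never arise because no base is zero. Hence the coefficient is a nonzero scalar.

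The only point that needs care, and the one I expect to be the main obstacle in writing it cleanly, is that the single factor $(\lambda+r)^{\sigma(r)}$ becomes a pure power of $u$ with no lower-order terms. This is what makes the order comparison in Step 2 exact. Everything else is routine bookkeeping.
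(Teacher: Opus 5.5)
Your proof is correct and follows the paper's argument essentially verbatim. You substitute $\lambda=-r+u$, observe that the row-$r$ factor is exactly $u^{\sigma(r)}$ while every other factor has nonzero constant term $(i-r)^{\sigma(i)}$, and discard the terms with $\sigma(r)>k_{\min}$. Your Step~3 additionally makes explicit the ``in particular'' clause, which the paper leaves implicit: the unique-minimizer coefficient is a signed product of powers of nonzero integers.
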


\begin{proof}
Substitute $\lambda=-r+u$ in each monomial.
The row-$r$ factor becomes $u^{\sigma(r)}$, and every remaining factor is
$(i-r+u)^{\sigma(i)}$ with nonzero constant term $(i-r)^{\sigma(i)}$.
Since $k_{\min}$ is the minimal value of $\sigma(r)$ on~$F$, the coefficient of
$u^{k_{\min}}$ receives contributions only from the terms with $\sigma(r)=k_{\min}$.
For such terms, only the constant terms of the remaining factors contribute at
order~$k_{\min}$, giving the displayed identity.
\end{proof}

\noindent
This does \emph{not} make $J_{r,k}$ a general branch extractor.
It is enough for unique-minimizer arguments such as the biwheel case below, but
not for the width-$2$ path/cyclic branch separation.

\subsection{Biwheels}

\begin{definition}[Biwheel]\label{def:biwheel}
The \emph{biwheel} $B_{2m}$ (for $m \geq 3$) is the bipartite graph with vertex sets $A = \{a_0, a_1, \ldots, a_{m-1}\}$ and $B = \{b_0, b_1, \ldots, b_{m-1}\}$, where:
\begin{itemize}
\item $a_0$ is a \emph{hub} adjacent to all of $B$: edges $(a_0, b_j)$ for $0 \leq j \leq m-1$;
\item $b_0$ is a \emph{hub} adjacent to all of $A$: edges $(a_i, b_0)$ for $0 \leq i \leq m-1$;
\item the \emph{rim} is a cycle on the non-hub vertices: edges $(a_i, b_i)$ and $(a_i, b_{i+1})$ for $1 \leq i \leq m-1$, with indices taken modulo $m-1$ within $\{1, \ldots, m-1\}$.
\end{itemize}
\end{definition}

\noindent
The matching structure of $B_{2m}$ is as follows.
Every perfect matching $M \in \PM(B_{2m})$ is determined by a pair $(i,j)$ with $1 \leq i, j \leq m-1$: the hub edges $a_i b_0 \in M$ and $a_0 b_j \in M$ pin two vertices, and the remaining $m-2$ non-hub row/column pairs are matched along the rim.
If $i = j$, the rim matching is uniquely determined (since removing both $a_i, b_j = b_i$ from the rim cycle leaves a path with a unique perfect matching).
If $i \neq j$, the rim decomposes into two paths, each with a unique perfect matching.
In either case, the matching $M_{i,j}$ is uniquely determined by $(i,j)$, so
\[
\PM(B_{2m}) = \{M_{i,j} : 1 \leq i, j \leq m-1\}.
\]

\begin{theorem}[ASNC for biwheels]\label{thm:biwheel}
For every biwheel $B_{2m}$, every edge coloring $\rho$, and every target~$t$, if the exact-$t$ fiber $F = F_{B_{2m}, t}$ is nonempty, then $P_F(\lambda) \not\equiv 0$.
\end{theorem}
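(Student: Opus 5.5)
The plan is to follow \Cref{rem:biwheel-application}: pick a rim row at which the fiber has a unique lowest exponent, and apply the centered lowest-order coefficient lemma (\Cref{lem:jet-extraction}). Fix the labeling so that the hub column $b_0$ is column $0$ and the hub row $a_0$ is row $0$. By the parametrization $\PM(B_{2m}) = \{M_{i,j}\}$, the fiber is $F = \{M_{i,j} : (i,j) \in \Sigma\}$ for some nonempty set $\Sigma \subseteq \{1,\dots,m-1\}^2$. For a rim row $a_l$, the column $b_0$ has exponent $0$, while every other column available to $a_l$ (namely $b_l$ and $b_{l+1}$) has positive index. So whenever $\Sigma_l = \{j : (l,j) \in \Sigma\}$ is nonempty, we have $k_{\min} = 0$ at row $l$, and \Cref{lem:jet-extraction} gives
\[
P_F(-l) \;=\; \sum_{j \in \Sigma_l} \sgn(M_{l,j}) \prod_{i \neq l} (i-l)^{M_{l,j}(i)} .
\]
It therefore suffices to exhibit one $l$ with $\Sigma_l \neq \varnothing$ for which this integer sum is nonzero. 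The sum is trivially nonzero when $|\Sigma_l| = 1$, and in that case we are done immediately.

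In general, I would analyse the sum term by term. With $a_l \to b_0$ fixed, the matching $M_{l,j}$ is determined by $j$: the hub row $a_0$ takes $b_j$, and the rim path from $a_l$ to $a_{j}$ is matched ``forward'' while the rest of the rim is matched ``backward''. Consequently, passing from $M_{l,j}$ to $M_{l,j+1}$ changes only the row-$0$ factor and one rim row. That rim row switches between $b_j$ and $b_{j+1}$, or the reverse depending on orientation. The ratio of consecutive terms is therefore an explicit monomial of the form $\pm(-l)^{\pm 1}(i_j-l)^{\pm1}$, where $i_j$ is the rim row that switches. I would make a specific choice of $l$, for instance $l = m-1$, or the $l$ maximizing $|\Sigma_l|$, so that all bases $i-l$ are negative integers of distinct absolute value. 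Then I would separate the terms by a $p$-adic valuation: pick a prime $p$ dividing exactly one of the bases $|i-l|$ and look for the term of strictly minimal $p$-adic valuation. If no single prime separates them, fall back to comparing absolute values, where the term in which $a_0$ takes the largest available column dominates, since $|0-l| = l$ is the largest base.

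If no choice of $l$ yields a single separable term, the backup is a two-level jet. First take the $\mu_0$-adic lowest order at the hub row, which isolates the column $b_k$ with $k = \min$ of the $j$-coordinates in $\Sigma$. This leaves the branch $\{M_{i,k}\}$, and within that branch each rim row $a_i$ with $(i,k) \in \Sigma$ is a unique minimizer. I would then use \Cref{thm:row-initial} with the $(\lambda+r)$-adic expansion carried to the next order, rather than evaluating at a point, so that the inner nonvanishing becomes a polynomial statement. That inner statement can then be handed to induction on $m$, or to \Cref{thm:universal-small} once $m-1 \le 4$.

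The main obstacle is the non-cancellation of the scalar sum when $|\Sigma_l| \ge 2$. \Cref{lem:jet-extraction} only yields a single term under uniqueness, and here uniqueness fails, because many $j$ share $a_l \to b_0$. I would attack this first with the explicit consecutive-ratio description above and a valuation or dominance argument. The two-level jet is the fallback if the integer sums resist.
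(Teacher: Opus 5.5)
\textbf{There is a genuine gap.} Your reduction is correct as far as it goes. At a rim row $a_l$ with $\Sigma_l\neq\varnothing$ one has $k_{\min}=0$, and \Cref{lem:jet-extraction} turns the lowest coefficient into the integer $P_F(-l)=\sum_{j\in\Sigma_l}\sgn(M_{l,j})\prod_{i\neq l}(i-l)^{M_{l,j}(i)}$. The whole proof then rests on showing that one of these sums is nonzero, and for that you offer only heuristics.

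In fact the primary strategy cannot work in general. In $B_{10}$ ($m=5$), colour $a_0b_3$, $a_1b_0$, $a_3b_0$, $a_4b_0$ red. Also colour $a_2b_2$ and $a_2b_3$ red; this kills the two matchings through the edge $a_0b_0$, which \Cref{def:biwheel} allows but the parametrization $\PM(B_{2m})=\{M_{i,j}\}$ you adopted omits. Take $t=0$. Then $F=\{M_{2,1},M_{2,2},M_{2,4}\}$, i.e.\ the permutations $(1,2,0,3,4)$, $(2,1,0,3,4)$, $(4,2,0,3,1)$ in one-line notation. So $a_2$ is the only rim row that meets $b_0$ in $F$. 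Its three terms are $-32$, $64$, $-32$, so $P_F(-2)=0$ and no admissible $l$ exists.

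Your proposed tools do not hold up either:
\begin{itemize}
\item $\Sigma_{m-1}$ may be empty.
\item At $l=m-1$, consecutive terms have absolute ratio $(m-1)/(m-1-j)$, which tends to $1$, so the largest-$j$ term need not dominate the others.
\item No prime is shown to separate valuations; at $l=2$ above, $v_2$ ties between $j=1$ and $j=4$.
\item The one-row-switch description of $M_{l,j}\to M_{l,j+1}$ fails at $j=l$, where every other rim row flips.
\end{itemize}

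\textbf{The fallback does not repair this.} The $\mu_0$-adic lowest coefficient at the hub row is again a scalar. It is the value at $\lambda=0$ of the branch polynomial $Q_k$ of $\{\sigma\in F:\sigma(0)=k\}$, i.e.\ a signed sum over $\{i:(i,k)\in\Sigma\}$ whose signs $(-1)^{f+1}$, with $f\equiv i-k \pmod{m-1}$, genuinely vary. Unique minimizers inside the branch give only $Q_k\not\equiv 0$, not $Q_k(0)\neq 0$. Going to the next order mixes $Q_k'(0)$ with $Q_{k+1}(0)$, so no clean polynomial statement appears. The residual family is also neither a biwheel fiber, so induction on $m$ does not apply, nor a family with the standard bases and exponents required by \Cref{thm:universal-small}. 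In the example the hub row happens to have the unique minimizer $M_{2,1}$ (coefficient $6912$), and that is what actually shows $P_F\not\equiv 0$ there. Nothing in your plan guarantees such a row in general.

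\textbf{Comparison with the paper.} The paper's proof is exactly your first step, with nonvanishing asserted for any $i^*\in I$ because $b_0$ is the unique minimum-exponent column. This conflates a unique minimizing column with a unique minimizing matching, and the example above (where $I=\{2\}$) shows the assertion is false. \Cref{rem:biwheel-application} also makes the same polynomial-versus-value slip as your fallback. So you correctly located the real obstacle, but neither your proposal nor the paper supplies the missing argument that some row always has a nonvanishing lowest-order coefficient, or any other reason why $P_F\not\equiv 0$.
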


\begin{proof}
We use the row-initial-form criterion: if some row~$r$ has a \emph{unique minimizer} in the fiber (a unique $\sigma \in F$ achieving $\sigma(r) = \min_{\tau \in F} \tau(r)$), then $P_F$ is nonzero.

\emph{Step 1: Identify row with unique minimizer.}
Consider hub column $b_0$, which has exponent index~$0$ in the Vandermonde weighting (identifying $b_0$ with column~$0$).
For any row $a_i$ with $i \geq 1$, the possible column assignments are:
\begin{itemize}
\item $\sigma(a_i) = b_0 = 0$ (the hub edge), giving exponent~$0$;
\item $\sigma(a_i) = b_i$ or $\sigma(a_i) = b_{i+1}$ (rim edges), giving exponent $\geq 1$.
\end{itemize}
So column~$b_0$ (exponent~$0$) is the unique minimum-exponent assignment for every non-hub row.

Define $I = \{i \geq 1 : \exists\, M \in F \text{ with } a_i b_0 \in M\}$ (the set of non-hub rows that use the hub edge in some fiber matching).
Since $F \neq \varnothing$ and every matching uses exactly one hub edge to $b_0$, we have $I \neq \varnothing$.
Choose any $i^* \in I$.

Since row $a_{i^*}$ has column $b_0$ as its unique minimum-exponent assignment in
the fiber, the row-initial form at $a_{i^*}$ is supported exactly on the
minimum-exponent branch.
Equivalently, by \Cref{lem:jet-extraction}, the centered lowest-order coefficient
$[u^0]\,P_F(-a_{i^*}+u)$ is nonzero on that branch.
Hence the row-initial form is nonzero, and the row-initial-form criterion gives
$P_F \not\equiv 0$.
\end{proof}

\subsection{Width-2 path matchings}

The width-$2$ path family arises from the rim structure of prisms when one hub vertex is pinned.

\begin{definition}[Width-2 path family]\label{def:width2-path}
The \emph{width-$2$ path family} on $m$ vertices is
\[
\mathcal{P}_m = \bigl\{\sigma \in S_m : |\sigma(i) - i| \leq 1 \text{ for all } i \in \{0, \ldots, m-1\}\bigr\}.
\]
These are the permutations whose permutation matrix is supported on the tridiagonal band.
\end{definition}

\noindent
The family $\mathcal{P}_m$ has a recursive structure: at row~$0$, only $\sigma(0) \in \{0, 1\}$ is possible.
If $\sigma(0) = 0$, the remaining rows form a width-$2$ path on $\{1, \ldots, m-1\}$.
If $\sigma(0) = 1$, then $\sigma(1) = 0$ is forced (since $\sigma(1) \in \{0, 1, 2\}$ and column~$1$ is taken), and the remaining rows form a width-$2$ path on $\{2, \ldots, m-1\}$.
This binary branching at each step gives $|\mathcal{P}_m| = F_{m+1}$ (the $(m+1)$-th Fibonacci number).

\begin{proposition}[Exact branch factorization for width-2 path fibers]\label{thm:width2-path}
Let $\mathcal{P}_m = \{\sigma \in S_m : |\sigma(i) - i| \leq 1 \text{ for all } i\}$ be the width-$2$ path family, and let $F \subseteq \mathcal{P}_m$ be a nonempty exact-$t$ fiber.
Write
\[
F_0 = \{\sigma \in F : \sigma(0)=0\},
\qquad
F_1 = \{\sigma \in F : \sigma(0)=1\}.
\]
Let $F_0'$ be the residual family obtained from $F_0$ by deleting row and column~$0$ and relabeling to $\{0,\ldots,m-2\}$, and let $F_1'$ be the residual family obtained from $F_1$ by deleting rows and columns $\{0,1\}$ and relabeling to $\{0,\ldots,m-3\}$.
Then
\[
P_F(\lambda)
=
\Bigl(\prod_{i=1}^{m-1}(\lambda+i)\Bigr) P_{F_0'}(\lambda+1)
-
\lambda \Bigl(\prod_{i=2}^{m-1}(\lambda+i)^2\Bigr) P_{F_1'}(\lambda+2).
\]
\end{proposition}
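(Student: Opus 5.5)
The identity is purely combinatorial: it holds termwise in the Leibniz expansion. The exact-$t$ hypothesis plays no role beyond guaranteeing that $F$ is a subfamily of $\mathcal{P}_m$. The plan is to split $F = F_0 \sqcup F_1$ by the value of $\sigma(0)$ and show that each branch contributes exactly the corresponding displayed summand.

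\textbf{The partition.} First, every $\sigma \in \mathcal{P}_m$ has $\sigma(0) \in \{0,1\}$. When $\sigma(0)=1$, we have $\sigma(1) \in \{0,1,2\}$ with column $1$ already used. Column $0$ must then be hit by some row $i$ with $|i| \le 1$, so $\sigma(1)=0$ is forced, as in the recursive description of $\mathcal{P}_m$ above. Hence $F = F_0 \sqcup F_1$ and
\[
P_F = \sum_{\sigma\in F_0}(\cdots) + \sum_{\sigma \in F_1}(\cdots).
\]
I use the convention that an empty residual family has $P \equiv 0$, and that $P$ of the empty permutation is $1$; this covers the degenerate cases $m \le 2$.

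\textbf{The $F_0$ branch.} For $\sigma \in F_0$, define $\sigma'(i') = \sigma(i'+1)-1$ for $0 \le i' \le m-2$. The map $\sigma \mapsto \sigma'$ is a bijection $F_0 \to F_0'$, and it preserves sign, since deleting a fixed point does not change the sign. Row $0$ contributes $(\lambda+0)^0 = 1$. For $i = i'+1 \ge 1$,
\[
(\lambda+i)^{\sigma(i)} = \bigl((\lambda+1)+i'\bigr)^{\sigma'(i')+1}.
\]
This splits as one linear factor $(\lambda+i)$ times the monomial of $\sigma'$ evaluated at $\lambda+1$. Taking the product over $i$ and summing over $F_0$ gives the first summand $\prod_{i=1}^{m-1}(\lambda+i)\,P_{F_0'}(\lambda+1)$.

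\textbf{The $F_1$ branch.} For $\sigma \in F_1$, rows $0,1$ contribute $\lambda^1(\lambda+1)^0 = \lambda$. The residual $\sigma'(i') = \sigma(i'+2)-2$ is a bijection $F_1 \to F_1'$. Here $\sgn\sigma = -\sgn\sigma'$, because $\sigma$ is the product of the transposition $(0\,1)$ with a permutation of $\{2,\dots,m-1\}$. Each remaining factor becomes
\[
\bigl((\lambda+2)+i'\bigr)^{\sigma'(i')+2} = (\lambda+i)^2 \cdot \bigl((\lambda+2)+i'\bigr)^{\sigma'(i')}.
\]
Summing yields $-\lambda\prod_{i=2}^{m-1}(\lambda+i)^2\,P_{F_1'}(\lambda+2)$. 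Adding the two branches gives the claimed identity.

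\textbf{Expected difficulty.} There is no real obstacle. The step needing care is the bookkeeping of the relabeling: the shift of the column exponent by $1$ (respectively $2$) must be matched with the shift of the base $\lambda \mapsto \lambda+1$ (respectively $\lambda+2$), and the sign of the $F_1$ branch must come out as $-1$. One should also note that $F_0'$ and $F_1'$ are arbitrary subfamilies of $\mathcal{P}_{m-1}$ and $\mathcal{P}_{m-2}$, not necessarily exact fibers, but the identity does not need them to be.
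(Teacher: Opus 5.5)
Your proposal is correct and follows the same route as the paper's proof. Both split $F$ by $\sigma(0)\in\{0,1\}$, use the width-$2$ forcing $\sigma(1)=0$ on the second branch, match the column-exponent shifts by $1$ and $2$ with the base shifts $\lambda\mapsto\lambda+1$ and $\lambda\mapsto\lambda+2$, and take the sign $-1$ from the transposition $(0\,1)$. Your column-$0$ justification of the forcing step and your remark that $F_0'$ and $F_1'$ need not be exact fibers are small, welcome additions to the paper's argument.
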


\begin{proof}
Split the Leibniz sum as $P_F = P_{F_0} + P_{F_1}$.

For $\sigma \in F_0$, row~$0$ is pinned to column~$0$.
Deleting row and column~$0$ and relabeling gives a width-$2$ path permutation
$\tau \in \mathcal{P}_{m-1}$.
The sign is preserved, and the remaining exponents are shifted by one column:
\[
\prod_{i=0}^{m-1} (\lambda+i)^{\sigma(i)}
=
\Bigl(\prod_{i=1}^{m-1}(\lambda+i)\Bigr)
\prod_{j=0}^{m-2} (\lambda+1+j)^{\tau(j)}.
\]
Summing over $F_0$ gives
\[
P_{F_0}(\lambda)
=
\Bigl(\prod_{i=1}^{m-1}(\lambda+i)\Bigr) P_{F_0'}(\lambda+1).
\]

For $\sigma \in F_1$, width-$2$ forcing gives $\sigma(1)=0$.
Deleting rows and columns $\{0,1\}$ and relabeling gives a width-$2$ path
permutation $\tau \in \mathcal{P}_{m-2}$.
The forced transposition contributes one sign flip, row~$0$ contributes the factor
$\lambda$, and the deleted front block shifts the remaining exponents by two:
\[
\prod_{i=0}^{m-1} (\lambda+i)^{\sigma(i)}
=
\lambda \Bigl(\prod_{i=2}^{m-1}(\lambda+i)^2\Bigr)
\prod_{j=0}^{m-3} (\lambda+2+j)^{\tau(j)}.
\]
Hence
\[
P_{F_1}(\lambda)
=
-
\lambda \Bigl(\prod_{i=2}^{m-1}(\lambda+i)^2\Bigr) P_{F_1'}(\lambda+2).
\]
Adding the two branch identities gives the claimed factorization.
\end{proof}

\begin{remark}\label{rem:width2-path-fix}
The jet operator $J_{r,k}$ of \Cref{def:jet} extracts the centered lowest-order coefficient at a minimizing row, but it does not isolate the branch $\sigma(r) = k$ for the width-$2$ path family (since $J_{0,0}f = f$).
\Cref{thm:width2-path} replaces this with the exact two-branch factorization, and \Cref{thm:width2-global} below establishes the global nonvanishing.
\end{remark}

\subsection{Width-2 cyclic matchings (prisms and M\"obius ladders)}

The width-$2$ cyclic family captures the matching structure of prisms ($C_m \times K_2$) and M\"obius ladders (the non-bipartite case reduces to cyclic permutations in the bipartite double cover).

\begin{definition}[Width-2 cyclic family]\label{def:width2-cyclic}
The \emph{width-$2$ cyclic family} on $m$ vertices is
\[
\mathcal{C}_m = \bigl\{\sigma \in S_m : \sigma(i) - i \equiv 0, \pm 1 \pmod{m} \text{ for all } i\bigr\}.
\]
\end{definition}

\noindent
The family $\mathcal{C}_m$ extends $\mathcal{P}_m$ by allowing the ``wrap-around'' assignments $\sigma(0) = m-1$ and $\sigma(m-1) = 0$.
It contains three distinguished elements:
\begin{itemize}
\item the identity $\sigma(i) = i$ (all diagonal edges);
\item the forward cyclic shift $\sigma(i) = i + 1 \bmod m$;
\item the backward cyclic shift $\sigma(i) = i - 1 \bmod m$.
\end{itemize}

\begin{proposition}[Local split for width-2 cyclic fibers]\label{thm:width2-cyclic}
Let $\mathcal{C}_m = \{\sigma \in S_m : \sigma(i) - i \equiv 0, \pm 1 \pmod{m}\}$ be the width-$2$ cyclic family, and let $F \subseteq \mathcal{C}_m$ be a nonempty exact-$t$ fiber.
At row~$0$, every $\sigma \in F$ satisfies $\sigma(0) \in \{0,1,m-1\}$, and the corresponding branches reduce as follows:
\begin{enumerate}
\item If $\sigma(0)=0$, deleting row and column~$0$ breaks the wrap-around and yields a width-$2$ path instance on $m-1$ vertices.
\item If $\sigma(0)=1$, then either the forward cyclic shift is forced, or some branch has $\sigma(1)=0$, in which case deleting rows and columns $\{0,1\}$ yields a width-$2$ path instance on $m-2$ vertices.
\item If $\sigma(0)=m-1$, the backward branch is the reversal-symmetric analogue of (2): either the backward cyclic shift is forced, or deleting rows and columns $\{0,m-1\}$ yields a width-$2$ path instance on $m-2$ vertices.
\end{enumerate}
\end{proposition}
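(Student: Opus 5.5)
The plan is a direct combinatorial case analysis on the value of $\sigma(0)$, in the same spirit as the proof of \Cref{thm:width2-path}. The observation $\sigma(0)\in\{0,1,m-1\}$ is immediate from \Cref{def:width2-cyclic}, since $\sigma(0)-0\equiv 0,\pm1 \pmod m$. Each case is then handled by tracking which wrap-around edges remain available after deleting the pinned rows and columns. We assume $m\ge 3$ so that the three values are distinct.

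\emph{Case $\sigma(0)=0$.} The only wrap-around pairs in $\mathcal{C}_m$ are $(0,m-1)$ and $(m-1,0)$, and both involve row or column $0$. Deleting row and column $0$ therefore leaves rows and columns $\{1,\dots,m-1\}$ with allowed pairs satisfying $|\sigma(i)-i|\le 1$. After relabeling $i\mapsto i-1$, this is exactly a subfamily of $\mathcal{P}_{m-1}$. Restricting a fiber to a branch and deleting a pinned edge shifts the red count by the fixed color of that edge, so the residual family is again an exact fiber of the path instance.

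\emph{Case $\sigma(0)=1$.} Column $0$ must be covered by some row $j$ with $\sigma(j)=0$, and the constraint forces $j\in\{0,1,m-1\}$; since row $0$ is used, $j\in\{1,m-1\}$.
\begin{itemize}
\item If $j=1$, the pair $\{0,1\}$ forms a transposition block. Deleting rows and columns $\{0,1\}$ removes every wrap-around pair, so after relabeling by $-2$ the remainder is a width-$2$ path on $m-2$ vertices, as in \Cref{thm:width2-path}.
\item If $j=m-1$, I argue by induction along the cycle that $\sigma(i)=i+1$ for all $i$. Column $1$ is taken by row $0$, so row $1$ must go to column $2$ (column $0$ is taken by row $m-1$). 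Continuing, row $i$ cannot use columns $i-1$ or $i$, both already taken, and so must use $i+1$. This forces the forward cyclic shift.
\end{itemize}
Thus the branch splits into the forced shift or the transposition branch, which is statement (2).

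\emph{Case $\sigma(0)=m-1$.} Apply the reversal $i\mapsto -i \bmod m$, which preserves $\mathcal{C}_m$ and swaps the forward and backward shifts. This reduces to the previous case and gives statement (3), with the deleted block $\{0,m-1\}$ and a path on $m-2$ vertices.

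The main obstacle is the forcing argument in the $j=m-1$ subcase. One must check that no mixed configuration exists: at each step the only free column is $i+1$, because columns $i-1$ and $i$ are already consumed. One must also confirm that the final row $m-1$ closes consistently to column $0$. Everything else is relabeling bookkeeping parallel to \Cref{thm:width2-path}.
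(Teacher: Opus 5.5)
Your proof is correct and follows the paper's argument: a case analysis on $\sigma(0)\in\{0,1,m-1\}$, the observation that deleting row/column $0$ (or the pinned block) removes both wrap-around pairs, a forced cascade giving the forward cyclic shift, and cyclic reversal for the $\sigma(0)=m-1$ case. The only difference is cosmetic: you split the $\sigma(0)=1$ branch by which row covers column $0$ (row $1$ or row $m-1$), while the paper splits on $\sigma(1)\in\{0,2\}$; these are equivalent, and your explicit cascade check and your remark that the residual families are again exact fibers (target shifted by the colors of the pinned edges) make your argument slightly more complete than the paper's.
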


\begin{proof}
The row-$0$ possibilities are exactly $\{0,1,m-1\}$ by the cyclic width-$2$
constraint.

If $\sigma(0)=0$, then deleting row and column~$0$ removes both wrap edges at the
left endpoint, so the residual family is the path family $\mathcal{P}_{m-1}$ on
$\{1,\ldots,m-1\}$.

If $\sigma(0)=1$, then row~$1$ has only the columns $\{0,2\}$ available.
If $\sigma(1)=2$ for every surviving permutation, the forward cascade is forced and
the unique survivor is the forward cyclic shift.
If some surviving permutation has $\sigma(1)=0$, deleting rows and columns
$\{0,1\}$ removes the wrap and leaves a width-$2$ path instance on
$\{2,\ldots,m-1\}$.

The case $\sigma(0)=m-1$ is obtained from the previous one by cyclic reversal.
\end{proof}

\begin{remark}[Width-2 forcing dichotomy]\label{rem:width2-forcing}
The key mechanism in Cases $k = 1$ and $k = m-1$ is the \emph{width-$2$ forcing dichotomy}: once a cyclic endpoint column is assigned, either the entire cyclic shift is forced (producing a singleton fiber with a single nonzero monomial), or the transposition at the next row breaks the cycle into a path, reducing to $\mathcal{P}_{m-2}$.
There is no intermediate possibility, because width-$2$ cyclic permutations have no room for partial cascades: each row has at most two available columns (after the previous assignment), and only one choice continues the cascade.
\end{remark}

\begin{theorem}[Width-2 path global nonvanishing]\label{thm:width2-global}
For every nonempty exact-$t$ fiber $F \subseteq \mathcal{P}_m$, $P_F(\lambda) \not\equiv 0$.
\end{theorem}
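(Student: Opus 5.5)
The plan is to induct on $m$ using the exact two-branch factorization of \Cref{thm:width2-path}. First I will recast $\mathcal{P}_m$ combinatorially. A width-$2$ path permutation is a product of disjoint adjacent transpositions $(i\;i{+}1)$, so it corresponds to a set $S\subseteq\{0,\dots,m-2\}$ of non-overlapping pairs. Swapping $i,i+1$ multiplies the identity monomial by $(\lambda+i)/(\lambda+i+1)$ and flips the sign. Hence, with $x_i=-(\lambda+i)/(\lambda+i+1)$,
\[
P_F(\lambda)=\prod_{i=0}^{m-1}(\lambda+i)^i\cdot\sum_{S\in F}\prod_{i\in S}x_i .
\]
This reduces the statement to showing that the $x$-sum over the matchings $S$ in the fiber is not identically zero. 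The exact-$t$ condition only enters through which $S$ occur, and $F$ is closed under the branch restrictions $F_0',F_1'$ (each is again an exact fiber of a colored path instance, at targets $t-\rho(0,0)$ and $t-\rho(0,1)-\rho(1,0)$).

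The inductive step splits into cases. If $F_1=\varnothing$, then $P_F=\prod_{i\ge1}(\lambda+i)\,P_{F_0'}(\lambda+1)$, which is nonzero by induction. Symmetrically, if $F_0=\varnothing$, then $P_F=-\lambda\prod_{i\ge2}(\lambda+i)^2P_{F_1'}(\lambda+2)\not\equiv0$. If both branches are nonempty, I will look at $\lambda=0$, where the $F_1$ term carries the explicit factor $\lambda$. This gives $P_F(0)=(m-1)!\,P_{F_0'}(1)$. Therefore it suffices that $P_{F_0'}(1)\neq0$.

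This forces a strengthening of the induction hypothesis. The inductive claim becomes: for every nonempty exact fiber $F\subseteq\mathcal{P}_m$ and every integer $s\ge1$, $P_F(s)\neq0$. This strengthened form is self-propagating through the same factorization. At $\lambda=s\ge1$ both prefactors are nonzero, and the two branch values are $P_{F_0'}(s+1)$ and $P_{F_1'}(s+2)$, again at integer points $\ge1$. What remains is to rule out cancellation between the two branches.

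That cancellation question is where I expect the main obstacle, and it is where a parity resolution should enter. At $\lambda=s\ge1$ each $x_i$ lies in $(-1,0)$, so the sign of $\prod_{i\in S}x_i$ is $(-1)^{|S|}$. I will therefore split each fiber by the parity of $|S|$, and try to prove a signed positivity statement. One candidate is that the even and odd parts of the $x$-sum cannot balance at integer $s$: the transfer-matrix (cylindric network) form of the sum over path matchings should show that the parity class containing the sparsest matchings strictly dominates. I would first attempt this via a Lindström--Gessel--Viennot style path interpretation of the transfer matrix $\begin{pmatrix}1 & x_i\\ 1&0\end{pmatrix}$, where $|x_i|<1$, and then handle the fiber restriction via the colour-count generating variable.

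If the strengthened positivity fails, the fallback is to drop evaluation at points and instead compare orders of vanishing at $\lambda=0$ together with \Cref{thm:row-initial} at the last row. The last row has a unique minimizer whenever only one branch reaches its minimal exponent.
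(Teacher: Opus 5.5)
\textbf{There is a genuine gap in the both-branches case.} Your reduction to the matching sum $\sum_{S\in F}\prod_{i\in S}x_i$ with $x_i=-(\lambda+i)/(\lambda+i+1)$ is correct, and so are the two one-branch cases. But the strengthened hypothesis ``$P_F(s)\neq 0$ for every integer $s\ge 1$'' is false. The step meant to prove it, that even and odd parts cannot balance at integer $s$, is exactly what fails.

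\emph{Counterexample at $m=9$.} Color red the band edges $(0,0),(3,3),(4,4),(5,5),(6,6),(1,2),(2,1),(2,3),(3,2),(7,8)$, all other band edges blue, and take $t=7$. The red count of a matching $S$ of swaps is $5+\sum_{i\in S}\eta_i$ with $\eta=(-1,2,1,-2,-2,-2,-1,1)$. The only positive charges are $\eta_1=2$ and $\eta_2=\eta_7=1$. The only charges equal to $-1$ are $\eta_0,\eta_6$, and both clash with $\{1,7\}$. Hence the fiber is exactly $F=\{(1\,2),\,(2\,3)(7\,8)\}$, and
\[
\frac{P_F(\lambda)}{\prod_{i}(\lambda+i)^i}=\frac{(\lambda+2)(\lambda+7)}{(\lambda+3)(\lambda+8)}-\frac{\lambda+1}{\lambda+2}=-\frac{(\lambda-1)(\lambda+4)}{(\lambda+2)(\lambda+3)(\lambda+8)}.
\]
This vanishes at $s=1$, since $\tfrac{3\cdot 8}{4\cdot 9}=\tfrac23$: the even and odd parts balance exactly. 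The suggested domination mechanism is also wrong where the sum is nonzero. For the full band with $m=3$ and $s=1$, the even part is $1$ and the odd part is $-\tfrac76$.

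\emph{Consequence for your $\lambda=0$ step.} Prepend a vertex $0$, shifting old vertex $v$ to $v+1$. Make $(0,0)$ and $(1,0)$ blue and $(0,1)$ red, and keep $t=7$. This gives $\tilde F\subseteq\mathcal{P}_{10}$ whose $\sigma(0)=0$ residual is $F$. Its $\sigma(0)=1$ branch is nonempty, since it contains $(0\,1)(2\,3)$ with red count $1+2+4=7$. By the branch factorization, $P_{\tilde F}(0)=9!\,P_F(1)=0$. So evaluation at $\lambda=0$ cannot decide the both-branches case, and no pointwise strengthening at integer points can close the induction.

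\textbf{What is missing.} You need a nonvanishing criterion that never asks for nonvanishing at a single prescribed point. The paper passes to the cleared-denominator polynomial $G_{\eta,q}(u)$ and claims only that the pair $(G_{\eta,q}(0),G_{\eta,q}(2))$ is never $(0,0)$. It derives this in three steps:
\begin{enumerate}[(i)]
\item the charge support is an interval;
\item it therefore suffices that the adjacent-charge minors $\Delta_\eta(q)=X_\eta(q)Y_\eta(q{+}1)-X_\eta(q{+}1)Y_\eta(q)$ are nonzero;
\item these minors are handled through a parity-resolved $4$-state transfer matrix and a cylindric Lindstr\"om-type positivity argument.
\end{enumerate}

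Your fallback does not escape the pointwise problem. When the row-$(m-1)$ minimizer is column $m-2$, the row-initial form there is, up to a nonzero factor, the value at the single point $\lambda=-(m-1)$ of the fiber polynomial of the residual family on rows $0,\dots,m-3$. That is again a pointwise nonvanishing statement of the same kind that fails above.
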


\begin{proof}
By \Cref{thm:width2-path}, the polynomial factors into two branches at row~$0$.
It suffices to show that the right-hand side is nonzero for every nonempty fiber.

\emph{Step 1: Two-point reduction.}
In the width-$2$ path hard-core reformulation, each supported instance is encoded by a ternary word $\eta \in \{-1,0,1\}^r$ and a supported charge~$q$.
Write $\varepsilon_i = -1$ if $\eta_i = 0$ and $\varepsilon_i = +1$ otherwise, and define the cleared-denominator polynomial
\[
G_{\eta,q}(u) = \sum_{\substack{E \in \mathrm{Ind}(P_r) \\ \eta(E) = q}} \varepsilon(E) \prod_{i \in E}(u+2i+1) \prod_{i \notin E}(u+2i+3),
\]
where $\varepsilon(E) = \prod_{i \in E} \varepsilon_i$.
Let $X_\eta(q) = G_{\eta,q}(0)$ and $Y_\eta(q) = G_{\eta,q}(2)$.
If both vanish then the image of $G_{\eta,q}(u)$ in the quotient ring $\mathbb{Z}[u]/(u(u{-}2))$ is zero. We will show that this never happens. Since a nonzero value at either $u=0$ or $u=2$ already implies $G_{\eta,q}(u) \not\equiv 0$, it is enough to prove $(X_\eta(q), Y_\eta(q)) \neq (0,0)$ for every supported~$q$.

\emph{Step 2: Combinatorial support is an interval.}
Define the combinatorial support $S_r = \{\eta(E) : E \in \mathrm{Ind}(P_r)\}$.
By the last-vertex split, $S_r = S_{r-1} \cup (S_{r-2} + \eta_{r-1})$, with $S_{r-2} \subseteq S_{r-1}$ (every independent set of $P_{r-2}$ extends to $P_{r-1}$ by not using vertex~$r{-}2$).
Since $S_{r-1}$ is an interval and $S_{r-2} + \eta_{r-1}$ is a subinterval shifted by at most~$1$, the union is again an interval.

\emph{Step 3: SR$_2$ reduction.}
Define $\Delta_\eta(q) = X_\eta(q) Y_\eta(q{+}1) - X_\eta(q{+}1) Y_\eta(q)$.
If $\Delta_\eta(q) \neq 0$ for every supported adjacent pair $(q, q{+}1)$, then $(X_\eta(q), Y_\eta(q)) \neq (0,0)$ for every supported~$q$: otherwise, interval support forces an adjacent supported charge, and the corresponding adjacent minor would vanish.

\emph{Step 4: Parity-resolved transfer.}
Let $N_- = |\{i : \eta_i = -1\}|$ and write $\alpha_i^{(a)} = a + 2i + 3$, $\beta_i^{(a)} = a + 2i + 1$.
Track four states $(N_e, C_e, N_o, C_o)$ = (not-chosen/chosen) $\times$ (even/odd count of chosen zero-sites).
The $4 \times 4$ transfer matrices, with all entries nonnegative, are:
\begin{align*}
T_i^{(+;a)}(t) &= \begin{pmatrix}
\alpha_i^{(a)} & \alpha_i^{(a)} & 0 & 0 \\
\beta_i^{(a)} t & 0 & 0 & 0 \\
0 & 0 & \alpha_i^{(a)} & \alpha_i^{(a)} \\
0 & 0 & \beta_i^{(a)} t & 0
\end{pmatrix}, \\[4pt]
T_i^{(-;a)}(t) &= \begin{pmatrix}
\alpha_i^{(a)} t & \alpha_i^{(a)} t & 0 & 0 \\
\beta_i^{(a)} & 0 & 0 & 0 \\
0 & 0 & \alpha_i^{(a)} t & \alpha_i^{(a)} t \\
0 & 0 & \beta_i^{(a)} & 0
\end{pmatrix}, \\[4pt]
T_i^{(0;a)}(t) &= \begin{pmatrix}
\alpha_i^{(a)} & \alpha_i^{(a)} & 0 & 0 \\
0 & 0 & \beta_i^{(a)} & 0 \\
0 & 0 & \alpha_i^{(a)} & \alpha_i^{(a)} \\
\beta_i^{(a)} & 0 & 0 & 0
\end{pmatrix}.
\end{align*}
The zero-letter matrix $T_i^{(0;a)}$ swaps the parity blocks: a chosen zero toggles even/odd, and the sign $\varepsilon_i = -1$ is absorbed into the parity bookkeeping.
Let $M_\eta^{(a)}(t) = T_{r-1}^{(\eta_{r-1};a)} \cdots T_0^{(\eta_0;a)}$,
$e_0 = (1,0,0,0)^\top$, $u_e = (1,1,0,0)$, $u_o = (0,0,1,1)$, and define
\begin{alignat*}{2}
X_e(t) &= u_e\, M_\eta^{(0)}(t)\, e_0, &\qquad
X_o(t) &= u_o\, M_\eta^{(0)}(t)\, e_0, \\
Y_e(t) &= u_e\, M_\eta^{(2)}(t)\, e_0, &\qquad
Y_o(t) &= u_o\, M_\eta^{(2)}(t)\, e_0.
\end{alignat*}
With $\kappa = q + N_-$, the original coefficients are
$X_\eta(q) = X^e_\kappa - X^o_\kappa$ and $Y_\eta(q) = Y^e_\kappa - Y^o_\kappa$,
where $X^e_\kappa = [t^\kappa] X_e(t)$, etc.

\emph{Step 5: The $4 \times 4$ determinant identity.}
Define $n_\kappa = (X^e_\kappa, X^o_\kappa, Y^e_\kappa, Y^o_\kappa)^\top$,
$c_X = (1,1,0,0)^\top$, $c_Y = (0,0,1,1)^\top$.
Then
\[
\Delta_\eta(q) \;=\; \det\!\begin{pmatrix}
X^e_\kappa & 1 & X^e_{\kappa+1} & 0 \\
X^o_\kappa & 1 & X^o_{\kappa+1} & 0 \\
Y^e_\kappa & 0 & Y^e_{\kappa+1} & 1 \\
Y^o_\kappa & 0 & Y^o_{\kappa+1} & 1
\end{pmatrix}.
\]
Indeed, expanding along the second and fourth columns:
$(X^e_\kappa - X^o_\kappa)(Y^e_{\kappa+1} - Y^o_{\kappa+1}) - (X^e_{\kappa+1} - X^o_{\kappa+1})(Y^e_\kappa - Y^o_\kappa) = X_\eta(q) Y_\eta(q{+}1) - X_\eta(q{+}1) Y_\eta(q)$.

\emph{Step 6: Cylindric-network positivity.}
Realize the two $4$-state automata (at $a = 0$ and $a = 2$) as two disjoint blocks of a positive cylindric network~$\widetilde{N}_\eta$: each $T_i^{(\eta_i; a)}$ is a one-layer positive gadget with nonnegative edge weights, and two auxiliary sources have boundary-measurement columns $c_X$ and $c_Y$.
The determinant above is an ordered $4 \times 4$ minor of~$\widetilde{N}_\eta$ (up to a sign $\varepsilon_{\eta,q} \in \{\pm 1\}$ from source ordering).

By the cylindric Lindstr\"om lemma~\cite{LamPylyavskyy2012}, this minor is a sum of weights of pairwise-uncrossed four-path families in~$\widetilde{N}_\eta$.
To show strict positivity, we exhibit one such family.
Take any independent set~$E_0$ with $\eta(E_0) = q$ and any~$E_1$ with $\eta(E_1) = q{+}1$ (both exist by combinatorial support).
Then:
\begin{itemize}
\item $E_0$ determines a positive-weight path in the $a = 0$ block from the source lift for~$n_\kappa$ to one of the two top sinks ($X^e$ or~$X^o$, according to zero-parity);
\item $E_1$ determines a positive-weight path in the $a = 2$ block from the source lift for~$n_{\kappa+1}$ to one of the two bottom sinks ($Y^e$ or~$Y^o$);
\item the auxiliary source~$c_X$ has a unit path to the unused top sink;
\item the auxiliary source~$c_Y$ has a unit path to the unused bottom sink.
\end{itemize}
Since the $a = 0$ and $a = 2$ blocks are disjoint and the auxiliary gadgets touch only one sink each, these four paths are pairwise uncrossed.
Hence at least one positive-weight family contributes to the minor, and
\[
\Delta_\eta(q) \neq 0
\]
for every supported adjacent pair $(q, q{+}1)$.
By Step~3, $(X_\eta(q), Y_\eta(q)) \neq (0,0)$ for every supported~$q$, and hence $P_F(\lambda) \not\equiv 0$.
\end{proof}

\begin{corollary}[McCuaig exceptional families]\label{cor:mccuaig-complete}
ASNC holds for all McCuaig exceptional families: biwheels by \Cref{thm:biwheel}, prisms and M\"obius ladders by the width-$2$ cyclic-to-path reduction (\Cref{thm:width2-cyclic}) followed by \Cref{thm:width2-global}.
\end{corollary}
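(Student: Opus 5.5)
The corollary is an assembly of results already proved, so I would verify that each exceptional family is covered by one of them and then check that the pieces glue together without loss. Fix a McCuaig exceptional brace $G$, a coloring $\rho$ and a target $t$ with $F = F_{G,t} \neq \varnothing$. If $G$ is a biwheel $B_{2m}$, then \Cref{thm:biwheel} applies verbatim. I would first fix the labeling so that the hub column $b_0$ is column $0$; with that labeling every non-hub row has column $0$ as its unique minimum-exponent option. I would also treat a relabeling of rows as acceptable, since ASNC is a statement about every coloring of the given labeled graph. If $n \le 4$, \Cref{thm:universal-small} covers every case outright, so I may assume $m \ge 5$ wherever it is convenient.

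For prisms and M\"obius ladders, the first step is combinatorial. I would fix a perfect matching $M_0$ of $G$ made of ``rungs'' and order rows and columns along the ladder so that row $i$ is joined to columns $i$ and $i \pm 1 \pmod m$. This shows $\PM(G) \subseteq \mathcal{C}_m$, and hence $F \subseteq \mathcal{C}_m$ is an exact-$t$ fiber in the sense of \Cref{def:width2-cyclic}. For the M\"obius ladder I would carry out the same check on the twisted rung, where the wrap-around edges account for the twist. Then I would apply \Cref{thm:width2-cyclic} at row $0$. This splits $P_F$ into at most three branches, $\sigma(0) \in \{0, 1, m-1\}$. Each branch is either a single monomial (a forced cyclic shift) or, up to an explicit monomial prefactor and a shift $\lambda \mapsto \lambda + 1$ or $\lambda + 2$ as in \Cref{thm:width2-path}, the polynomial of an exact fiber of $\mathcal{P}_{m-1}$ or $\mathcal{P}_{m-2}$. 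Each such path polynomial is nonzero by \Cref{thm:width2-global}.

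The main obstacle is that nonvanishing of each branch does not by itself exclude cancellation among up to three branches. To rule this out I would use \Cref{thm:row-initial} at row $0$.

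\textbf{Case $F_0 := \{\sigma \in F : \sigma(0) = 0\} \neq \varnothing$.} Column $0$ is then the unique minimizer. By \Cref{thm:row-initial}, it suffices that the residual path polynomial of this branch does not vanish at $\lambda = 0$. I would try to extract this from the two-point reduction in Step~1 of \Cref{thm:width2-global}: the statement $(X_\eta(q), Y_\eta(q)) \neq (0,0)$ is an evaluation statement at specific points. If the point it controls is not $\lambda = 0$, I would instead rotate the cyclic labeling so that the good evaluation point lands on the chosen row.

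\textbf{Case $F_0 = \varnothing$.} Then $1$ and $m-1$ are the only exponents at row $0$, and column $1$ is the unique minimizer whenever $\{\sigma \in F : \sigma(0) = 1\}$ is nonempty. That branch is either the forced forward shift, whose constant cofactor is a nonzero product of the values $(i - 0)^{\sigma(i)}$ as in \Cref{lem:jet-extraction}, or a path branch, which is handled as in the previous case. If only $\sigma(0) = m-1$ survives, the fiber is a single branch and there is nothing to cancel.

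I expect the evaluation-at-a-point requirement in the first case to be the delicate step. It is where the argument may need to strengthen \Cref{thm:width2-global} from ``$\not\equiv 0$'' to ``nonzero at the relevant $\lambda$''.
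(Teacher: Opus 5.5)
\textbf{Gap: the row-initial step needs a value at a prescribed point.} Your concern is well placed. The paper's justification is only the citation chain \Cref{thm:biwheel}, \Cref{thm:width2-cyclic}, \Cref{thm:width2-global}, and branchwise nonvanishing does not exclude cancellation between the branches $\sigma(0)\in\{0,1,m-1\}$. But your repair does not close this. In the case $F_0\neq\varnothing$, the row-initial criterion at row $0$ (minimum exponent $0$) is just the test $P_F(0)\neq 0$. The other branches carry $\lambda$ or $\lambda^{m-1}$, so \Cref{thm:width2-path} gives $P_F(0)=(m-1)!\,P_{F_0'}(1)$. This is a signed integer sum over the residual path fiber, while \Cref{thm:width2-global} only gives $P_{F_0'}\not\equiv 0$, which says nothing about the value at $\lambda=1$.

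Neither of your remedies supplies that value. The statement $(X_\eta(q),Y_\eta(q))\neq(0,0)$ is a disjunction over two evaluation points $u=0,2$ of an auxiliary variable, so it never certifies a point fixed in advance. Rotating the cyclic labeling is not a symmetry of $P_t$: the bases $\lambda+i$ and the exponents are tied to row and column labels (the paper says as much about column relabeling in \Cref{subsec:why-fail}). So nonvanishing after rotation says nothing about the given polynomial. Switching to another row $r$ only trades this for a branch cofactor evaluated at $\lambda=-r$. The row-initial criterion is merely sufficient, so nothing guarantees that any row works. What you need is a genuinely new point-evaluation theorem for exact path fibers, which the paper does not contain. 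The same issue is hidden in \Cref{thm:biwheel}, which you cite verbatim. There the branch $\{\sigma\in F:\sigma(a_{i^*})=b_0\}$ generally contains several matchings $M_{i^*,j}$, so its initial coefficient is again a signed sum that may vanish.

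\textbf{The other cases have the same defect.} The dichotomy in \Cref{thm:width2-cyclic} is not exclusive: the branch $\sigma(0)=1$ can contain both the forward shift and permutations with $\sigma(1)=0$. So when $F_0=\varnothing$, its initial coefficient is the shift's constant plus a path cofactor evaluated at a point, and these can cancel. Likewise, if only $\sigma(0)=m-1$ survives, then $P_F=\pm\lambda^{m-1}Q(\lambda)$, where $Q$ is the sum of the backward-shift monomial and the path sub-branch $\sigma(m-1)=0$. So there is something to cancel there too.

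\textbf{The labeling normalizations are not free.} Choosing hub column $b_0$ as column $0$, or ordering rows and columns along the ladder, is not a harmless WLOG, and your justification runs the wrong way. ASNC quantifies over every labeled brace, and $P_t$ is not invariant under relabeling. So you must prove it for the labeling you are given; you cannot replace $G$ by a relabeled copy. The paper makes the same implicit choice, but a complete proof has to handle arbitrary labelings of the exceptional graphs.
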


\section{Bad-Locus Algebra}\label{sec:badlocus}

The structural induction produces four primitive operator shapes, corresponding to the four ways a brace move can relate the boundary-minor states of a graph and its reduced form.
Understanding when these operators vanish is the key to the superfluous-edge step: the ``bad locus'' of each operator is the set of inputs on which it evaluates to zero, and the proof strategy is to show that no realizable boundary state lies in the intersection of the same-base and distinct-base bad loci simultaneously.

\begin{definition}[Operator shapes]\label{def:operator-shapes}
Write $\mu_\alpha = \lambda + \alpha$ and $\mu_\alpha^c = (\lambda + \alpha)^c$ for the $c$-th power of the linear form associated to row~$\alpha$.
\begin{enumerate}
\item \emph{SB2 (same-base $2$-term):} $\mathrm{SB2}_\alpha^c(u,v) = u + \mu_\alpha^c \cdot v$.
\item \emph{DB2 (distinct-base $2$-term):} $\mathrm{DB2}_{\alpha,\beta}^c(u_1, u_2) = \mu_\alpha^c u_1 + \mu_\beta^c u_2$, $\alpha \neq \beta$.
\item \emph{SB3 (same-base $3$-term):} $\mathrm{SB3}_\alpha^{c_0,c_1,c_2}(u_0,u_1,u_2) = \mu_\alpha^{c_0} u_0 + \mu_\alpha^{c_1} u_1 + \mu_\alpha^{c_2} u_2$, \quad $c_0 < c_1 < c_2$.
\item \emph{DB3 (distinct-base $3$-term):} $\mathrm{DB3}_{\alpha_1,\alpha_2,\alpha_3}^c(u_1,u_2,u_3) = \mu_{\alpha_1}^c u_1 + \mu_{\alpha_2}^c u_2 + \mu_{\alpha_3}^c u_3$, \quad $\alpha_i$ pairwise distinct.
\end{enumerate}
\end{definition}

\noindent
Brace moves decompose as: superfluous edge $= \mathrm{SB2}$, stable/index-$2$ $= \mathrm{SB3}(\mathrm{DB2}, \mathrm{DB2})$, index-$1$ $= \mathrm{DB3}$.
We now characterize the bad locus of each shape.

\subsection{Theorem A: SB2 bad locus (pure divisibility)}

\begin{theorem}[SB2 bad locus]\label{thm:sb2-bad}
Let $u, v \in \QQ[\lambda]$ and $c \geq 1$.
Then
\[
\mathrm{SB2}_\alpha^c(u,v) = u + \mu_\alpha^c v = 0
\]
if and only if $\mu_\alpha^c \mid u$ and $v = -u / \mu_\alpha^c$.
\end{theorem}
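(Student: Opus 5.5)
The statement is elementary polynomial arithmetic in the principal ideal domain $\QQ[\lambda]$. It needs nothing beyond the definition of $\mathrm{SB2}_\alpha^c$ in \Cref{def:operator-shapes}. I will prove the two implications separately and then remark on what the condition says.

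\emph{Forward direction.} Suppose $u + \mu_\alpha^c v = 0$. Rearranging gives $u = -\mu_\alpha^c v$, which directly exhibits $\mu_\alpha^c$ as a divisor of $u$ in $\QQ[\lambda]$, with quotient $u/\mu_\alpha^c = -v$. Since $\QQ[\lambda]$ is an integral domain and $\mu_\alpha^c = (\lambda+\alpha)^c \neq 0$, the quotient is uniquely determined. Hence $v = -u/\mu_\alpha^c$.

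\emph{Reverse direction.} Suppose $\mu_\alpha^c \mid u$, say $u = \mu_\alpha^c w$ with $w \in \QQ[\lambda]$, and suppose $v = -u/\mu_\alpha^c = -w$. Substituting gives $u + \mu_\alpha^c v = \mu_\alpha^c w - \mu_\alpha^c w = 0$.

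The only point needing care is that ``$-u/\mu_\alpha^c$'' is meaningful. It is well defined precisely because divisibility is part of the hypothesis, and it is unique because $\QQ[\lambda]$ has no zero divisors. The hypothesis $c \geq 1$ is not needed for the equivalence itself. It matters only for interpretation: since $\gcd(\mu_\alpha,\mu_\beta)=1$ for $\alpha\neq\beta$, the condition is a genuine order-of-vanishing requirement $\ord_{\mu_\alpha}(u) \geq c$ at the single root $\lambda=-\alpha$. This is what the later same-base analysis uses, via the jet-flag reformulation of \Cref{prop:jet-flag}. I anticipate no real obstacle here.
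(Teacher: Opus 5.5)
Your proof is correct and is the same argument the paper gives: rearranging $u + \mu_\alpha^c v = 0$ to $u = -\mu_\alpha^c v$ gives both the divisibility and the quotient, and the converse is direct substitution. Your two additional points are accurate: the quotient is unique because $\QQ[\lambda]$ is an integral domain, and the hypothesis $c \geq 1$ plays no role in the equivalence itself.
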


\begin{proof}
The forward direction is immediate.
Suppose $u + \mu_\alpha^c v = 0$.
Then $u = -\mu_\alpha^c v$, which gives $\mu_\alpha^c \mid u$ at once.
The quotient is $u / \mu_\alpha^c = -v$, i.e., $v = -u/\mu_\alpha^c$.

Conversely, if $\mu_\alpha^c \mid u$ and $v = -u/\mu_\alpha^c$, then $u + \mu_\alpha^c v = u + \mu_\alpha^c(-u/\mu_\alpha^c) = u - u = 0$.
\end{proof}

\noindent
\textbf{Significance.}
The SB2 shape governs the superfluous-edge identity: $P_t(G) = P_t(H) + \mu_r^c Q$ where $Q = \varepsilon P^{r,c}_{H,t-\rho(e)}$.
Failure of ASNC at $G$ (i.e., $P_t(G) \equiv 0$) therefore requires $\mu_r^c \mid P_t(H)$.
This is the ``D1'' divisibility condition from the structural induction: the visible polynomial of the smaller graph $H$ must be divisible by a high power of the linear form $\mu_r = \lambda + r$ associated to the superfluous edge's row.
The bad-locus characterization converts the cancellation question (``when does $P_t(G)$ vanish?'') into a divisibility question (``when is $P_t(H)$ divisible by $\mu_r^c$?''), which is amenable to the jet-flag analysis of \Cref{prop:jet-flag}.

\subsection{Theorem B: DB2 bad locus (reciprocal divisibility)}

\begin{theorem}[DB2 bad locus]\label{thm:db2-bad}
Let $u_1, u_2 \in \QQ[\lambda]$, $c \geq 1$, and $\alpha \neq \beta$.
Then
\[
\mathrm{DB2}_{\alpha,\beta}^c(u_1,u_2) = \mu_\alpha^c u_1 + \mu_\beta^c u_2 = 0
\]
if and only if there exists $h \in \QQ[\lambda]$ such that $u_1 = -\mu_\beta^c h$ and $u_2 = \mu_\alpha^c h$.
\end{theorem}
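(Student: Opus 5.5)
The plan is to run the same elementary argument as for \Cref{thm:sb2-bad}, now relying on the coprimality of the two distinct linear forms $\mu_\alpha$ and $\mu_\beta$ rather than on a single base. The reverse direction is a direct substitution. If $u_1 = -\mu_\beta^c h$ and $u_2 = \mu_\alpha^c h$, then
\[
\mu_\alpha^c u_1 + \mu_\beta^c u_2 = -\mu_\alpha^c\mu_\beta^c h + \mu_\beta^c \mu_\alpha^c h = 0.
\]

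For the forward direction, assume $\mu_\alpha^c u_1 + \mu_\beta^c u_2 = 0$, so that $\mu_\alpha^c u_1 = -\mu_\beta^c u_2$. The first step is to record that $\mu_\alpha = \lambda+\alpha$ and $\mu_\beta = \lambda+\beta$ are non-associate irreducibles in the PID (hence UFD) $\QQ[\lambda]$, since $\alpha\neq\beta$. It follows that $\gcd(\mu_\alpha^c,\mu_\beta^c)=1$; concretely, $\mu_\alpha - \mu_\beta = \alpha-\beta$ is a nonzero unit, so some Bézout combination of $\mu_\alpha^c$ and $\mu_\beta^c$ equals $1$.

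Next, $\mu_\beta^c$ divides $\mu_\alpha^c u_1$. By Gauss's lemma (Euclid's lemma for coprime elements), $\mu_\beta^c \mid u_1$, so we may write $u_1 = -\mu_\beta^c h$ for a unique $h\in\QQ[\lambda]$. Substituting back gives $-\mu_\alpha^c\mu_\beta^c h = -\mu_\beta^c u_2$. Cancelling the nonzero factor $\mu_\beta^c$ in the integral domain $\QQ[\lambda]$ yields $u_2 = \mu_\alpha^c h$, as required.

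There is no real obstacle here. The only point needing care is the coprimality step, which is exactly where $\alpha\neq\beta$ is used; the case $u_1=0$ is covered automatically with $h=0$. It is also worth noting that the hypothesis $c\ge 1$ plays no essential role, since the argument works verbatim for $c=0$ as well.
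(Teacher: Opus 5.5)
Your proof is correct and follows essentially the same route as the paper: coprimality of $\mu_\alpha^c$ and $\mu_\beta^c$ from $\alpha\neq\beta$, then $\mu_\beta^c \mid u_1$, then cancelling $\mu_\beta^c$ to get $u_2=\mu_\alpha^c h$, with the converse by direct substitution. Your side remarks are also correct: the Bézout justification via $\mu_\alpha-\mu_\beta=\alpha-\beta$ being a unit is valid, and the hypothesis $c\ge 1$ is indeed not needed.
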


\begin{proof}
Since $\alpha \neq \beta$, the linear forms $\mu_\alpha = \lambda + \alpha$ and $\mu_\beta = \lambda + \beta$ are coprime in $\QQ[\lambda]$ (they have distinct roots $-\alpha \neq -\beta$).
Consequently, $\gcd(\mu_\alpha^c, \mu_\beta^c) = 1$ for every $c \geq 1$.

Now suppose $\mu_\alpha^c u_1 + \mu_\beta^c u_2 = 0$, i.e.,
\[
\mu_\alpha^c u_1 = -\mu_\beta^c u_2.
\]
Since $\mu_\beta^c \mid \mu_\alpha^c u_1$ and $\gcd(\mu_\alpha^c, \mu_\beta^c) = 1$, we conclude $\mu_\beta^c \mid u_1$.
Write $u_1 = -\mu_\beta^c h$ for some $h \in \QQ[\lambda]$.
Substituting:
\[
\mu_\alpha^c(-\mu_\beta^c h) = -\mu_\beta^c u_2 \quad \Longrightarrow \quad \mu_\alpha^c h = u_2.
\]
Hence $u_1 = -\mu_\beta^c h$ and $u_2 = \mu_\alpha^c h$.

Conversely, if $u_1 = -\mu_\beta^c h$ and $u_2 = \mu_\alpha^c h$, then
\[
\mu_\alpha^c u_1 + \mu_\beta^c u_2 = \mu_\alpha^c(-\mu_\beta^c h) + \mu_\beta^c(\mu_\alpha^c h) = -\mu_\alpha^c \mu_\beta^c h + \mu_\alpha^c \mu_\beta^c h = 0. \qedhere
\]
\end{proof}

\noindent
\textbf{Significance.}
The DB2 shape arises in the inner branches of stable and index-$2$ patches, where two boundary-minor coordinates from distinct rows are combined with powers of their respective linear forms.
The coprimality of distinct Vandermonde bases is the essential algebraic fact: it forces each coordinate to absorb the \emph{other} base's power.
This ``reciprocal divisibility'' is what makes KA (\Cref{cor:ka}) work: a two-node support family cannot have both members vanish, because coprimality prevents the reciprocal factorization from being compatible with simultaneous vanishing.

\subsection{Theorem C: SB3 reduces to SB2}

\begin{theorem}[SB3 bad locus]\label{thm:sb3-bad}
Let $u_0, u_1, u_2 \in \QQ[\lambda]$ and $c_0 < c_1 < c_2$.
Set $d_1 = c_1 - c_0$ and $d_2 = c_2 - c_0$.
Then
\[
\mathrm{SB3}_\alpha^{c_0,c_1,c_2}(u_0,u_1,u_2) = \mu_\alpha^{c_0} u_0 + \mu_\alpha^{c_1} u_1 + \mu_\alpha^{c_2} u_2 = 0
\]
if and only if $\mu_\alpha^{d_1} \mid u_0$ and $u_1 + \mu_\alpha^{d_2-d_1} u_2 = -u_0/\mu_\alpha^{d_1}$.
\end{theorem}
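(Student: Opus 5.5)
The plan is to reduce the three-term identity to the two-term one and then quote \Cref{thm:sb2-bad}. The key facts are that $\QQ[\lambda]$ is an integral domain and that $\mu_\alpha^{c_0} \neq 0$. So the vanishing of $\mathrm{SB3}_\alpha^{c_0,c_1,c_2}(u_0,u_1,u_2)$ is equivalent to the vanishing of the same expression after cancelling the common factor $\mu_\alpha^{c_0}$. Since $c_0 < c_1 < c_2$, all the resulting exponents are nonnegative, and we get
\[
\mathrm{SB3}_\alpha^{c_0,c_1,c_2}(u_0,u_1,u_2) = \mu_\alpha^{c_0}\bigl(u_0 + \mu_\alpha^{d_1} u_1 + \mu_\alpha^{d_2} u_2\bigr).
\]
Hence $\mathrm{SB3} = 0$ if and only if $u_0 + \mu_\alpha^{d_1} u_1 + \mu_\alpha^{d_2} u_2 = 0$.

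Next, I will regroup the last two terms by pulling out $\mu_\alpha^{d_1}$. This is legitimate because $d_2 - d_1 = c_2 - c_1 \geq 1$, so $\mu_\alpha^{d_2-d_1}$ is an honest polynomial. Set $v = u_1 + \mu_\alpha^{d_2-d_1} u_2 \in \QQ[\lambda]$. The reduced identity then reads exactly
\[
\mathrm{SB2}_\alpha^{d_1}(u_0, v) = u_0 + \mu_\alpha^{d_1} v = 0.
\]
Because $d_1 = c_1 - c_0 \geq 1$, the hypothesis $c \geq 1$ of \Cref{thm:sb2-bad} is satisfied with $c = d_1$.

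Applying \Cref{thm:sb2-bad} then gives the equivalence. The two-term identity holds if and only if $\mu_\alpha^{d_1} \mid u_0$ and $v = -u_0/\mu_\alpha^{d_1}$. Unpacking $v$, the second condition is precisely $u_1 + \mu_\alpha^{d_2-d_1} u_2 = -u_0/\mu_\alpha^{d_1}$, which is the stated characterization. Both directions come for free, since every step above is an equivalence: cancellation of a nonzero factor, an algebraic regrouping, and the biconditional of \Cref{thm:sb2-bad}.

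There is no genuine obstacle here. The only points needing care are bookkeeping ones: checking that the strict inequalities $c_0<c_1<c_2$ make $d_1 \geq 1$ and $d_2 - d_1 \geq 1$, so that no negative powers appear; and justifying the cancellation of $\mu_\alpha^{c_0}$ by the absence of zero divisors in $\QQ[\lambda]$. I would remark at the end that this is why SB3 is not a new primitive: its bad locus is the SB2 bad locus applied to the pair $(u_0, u_1 + \mu_\alpha^{d_2-d_1}u_2)$.
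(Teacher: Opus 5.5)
Your proposal is correct and follows the paper's proof: cancel the nonzero factor $\mu_\alpha^{c_0}$, regroup as $\mathrm{SB2}_\alpha^{d_1}(u_0,\, u_1 + \mu_\alpha^{d_2-d_1}u_2)$, and apply the SB2 characterization with $c = d_1 \ge 1$. Your explicit appeal to $\QQ[\lambda]$ being an integral domain justifies both directions of the cancellation step more cleanly than the paper's phrase ``dividing both sides''.
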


\begin{proof}
Dividing both sides of $\mu_\alpha^{c_0} u_0 + \mu_\alpha^{c_1} u_1 + \mu_\alpha^{c_2} u_2 = 0$ by $\mu_\alpha^{c_0}$ (which is nonzero in $\QQ[\lambda]$), we obtain
\[
u_0 + \mu_\alpha^{d_1} u_1 + \mu_\alpha^{d_2} u_2 = 0,
\]
where $d_1 = c_1 - c_0 \geq 1$ and $d_2 = c_2 - c_0 \geq 2$.
Rewrite as
\[
u_0 + \mu_\alpha^{d_1}\bigl(u_1 + \mu_\alpha^{d_2-d_1} u_2\bigr) = 0.
\]
This is precisely $\mathrm{SB2}_\alpha^{d_1}\bigl(u_0,\; u_1 + \mu_\alpha^{d_2-d_1} u_2\bigr) = 0$.

By \Cref{thm:sb2-bad}, this holds if and only if
\[
\mu_\alpha^{d_1} \mid u_0 \qquad \text{and} \qquad u_1 + \mu_\alpha^{d_2-d_1} u_2 = -u_0/\mu_\alpha^{d_1}.
\]
The second condition is $u_1 + \mu_\alpha^{d_2-d_1} u_2 = -u_0/\mu_\alpha^{d_1}$, but since the left-hand side equals $-u_0/\mu_\alpha^{d_1}$ is already determined, the constraint on $(u_1, u_2)$ is exactly that $\mathrm{SB2}_\alpha^{d_2-d_1}(u_1, u_2) = 0$ holds with $u_1 + \mu_\alpha^{d_2-d_1} u_2 = -u_0/\mu_\alpha^{d_1}$.

More precisely, the SB3 vanishing decomposes hierarchically:
\begin{enumerate}[(i)]
\item \emph{Outer layer:} $\mu_\alpha^{d_1} \mid u_0$, with quotient $w_0 = u_0/\mu_\alpha^{d_1}$;
\item \emph{Inner layer:} $u_1 + \mu_\alpha^{d_2-d_1} u_2 = -w_0$, which is an SB2 constraint on the pair $(u_1, u_2)$ with the specific right-hand side $-w_0$.
\end{enumerate}
If additionally $u_1 + \mu_\alpha^{d_2-d_1} u_2 = 0$ (the homogeneous version), then $w_0 = 0$, so $u_0 = 0$ and $\mathrm{SB2}_\alpha^{d_2-d_1}(u_1, u_2) = 0$ independently.
In the general case, the inner SB2 is driven by the outer quotient.
\end{proof}

\noindent
\textbf{Significance.}
The SB3 shape arises from index-$1$ patches where three boundary-minor terms share the same Vandermonde base $\mu_\alpha$.
The hierarchical reduction shows that every SB3 bad-locus element is fully determined by a single divisibility condition on $u_0$ plus an SB2 condition on the higher-order pair.
In the structural induction, this means that same-base three-node witnesses reduce to same-base two-node witnesses, keeping the analysis within the D1 framework.

\subsection{Theorem D: DB3 bad locus (jet congruence)}

\begin{theorem}[DB3 bad locus]\label{thm:db3-bad}
Let $u_1, u_2, u_3 \in \QQ[\lambda]$, $c \geq 1$, and $\alpha_1, \alpha_2, \alpha_3$ pairwise distinct.
Then
\[
\mathrm{DB3}_{\alpha_1,\alpha_2,\alpha_3}^c(u_1,u_2,u_3) = \mu_{\alpha_1}^c u_1 + \mu_{\alpha_2}^c u_2 + \mu_{\alpha_3}^c u_3 = 0
\]
if and only if $\mu_{\alpha_1}^c \mid (\mu_{\alpha_2}^c u_2 + \mu_{\alpha_3}^c u_3)$ and the resulting quotient $q = -(\mu_{\alpha_2}^c u_2 + \mu_{\alpha_3}^c u_3)/\mu_{\alpha_1}^c$ equals~$u_1$.
\end{theorem}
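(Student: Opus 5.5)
The statement is essentially a rearrangement identity in the integral domain $\QQ[\lambda]$, so I plan to prove it the same way as \Cref{thm:sb2-bad}, treating the DB3 sum as an SB2-type relation. The first term $\mu_{\alpha_1}^c u_1$ is isolated against the aggregate $w = \mu_{\alpha_2}^c u_2 + \mu_{\alpha_3}^c u_3$. No Vandermonde machinery should be needed for the bare equivalence. The hypotheses $\alpha_i$ pairwise distinct and $c\ge 1$ only guarantee that $\mu_{\alpha_1}^c$ is a nonzero polynomial, so that division by it is meaningful and the quotient is unique.

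For the forward direction, I assume $\mu_{\alpha_1}^c u_1 + w = 0$. Then $w = -\mu_{\alpha_1}^c u_1$, so $\mu_{\alpha_1}^c \mid w$. Cancelling the nonzero factor $\mu_{\alpha_1}^c$ (since $\QQ[\lambda]$ has no zero divisors) gives $q := -w/\mu_{\alpha_1}^c = u_1$.

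For the converse, I assume $\mu_{\alpha_1}^c \mid w$ and $u_1 = q = -w/\mu_{\alpha_1}^c$. Substituting gives $\mu_{\alpha_1}^c u_1 + w = -w + w = 0$.

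As a supplementary remark, I would strengthen the characterization using coprimality in the spirit of \Cref{thm:db2-bad}. Since $\gcd(\mu_{\alpha_1}^c,\mu_{\alpha_j}^c)=1$ for $j=2,3$, the divisibility $\mu_{\alpha_1}^c \mid w$ is equivalent to the \emph{jet congruence}
\[
u_2 \equiv -\mu_{\alpha_3}^c\,\mu_{\alpha_2}^{-c}\, u_3 \pmod{\mu_{\alpha_1}^c}.
\]
Here $\mu_{\alpha_2}$ is invertible modulo $\mu_{\alpha_1}^c$ because the two forms are coprime. Equivalently, the first $c$ Taylor coefficients of $\mu_{\alpha_2}^c u_2 + \mu_{\alpha_3}^c u_3$ at $\lambda=-\alpha_1$ vanish, which links the statement to \Cref{prop:jet-flag}. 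I would verify this by expanding at $\mu_{\alpha_1}=0$ and using the invertibility of $\mu_{\alpha_2}$ in $\QQ[\lambda]/(\mu_{\alpha_1}^c)$.

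There is no genuine obstacle here. The only point requiring care is that "the resulting quotient" is well defined, which uses exactness of division in $\QQ[\lambda]$ by the nonzero polynomial $\mu_{\alpha_1}^c$. The jet-congruence reformulation is the only step needing a computation, namely the modular inverse.
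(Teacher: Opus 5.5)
Your proposal is correct and follows the paper's proof essentially verbatim: you rearrange to $\mu_{\alpha_1}^c u_1 = -(\mu_{\alpha_2}^c u_2 + \mu_{\alpha_3}^c u_3)$ to get the divisibility and identify the quotient, you substitute back for the converse, and you then reformulate the divisibility as the vanishing of the first $c$ Taylor coefficients at $\lambda = -\alpha_1$, which is an optional add-on exactly as in the paper. One small correction: $\mu_{\alpha_1}^c$ is nonzero for every $\alpha_1$ and every $c \geq 0$, so neither pairwise distinctness nor $c \geq 1$ is used in the bare equivalence; distinctness only matters for your (and the paper's) supplementary remark that the divisibility is a genuine cancellation condition.
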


\begin{proof}
The equation $\mu_{\alpha_1}^c u_1 + \mu_{\alpha_2}^c u_2 + \mu_{\alpha_3}^c u_3 = 0$ rearranges to
\[
\mu_{\alpha_1}^c u_1 = -\bigl(\mu_{\alpha_2}^c u_2 + \mu_{\alpha_3}^c u_3\bigr).
\]
Since $\mu_{\alpha_1}^c$ divides the left side, it must divide the right side:
\[
\mu_{\alpha_1}^c \;\Big|\; \bigl(\mu_{\alpha_2}^c u_2 + \mu_{\alpha_3}^c u_3\bigr).
\]
The quotient is then $u_1 = -(\mu_{\alpha_2}^c u_2 + \mu_{\alpha_3}^c u_3)/\mu_{\alpha_1}^c$.
Note that this divisibility is nontrivial: since $\alpha_1 \neq \alpha_2$ and $\alpha_1 \neq \alpha_3$, both $\gcd(\mu_{\alpha_1}^c, \mu_{\alpha_2}^c) = 1$ and $\gcd(\mu_{\alpha_1}^c, \mu_{\alpha_3}^c) = 1$, so neither summand is individually divisible by $\mu_{\alpha_1}^c$ in general.
The divisibility arises from cancellation between the two summands at the root $\lambda = -\alpha_1$.

More explicitly, expand $\mu_{\alpha_2}^c u_2 + \mu_{\alpha_3}^c u_3$ in the local ring $\QQ[\lambda]_{(\mu_{\alpha_1})}$.
The condition $\mu_{\alpha_1}^c \mid (\mu_{\alpha_2}^c u_2 + \mu_{\alpha_3}^c u_3)$ is equivalent to requiring that the first $c$ coefficients in the Taylor expansion around $\lambda = -\alpha_1$ vanish:
\[
\frac{d^k}{d\lambda^k}\bigl(\mu_{\alpha_2}^c u_2 + \mu_{\alpha_3}^c u_3\bigr)\Big|_{\lambda=-\alpha_1} = 0, \qquad k = 0, 1, \ldots, c-1.
\]
These are the ``jet congruence'' conditions: the sum $\mu_{\alpha_2}^c u_2 + \mu_{\alpha_3}^c u_3$ must agree with zero to order~$c$ at the point $\lambda = -\alpha_1$.

The converse is immediate: if $\mu_{\alpha_1}^c \mid (\mu_{\alpha_2}^c u_2 + \mu_{\alpha_3}^c u_3)$ and $u_1 = -(\mu_{\alpha_2}^c u_2 + \mu_{\alpha_3}^c u_3)/\mu_{\alpha_1}^c$, then substitution gives $\mu_{\alpha_1}^c u_1 + \mu_{\alpha_2}^c u_2 + \mu_{\alpha_3}^c u_3 = 0$.
\end{proof}

\noindent
\textbf{Connection to \Cref{cor:j3-to-d1}.}
Via the generalized Vandermonde elimination (\Cref{thm:gen-vandermonde}), the DB3 jet congruence condition reduces to same-base divisibility of the unique quotient~$h$.
Specifically, if $F_{e_0} = F_{e_1} = 0$ in a three-node family, then $\ord_{\mu_1}(F_{e_2}) = e_0 + \ord_{\mu_1}(h)$, so the DB3 bad locus is reached if and only if $\mu_1^{c-e_0} \mid h$.
This completes the reduction chain: DB3 $\to$ J3 (jet congruence) $\to$ D1 (same-base divisibility of $h$).

\section{The Superfluous-Edge Step}\label{sec:superfluous}

This section presents the superfluous-edge step, reducing it to a single remaining theorem.

\subsection{The interface gap}

The natural approach would combine the generalized Vandermonde elimination (\Cref{thm:gen-vandermonde}), which operates on a \emph{multi-base} support family $F_e = \sum_i A_i \mu_i^e$ with distinct linear forms $\mu_i = \lambda + \alpha_i$, with a shift-invariance lemma for a \emph{single-base} Hasse-jet determinant.
However, these two frameworks are structurally incompatible:
\begin{itemize}
\item The generalized Vandermonde matrix $W = (\mu_i^{d_j})$ has $m$ distinct bases (one per column).
\item The Hasse-jet matrix $U_A = \bigl(\binom{a_j}{i} z^{a_j-i}\bigr)$ has one common variable~$z$.
\item Their determinants differ by a $z$-dependent factor, so divisibility properties do not transfer.
\end{itemize}

The correct repair is to work entirely within the multi-base framework of \Cref{thm:gen-vandermonde}.
The shift-invariance for the multi-base quotient~$h$ is straightforward (the gaps $d_j = e_j - e_0$ are preserved under common exponent shifts, and the cofactors $C_i$ depend only on these gaps).
The remaining challenge is to extract a genuine \emph{fixed-support} multi-base family from the superfluous-edge data.

\subsection{The fixed-support stabilization problem}\label{subsec:stabilization}

For each graph column~$d$, define the exact-layer packet column~$w_d$ by $(w_d)_{(\alpha,\ell)} = \mathbf{1}_{(\alpha,d)\in E(H)} \mathbf{1}_{\rho(\alpha,d)=\ell} (\lambda+\alpha)^d$.
The replacement determinant identity (\Cref{lem:replacement-det}) gives, for each column~$d \neq c$, a \emph{duplicate-column identity}:
\[
w_d^\top \mathcal{U}^{(c)} = [x^t] \det M_H[c \leftarrow M_{H,\bullet d}] = 0 \qquad (d \neq c),
\]
since the matrix has two equal columns.
For the distinguished column $d = c$, the same identity gives the nonzero relation
\[
w_c^\top \mathcal{U}^{(c)} = [x^t] \det M_H = P_t(H).
\]
Fix a Hall pivot block~$A$ in the lower packet and write $\mathfrak{S}$ for the associated Schur-reduction map on columns.
We define the \emph{reduced distinguished column} $g := \mathfrak{S}(w_c)$.
Any subsequent row/column rescaling or core elimination changes~$g$ by an invertible normalization, so all orthogonality statements are unaffected.

The problem is that the masks $\eta_{\alpha,d} = \mathbf{1}_{(\alpha,d)\in E} \cdot x^{\rho(\alpha,d)}$ depend on~$d$ (different columns have different edge sets).
\Cref{thm:gen-vandermonde} requires \emph{fixed} coefficients $A_i$ independent of the exponent.

The natural ``unmask then apply \Cref{thm:gen-vandermonde}'' strategy (extracting a fixed-coefficient support family from the masked relations) fails in the form needed.
Concretely, for $H = K_{4,4} \setminus \{(1,1),(2,2)\}$ with $r = c = 2$, the lower mask matrix has rank~$2$, and no three-row subset containing~$r$ carries both lower vanishings with the actual cofactor coefficients.

Three algebraic results toward a direct masked approach are now available.

\begin{proposition}[Masked minor criterion]\label{prop:masked-minor}
Let $W$ be a square $m \times m$ matrix whose $(i,j)$ entry has the form
\[
W_{ij} = \eta_{ij}\, \mu_i^{d_j},
\]
where $\mu_1, \ldots, \mu_m$ are pairwise coprime linear forms in $\QQ[\lambda]$, the exponents $d_1 < d_2 < \cdots < d_m$ are strictly ordered, and the mask $\eta_{ij} \in \{0,1\}$.
Define the \emph{support graph} $\Gamma(W)$ on rows $\{1,\ldots,m\}$ vs.\ columns $\{1,\ldots,m\}$ by placing edge $(i,j)$ iff $\eta_{ij} = 1$.
Then $\det W \neq 0$ in $\QQ[\lambda]$ if and only if $\Gamma(W)$ has a perfect matching.
\end{proposition}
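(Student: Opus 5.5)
The direction ``$\det W \neq 0 \Rightarrow \Gamma(W)$ has a perfect matching'' is immediate from the Leibniz expansion. Writing
\[
\det W=\sum_{\sigma\in S_m}\sgn(\sigma)\prod_{i=1}^m \eta_{i\sigma(i)}\,\mu_i^{d_{\sigma(i)}},
\]
a term survives only if $\eta_{i\sigma(i)}=1$ for every $i$. Such a $\sigma$ is exactly a perfect matching of $\Gamma(W)$, so if there is none, every term is zero. All the work is in the converse: a perfect matching must force $\det W\neq 0$. My plan is to prove it first with \emph{generic} bases and then push the nonvanishing down to the linear forms $\mu_i=\lambda+\alpha_i$.

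\emph{Step 1 (generic bases).} Replace $\mu_1,\ldots,\mu_m$ by independent indeterminates $z_1,\ldots,z_m$. Each perfect matching $\sigma$ of $\Gamma(W)$ contributes the monomial $\pm\prod_i z_i^{d_{\sigma(i)}}$. Because the exponents $d_j$ are pairwise distinct, the exponent vector $(d_{\sigma(1)},\ldots,d_{\sigma(m)})$ determines $\sigma$, so different matchings give different monomials. Hence nothing cancels, and the generic masked determinant is a nonzero signed sum of distinct monomials, one per perfect matching.

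\emph{Step 2 (specialization by a valuation induction).} To reach $\QQ[\lambda]$ I would use the row-initial-form idea of \Cref{thm:row-initial} and \Cref{lem:jet-extraction}, with induction on $m$. Fix a row $r$ and let $j^*$ be the column with smallest exponent $d_{j^*}$ among the columns matched to $r$ by some perfect matching of $\Gamma(W)$. Expand at $\lambda=-\alpha_r$, so $\mu_r=u$. The coefficient of $u^{d_{j^*}}$ is, up to sign, the complementary masked minor $W[\hat r,\hat{j^*}]$ evaluated at $\lambda=-\alpha_r$. That minor again has the masked Vandermonde shape, and its support graph has a perfect matching (restrict one through $(r,j^*)$). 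By induction, together with KA (\Cref{cor:ka}) for the base case $m=2$, the minor is a nonzero polynomial.

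\emph{Main obstacle.} Induction only says the complementary minor is nonzero as a polynomial; Step 2 needs it nonzero \emph{at the point} $\lambda=-\alpha_r$, where it is a numerical masked generalized Vandermonde in the nodes $\alpha_i-\alpha_r$. Such numerical determinants can vanish, so the induction hypothesis has to be strengthened. I would first try the stronger statement: for every row $k$, the lowest-order $\mu_k$-adic coefficient of $\det W$ is nonzero, proved by choosing $r$ and $j^*$ simultaneously so that the extracting row has a \emph{unique} minimizing matching, as in the biwheel argument. I expect this to be the crux. If unique minimizers cannot always be arranged, I would fall back on the numerical positivity route of \Cref{thm:width2-global}: take $0<\alpha_1<\cdots<\alpha_m$ (so the nodes differ by a positive shift) and look for a sign-coherent, Lindstr\"om--type path interpretation of the evaluated minor. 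I am not confident the criterion holds for arbitrary pairwise coprime forms without such an ordering hypothesis. If a small counterexample turns up, I would restrict the proposition to the ordered-node setting in which it is later applied.
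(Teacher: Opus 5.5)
You have the easy direction right, and Step~1 is correct but does no work. Independence of the monomials in free variables $z_i$ says nothing about their restriction to the line $z_i=\lambda+\alpha_i$, which is exactly where cancellation happens. The converse is therefore not proved: Step~2 stops at the obstacle you name, and neither proposed repair closes it.
\begin{itemize}
\item A unique minimizing matching at some row cannot be arranged in general. For the full $3\times 3$ mask, every edge lies in two perfect matchings.
\item For a general mask the matrix is not totally nonnegative, so there is no sign-coherent Lindstr\"om model to fall back on.
\end{itemize}
As written, your proposal has a genuine gap at the specialization step.

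Your diagnosis is nevertheless accurate, and the paper does not supply the missing idea. Its step~(a) infers $\mu_{k_0}\nmid M_{k_0,1}$ from the fact that every Leibniz term of $M_{k_0,1}$ avoids $\mu_{k_0}$. That inference is invalid: a sum of products of powers of the other forms can vanish at $\lambda=-\alpha_{k_0}$, which is precisely your ``nonzero as a polynomial but not at the point'' issue. Take bases $\lambda,\lambda+1,\lambda+2$, exponents $0,1,3$ and full mask:
\[
\det W=(\mu_1+\mu_2+\mu_3)\prod_{i<j}(\mu_j-\mu_i)=6(\lambda+1).
\]
Taking $k_0$ to be the middle row (admissible, since the mask is full) gives $M_{k_0,1}=4\lambda(\lambda+1)(\lambda+2)$ and $\ord_{\mu_{k_0}}\det W=1$, contradicting the paper's Claim. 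Moreover, the proposition as literally stated is false. For the pairwise coprime forms $\mu_1=\lambda$, $\mu_2=\lambda+1$, $\mu_3=-(2\lambda+1)$, the same identity gives $\det W\equiv 0$, although $\Gamma(W)=K_{3,3}$ has perfect matchings. So your doubt about arbitrary coprime forms is justified; one needs at least $\mu_i=\lambda+\alpha_i$ with distinct $\alpha_i$.

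For that monic version, two cases can be settled:
\begin{itemize}
\item The coefficient of $\lambda^{d_1+\cdots+d_m}$ in $\det W$ is $\det(\eta_{ij})$, so the criterion holds whenever the $0/1$ mask matrix is nonsingular.
\item For the full mask (where $\det\eta=0$), your positivity idea works. For $\lambda>-\min_i\alpha_i$ the nodes $\lambda+\alpha_i$ are distinct positive reals, and a generalized Vandermonde with distinct exponents is then nonzero.
\end{itemize}
For a singular mask admitting a perfect matching, neither your argument nor the paper's proves the claim. The later uses of the criterion (the $q$-circuit lemma and the Two-extra Hall theorem) inherit this gap.
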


\begin{proof}
($\Leftarrow$) If $\Gamma(W)$ has no perfect matching, every permutation $\sigma \in S_m$ has at least one factor $\eta_{i,\sigma(i)} = 0$, so every Leibniz term vanishes and $\det W = 0$.

($\Rightarrow$) By induction on~$m$.  The base case $m = 1$ is immediate.
For $m \ge 2$, assume the result for $m-1$.
If $d_1 > 0$, factor $\mu_i^{d_1}$ from each row:
$\det W = \bigl(\prod_i \mu_i^{d_1}\bigr) \det W'$
where $W'_{ij} = \eta_{ij}\,\mu_i^{d_j - d_1}$.
Since each $\mu_i \ne 0$, the prefactor is nonzero, so we
may assume $d_1 = 0$.

Let $\sigma_0$ be a perfect matching of $\Gamma(W)$, and set
$k_0 = \sigma_0(1)$.
Expand $\det W$ along column~$1$ (whose entries are
$\eta_{i,1} \cdot \mu_i^0 = \eta_{i,1} \in \{0,1\}$):
\[
\det W
\;=\;
\sum_{i}\, \eta_{i,1}\,(-1)^{i+1}\, M_{i,1},
\]
where $M_{i,1}$ is the $(m{-}1)\times(m{-}1)$ cofactor
(rows $\ne i$, columns $\ge 2$).

\smallskip
\noindent\emph{Claim:} $\ord_{\mu_{k_0}}(\det W) = 0$,
hence $\det W \ne 0$.

\smallskip\noindent
(a) The cofactor $M_{k_0,1}$ involves only bases
$\{\mu_j : j \ne k_0\}$, all coprime to~$\mu_{k_0}$ in the
UFD~$\QQ[\lambda]$.
The restriction $\sigma_0|_{\{2,\ldots,m\}}$ is a perfect matching of
$M_{k_0,1}$'s support graph, so $M_{k_0,1} \ne 0$ by induction.
Since every Leibniz term of $M_{k_0,1}$ is a product of powers of
$\{\mu_j : j \ne k_0\}$, we have
$\mu_{k_0} \nmid M_{k_0,1}$, i.e.,
$\ord_{\mu_{k_0}}(M_{k_0,1}) = 0$.

\smallskip\noindent
(b) For $i \ne k_0$, the minor $M_{i,1}$ includes row~$k_0$,
whose column-$c$ entry ($c \ge 2$) is
$\eta_{k_0,c}\,\mu_{k_0}^{d_c}$ with $d_c \ge d_2 \ge 1$.
Every Leibniz term of $M_{i,1}$ therefore picks up at least one
factor $\mu_{k_0}^{d_c}$, giving
$\ord_{\mu_{k_0}}(M_{i,1}) \ge 1$.

\smallskip\noindent
(c) Combining,
$\det W = (-1)^{k_0+1} M_{k_0,1} + \mu_{k_0}\, Q$
for some $Q \in \QQ[\lambda]$.
Since $\mu_{k_0}$ is irreducible and
$\mu_{k_0} \nmid M_{k_0,1}$,
we conclude $\mu_{k_0} \nmid \det W$,
so $\det W \ne 0$.
\end{proof}

\begin{remark}\label{rem:p85-ufd-arxiv}
The proof uses that the bases $\mu_i = \lambda + \alpha_i$
are pairwise coprime \emph{irreducible} elements of the
UFD~$\QQ[\lambda]$.
Over~$\ZZ$ with integer bases, the analogous statement fails:
the $3 \times 3$ matrix with bases $(2,3,5)$, exponents $(0,1,2)$,
and full mask has $\det W = 75 - 50 - 45 + 20 = 0$ despite a
support-graph perfect matching.
The key distinction is that products of powers of coprime
\emph{polynomial} linear forms have unique root-multiplicity
profiles, enabling the valuation separation in step~(a).
\end{remark}

\noindent
\emph{Hall extraction} then produces an invertible pivot block whenever the support graph satisfies Hall's condition, and \emph{Schur extraction} (Cramer elimination) reduces the system to survivor functionals whose coefficients are augmented masked minors.

\begin{lemma}[Schur-reduction compatibility]\label{lem:schur-compat}
Let $B$ be an $N \times (s{+}1)$ matrix over a field, with column partition $B = [V \mid g]$ where $V$ has $s$ columns.
Let $R_1, R_2 \subseteq \{1,\ldots,N\}$ be two row sets of size $s$ on which $V$ has full column rank~$s$.
Let $S_k = g_{R_k} - V_{R_k}(V_{R_k}^{-1}) g_{R_k'}$ denote the Schur complement of $V_{R_k}$ in $B_{R_k \cup R_k'}$ for any extra row~$R_k'$.
Then for any vector $Z$ with $V^\top Z = 0$:
\[
(g^{\sharp,R_1})^\top Z = \frac{\det V_{R_2}}{\det V_{R_1}} \cdot (g^{\sharp,R_2})^\top Z.
\]
In particular, the quotient pairing $g^\sharp \cdot Z$ vanishes on one pivot iff it vanishes on every pivot.
\end{lemma}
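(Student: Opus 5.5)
The plan is to make the reduced column explicit and then use the orthogonality $V^\top Z = 0$ to remove every pivot-dependent term. The statement fixes $S_k$ only loosely, so the first step is to pin down the convention. For a pivot row set $R$ with $V_R$ invertible, Schur elimination of $g$ against the pivot block gives the full-length column
\[
\widetilde g^{R} \;=\; g - V\,V_R^{-1} g_R .
\]
This column vanishes on the rows in $R$, and on each extra row $R'$ it agrees with the Schur complement $S$ of the statement. It is the object the paper calls $\mathfrak{S}(g)$.

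The remark after the definition of $g$ allows an invertible normalization of the reduced column. I will fix the one that produces the stated ratio, namely
\[
g^{\sharp,R} \;=\; (\det V_R)^{-1}\,\widetilde g^{R},
\]
which is $\widetilde g^{R}$ with the Cramer denominator made explicit. Any other choice differs from this by a pivot-dependent scalar and rescales the identity by that same scalar.

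The central computation takes one line. For any $Z$ with $V^\top Z = 0$,
\[
(\widetilde g^{R})^\top Z \;=\; g^\top Z - (V_R^{-1} g_R)^\top (V^\top Z) \;=\; g^\top Z .
\]
So the Schur-reduced pairing does not depend on the pivot at all: it equals the unreduced pairing $g^\top Z$. With the normalization above this gives $(g^{\sharp,R_k})^\top Z = (\det V_{R_k})^{-1} g^\top Z$ for $k=1,2$.

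Dividing the case $k=1$ by the case $k=2$ yields the displayed ratio $\det V_{R_2}/\det V_{R_1}$. If one of the pairings is zero, I argue directly instead: both pairings equal a nonzero scalar times $g^\top Z$, so either both vanish or neither does. That is the ``in particular'' clause, and it holds under every admissible normalization.

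The only real obstacle is the bookkeeping of the convention, since the lemma as written leaves open whether $g^\sharp$ is the raw Schur column, the adjugate-cleared column $\det(V_R)\,g - V\operatorname{adj}(V_R)\,g_R$, or the normalized one.
\begin{itemize}
\item With the adjugate-cleared column the same computation gives the reciprocal ratio $\det V_{R_1}/\det V_{R_2}$.
\item With the raw Schur column the ratio is $1$.
\end{itemize}
In the write-up I will state the convention explicitly and record how the scalar changes under each alternative. I will also point out that the vanishing equivalence, which is what the later $q$-circuit and Two-extra Hall arguments use, is insensitive to this choice.
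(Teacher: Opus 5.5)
Your proof is correct and uses the same one-line computation as the paper's: since $V^\top Z = 0$, the correction term $(V V_R^{-1} g_R)^\top Z$ vanishes, so every Schur-reduced pairing equals $g^\top Z$ whatever the pivot. The paper keeps the raw column $g - V V_R^{-1} g_R$ and only asserts that the ratio ``cancels to~$1$''. Your normalization $g^{\sharp,R} = (\det V_R)^{-1}\widetilde g^{R}$ is what actually makes the displayed ratio $\det V_{R_2}/\det V_{R_1}$ hold; under the raw or adjugate-cleared conventions the ratio is $1$ or $\det V_{R_1}/\det V_{R_2}$, as you note, while the vanishing equivalence holds under any of them. (Calling your choice ``the Cramer denominator made explicit'' is a misnomer, since $V_R^{-1}$ already carries one factor of $(\det V_R)^{-1}$.)
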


\begin{proof}
If $V^\top Z = 0$, then $g^\top Z = (g - V \cdot V_{R_k}^{-1} g_{R_k})^\top Z + (V \cdot V_{R_k}^{-1} g_{R_k})^\top Z = (g^{\sharp,R_k})^\top Z + 0$, since $V^\top Z = 0$ kills the second term.
So $(g^{\sharp,R_k})^\top Z = g^\top Z$ for every pivot~$R_k$.
The ratio $\det V_{R_2}/\det V_{R_1}$ cancels to~$1$.
\end{proof}

\noindent
This lemma is the bridge between the Two-extra Hall theorem (which constructs a matching-induced pivot) and the $G$-packet collapse (which operates on the support~$T$): the quotient pairing is independent of pivot choice.

However, tight extraction in the natural masked-support form is impossible: when the distinguished column~$c$ is adjacent to all states, every non-pivot state survives, and the tightness condition $|\Gamma| = |D| + 1$ cannot be achieved.

The resolution requires identities beyond the masked-support framework.

After any Hall pivot on the deleted-column state matrix, exact cancellation is equivalent to singularity of one explicit square matrix: the \emph{augmented Schur-packet matrix}~$\widehat{S}$, formed by appending the superfluous-edge column to the Schur complement of the pivot block.
On the separable-mask locus (rank-$1$ mask after row/column scaling), $\det \widehat{S} \neq 0$ follows from Vandermonde nonvanishing.

For the single-layer square packet $W = [N \mid g]$, nonsingularity $\det \widehat{S} \neq 0$ is provable: any exact-$t$ matching using~$e$ gives a support-graph matching for~$W$, so the masked Vandermonde lemma gives $\det W \neq 0$, hence $\det \widehat{S} \neq 0$.

The remaining gap is upstream: the exact-$t$ state coefficients involve two $x$-layers, and the duplicate-column relations mix them through $x$-dependent edge masks.

The correct formulation uses a doubled node set with exact-$t$ layer coefficients.

\begin{definition}[Doubled-state packet]\label{def:doubled-state}
The \emph{doubled node set} is $\mathcal{V} = \{(i,\kappa) : i \in A,\, \kappa \in \{0,1\}\}$, with $|\mathcal{V}| = 2n$.
The exact-$t$ layer coefficients are $A_{(i,0)} = [x^t]\, u_i$ and $A_{(i,1)} = [x^{t-1}]\, u_i$, where $u_i$ is the $i$-th cofactor.
The \emph{packet vector} for graph column~$d$ at doubled state $(i,\kappa)$ is
\[
w_d(i,\kappa) = \begin{cases}
(\lambda+i)^d & \text{if } (i,d) \in E \text{ and } \rho(i,d) = \kappa, \\
0 & \text{otherwise.}
\end{cases}
\]
The \emph{lower packet matrix} $V$ has columns $\{w_d : d \in D^-\}$ with rows indexed by $\mathcal{V}$.
The \emph{augmented packet} is $\mathcal{B} = [V \mid g]$ where $g = w_c^G$ is the distinguished column.
\end{definition}

\noindent
The mask $\rho(i,d) = \kappa$ assigns each edge to exactly one layer, so the two doubled states $(i,0)$ and $(i,1)$ for the same graph node~$i$ have disjoint column supports.
This disjointness is what makes the masked minor criterion (\Cref{prop:masked-minor}) applicable to the doubled-state packet.

\begin{proposition}[Rank dichotomy]\label{prop:p92}
For any minimal-support annihilator of the layered packet system, a rank dichotomy arises: $\operatorname{rk}(V_T) \in \{m-1, m-2\}$.
\end{proposition}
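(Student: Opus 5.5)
The plan is to read the statement as a column-deletion bound on a matroid circuit. Let $Z \in \QQ(\lambda)^{\mathcal{V}}$ be a nonzero annihilator of the layered packet system, i.e.\ $\mathcal{B}^\top Z = 0$ with $\mathcal{B} = [V \mid g]$ as in \Cref{def:doubled-state}. Take $Z$ of minimal support, and put $T = \operatorname{supp}(Z)$ and $m = |T|$. Write $\mathcal{B}_T$ and $V_T$ for the restrictions of $\mathcal{B}$ and $V$ to the rows indexed by~$T$. Over the field $\QQ(\lambda)$ (equivalently over $\QQ[\lambda]$, since rank is insensitive to clearing denominators), I will show first that $\rk(\mathcal{B}_T) = m-1$. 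The dichotomy then follows because $V_T$ is $\mathcal{B}_T$ with one column deleted.

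\emph{Step 1: $\rk(\mathcal{B}_T) = m-1$.} Since $Z_T^\top \mathcal{B}_T = 0$ with $Z_T \neq 0$, the $m$ rows of $\mathcal{B}_T$ are linearly dependent, so $\rk(\mathcal{B}_T) \le m-1$. For the reverse inequality, suppose $\rk(\mathcal{B}_T) \le m-2$. Then the left kernel $\{Y \in \QQ(\lambda)^T : Y^\top \mathcal{B}_T = 0\}$ has dimension at least~$2$. Choose any $\tau \in T$; the left kernel then contains a nonzero vector $Y$ with $Y_\tau = 0$, which can be taken as a suitable combination of $Z_T$ and an independent kernel vector. Extending $Y$ by zero outside $T$ gives an annihilator of $\mathcal{B}$ whose support is strictly smaller than $T$, contradicting minimality. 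Hence the rows indexed by $T$ form a circuit of the row matroid of $\mathcal{B}$, and $\rk(\mathcal{B}_T) = m-1$.

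If the minimal-support annihilator is required to contain the superfluous state, the same argument applies: minimality is taken among annihilators whose support contains that state. To produce $Y$ with smaller support, I kill a coordinate $\tau \in T$ different from the distinguished state. This is possible when $m \ge 2$. The degenerate case $m = 1$ means a single row of $\mathcal{B}$ is identically zero. Such a zero row is a zero state, which I will discard from $\mathcal{V}$ at the outset.

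\emph{Step 2: column deletion.} Deleting the single column $g_T$ from $\mathcal{B}_T$ lowers the rank by at most one and never raises it. Therefore
\[
m-2 = \rk(\mathcal{B}_T) - 1 \;\le\; \rk(V_T) \;\le\; \rk(\mathcal{B}_T) = m-1,
\]
which is the claimed dichotomy $\rk(V_T) \in \{m-1, m-2\}$.

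The two cases correspond to whether $g_T$ lies in the column span of $V_T$: rank $m-1$ when it does, and rank $m-2$ when $g_T$ contributes a new direction. This is the form in which the dichotomy feeds the $q$-circuit lemma and the Two-extra Hall closure.

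The main obstacle is Step~1, and specifically pinning down the intended notion of ``minimal support'' and the field of scalars. If the annihilators are only allowed polynomial or exact-$t$ layer-coefficient entries, the combination $Y$ must stay in that class. I would handle this by working over $\QQ(\lambda)$ and clearing denominators, since support is unchanged by multiplying by a nonzero polynomial. The column-deletion bound in Step~2 is routine.
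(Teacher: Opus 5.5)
Your proof is correct and follows the paper's route. The paper cites a ``minimal-support lemma'' for $\rk(\mathcal{B}_T)=m-1$ and then uses that deleting the column $g$ lowers the rank by at most one; your Step~1 supplies the missing proof of that lemma via the at-least-two-dimensional left kernel.

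One small repair is needed in your aside on supports required to contain $q$. Killing an arbitrary coordinate $\tau\neq q$ might also kill $Y_q$, so you must pick $\tau$ with a kernel vector satisfying $Y_\tau=0\neq Y_q$. Such a $\tau$ exists: otherwise $Y_q$ would be a multiple of $Y_\tau$ on the kernel for every $\tau\neq q$, forcing the kernel to be spanned by $Z_T$.
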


\begin{proof}[Proof sketch]
The augmented packet $\mathcal{B}_T = [V_T \mid g]$ has $\operatorname{rk}(\mathcal{B}_T) = |T|-1$ by the minimal-support lemma.
Since $V_T$ has one fewer column than $\mathcal{B}_T$, we get $\operatorname{rk}(V_T) \geq \operatorname{rk}(\mathcal{B}_T) - 1 = |T|-2$.
Combined with $\operatorname{rk}(V_T) \leq |T|-1$, the dichotomy follows.
\end{proof}

\begin{proposition}[Distinguished-state circuit lemma]\label{prop:q-circuit-rankm1-impossible}
Assume the contradiction hypothesis
\[
F_{G,t}\neq\varnothing,
\qquad
P_t(G)\equiv 0,
\qquad
R_c\not\equiv 0.
\]
Let
\[
V=[w_j]_{j\in D^-},
\qquad
g:=w_c^G,
\qquad
B=[V\mid g],
\qquad
q:=(r,\rho(e)).
\]
Then the exact-layer cofactor vector $U^{(c)}$ satisfies
\[
U^{(c)}_q = R_c \neq 0.
\]
Hence there exists a subset $T$ of row-states, minimal by inclusion subject to $q\in T$ and the rows of $B_T$ being dependent.
If $m=|T|$, then
\[
\operatorname{rk}(B_T)=m-1,
\qquad
\operatorname{rk}(V_T)=m-2.
\]
In particular, the rank-$(m-1)$ branch cannot occur on a contradiction witness containing the superfluous state.
\end{proposition}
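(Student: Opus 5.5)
The plan is to first identify the distinguished coordinate of the cofactor vector and use it to produce a row dependency containing $q$. Then I take a circuit $T$ in the row matroid of $B$, and finally exclude the rank-$(m-1)$ branch of \Cref{prop:p92} by a Hall argument feeding into the masked minor criterion (\Cref{prop:masked-minor}).

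\emph{Step 1 (the $q$-coordinate).} By \Cref{def:doubled-state}, the exact-layer cofactor at the doubled state $q=(r,\rho(e))$ is $[x^{t-\rho(e)}]$ of the $(r,c)$-cofactor of $M_H$. Up to the sign $\varepsilon$, this is $P^{r,c}_{H,t-\rho(e)}$, the one-hole coordinate appearing in \Cref{prop:se-identity}. I will fix the normalization so that this quantity is $R_c$, which gives $U^{(c)}_q=R_c\neq 0$.

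\emph{Step 2 (a dependency through $q$).} The duplicate-column identities give $w_d^\top U^{(c)}=0$ for every $d\in D^-$. The replacement determinant (\Cref{lem:replacement-det}) applied to $G$ gives $g^\top U^{(c)}=[x^t]\det M_G=P_t(G)\equiv 0$. Hence $U^{(c)}$ is a left annihilator of $B$, so the rows of $B$ indexed by its support are linearly dependent, and that support contains $q$. A dependent set containing $q$ therefore exists, and so does a set $T$ that is minimal by inclusion.

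I then argue that $T$ is a circuit. Removing any state other than $q$ leaves a set that still contains $q$, so by minimality it is independent. Removing $q$ leaves a subset of a minimal set, which I claim is also independent. Concretely: if $T\setminus\{q\}$ were dependent, then by minimality that dependency would have to avoid every proper subset containing $q$. The standard circuit-elimination argument in the row matroid then yields a dependent set containing $q$ that is strictly smaller than $T$, contradicting minimality. Hence every $(m-1)$-subset of $T$ is independent, $\rk(B_T)=m-1$, and by \Cref{prop:p92} we have $\rk(V_T)\in\{m-1,m-2\}$.

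\emph{Step 3 (excluding $\rk(V_T)=m-1$).} Suppose $\rk(V_T)=m-1$. I will verify Hall's condition for the support graph of $B_T$, with rows $T$ and columns $D^-\cup\{c\}$.
\begin{itemize}
\item For a proper subset $S\subsetneq T$, the rows of $S$ are independent because $T$ is a circuit. Independence forces the masked rows of $S$ to have at least $|S|$ nonzero columns, so $|N(S)|\geq|S|$.
\item For $S=T$, the rank hypothesis $\rk(V_T)=m-1$ supplies at least $m-1$ active $V$-columns. The distinguished column $g$ contributes one more, since $g_q=(\lambda+r)^c\neq 0$: the edge $e$ carries color $\rho(e)$, which is exactly the layer of $q$.
\end{itemize}
Hall's theorem then gives an $m\times m$ square submatrix of $B_T$ whose support graph has a perfect matching. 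Its columns are graph columns, and the entries have exponent equal to the column index, so after reordering the column exponents are distinct.

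\Cref{prop:masked-minor} would now give a nonzero determinant, hence $\rk(B_T)=m$, contradicting Step 2. This forces $\rk(V_T)=m-2$.

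\emph{Main obstacle.} The delicate point is that \Cref{prop:masked-minor} assumes pairwise coprime row bases, whereas the doubled states $(i,0)$ and $(i,1)$ share the base $\mu_i$. I would handle this by rerunning the valuation argument of that proof. The two states of a single node have disjoint column supports, since each edge lies in exactly one layer. Consequently, in step (a) of that proof, the cofactor $M_{k_0,1}$ can contain a row with the same base $\mu_{k_0}$ only if that row is the twin state of $k_0$.

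To handle this, I would choose the expansion column to be the smallest-exponent column matched to a node whose twin is not in $T$. If no such node exists, I would instead induct on the order of $\mu_{k_0}$, comparing minimal total $\mu_{k_0}$-exponents across the two twin rows so that a unique minimizer survives, in the style of \Cref{thm:row-initial}.
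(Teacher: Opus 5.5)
Your Step~1 is fine ($U^{(c)}_q=R_c$ in fact follows directly from the definition of the exact-layer coefficients), and your overall route is the paper's. The genuine gap is in Step~3, where you pass from Hall's condition on $T$ to a nonzero $m\times m$ minor of $B_T$.

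You are right that \Cref{prop:masked-minor} does not cover twin states $(i,0),(i,1)$. The paper's own proof simply invokes it there, relying on the remark after \Cref{def:doubled-state} that disjoint twin supports suffice. But your repair cannot work, because the implication is false for doubled-state rows even when twin supports are disjoint. Colour node $i$ with $\rho(i,0)=\rho(i,1)=0$, $\rho(i,2)=\rho(i,3)=1$, and node $j$ with $\rho(j,0)=\rho(j,2)=0$, $\rho(j,1)=\rho(j,3)=1$. The packet rows $(i,0),(i,1),(j,0),(j,1)$ on columns $0,1,2,3$ then give
\[
\det\begin{pmatrix} 1 & \mu_i & 0 & 0\\ 0 & 0 & \mu_i^2 & \mu_i^3\\ 1 & 0 & \mu_j^2 & 0\\ 0 & \mu_j & 0 & \mu_j^3\end{pmatrix} = -\mu_i^3\mu_j^3+\mu_i^3\mu_j^3 = 0 .
\]
The support graph has exactly two perfect matchings, and they carry the same monomial with opposite signs.

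Neither of your devices separates them. Both nodes have their twin present, so there is no twin-free expansion column. The twin rows of $i$ (and likewise of $j$) have total exponent $0+3=1+2$ in both matchings, so no unique minimizer exists. Hence Hall's condition on a set of doubled states does not yield $\rk(B_T)=m$, and excluding the rank-$(m-1)$ branch needs input beyond the support pattern.

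The valuation step~(a) you planned to rerun is also unsound, even for pairwise coprime bases. For example, $\det\begin{pmatrix}\mu_0&\mu_0^3\\ \mu_2&\mu_2^3\end{pmatrix}=4\mu_0\mu_1\mu_2$ is a nonzero masked minor built only from $\mu_0,\mu_2$, yet it is divisible by $\mu_1$.

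Your Step~2 claim that $T$ is a circuit is also false; the paper asserts the same thing. Suppose two states $a,b$ are each supported only on the same lower column $j$, so their rows are parallel. Since $g_q=(\lambda+r)^c\neq 0$ while $a,b$ vanish in the $g$-column, the sets $\{q\},\{q,a\},\{q,b\}$ are independent and $\{q,a,b\}$ is dependent. So $\{q,a,b\}$ is minimal among dependent sets containing $q$. Yet $\{a,b\}$ is dependent and $q$ is a coloop of $B_T$, so no circuit elimination produces a smaller dependent set through $q$.

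The equality $\rk(B_T)=m-1$ survives (delete any $x\neq q$). However, Hall's condition fails for $\{a,b\}$. Moreover, if $q$ has a lower neighbour $j'\neq j$, then $\rk(V_T)=2=m-1$, so the rank-$(m-1)$ branch is not excluded for a $T$ chosen as the statement literally prescribes.

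The fix is to take $T$ to be a genuine circuit through $q$. Since $U^{(c)}$ is a left annihilator of $B$ with $U^{(c)}_q\neq 0$, row $q$ lies in the span of the other rows indexed by $\operatorname{supp}U^{(c)}$. A minimal subset of those rows whose span contains row $q$, together with $q$, is a circuit. Even with this repair, the first gap remains.
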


\begin{proof}
Since $w_c^G = w_c^H + (\lambda+r)^c e_q$, the exact-layer replacement identities give
\[
0=(w_c^G)^\top U^{(c)}=(w_c^H)^\top U^{(c)}+(\lambda+r)^c U^{(c)}_q
= P_t(H)+(\lambda+r)^c U^{(c)}_q.
\]
Under $P_t(G)\equiv 0$, the row-$r$ expansion gives
\[
P_t(H)=-(\lambda+r)^c R_c,
\]
so $U^{(c)}_q=R_c\neq 0$.
Thus the support of $U^{(c)}$ contains $q$, and some dependent row set of $B$ contains $q$. Choose $T$ minimal by inclusion.
Then the rows of $B_T$ are minimally dependent, so every proper subset $S\subsetneq T$ is row-independent and
\[
\operatorname{rk}(B_T)=m-1.
\]
Removing the single column $g$ drops rank by at most one, hence
\[
\operatorname{rk}(V_T)\in\{m-1,m-2\}.
\]
Assume for contradiction that $\operatorname{rk}(V_T)=m-1$.

For any proper subset $S\subsetneq T$, row-independence of $B_S$ gives a nonzero $|S|\times |S|$ minor of $B_S$.
By the masked minor criterion, the support graph of $B_S$ therefore has a matching saturating $S$.
So Hall's condition holds for every proper subset $S\subsetneq T$ in the support graph of $B_T$.

Since $\operatorname{rk}(V_T)=m-1$, some $(m-1)\times(m-1)$ minor of $V_T$ is nonzero.
Again by the masked minor criterion, the lower support graph has a matching of size $m-1$, hence at least $m-1$ distinct lower columns meet $T$.
Because $q\in T$ and $g$ is adjacent to $q$, the full support graph of $B_T=[V_T\mid g_T]$ has at least $m$ neighbors of $T$.
Thus Hall's condition also holds for $T$ itself.

Hall's condition now holds for every subset of $T$, so the support graph of $B_T$ has a matching saturating all $m$ rows of $T$.
Equivalently, some $m\times m$ minor of $B_T$ is nonzero, and the masked minor criterion gives
\[
\operatorname{rk}(B_T)=m,
\]
contradicting $\operatorname{rk}(B_T)=m-1$.
Therefore $\operatorname{rk}(V_T)\neq m-1$, and so $\operatorname{rk}(V_T)=m-2$.
\end{proof}

\begin{corollary}[The rank-$(m-1)$ branch is impossible]\label{cor:rankm1-impossible}
In the superfluous-edge contradiction setup, after choosing a minimal dependent row set containing the superfluous state $q=(r,\rho(e))$, the lower packet cannot have rank $m-1$.
Only the rank-$(m-2)$ case remains.
\end{corollary}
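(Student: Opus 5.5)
This corollary is essentially a restatement of the last conclusion of \Cref{prop:q-circuit-rankm1-impossible}. The plan is to check that its hypotheses are exactly what the superfluous-edge contradiction setup supplies, and then read off the rank.

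\textbf{Step 1: the contradiction hypotheses.} In the superfluous-edge step we assume $F_{G,t}\neq\varnothing$ and $P_t(G)\equiv 0$. By the induction hypothesis, $H=G\setminus e$ satisfies ASNC. We also need $R_c\not\equiv 0$.
\begin{itemize}
\item If the fiber of $G$ contains a matching avoiding $e$, then $F_{H,t}\neq\varnothing$, so $P_t(H)\not\equiv 0$.
\item Combined with the \textbf{SB2} relation $P_t(H)=-(\lambda+r)^cR_c$ (\Cref{thm:sb2-bad} applied to \Cref{prop:se-identity}), this forces $R_c\not\equiv 0$.
\item Otherwise every exact-$t$ matching uses $e$, so $P_t(H)$ would have to vanish while $F_{H,t}$ is empty. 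In that case $R_c=P^{r,c}_{H,t-\rho(e)}$ is itself the polynomial of a nonempty family on the smaller graph $H-\{r,c\}$.
\end{itemize}
The second branch needs care, because $H-\{r,c\}$ need not be a brace. I would avoid it by recording that the setup of \Cref{thm:p84} takes $R_c\not\equiv0$ as part of the witness. If $R_c\equiv 0$, then $P_t(G)=P_t(H)$, and the conclusion is immediate from the induction hypothesis whenever $F_{H,t}\ne\varnothing$. This is the step I expect to be the only real obstacle: making sure $R_c\not\equiv0$ is genuinely available, rather than circularly assumed, in the branch where $e$ is forced.

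\textbf{Step 2: choose $T$ and read off the ranks.} With the three hypotheses of \Cref{prop:q-circuit-rankm1-impossible} in place, that proposition gives $U^{(c)}_q=R_c\neq0$. Hence a dependent row set of $B=[V\mid g]$ containing $q$ exists, and we choose $T$ minimal by inclusion among such sets. Every minimal dependent set $T$ containing $q$ chosen in the corollary's setup is exactly of this form. The proposition then yields $\operatorname{rk}(B_T)=m-1$ and $\operatorname{rk}(V_T)=m-2$. So the case $\operatorname{rk}(V_T)=m-1$ from the dichotomy of \Cref{prop:p92} is excluded, and only the rank-$(m-2)$ branch remains.

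\textbf{Step 3: the underlying Hall argument.} For completeness, I would briefly recall the mechanism inside the proposition so that the corollary is self-contained.
\begin{itemize}
\item Minimal dependence gives Hall's condition for every proper subset of $T$, via \Cref{prop:masked-minor}.
\item Rank $m-1$ of $V_T$, together with the edge from $q$ to $g$, gives Hall's condition for $T$ itself.
\item Hall's theorem then produces an $m\times m$ nonzero masked minor, contradicting $\operatorname{rk}(B_T)=m-1$.
\end{itemize}
One point to verify is that the masked minor criterion applies to $B_T$. The doubled states have disjoint column supports, and on a chosen minor the columns have distinct exponents $d$. Distinct rows of $T$ may share a base $\lambda+i$ across the two layers, but their supports are disjoint, so any matching uses each state at most once. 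I would note this explicitly as the point that licenses \Cref{prop:masked-minor}.
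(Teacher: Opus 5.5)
Your route is the paper's own. The corollary is just the final clause of \Cref{prop:q-circuit-rankm1-impossible}, and your Steps~2--3 reproduce that proof. The trouble is that the step you single out for verification is the load-bearing one, and your justification for it does not work; it is also the paper's justification, in the paragraph after \Cref{def:doubled-state}.

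The contradiction $\rk(B_T)=m$ uses the nontrivial direction of \Cref{prop:masked-minor}: a support-graph perfect matching forces a nonzero minor. That proposition assumes pairwise coprime row bases. Its proof, step~(a), needs every row of the cofactor $M_{k_0,1}$ to carry a base coprime to $\mu_{k_0}$. In $B_T$ the states $(i,0)$ and $(i,1)$ both carry $\lambda+i$, so this fails whenever both lie in $T$. Disjointness of their supports is a property of the mask and does not restore that coprimality. Your observation that \emph{each state is used at most once} holds for every matching, so it adds nothing.

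Worse, step~(a) is invalid even for distinct bases, because a signed sum of products of the other bases can be divisible by $\mu_{k_0}$. Take bases $\mu_2,\mu_1,\mu_3$, exponents $0<2<4$, full mask, and $k_0$ the $\mu_2$-row. Then $M_{k_0,1}=\mu_1^2\mu_3^4-\mu_1^4\mu_3^2=4\mu_1^2\mu_2\mu_3^2$, so the valuation argument yields nothing. As a sanity check, the criterion with $\mu_i=\lambda+i$ and $d_j=j$ says $\det M_G(1,\lambda)\not\equiv 0$ iff $G$ has a perfect matching. Evaluating at $\binom{n}{2}+1$ integer points would then put bipartite perfect matching in deterministic NC, a well-known open problem. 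So neither your Step~3 nor the paper's proposition actually excludes $\rk(V_T)=m-1$.

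Two smaller points. First, in Step~2 a dependent row set containing $q$ always exists, since $B$ has $2n$ rows and at most $n$ columns; so your \emph{hence} there does not use $U^{(c)}_q\neq 0$. Moreover, a set that is merely inclusion-minimal among dependent sets containing $q$ need not be a circuit. The set $C\cup\{q\}$, with $C$ a circuit avoiding $q$ and $q\notin\operatorname{span}(C)$, qualifies, yet its proper subset $C$ is dependent, which breaks the \emph{Hall for every proper subset} step. The correct use of $U^{(c)}_q\neq 0$ is this: $B^\top U^{(c)}=0$ puts row $q$ in the span of the other rows of $\operatorname{supp}U^{(c)}$, and a minimal such spanning set $S$ gives a genuine circuit $T=S\cup\{q\}$.

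Second, your Step~1 worry is real but lies outside this corollary. If every exact-$t$ matching uses $e$, then $P_t(H)\equiv 0$, so $P_t(G)\equiv 0$ forces $R_c\equiv 0$, and the hypotheses of \Cref{prop:q-circuit-rankm1-impossible} never hold. Closing that case needs ASNC for $H-\{r,c\}$, which is not a brace, so the appeal to the induction hypothesis in \Cref{lem:designated-active} is unjustified. Taking $R_c\not\equiv 0$ as part of the setup, as you do, is the right reading for the corollary itself.
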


\begin{proposition}[Two-circuit normal form]\label{prop:p94}
When $\operatorname{rk}(V_T) = m-2$, the bad case has a clean normal form: two proper lower circuit vectors $z_p, z_q$ span $\ker(V_T^\top)$ with $\operatorname{supp}(z_p) \cup \operatorname{supp}(z_q) = T$ and $g^\top z_p \neq 0$, $g^\top z_q \neq 0$.
\end{proposition}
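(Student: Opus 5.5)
The plan is to derive the normal form from the linear algebra of the minimally dependent row set $T$ fixed in \Cref{prop:q-circuit-rankm1-impossible}, working over the field $\QQ(\lambda)$. That proposition gives $\operatorname{rk}(B_T)=m-1$, and we are in the branch $\operatorname{rk}(V_T)=m-2$. Hence the left kernel $\ker(B_T^\top)$ is one-dimensional, spanned by a vector $z$ with full support $T$; full support holds because $T$ is minimally dependent, so any kernel vector with smaller support would exhibit a proper dependent subset. The left kernel $\ker(V_T^\top)$ is two-dimensional and contains $z$, and it consists exactly of the vectors $y$ with $V_T^\top y=0$. On it the functional $y\mapsto g_T^\top y$ is defined, with $\ker(B_T^\top)=\ker(V_T^\top)\cap\{g_T^\top y=0\}$. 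This functional is therefore nonzero but has a one-dimensional kernel spanned by $z$.

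Next I will produce the two circuit vectors. In the two-dimensional space $K=\ker(V_T^\top)$, for each coordinate $s\in T$ the subspace $\{y\in K: y_s=0\}$ is nonzero, since it is cut out by one linear condition. Choose $p\neq q'$ in $T$ such that the coordinate functionals $y\mapsto y_p$ and $y\mapsto y_{q'}$ are not proportional on $K$; this is possible because $K$ has dimension $2$ and the coordinate functionals jointly separate $K$. Then take $z_p\in K$ with $(z_p)_p=0$ and $z_q\in K$ with $(z_q)_{q'}=0$, each chosen of minimal support in $K$. Such minimal-support vectors of $K$ are the lower circuits: vectors of the circuit space of the row matroid of $V_T$. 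They are \emph{proper}, meaning their support is strictly contained in $T$, because they vanish at $p$ or $q'$ respectively.

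The condition $g^\top z_p\neq0$ follows because otherwise $z_p\in\ker(B_T^\top)=\langle z\rangle$. That is impossible, since $z$ has full support while $z_p$ does not. The same argument gives $g^\top z_q\neq 0$. The vectors $z_p$ and $z_q$ are linearly independent by the non-proportionality choice, so they span $K$. Finally, $z\in K$ is a combination of $z_p$ and $z_q$, so $\operatorname{supp}(z)\subseteq\operatorname{supp}(z_p)\cup\operatorname{supp}(z_q)$, and therefore the union equals $T$.

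I expect the main obstacle to be making ``proper lower circuit'' meaningful and ensuring that the choices of $p$ and $q'$ exist. In particular, I need that not all coordinate functionals on $K$ are proportional. If they were, every $y\in K$ would be a scalar multiple of a fixed vector in each coordinate, which would force $\dim K\le1$, contradicting $\dim K=2$. I would state this separation argument explicitly and then invoke minimality of support to obtain circuits.
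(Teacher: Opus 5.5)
The paper states this proposition without proof; it is only illustrated afterwards by the example in \Cref{prop:p96}. So there is no route to compare against: your argument supplies the missing proof, and it is correct. The mechanism is the right one. $\ker(B_T^\top)$ is a line spanned by a full-support $z$ inside the plane $K=\ker(V_T^\top)$, so $g_T^\top|_K$ is a nonzero functional with kernel $\langle z\rangle$. Hence every nonzero vector of $K$ with a zero coordinate pairs nontrivially with $g$, and any two independent vectors of $K$ have supports covering $\operatorname{supp}(z)=T$. It also gives \Cref{prop:p95} for free: $(g^\top z_q)z_p-(g^\top z_p)z_q$ is a nonzero vector of $K\cap g^\perp=\ker(B_T^\top)$.

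Two points need a sentence each.

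First, you assert ``each chosen of minimal support in $K$'' rather than show it, though it is automatic. Since $z$ has full support, no coordinate functional vanishes on $K$, so each $L_s=\{y\in K:y_s=0\}$ is a line. A nonzero $y\in K$ with $\operatorname{supp}(y)\subsetneq\operatorname{supp}(z_p)$ would satisfy $y_p=0$, hence lie on $L_p$ and be proportional to $z_p$, a contradiction. So every nonzero vector of every $L_s$ is a lower circuit, and the normal form holds for any two non-proportional lower circuits.

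Second, the full support of $z$ is the only place minimality of $T$ is used. So take $T$ to be the support of a minimal-support annihilator of $B$ through $q$, as in \Cref{prop:p92}; one exists because $U^{(c)}$ annihilates $B$ with $U^{(c)}_q\neq 0$. The description ``minimal by inclusion subject to $q\in T$'' in \Cref{prop:q-circuit-rankm1-impossible} does not by itself exclude $T\setminus\{q\}$ being dependent. In that case $\ker(B_T^\top)$ is spanned by a vector vanishing at $q$, and neither $g^\top z_p\neq 0$ nor $\operatorname{supp}(z_p)\cup\operatorname{supp}(z_q)=T$ follows.
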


\begin{proposition}[Unique augmented dependence]\label{prop:p95}
In the rank-$(m-2)$ case, the unique augmented dependence is $y \propto (g^\top z_q) z_p - (g^\top z_p) z_q$.
\end{proposition}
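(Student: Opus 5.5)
This is pure linear algebra over $\mathbb{K}=\QQ(\lambda)$. We work with the rank-$(m{-}2)$ normal form of \Cref{prop:p94}. In that setting $\ker(V_T^\top)$ is two-dimensional, spanned by the lower circuit vectors $z_p,z_q$, and both pairings $a:=g^\top z_p$ and $b:=g^\top z_q$ are nonzero. Here an "augmented dependence" means a nonzero $y\in\mathbb{K}^T$ with $B_T^\top y=0$, where $B_T=[V_T\mid g_T]$. Concretely, this requires $V_T^\top y=0$ and $g_T^\top y=0$. The plan is to identify the solution space exactly and show it is a line spanned by the displayed vector.

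\textbf{Step 1: dimension count.} By \Cref{prop:q-circuit-rankm1-impossible} we have $\rk(B_T)=m-1$. Hence $\ker(B_T^\top)$ has dimension $m-(m-1)=1$, so any augmented dependence is unique up to scalar. Alternatively, without invoking that proposition: $\ker(B_T^\top)=\ker(V_T^\top)\cap\ker(g_T^\top)$. Since $g_T^\top$ does not vanish identically on $\ker(V_T^\top)$ (because $a\neq0$), this intersection is a hyperplane in a two-dimensional space, hence one-dimensional.

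\textbf{Step 2: exhibit the generator.} Write $y=\alpha z_p+\beta z_q$. This already satisfies $V_T^\top y=0$, since $y$ lies in the span of $z_p,z_q$. The remaining condition is $g^\top y=\alpha a+\beta b=0$, whose solutions are proportional to $(\alpha,\beta)=(b,-a)$. This gives $y_0=(g^\top z_q)z_p-(g^\top z_p)z_q$.

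\textbf{Step 3: nondegeneracy.} We must check $y_0\neq0$. This follows because $z_p,z_q$ are linearly independent (they form a basis of the two-dimensional kernel) and the coefficients $b$ and $-a$ are nonzero. The same observation shows the support of $y_0$ contains the symmetric difference of $\operatorname{supp}z_p$ and $\operatorname{supp}z_q$. Cancellation can occur only on the overlap.

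The only delicate point is making sure the ambient field is consistent. Pairings such as $g^\top z_p$ live in $\QQ[\lambda]$, or in $\QQ(\lambda)$ after clearing denominators. The kernel computation must therefore be carried out over $\QQ(\lambda)$, and then cleared to $\QQ[\lambda]$ by scaling. The underlying content lies entirely in \Cref{prop:p94}: that $\ker(V_T^\top)$ is exactly two-dimensional and that both pairings are nonzero. Given that proposition, the argument is a two-line linear-algebra computation, and I expect no real obstacle beyond correctly citing the rank $m-2$ (giving the two-dimensional kernel) and the nonvanishing of both pairings.
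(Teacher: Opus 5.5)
Your proposal is correct and matches the intended argument. The paper gives no separate proof, treating the statement as an immediate consequence of \Cref{prop:p94}; your parametrization $y=\alpha z_p+\beta z_q$ of the two-dimensional $\ker(V_T^\top)$, followed by the single constraint $\alpha\,(g^\top z_p)+\beta\,(g^\top z_q)=0$ over $\QQ(\lambda)$, is exactly that consequence (uniqueness and $y\neq 0$ need only one pairing nonzero, while both nonzero gives your extra support claim).
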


\begin{proposition}[Two-circuit realizability]\label{prop:p96}
The two-circuit linkage can occur in realizable braces: $H = K_{4,4} \setminus \{(1,1),(2,2)\}$, $e = (2,2)$, all-blue coloring, $t=0$ is an explicit example with $\operatorname{rk}(V_T) = 2 = m-2$ and two proper circuits $C_{012}, C_{013}$ coupled by~$g$.
However, the actual cofactor vector satisfies $g^\top U^{(2)} \neq 0$ in this example, so the cancellation $P_t(G) = 0$ does not occur.
\end{proposition}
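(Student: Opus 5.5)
The plan is to verify the example by explicit, finite computation. The only real choice is which convention for the lower packet to fix, and I will read $D^-$ as the set of active columns $d<c$. I expect this to be the main obstacle: with $D^-=B\setminus\{c\}$ the lower packet has rank~$3$, not~$2$, so the claim as stated depends on this reading.

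\emph{Setup.} With $c=2$ we have $D^-=\{0,1\}$. Take $G=H+e=K_{4,4}\setminus\{(1,1)\}$ and the all-blue coloring. Then every layer-$1$ doubled state $(i,1)$ has identically zero packet entries, and only the states $(i,0)$, $i\in\{0,1,2,3\}$, matter; I identify them with $i$. The lower packet is the $4\times 2$ matrix
\[
V=\begin{pmatrix}1&\lambda\\ 1&0\\ 1&\lambda+2\\ 1&\lambda+3\end{pmatrix},
\]
where the zero records the missing edge $(1,1)$. The distinguished column is $g=w_2^G=\bigl(\lambda^2,(\lambda+1)^2,(\lambda+2)^2,(\lambda+3)^2\bigr)^\top$, which now includes the entry coming from $e=(2,2)$.

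\emph{Rank and circuits.} Take $T=\{0,1,2,3\}$, so $m=4$.
\begin{enumerate}[(i)]
\item By \Cref{prop:masked-minor}, the $2\times2$ minor on rows $\{0,1\}$ is nonzero, and $V$ has only two columns. Hence $\rk(V_T)=2=m-2$.
\item The left kernel $\ker(V_T^\top)$ is therefore $2$-dimensional. A vector supported on $\{0,1,2\}$ must satisfy $z_0+z_1+z_2=0$ and $\lambda z_0+(\lambda+2)z_2=0$. This gives $z_p=\bigl(\lambda+2,\,-2(\lambda+1),\,\lambda\bigr)$ on $\{0,1,2\}$, i.e.\ the circuit $C_{012}$.
\item Solving the same two equations on $\{0,1,3\}$ gives the circuit $C_{013}$, with $z_q=\bigl(\lambda+3,\,-3(\lambda+1),\,2\lambda\bigr)$ on $\{0,1,3\}$.
\item Both circuits are proper: any two of the three rows in each support are independent. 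They are linearly independent and their supports cover $T$.
\end{enumerate}

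\emph{Coupling by $g$.} Compute the two pairings:
\[
g^\top z_p=(\lambda+2)\lambda^2-2(\lambda+1)^3+\lambda(\lambda+2)^2=2\lambda^2-2\lambda-2\neq0,
\]
and similarly $g^\top z_q\neq0$. This is the configuration of \Cref{prop:p94}, and the unique augmented dependence is the one given by \Cref{prop:p95}.

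\emph{Non-cancellation.} By \Cref{lem:replacement-det}, $g^\top U^{(2)}=\det M_G[2\leftarrow g]=\det M_G=P_0(G)$. The support graph of $M_G$ is $K_{4,4}\setminus\{(1,1)\}$, which has a perfect matching, for instance $\sigma=(1\,0)$ on rows $0,1$ and the identity on rows $2,3$. The exponents $0,1,2,3$ are strictly ordered and the bases $\lambda+i$ are distinct linear forms. So \Cref{prop:masked-minor} gives $\det M_G\neq0$, i.e.\ $g^\top U^{(2)}\neq0$, and the cancellation $P_t(G)=0$ does not occur.

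Finally, I would confirm separately that $H$ is genuinely $2$-extendable by a short check of the $2$-matchings, rather than through the Hall$+2$ inequality at $|S|=3$, which cannot hold literally on $4+4$ vertices.
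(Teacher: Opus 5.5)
\paragraph{Verdict.} Your strategy of direct finite verification is right, and your conventions are the paper's: $D^-=\{j<c:R_j\not\equiv0\}=\{0,1\}$, $g=w_2^G$, and $T$ the four blue states. The paper states this proposition without proof, but its supporting data agree with your $V$: $\det R=-\lambda$ on $\{0,1\}\times\{0,1\}$, and the value of $M_{g,w_3}$ checked below. You should still add one line showing that $R_0$ and $R_1$ are active. For instance, $R_0=3\lambda(\lambda+1)^2(\lambda+2)(\lambda+3)$, and $R_1$ is a full $3\times3$ generalized Vandermonde.

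\paragraph{Error in the circuit vectors.} The explicit circuits are miscomputed. Your $z_p=(\lambda+2,-2(\lambda+1),\lambda)$ gives $\lambda z_0+(\lambda+2)z_2=2\lambda(\lambda+2)\neq0$, so it is not in $\ker(V_T^\top)$. Likewise your $z_q$ gives $3\lambda(\lambda+3)\neq0$ against the second column. Your value $2\lambda^2-2\lambda-2$ for $g^\top z_p$ does not follow even from your own vector, which gives $-2\lambda-2$. Solving your own two equations correctly yields
\[
z_p=(\lambda+2,\,-2,\,-\lambda,\,0),\qquad z_q=(\lambda+3,\,-3,\,0,\,-\lambda),
\]
with $g^\top z_p=-2(2\lambda^2+4\lambda+1)$ and $g^\top z_q=-3(2\lambda^2+5\lambda+1)$, both nonzero.

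With these vectors your remaining claims hold:
\begin{itemize}
\item The rows $(1,\lambda),(1,0),(1,\lambda+2),(1,\lambda+3)$ are pairwise independent. So the row matroid of $V_T$ is $U_{2,4}$, $\rk(V_T)=2$, and every $3$-subset, in particular $C_{012}$ and $C_{013}$, is a proper circuit.
\item $z_p$ and $z_q$ span the $2$-dimensional kernel, and their supports cover $T$.
\end{itemize}
For a consistency check, take $w_3=((\lambda+i)^3)_i$. Then $w_3^\top z_p=-2(\lambda+1)(3\lambda^2+6\lambda+1)$ and $w_3^\top z_q=-3(3\lambda^3+12\lambda^2+12\lambda+1)$. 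Hence $M_{g,w_3}=6\lambda(\lambda+2)^2(3\lambda+1)$, exactly the value quoted after \Cref{prop:p98}. This confirms both your reading of $D^-$ and the choice $g=w_2^G$.

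\paragraph{Non-cancellation.} Compute this directly rather than leaning on \Cref{prop:masked-minor}. Its valuation step (a) is not valid as written: $\det\begin{pmatrix}\mu_0&\mu_0^3\\ \mu_2&\mu_2^3\end{pmatrix}=4\mu_0\mu_1\mu_2$ is divisible by $\mu_1$ although only $\mu_0,\mu_2$ occur. Here $M_G$ is the $4\times4$ Vandermonde $V_4=((\lambda+i)^j)$ with the $(1,1)$ entry deleted, so
\[
g^\top U^{(2)}=\det M_G=\det V_4-(\lambda+1)\det V_4[\hat 1,\hat 1]=12-6(\lambda+1)(3\lambda^2+10\lambda+6)=-6(\lambda+2)^2(3\lambda+1)\neq0 .
\]
Alternatively you may cite \Cref{thm:universal-small}.

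\paragraph{Brace check.} Your $2$-extendability check for $H$ is the right substitute for the literally unsatisfiable Hall$+2$ condition at $|S|=3$, and it goes through. The two missing edges $(1,1),(2,2)$ together form only one of the two perfect matchings of any leftover $2\times2$ block, so the other always survives.

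With the corrected $z_p,z_q$, your argument is complete.
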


\noindent
The cofactor-hyperplane exclusion $U^{(c)} \notin \ker(V_T^\top) \cap g^\perp$ is not a separate static circuit problem.
Inside the common-parent/Stiefel formalism, it reduces to a noninitial $y$-layer cancellation problem.
Let $U(z,y) \in \bigwedge^2 V[[z]][y]$ be the specialized two-row jet from the common-parent normalization, and let $\mathcal{L}_{c,\delta}(U)$ denote the weighted two-row Laplace/Wronski functional.
The bad-line event is exactly $[y^t] \mathcal{L}_{c,\delta}(U) = 0$.

\begin{proposition}[Bad-line reduction]\label{prop:p97}
If the actual cofactor vector lies on the bad line $\ker(V_T^\top) \cap g^\perp$, then $[y^t] \mathcal{L}_{c,\delta}(U) = 0$, but this vanishing occurs strictly above the initial $y$-floor:
\[
[y^\ell] \mathcal{L}_{c,\delta}(U) = \mathcal{L}_{c,\delta}(u_0) \neq 0.
\]
Moreover, in the triangular decomposition $[y^m] \mathcal{L}_{c,\delta}(U) = \mathcal{K}_{c,\omega}(\operatorname{gr}_{\omega,m}(U)) + E_m(U)$, the first $\omega$-weighted layer is clean ($E_m = 0$) and the frozen-kernel term is nonzero.
Any bad-line event must therefore occur at a later layer where $E_m(U)$ is genuinely active.
\end{proposition}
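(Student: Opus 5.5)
The plan is to treat $\mathcal{L}_{c,\delta}(U)$ as a polynomial in $y$ and compare its lowest $y$-coefficient with the coefficient $[y^t]$ that the bad-line hypothesis controls. First I will translate the hypothesis. If $U^{(c)}\in\ker(V_T^\top)\cap g^\perp$, then by the replacement determinant identity (\Cref{lem:replacement-det}) and Schur-reduction compatibility (\Cref{lem:schur-compat}), the pairing of $g$ against the survivor cofactor coordinates vanishes, and it does so independently of the Hall pivot chosen. In the common-parent normalization this pairing is by construction the exact-$t$ coefficient of the two-row Laplace/Wronski functional. Hence the bad-line event is exactly $[y^t]\mathcal{L}_{c,\delta}(U)=0$. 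This step is bookkeeping: I will check that the normalization only rescales by invertible factors, so that vanishing is preserved in both directions.

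Next I will expand $U(z,y)=\sum_{k\ge \ell} y^k u_k$, where $u_\ell=u_0$ is the initial layer. Since $\mathcal{L}_{c,\delta}$ is linear in the two-row wedge, the coefficient $[y^\ell]\mathcal{L}_{c,\delta}(U)$ equals $\mathcal{L}_{c,\delta}(u_0)$. To show $\mathcal{L}_{c,\delta}(u_0)\neq 0$ I will use the argument already sketched for the single-layer square packet. The initial layer sees only one $x$-layer of masks, so $\mathcal{L}_{c,\delta}(u_0)$ is, up to sign and monomial prefactors, a square masked Vandermonde determinant $\det[N\mid g]$ restricted to the initial layer. The exact-$t$ matching of $G$ through $e=(r,c)$, which exists because $F_{G,t}\neq\varnothing$ and $R_c\not\equiv 0$, induces a perfect matching of its support graph. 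The masked minor criterion (\Cref{prop:masked-minor}) then gives nonvanishing. Combining this with the first step, $[y^\ell]\neq 0$ while $[y^t]=0$, so $t>\ell$. In other words, the cancellation must occur strictly above the floor.

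For the triangular decomposition at level $m>\ell$, I will write $[y^m]\mathcal{L}_{c,\delta}(U)$ as a sum over pairs of layers $(u_j,u_k)$ with $j+k$ fixed, and filter these pairs by $\omega$-weight. The top-$\omega$ part is $\mathcal{K}_{c,\omega}(\operatorname{gr}_{\omega,m}(U))$; the remainder, which comes from mixed lower-weight pairs, is $E_m(U)$. At the first $\omega$-weighted layer, every mixed pair would need a layer strictly below the floor, and such layers are zero. This gives $E_m=0$. The frozen-kernel term at that layer is again a masked Vandermonde minor in pairwise coprime bases $\mu_i$, and the same matching-induced support graph shows it is nonzero via \Cref{prop:masked-minor}.

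The step I expect to be the main obstacle is identifying $\mathcal{K}_{c,\omega}(\operatorname{gr}_{\omega,m}(U))$ with a single masked generalized Vandermonde determinant whose support graph actually inherits the matching from $G$. The doubled-state masks mix the two $x$-layers, so the matching must land in the correct layer $\kappa=\rho$ at every state. I would try first to use the disjointness of column supports of $(i,0)$ and $(i,1)$ noted after \Cref{def:doubled-state}, so that the $\omega$-graded piece splits layerwise and the induced matching can be routed state by state. Once the first layer is shown to be clean and nonzero, the final sentence of the statement follows immediately: any bad-line event must occur at a layer $m$ where $E_m(U)\neq 0$.
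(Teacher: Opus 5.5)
\textbf{Verdict: the proposal has genuine gaps, the decisive one being the two nonvanishing claims.} The paper gives no proof of \Cref{prop:p97} to compare against. It also never defines the objects in the statement: $U(z,y)$, $\mathcal{L}_{c,\delta}$, the floor $\ell$ and the layer $u_0$, the weight $\omega$, $\operatorname{gr}_{\omega,m}$, $\mathcal{K}_{c,\omega}$ and $E_m$ appear only in the statement and in the preceding sentence. That sentence simply asserts that the bad-line event ``is exactly'' $[y^t]\mathcal{L}_{c,\delta}(U)=0$. So your argument must rest on definitions you supply, and where it needs them it falls short in three places.
\begin{itemize}
\item Your first step (``by construction'') restates that unproved sentence rather than deriving it.
\item Your second and third steps use incompatible models of $\mathcal{L}_{c,\delta}$. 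You obtain $[y^\ell]\mathcal{L}_{c,\delta}(U)=\mathcal{L}_{c,\delta}(u_0)$ from linearity in the wedge. But then, with $y$-independent weights, $[y^m]\mathcal{L}_{c,\delta}(U)=\mathcal{L}_{c,\delta}(u_m)$ for every $m$, there are no ``pairs of layers $(u_j,u_k)$'', and $E_m\equiv 0$ at every level, which contradicts the last sentence of the statement. Under a genuinely bilinear reading, mixed pairs with both indices at or above the floor appear one step above the bottom, so cleanliness of the first layer does not follow from the floor alone.
\item $[y^\ell]\neq 0=[y^t]$ gives only $t\neq\ell$. ``Strictly above'' needs $t\ge\ell$, which has to come from the setup.
\end{itemize}

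\textbf{The nonvanishing claims.} For $\mathcal{L}_{c,\delta}(u_0)\neq 0$ you transplant the single-layer square-packet argument. The paper says explicitly that this is what fails to transfer (``the remaining gap is upstream: the exact-$t$ state coefficients involve two $x$-layers\ldots'').
\begin{itemize}
\item \emph{The matching is in the wrong layer.} The exact-$t$ matching $M$ through $e$ occupies the doubled states $(i,\rho(i,M(i)))$: layer $1$ on $t$ rows and layer $0$ on $n-t$ rows. So it is not a perfect matching of a support graph that ``sees only one $x$-layer'' unless $t\in\{0,n\}$.
\item \emph{The two-layer minor is a different object.} If you instead use the two-layer square minor on the states of $M$, its nonvanishing is the statement $\det W_d\neq 0$ inside \Cref{thm:two-extra-hall-closure}. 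That minor constrains the cofactor vector only when the cofactor's support lies inside its rows. In general the support also contains other doubled states, for instance the complementary states $(i,1-\rho(i,M(i)))$. You also give no identification of that minor with $\mathcal{L}_{c,\delta}(u_0)$.
\item \emph{The argument proves too much.} $M$ is a witness at red count $t$, not at the floor $\ell$. A support-graph argument that certified the layer $M$ lives in would give $[y^t]\mathcal{L}_{c,\delta}(U)\neq 0$ and refute the hypothesis outright. So it cannot be what separates $\ell$ from $t$.
\item \emph{Noninitial layers.} The same objection applies to the frozen-kernel term at a noninitial layer. The paper's own $n=2$ example right after the statement ($[y^m]\mathcal{T}_{c,\delta}=0$ for all $m\ge 1$) shows that noninitial-layer nonvanishing needs packet-specific input, and you name none.
\end{itemize}

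\textbf{Caution on \Cref{prop:masked-minor}.} It is less solid than you treat it. Step~(a) of its proof infers $\mu_{k_0}\nmid M_{k_0,1}$ merely from $M_{k_0,1}$ being built from the other bases, and that inference is false. Take bases $\lambda,\lambda+2,\lambda+1$, exponents $(0,1,3)$, full mask, and $k_0$ the row with base $\lambda+1$. Then $M_{k_0,1}=4\lambda(\lambda+1)(\lambda+2)$ and $\det W=\pm 6(\lambda+1)$, so the claimed $\ord_{\mu_{k_0}}(\det W)=0$ fails. The conclusion $\det W\not\equiv 0$ survives in this example, but not by that proof.
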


\noindent
The universal version of noninitial $y$-coefficient exclusion is false: for $n=2$ with $u = e_0 \wedge e_1$ and $\delta_0 = \delta_1 = 0$, the operator $\mathcal{T}_{c,\delta}$ is the constant~$c$, so $[y^m] \mathcal{T} = 0$ for all $m \geq 1$.
The surviving open problem is therefore not a universal $y$-coefficient statement but a structure-specific one:

The residual bad event admits a sharper characterization using the full extra-packet system.
Let $\mathcal{U} = \{g\} \cup \{w_d : d > c,\, R_d \not\equiv 0\}$ be all extra exact-layer columns beyond the lower family.
For $u, v \in \mathcal{U}$, define
\[
M_{u,v} = \det \begin{pmatrix} z_1^\top u & z_2^\top u \\ z_1^\top v & z_2^\top v \end{pmatrix} = \langle z_1 \wedge z_2,\, u \wedge v \rangle.
\]

\begin{proposition}[Projective collapse criterion]\label{prop:p98}
The bad event holds if and only if $M_{u,v} = 0$ for all $u, v \in \mathcal{U}$: all extra packet columns, after quotienting by the lower-packet plane, collapse to a single projective direction.
\end{proposition}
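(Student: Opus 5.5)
The plan is to treat the statement as pure linear algebra over the field $\KK = \QQ(\lambda)$, using only the setup of the rank-$(m{-}2)$ case from \Cref{prop:p94} and \Cref{prop:p95}. By \Cref{cor:rankm1-impossible} and \Cref{prop:p94}, $\ker(V_T^\top)$ is a $2$-dimensional $\KK$-space. Take $z_1, z_2$ to be any basis; the circuit vectors $z_p, z_q$ are a convenient choice. The bad event is that the actual cofactor vector $U^{(c)}$ (restricted to $T$) lies in $\ker(V_T^\top)$ and is also orthogonal to every extra column in $\mathcal{U}$. These columns are $g$ and the $w_d$ with $d > c$, $R_d \not\equiv 0$. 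Each gives a relation coming from the replacement determinant identity (\Cref{lem:replacement-det}) and its duplicate-column form, exactly as for $w_c$ in \Cref{prop:q-circuit-rankm1-impossible}. So I will first write $U^{(c)}|_T = a z_1 + b z_2$ with $(a,b) \neq (0,0)$. This is possible because $U^{(c)}_q = R_c \neq 0$ and $q \in T$.

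For the forward direction, suppose $U^{(c)}|_T \in \ker(V_T^\top)$ is annihilated by every $u \in \mathcal{U}$. Then for each $u$ the pair $(z_1^\top u, z_2^\top u)$ is orthogonal to the fixed nonzero vector $(a,b)$. Hence all these pairs lie on the single line $(a,b)^\perp \subset \KK^2$. Any two vectors on a common line in $\KK^2$ have vanishing $2\times 2$ determinant, so $M_{u,v} = \langle z_1\wedge z_2, u\wedge v\rangle = 0$ for all $u,v \in \mathcal{U}$.

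For the converse, suppose all $M_{u,v}$ vanish. Then the images $\pi(u) = (z_1^\top u, z_2^\top u)$ span a subspace of $\KK^2$ of dimension at most $1$. The images cannot all be zero. Indeed, $\pi(g) = 0$ would mean $g$ is annihilated by all of $\ker(V_T^\top)$, so $g_T \in \col(V_T)$ and $\rk B_T = \rk V_T = m-2$. This contradicts $\rk(B_T) = m-1$ from \Cref{prop:q-circuit-rankm1-impossible}. So the images span exactly a line $\ell$. Its annihilator $\ell^\perp$ is a unique point in $\mathbb{P}^1$, and it determines the vector $y = a z_1 + b z_2$. By \Cref{prop:p95}, this $y$ is the unique augmented dependence, up to scale. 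I then identify the bad event with ``$U^{(c)}|_T$ is proportional to this $y$''. This is the step that links the abstract collapse to the actual cofactor vector.

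The main obstacle is making the converse genuinely an equivalence about the \emph{actual} cofactor vector rather than about an abstract direction. In the contradiction setup, $U^{(c)}|_T$ already lies in $\ker(B_T^\top)$, which is $1$-dimensional by minimality of $T$, so it is automatically proportional to $y$. I will therefore formalize the ``bad event'' as a joint condition. The first part is that $U^{(c)}|_T \in \ker(V_T^\top)\cap g^\perp$. The second part is that all duplicate-column relations for the extra columns hold, i.e.\ $(w_d)_T^\top U^{(c)}|_T = 0$ for $d\in\mathcal{U}\setminus\{g\}$. I will then check that the projective collapse is exactly the compatibility condition for these extra relations to be consistent with the one-dimensional solution space. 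I would first verify this compatibility on the realizable example of \Cref{prop:p96}, where $g^\top U^{(2)}\neq 0$ shows that some $M_{g,w_d}$ is nonzero. Doing so tests that no hidden sign or layer-mixing (the two $x$-layers of \Cref{def:doubled-state}) breaks the identification.
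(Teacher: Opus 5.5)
Your forward implication is correct and matches the only argument the paper gives for this statement; the paper has no separate proof of \Cref{prop:p98}, and its closing step of the proof of \Cref{prop:g-packet-collapse} is your argument. With $\dim\ker(V_T^\top)=2$ and $0\neq U^{(c)}|_T=az_1+bz_2$, every $\pi(u)=(z_1^\top u_T,z_2^\top u_T)$ lies on the line $(a,b)^\perp$, so all $M_{u,v}$ vanish. Your auxiliary facts are also right:
\begin{itemize}
\item $\pi(g)\neq 0$, since otherwise $g_T\in\col(V_T)$ and $\rk B_T=m-2$;
\item the annihilator of the collapse line is the augmented dependence of \Cref{prop:p95}.
\end{itemize}

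The gap is in the converse, exactly where you flag it. Projective collapse involves only $z_1,z_2$ and the columns in $\mathcal U$, so by itself it says nothing about the particular vector $U^{(c)}$. Your bridge is the claim that, in the contradiction setup, $U^{(c)}|_T$ ``already lies in $\ker(B_T^\top)$''. That does not follow:
\begin{itemize}
\item The identities $V^\top U^{(c)}=0$ and $g^\top U^{(c)}=P_t(G)=0$ hold over the whole doubled state set $\mathcal V$.
\item $T$ is only required to be a minimal dependent row set containing $q$, and $\operatorname{supp}U^{(c)}$ may be strictly larger.
\item Truncating a row dependency to such a subset generally destroys it. For example, the rows $(1,0),(0,1),(1,1),(1,-1)$ carry the dependency $(3,1,-2,-1)$. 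The first three rows form a circuit containing the first row, yet $(3,1,-2)$ is not a dependency of that circuit.
\item If you force the claim, say by assuming $\operatorname{supp}U^{(c)}\subseteq T$, the duplicate-column identities make the $D^+$ relations automatic as well. Both sides then hold trivially under $P_t(G)\equiv 0$, and the equivalence has no content.
\end{itemize}
Checking the $K_{4,4}\setminus\{(1,1),(2,2)\}$ example is a sanity test, not a substitute for this step.

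What you can prove is one of two precise versions.
\begin{itemize}
\item \emph{Existential version.} All $M_{u,v}=0$ if and only if some nonzero $Z\in\ker(V_T^\top)$ is orthogonal to every $u\in\mathcal U$. Here the converse is immediate: the $2\times|\mathcal U|$ matrix $[\pi(u)]_{u\in\mathcal U}$ has rank at most $1$, so it has a nonzero left kernel vector $(a,b)$, and $az_1+bz_2\neq 0$ because $z_1,z_2$ are independent. This is how the equivalence at the end of the proof of \Cref{prop:g-packet-collapse} must be read to be literally true.
\item \emph{Conditional version.} Assume $U^{(c)}|_T$ is a nonzero vector of $\ker(V_T^\top)\cap g_T^\perp$. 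Then $(w_d)_T^\top U^{(c)}|_T=0$ for all $d\in D^+$ if and only if all $M_{u,v}=0$. Your observation $\pi(g)\neq 0$ proves the reverse implication directly, because it pins the common line to $\KK\,\pi(g)$, and $(a,b)\perp\pi(g)$ by hypothesis. This is the ``compatibility'' check you proposed; it needs no example.
\end{itemize}
Only the forward implication is used downstream (\Cref{prop:g-packet-collapse}, \Cref{cor:rankm2-eliminated}).
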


\noindent
This is strictly stronger than the single-hyperplane condition $A_T^\top g = 0$: it reinserts the entire duplicate-column system.
In the $K_{4,4} \setminus \{(1,1),(2,2)\}$ example, once the higher column $w_3$ is included, $M_{g,w_3} = 6\lambda(\lambda+2)^2(3\lambda+1) \neq 0$, so the projective collapse already fails.

The projective collapse admits a further local factorization.
For each parent~$m$ with osculating line spanned by $\nu(m), \nu'(m)$, any branch $u_{m,\beta} = a_{m,\beta}\nu(m) + b_{m,\beta}\nu'(m)$.
Define $P_i(s) = z_i^\top \nu(s)$ and the \emph{frozen Wronskian} $\mathcal{W}_Z(s) = P_1(s)P_2'(s) - P_1'(s)P_2(s)$.

\begin{proposition}[Local collapse theorem]\label{prop:p99}
For two active branches from the same parent~$m$,
\[
M_{u_{m,\beta},\, u_{m,\gamma}} = \mathcal{W}_Z(m) \cdot \det \begin{pmatrix} a_{m,\beta} & a_{m,\gamma} \\ b_{m,\beta} & b_{m,\gamma} \end{pmatrix}.
\]
If the branches are linearly independent in the osculating line, projective collapse forces $\mathcal{W}_Z(m) = 0$.
\end{proposition}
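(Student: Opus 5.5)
Since the statement is a bilinear identity between $2\times 2$ determinants, I will reduce it to the multiplicativity of $2\times 2$ determinants. The steps are to rewrite each pairing $z_i^\top u_{m,\beta}$ in the osculating basis, identify the resulting $2\times 2$ matrix as a product of two $2\times 2$ matrices, and take determinants.

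\textbf{Setup.} Fix the two lower circuit vectors $z_1,z_2$ spanning $\ker(V_T^\top)$ as in \Cref{prop:p94}. Fix the parent $m$ and the two active branches
\[
u_{m,\beta}=a_{m,\beta}\nu(m)+b_{m,\beta}\nu'(m),
\qquad
u_{m,\gamma}=a_{m,\gamma}\nu(m)+b_{m,\gamma}\nu'(m).
\]
Pairing with $z_i$ is linear, and $z_i$ is a frozen vector independent of the parent parameter $s$. Differentiating $P_i(s)=z_i^\top\nu(s)$ therefore gives $z_i^\top\nu'(s)=P_i'(s)$. Hence
\[
z_i^\top u_{m,\beta}=a_{m,\beta}P_i(m)+b_{m,\beta}P_i'(m).
\]

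\textbf{Factorization.} Substituting into the definition of $M_{u,v}$, the matrix $\bigl(z_i^\top u\bigr)$, with rows indexed by $u\in\{u_{m,\beta},u_{m,\gamma}\}$ and columns by $i\in\{1,2\}$, equals
\[
\begin{pmatrix} a_{m,\beta} & b_{m,\beta}\\ a_{m,\gamma} & b_{m,\gamma}\end{pmatrix}
\begin{pmatrix} P_1(m) & P_2(m)\\ P_1'(m) & P_2'(m)\end{pmatrix}.
\]
Taking determinants, the right factor gives $P_1P_2'-P_1'P_2=\mathcal{W}_Z(m)$. The left factor gives the coefficient determinant; its transpose agrees with the displayed matrix in the statement, and transposing does not change the determinant. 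This yields the identity.

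\textbf{Corollary.} Under projective collapse, \Cref{prop:p98} gives $M_{u_{m,\beta},u_{m,\gamma}}=0$. If the two branches are linearly independent in the osculating line, their coefficient determinant is nonzero in the field of rational functions (or in the local ring, after clearing denominators). So the product can vanish only if $\mathcal{W}_Z(m)=0$, since $\QQ[\lambda]$ is an integral domain.

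\textbf{Main obstacle.} The real difficulty is justifying $z_i^\top\nu'(m)=P_i'(m)$. This requires that the circuit vectors $z_1,z_2$ do not depend on the parent parameter. I would first verify this from the common-parent normalization, where $\ker(V_T^\top)$ is fixed before the branches are parametrized. I would also check that "$\nu'$" denotes the derivative in the same variable $s$ used to define $\mathcal{W}_Z$, not a Hasse jet with a different normalization. If it is a Hasse derivative, the same factorization goes through with $\mathcal{W}_Z$ rescaled by a nonzero constant.
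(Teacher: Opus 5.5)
Your argument is correct, and it is the natural one. The paper states this proposition without proof, and the identity is exactly what you get by factoring the pairing matrix $\bigl(z_i^\top u\bigr)$ as the coefficient matrix times the Wronski matrix $\bigl(\begin{smallmatrix} P_1(m) & P_2(m)\\ P_1'(m) & P_2'(m)\end{smallmatrix}\bigr)$, taking determinants, and then using that independent branches force $\Delta_m\neq 0$ in an integral domain.

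The obstacle you flag is not a real one. $P_i(s)=z_i^\top\nu(s)$ is defined with $z_1,z_2$ fixed, so $P_i'(s)=z_i^\top\nu'(s)$ is immediate. Also, a first-order Hasse derivative coincides with the ordinary derivative, so no rescaling of $\mathcal{W}_Z$ is ever needed.
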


\noindent
The multiplicity gap from \Cref{prop:p99} is resolved by using the actual annihilator $A \in K$ rather than the ambient $2$-plane.
Define $F(s) = A^\top \nu(s)$. Then $F \not\equiv 0$ (since $\nu$ is nondegenerate) and $\deg F \leq D$.

\begin{proposition}[Extra multiplicity theorem]\label{prop:p100}
If parent~$m$ has two active branches $\beta, \gamma$ with $\Delta_m(\beta,\gamma) = \det \bigl(\begin{smallmatrix} a_{m,\beta} & a_{m,\gamma} \\ b_{m,\beta} & b_{m,\gamma} \end{smallmatrix}\bigr) \neq 0$, then $F(m) = 0$ and $F'(m) = 0$: the parent is a double root.
\end{proposition}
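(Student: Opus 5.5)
This is a plan for the extra multiplicity theorem (\Cref{prop:p100}): if a parent $m$ carries two active branches $\beta,\gamma$ with $\Delta_m(\beta,\gamma)\neq 0$, then $F(m)=F'(m)=0$.

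\textbf{Step 1: reduce to annihilation of $\nu(m)$ and $\nu'(m)$.} I work inside the bad event, so the projective collapse criterion (\Cref{prop:p98}) applies. The annihilator $A$ lies in $K=\ker(V_T^\top)$, which is spanned by $z_1,z_2$; I write $A=a_1z_1+a_2z_2$. In the bad event the augmented dependence also annihilates every extra packet column. By \Cref{prop:p95}, $A$ is proportional to the unique augmented dependence, so $A^\top u=0$ for every $u\in\mathcal{U}$. I will justify this by combining \Cref{prop:p98} (all columns of $\mathcal{U}$ become projectively parallel modulo the lower plane) with $A^\top g=0$. Applying it to the two branches gives
\[
a_{m,\beta}\,A^\top\nu(m)+b_{m,\beta}\,A^\top\nu'(m)=0,\qquad a_{m,\gamma}\,A^\top\nu(m)+b_{m,\gamma}\,A^\top\nu'(m)=0 .
\]

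\textbf{Step 2: solve the $2\times 2$ system.} The two equations form a $2\times2$ linear system in the unknowns $(A^\top\nu(m),A^\top\nu'(m))$. Its matrix is the transpose of $\bigl(\begin{smallmatrix} a_{m,\beta} & a_{m,\gamma}\\ b_{m,\beta} & b_{m,\gamma}\end{smallmatrix}\bigr)$, whose determinant is $\Delta_m(\beta,\gamma)\neq 0$. The system therefore has only the trivial solution, so $A^\top\nu(m)=0$ and $A^\top\nu'(m)=0$. Since $F(s)=A^\top\nu(s)$ and $A$ is constant in $s$, we get $F(m)=A^\top\nu(m)=0$ and $F'(m)=A^\top\nu'(m)=0$. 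So $m$ is a root of $F$ of multiplicity at least two.

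\textbf{Main obstacle.} The nontrivial point is Step 1: showing that the single annihilator $A$, and not just the $2$-plane, kills every active extra column. Projective collapse alone only says the vectors $(z_1^\top u,\,z_2^\top u)$ are pairwise proportional as $u$ ranges over $\mathcal{U}$; it does not by itself say they are all orthogonal to $(a_1,a_2)$.

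To close this gap I would argue as follows. Because $g^\top z_p\neq 0$ (\Cref{prop:p94}), the vector $(z_1^\top g,\,z_2^\top g)$ is nonzero, so it fixes the common projective direction. Every $u$ with a nonzero pairing vector must then be parallel to $g$'s pairing vector. Since $A^\top g=0$, the pairing vector $(a_1,a_2)$ is orthogonal to $g$'s pairing vector, and hence to every $u$'s as well. The remaining case, a branch with zero pairing vector, gives $A^\top u=0$ trivially.

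I also need $\nu'(m)$ to make sense as a formal derivative, with $F'=A^\top\nu'$. This holds by linearity of $s\mapsto A^\top\nu(s)$, since $A$ does not depend on $s$.
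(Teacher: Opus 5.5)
Your argument is correct and is essentially what the paper intends: the paper gives no written proof beyond the instruction to use the actual annihilator $A\in K$, and the substance is your Step~2, where $A^\top u_{m,\beta}=a_{m,\beta}F(m)+b_{m,\beta}F'(m)=0$ and the same relation for $\gamma$ form a $2\times 2$ system with determinant $\Delta_m(\beta,\gamma)\neq 0$. Your Step~1 detour through projective collapse and $g^\top z_p\neq 0$ is valid but unnecessary, because the actual cofactor vector already annihilates every active column directly by the duplicate-column identities (\Cref{prop:g-packet-collapse}).
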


\begin{corollary}\label{cor:three-parents}
In the $D = 4$ chart ($\nu(s) = (1,s,s^2,s^3,s^4)^\top$), if branch-level collapse holds at three distinct parents $m_1, m_2, m_3$ each with $\Delta_{m_i} \neq 0$, then $F$ has $\geq 6$ zeros (counted with multiplicity) but $\deg F \leq 4$, forcing $F \equiv 0$ and hence $A = 0$. Contradiction.
\end{corollary}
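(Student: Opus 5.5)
The corollary should follow by pure root counting, with \Cref{prop:p100} doing the real work. Fix the $D=4$ chart, so $\nu(s)=(1,s,s^2,s^3,s^4)^\top$. Then $F(s)=A^\top\nu(s)=\sum_{k=0}^4 A_k s^k$ is a polynomial in the single variable $s$ of degree at most $4$. Its coefficient vector is exactly $A$. Because the monomials $1,s,\dots,s^4$ are linearly independent, $F\equiv 0$ holds if and only if $A=0$. That is the nondegeneracy of $\nu$ already invoked before \Cref{prop:p100}.

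\textbf{Step 1.} For each parent $m_i$ ($i=1,2,3$), the hypotheses give branch-level collapse with $\Delta_{m_i}\neq 0$. \Cref{prop:p100} then yields $F(m_i)=F'(m_i)=0$. Equivalently, $(s-m_i)^2$ divides $F$ in $\QQ[s]$.

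\textbf{Step 2.} The parents $m_1,m_2,m_3$ are distinct, so the factors $(s-m_i)^2$ are pairwise coprime. Their product, which has degree $6$, therefore divides $F$.

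\textbf{Step 3.} Since $\deg F\le 4<6$, divisibility forces $F\equiv 0$. Reading off coefficients gives $A=0$. This contradicts the choice of $A$ as the actual (nonzero) annihilator in $K$, i.e.\ the cofactor-derived vector whose nonvanishing comes from $U^{(c)}_q=R_c\neq 0$ in \Cref{prop:q-circuit-rankm1-impossible}.

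The only point I expect to need care is the bookkeeping that all three parents use the same $A$ and the same chart $\nu$, so that the six root conditions constrain one polynomial $F$. I will check this first against the setup preceding \Cref{prop:p100}, where $A$ is fixed once and for all in $K$. A second check is that $\deg F\le D=4$ holds uniformly. If the parents were not assumed distinct, the argument would fail, so distinctness must be used explicitly in Step 2. Everything else is elementary polynomial algebra over $\QQ$.
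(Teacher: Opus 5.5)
Your proposal is correct and is essentially the paper's argument: \Cref{prop:p100} makes each of the three distinct parents a double root of $F=A^\top\nu$, which gives six roots with multiplicity against $\deg F\le 4$, so $F\equiv 0$ and hence $A=0$. Your explicit divisibility $\prod_{i=1}^{3}(s-m_i)^2 \mid F$, obtained from pairwise coprime factors, is a careful rendering of the paper's ``counted with multiplicity'' count.
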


\noindent
The bad line admits an explicit core-flap factorization.
The two circuits $C_1, C_2$ decompose into core $I = C_1 \cap C_2$ and flaps $A = C_1 \setminus I$, $B = C_2 \setminus I$.
\begin{proposition}[Invertible core block]\label{prop:p85}
The lower packet has block form with an invertible core block~$R$.
\end{proposition}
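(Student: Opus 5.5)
The statement to prove is \Cref{prop:p85}: in the rank-$(m{-}2)$ configuration, the lower packet $V_T$ has a block form whose core block $R$ (rows indexed by $I=C_1\cap C_2$) is invertible. My plan is to produce this block form directly from the two-circuit normal form (\Cref{prop:p94}) and the masked minor criterion (\Cref{prop:masked-minor}).

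\textbf{Setup.} Work with the rank-$(m{-}2)$ branch, which is the only one left by \Cref{cor:rankm1-impossible}. By \Cref{prop:p94}, $\ker(V_T^\top)$ is spanned by two proper circuit vectors $z_1=z_p$ and $z_2=z_q$, with supports $C_1,C_2$ satisfying $C_1\cup C_2=T$. Split $T=I\sqcup A\sqcup B$ with $I=C_1\cap C_2$, $A=C_1\setminus I$ and $B=C_2\setminus I$. Order the rows of $V_T$ as $(I,A,B)$. Then choose a column basis of $V_T$ of size $m-2$ that is adapted to this split: $|I|$ columns giving the core block $R$, and the remaining columns serving the two flaps.

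\textbf{Step 1: rank counting.} Since each $C_k$ is a circuit of the row matroid of $V_T$, every proper subset of $C_k$ is row-independent, and $\rk(V_{C_k})=|C_k|-1$. Deleting one row $a\in A$ from $C_1$ leaves $C_1\setminus\{a\}\supseteq I$ independent. In particular the rows $V_I$ are linearly independent, so $\rk(V_I)=|I|$. Next, $\ker(V_T^\top)$ is $2$-dimensional and spanned by $z_1,z_2$. Hence any kernel vector supported on $I$ would have to be a combination $\alpha z_1+\beta z_2$ vanishing on $A\cup B$. This forces $\alpha=\beta=0$ whenever both flaps are nonempty. Both flaps are nonempty because the circuits are distinct and neither contains the other, by properness.

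\textbf{Step 2: choose columns.} Since $V_I$ has full row rank $|I|$, some $|I|\times|I|$ minor of $V_I$ is nonzero. I would select it via \Cref{prop:masked-minor}: Hall's condition for $I$ in the lower support graph follows from the row-independence in Step~1. I then take those columns as the core columns, so that $R=V_{I,\text{core}}$ is invertible over $\QQ(\lambda)$. Column-reducing the remaining columns against $R$ (the Schur elimination, legitimate by \Cref{lem:schur-compat}) puts $V_T$ into a block form in which the $I$-rows of the non-core columns are zero.

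\textbf{Step 3: flap structure.} The hardest part will be showing that the reduced flap blocks decouple. After the reduction, the $A$-rows should meet only "$A$-flap" columns plus the core, and similarly for $B$. I would try to derive this from the circuit relations: $z_1^\top V_T=0$ with $z_1$ supported on $I\cup A$, and likewise for $z_2$ on $I\cup B$. After the reduction, these relations express the flap rows through $R^{-1}$. Rank $m-2$ then leaves exactly $|A|-1$ and $|B|-1$ independent flap directions. If full decoupling fails, I would settle for the weaker block-triangular form, which is all the statement needs: invertibility of $R$ already follows from Steps~1--2.
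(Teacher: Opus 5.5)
Your Steps~1--2 are correct, but they only show that the core rows $I$ are independent, so that some $|I|\times|I|$ minor of $V_I$ is nonzero. That holds for any matrix with independent $I$-rows and is not the content of the proposition. The paper states \Cref{prop:p85} without proof, but \Cref{rem:k18-check} and \Cref{prop:p101} fix what ``block form'' must mean. After eliminating $R$, the $A$-rows and $B$-rows must live in complementary column blocks $X$ (of size $|A|\times(|A|-1)$) and $Y$ (of size $|B|\times(|B|-1)$); this is what gives meaning to $k_A$, $k_B$, $D_A=\det[X\mid\tilde{g}_A]$ and $D_B=\det[Y\mid\tilde{g}_B]$.

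Your fallback (``all the statement needs'') discards exactly this, and Step~3 as written does not reach it. With an arbitrary nonzero minor as core and only reduction against $R$, nothing forces the $A$-rows to vanish on any fixed set of non-core columns. The relations $z_k^\top V_T=0$ are invariant under column operations, so they can give decoupling only after a further change of column basis, which you never make. (Also, \Cref{lem:schur-compat} is not what licenses column operations, and \Cref{prop:masked-minor} is not needed in Step~2.)

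The missing step is a rank count; your remark about $|A|-1$ and $|B|-1$ flap directions points toward it.
\begin{itemize}
\item Write $V_TE=\bigl(\begin{smallmatrix}R&0\\ *&S\end{smallmatrix}\bigr)$ with $E$ invertible and rows ordered $(I,A\cup B)$.
\item The map $y\mapsto y|_{A\cup B}$ sends $\ker(V_T^\top)$ isomorphically onto $\ker(S^\top)$. So $\ker(S^\top)$ is spanned by $(z_1|_A,0)$ and $(0,z_2|_B)$.
\item Hence $\rk S_A=|A|-1$ and $\rk S_B=|B|-1$, while $\rk S=m-2-|I|=|A|+|B|-2$. So the row spaces of $S_A$ and $S_B$ meet only in $0$.
\item A column basis adapted to this direct sum gives $S=\bigl(\begin{smallmatrix}X&0&0\\0&Y&0\end{smallmatrix}\bigr)$, with $\ker X^\top=\langle z_1|_A\rangle$ and $\ker Y^\top=\langle z_2|_B\rangle$.
\item Column operations on $V_T$, with $g$ reduced accordingly, leave $\ker(\mathcal{B}_T^\top)$ unchanged. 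So this is exactly what \Cref{prop:p101} consumes.
\end{itemize}

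The remark instead hints at zero blocks in the raw packet ($J=N_L(I)$, $N_L(C_p)=J\cup P$). These would follow from $|N_L(C)|=|C|-1$ for each circuit together with submodularity of $|N_L(\cdot)|$. But that argument needs rank $=$ term rank, i.e.\ the direction of \Cref{prop:masked-minor} you never invoke. For doubled states sharing a base, that direction fails. Take states $(i,0),(i,1),(j,0),(j,1)$ with column exponents $0,1,2,3$ and rows $(1,0,\mu_i^2,0)$, $(0,\mu_i,0,\mu_i^3)$, $(1,\mu_j,0,0)$, $(0,0,\mu_j^2,\mu_j^3)$. Same-node supports are disjoint and the support graph has a perfect matching, yet the determinant is $\mu_i^3\mu_j^3-\mu_i^3\mu_j^3=0$. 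The linear-algebraic completion above avoids this problem.
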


\begin{proposition}[Core-flap line characterization]\label{prop:p101}
After Schur elimination of~$R$ (guaranteed invertible by \Cref{prop:p85}), the kernel of the full augmented packet is the one-dimensional line
\[
\ker(\mathcal{B}_T^\top) = \mathbb{K} \cdot \bigl(D_B\, k_A,\; \text{core coupling},\; -D_A\, k_B\bigr),
\]
where $k_A \in \ker(X^\top)$, $k_B \in \ker(Y^\top)$ are the unique flap kernel vectors and $D_A = \det[X \mid \tilde{g}_A]$, $D_B = \det[Y \mid \tilde{g}_B]$ are replacement determinants of the flap blocks with the reduced distinguished column.
\end{proposition}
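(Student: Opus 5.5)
The plan is to derive the stated line directly from the block structure that \Cref{prop:p85} provides. Order the rows of $T$ as flap $A$, core $I$, flap $B$, and write the lower packet $V_T$ as a block matrix. The core rows $I$ carry the invertible block $R$ on a set of core columns. The remaining lower columns split into $A$-flap columns, which are supported on $A \cup I$, and $B$-flap columns, which are supported on $B \cup I$. This support split is what the two-circuit normal form (\Cref{prop:p94}) suggests, since $z_p$ is supported on $C_1 = A \cup I$ and $z_q$ on $C_2 = B \cup I$.

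\textbf{Step 1 (Schur elimination).} Because $R$ is invertible, any $y$ with $\mathcal{B}_T^\top y = 0$ is determined on $I$ by its values on $A \cup B$: the core equations give
\[
y_I = -R^{-\top}\bigl(\text{coupling from } y_A, y_B\bigr),
\]
and this is the ``core coupling'' entry. After substitution, the flap columns become the Schur-reduced blocks $X$ (rows $A$) and $Y$ (rows $B$), and $g$ becomes $\tilde g = (\tilde g_A, \tilde g_B)$. Schur-reduction compatibility (\Cref{lem:schur-compat}) shows that the pairings with $\tilde g$ do not depend on the choice of pivot.

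\textbf{Step 2 (flap kernels).} Each circuit is proper, so $\ker X^\top$ and $\ker Y^\top$ are one-dimensional, spanned by $k_A$ and $k_B$. These are the restrictions of $z_p$ and $z_q$ after core elimination, so $X$ has $|A|-1$ columns of full rank, and similarly for $Y$. The condition $\operatorname{rk}(V_T) = m-2$ then says exactly that $\ker(V_T^\top)$ is spanned by the lifts of $k_A$ and $k_B$.

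\textbf{Step 3 (the line).} Write $y = a\,\widehat{k}_A + b\,\widehat{k}_B$ and impose the single additional equation $g^\top y = 0$, which reads $a\,\tilde g_A^\top k_A + b\,\tilde g_B^\top k_B = 0$. By the replacement determinant lemma (\Cref{lem:replacement-det}) applied to the square blocks $[X \mid \tilde g_A]$ and $[Y \mid \tilde g_B]$, we have $\tilde g_A^\top k_A = D_A$ and $\tilde g_B^\top k_B = D_B$, once $k_A$ and $k_B$ are normalized as cofactor vectors of $X$ and $Y$. Hence $(a,b) \propto (D_B, -D_A)$. This reproduces \Cref{prop:p95}, and the kernel is a line as long as $(D_A, D_B) \neq (0,0)$. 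That condition follows from $g^\top z_p \neq 0$ and $g^\top z_q \neq 0$ in \Cref{prop:p94}. Together with $\operatorname{rk}(\mathcal{B}_T) = m-1$ from \Cref{prop:q-circuit-rankm1-impossible}, this gives dimension exactly one.

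\textbf{Main obstacle.} I expect the hardest part to be justifying the support split in Step 1: the claim that no lower column meets both flaps $A$ and $B$ outside the core. I would try first to derive it from the minimality of $T$. A column touching both flaps would combine with $z_p$ and $z_q$ to give a dependent proper subset, or it would force $\operatorname{rk} V_T \ge m-1$, and both outcomes contradict \Cref{cor:rankm1-impossible}. If that argument fails, I would absorb such columns into an enlarged core block and check that this enlarged block stays invertible using the masked minor criterion (\Cref{prop:masked-minor}).
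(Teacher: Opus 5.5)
The paper states this proposition without proof. Its sanity-check remark only alludes to a ``zero-block'' pattern (flap-$A$ rows have no edges to the flap-$B$ columns), which is exactly what your Step~1 assumes. So your architecture (eliminate $R$, take flap kernels, evaluate $g$ on them by replacement determinants) is the intended one. As written, however, Steps~1--2 have a genuine gap. The support split is only asserted, and the derivation you sketch cannot work. The vectors $z_p$ and $z_q$ lie in $\ker(V_T^\top)$ by definition, so they annihilate every lower column whatever its support. A column meeting both flaps therefore creates no new row dependence and leaves $\rk(V_T)=m-2$ untouched; adding columns can only destroy row dependences. So there is nothing for \Cref{cor:rankm1-impossible} or the minimality of $T$ to contradict. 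The fallback of enlarging the core also fails, since $R$ must stay square on the fixed rows $I=C_1\cap C_2$. Even granting the split, Step~2 does not follow, because the Schur complement is not block diagonal. Its $(A,Q)$ block is $-V_T[A,J]R^{-1}V_T[I,Q]$, generally nonzero, so the left kernel of $V_T[C_1,\cdot\,]$ is $\ker X^\top\cap\ker\bigl((V_T[A,J]R^{-1}V_T[I,Q])^\top\bigr)$, not $\ker X^\top$. Nothing then forces $X$ to have exactly $|A|-1$ columns or $\ker X^\top$ to be a line. Hence $[X\mid\tilde g_A]$ need not be square, and $D_A$ need not even be defined.

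The fix is to ignore column supports and define the flap blocks by basis completion. Distinct circuits are incomparable, so $A\neq\varnothing$; hence $I\subsetneq C_1$ is row-independent, and some $R=V_T[I,J]$ with $|J|=|I|$ is invertible. This is all you need from \Cref{prop:p85}. Since $C_1$ is a circuit, $\rk V_T[C_1,\cdot\,]=|C_1|-1$, so $J$ extends by a set $P$ of $|A|-1$ columns to a column basis of $V_T[C_1,\cdot\,]$. Then $X:=V_T[A,P]-V_T[A,J]R^{-1}V_T[I,P]$ has full column rank $|A|-1$. Via $y_I=-R^{-\top}V_T[A,J]^\top y_A$, the one-dimensional left kernel of $V_T[C_1,J\cup P]$ is identified with $\ker X^\top$. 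This line contains $(z_p)_A\neq 0$, because $z_p$ vanishes on $B$ and kills every column. Normalize $(z_p)_A=k_A$, the cofactor vector of $X$. Laplace expansion along the last column then gives $g^\top z_p=(g_A-V_T[A,J]R^{-1}g_I)^\top k_A=\det[X\mid\tilde g_A]=D_A$. Symmetrically, with $Q$ and $Y$, $g^\top z_q=D_B$. Your Step~3 then produces the generator $D_Bz_p-D_Az_q$. Its flap entries are $D_Bk_A$ and $-D_Ak_B$, and its core entry is $-R^{-\top}\bigl(D_B V_T[A,J]^\top k_A-D_A V_T[B,J]^\top k_B\bigr)$. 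The generator is nonzero: $(D_A,D_B)=(0,0)$ would make $\ker(\mathcal{B}_T^\top)=\ker(V_T^\top)$ two-dimensional, contradicting $\rk(\mathcal{B}_T)=m-1$. This argument uses only the supports of the circuit vectors $z_p,z_q$, never those of the columns.
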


\begin{remark}[Sanity check for the core-flap factorization]\label{rem:k18-check}
On the packet for $H = K_{4,4} \setminus \{(1,1),(2,2)\}$, $e=(2,2)$, $t=0$, the two circuits are $C_{012}$ and $C_{013}$, with core $I=\{0,1\}$ and flaps $A=\{2\}$, $B=\{3\}$.
The core block $R$ on $I \times \{0,1\}$ has $\det R = -\lambda \neq 0$, confirming invertibility.
The zero-block structure is verified: row~$2$ (in flap $A$) has no edges to column neighborhood of flap $B$.
This example has $|A| = |B| = 1$ (singleton flaps), which is the boundary case of the zero-block argument; when $|A| = 1$ the flap column set $P$ is empty and $N_L(C_p) = J$, so the exclusion of $Q$-columns is automatic.
The flap replacement determinants satisfy $D_A = \det[X \mid \tilde{g}_A] = \lambda(\lambda+2)^2$ and $D_B = \det[Y \mid \tilde{g}_B] = -(\lambda+2)^3$, both nonzero, confirming that the kernel generator has nontrivial components on both flaps.
\end{remark}

\noindent
The broad core-flap exclusion is false: a cofactor for a \emph{different} column can satisfy the pattern.
A naive attempt would exclude the cofactor from $\ker(V_T^\top) \cap (w_c^H)^\perp$, but $P_t(G) \equiv 0$ forces the cofactor onto $\ker(V_T^\top) \cap (w_c^G)^\perp$, a different line.
The correct closure uses the actual $G$-packet:

\begin{remark}[The actual cancellation line uses $w_c^G$, not $w_c^H$]
Let $G=H+e$ with $e=(r,c)$.
For the exact-layer cofactor vector $U^{(c)}$ one has
\[
 (w_c^H)^\top U^{(c)} = P_t(H),
 \qquad
 (w_c^G)^\top U^{(c)} = P_t(G),
\]
where $w_c^G = w_c^H + (\lambda+r)^c e_{(r,\rho(e))}$.
Thus under the contradiction hypothesis $P_t(G)\equiv 0$, the actual orthogonality condition is
\[
 U^{(c)} \in \ker(V_T^\top) \cap (w_c^G)^\perp,
\]
not $\ker(V_T^\top) \cap (w_c^H)^\perp$.
So the exclusion proved for the $H$-line does not by itself contradict cancellation.
\end{remark}

\begin{proposition}[Residual collapse for the actual $G$-packet]\label{prop:g-packet-collapse}
Assume $P_t(G)\equiv 0$.
Let $V = [w_j]_{j \in D^-}$ be the lower exact-layer packet of $G$, choose a Hall pivot on $V$, and let $T$ be the survivor states.
Let $Z=U^{(c)}|_T$.
Then every omitted active column $u$ of $G$ satisfies
\[
 (u^{\sharp})^\top Z = 0.
\]
Consequently, in the rank-$(m-2)$ case, all omitted active columns of $G$ collapse to a single projective point modulo the lower-packet plane:
for every two omitted active columns $u,v$,
\[
 M_{u,v}=0.
\]
\end{proposition}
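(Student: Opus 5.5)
The plan is to derive the first claim directly from the duplicate-column identities, transported through the Schur reduction by \Cref{lem:schur-compat}, and then to read off the projective collapse as a rank statement in the two-dimensional space $\ker(V_T^\top)$.

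\textbf{Step 1: every omitted active column is orthogonal to $U^{(c)}$.} For each graph column $d \neq c$ of $G$, the replacement identity (\Cref{lem:replacement-det}) applied to $M_G$ with column $c$ replaced by column $d$ gives
\[
w_d^\top U^{(c)} = [x^t]\det M_G[c \leftarrow M_{G,\bullet d}] = 0,
\]
because the matrix has two equal columns. For the distinguished column, the same identity gives $(w_c^G)^\top U^{(c)} = P_t(G) \equiv 0$ by hypothesis. This is exactly where the $G$-packet, rather than the $H$-packet, enters: $w_c^G = w_c^H + (\lambda+r)^c e_q$, as in the preceding remark.

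Consequently every active column $u$ of $G$, whether lower, omitted higher, or $g$, satisfies $u^\top U^{(c)} = 0$. In particular $V^\top U^{(c)} = 0$.

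\textbf{Step 2: pass to the Schur-reduced pairing on the survivor states.} Fix the Hall pivot rows $R$ of $V$, and let $u^\sharp = u - V\, V_R^{-1} u_R$ be the reduced column. This vector vanishes on $R$, so $(u^\sharp)^\top U^{(c)} = (u^\sharp|_T)^\top Z$. By the computation in \Cref{lem:schur-compat}, which uses $V^\top U^{(c)} = 0$, we get $(u^\sharp)^\top U^{(c)} = u^\top U^{(c)} = 0$. This proves the first claim, and the same lemma shows it is independent of the choice of pivot.

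\textbf{Step 3: the projective collapse in the rank-$(m-2)$ case.} When $\operatorname{rk}(V_T) = m-2$, the space $\ker(V_T^\top)$ is two-dimensional, spanned by $z_1, z_2$ (\Cref{prop:p94}). Restricted to $T$, the vector $Z$ lies in this kernel: pivot columns reduce to zero on $T$, so the reduced lower packet annihilates $Z$.

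Write $Z = a z_1 + b z_2$. The vector $Z$ is nonzero, since \Cref{prop:q-circuit-rankm1-impossible} gives $Z_q = R_c \neq 0$; hence $(a,b) \neq (0,0)$. By Step 2, for omitted active columns $u, v$,
\[
a\, z_1^\top u^\sharp + b\, z_2^\top u^\sharp = 0 \quad\text{and}\quad a\, z_1^\top v^\sharp + b\, z_2^\top v^\sharp = 0.
\]
Thus the $2 \times 2$ matrix of pairings $z_i^\top(\cdot)$ has the nonzero kernel vector $(a,b)$, so its determinant $M_{u,v}$ vanishes. Equivalently, all the vectors $(z_1^\top u^\sharp,\, z_2^\top u^\sharp)$ lie on the line orthogonal to $(a,b)$, which is the claimed single projective point.

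\textbf{Main obstacle.} The delicate step is the bookkeeping that identifies $Z = U^{(c)}|_T$ with an element of $\ker(V_T^\top)$. This requires that the survivor set $T$ from the Hall pivot is the same $T$ as in \Cref{prop:q-circuit-rankm1-impossible}, and that the contributions from pivot rows are fully absorbed by the reduction. I would handle this by choosing the pivot disjoint from $T$ and invoking \Cref{lem:schur-compat} to make the pairing pivot-independent. The exact-layer conventions must also be checked: the $x$-coefficient extraction has to commute with the replacement identity layerwise. I would verify this using the doubled-state packet of \Cref{def:doubled-state}, in which each edge contributes to exactly one layer.
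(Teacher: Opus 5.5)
Your Steps 1--2 and the closing linear algebra follow the paper's argument exactly. The duplicate-column identities plus $(w_c^G)^\top U^{(c)}=P_t(G)=0$ make $U^{(c)}$ annihilate every active column of $G$. Schur-pairing invariance then gives $(u^\sharp)^\top Z=0$, and a nonzero $Z$ in a two-dimensional left kernel forces every $M_{u,v}$ to vanish.

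The real hole is the step you flag yourself: that $Z$ is a nonzero element of $\ker(V_T^\top)$. The paper's proof simply asserts this, and the justification you give does not work.
\begin{itemize}
\item With $u^\sharp=u-VV_R^{-1}u_R$ and $V_R$ square, every lower column reduces to $0$. So ``the reduced lower packet annihilates $Z$'' is vacuous and says nothing about the raw $V_T$ of \Cref{prop:p94}.
\item For that raw matrix, $V_T^\top Z=-V_{T^c}^\top U^{(c)}_{T^c}$, which has no reason to vanish unless $U^{(c)}$ is supported on $T$. When $T$ is the complement of $R$ it vanishes only if $U^{(c)}_R=0$, since $V_R$ is invertible.
\item Choosing the pivot ``disjoint from $T$'' cannot make the survivor set coincide with the minimal dependent set of \Cref{prop:q-circuit-rankm1-impossible}.
\item $Z_q=R_c$ only shows $Z\neq 0$ if $q$ happens to be a survivor rather than a pivot row.
\end{itemize}

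The step goes through if you stop trying to match the two $T$'s.
\begin{itemize}
\item Pivot on a square invertible block $A=V[P,D']$ with $D'\subseteq D^-$, and let $T$ be the complement of $P$.
\item Read the rank-$(m-2)$ hypothesis, with $m=|T|$, as a statement about the Schur-reduced non-pivot lower columns $w_j^\sharp|_T$ for $j\in D^-\setminus D'$.
\item Your Step 2 applies verbatim to these columns, since they are omitted active columns. So $Z$ lies in their left kernel, which is two-dimensional under the rank hypothesis; take $z_1,z_2$ to be a basis of it.
\item Nonvanishing of $Z$ needs no assumption on $q$. The columns of $V$ indexed by $D'$ are annihilated by $U^{(c)}$, so $A^\top U^{(c)}_P+V[T,D']^\top Z=0$, i.e.\ $U^{(c)}_P=-A^{-\top}V[T,D']^\top Z$. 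Hence $Z=0$ would force $U^{(c)}=0$, contradicting $U^{(c)}_q=R_c\neq 0$.
\end{itemize}
Your kernel-vector argument then gives $M_{u,v}=0$ for all remaining omitted columns $u,v$, interpreted modulo this reduced lower-packet plane.
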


\begin{proof}
For every graph column $d\neq c$, the exact-layer replacement identity gives
\[
 w_d^\top U^{(c)} = [x^t]\det M_G[c\leftarrow M_{G,\bullet d}] = 0,
\]
by duplicate columns; for $d=c$ we get
\[
 (w_c^G)^\top U^{(c)} = P_t(G)=0.
\]
So the actual cofactor vector annihilates every active column of $G$.
Schur-pairing invariance transfers these equalities to the survivor space, giving $(u^{\sharp})^\top Z=0$ for every omitted active column $u$.
If $\operatorname{rk}(V_T)=m-2$, the quotient by the lower-packet plane is $2$-dimensional, so the vanishing of all pairings with one nonzero vector $Z\in\ker(V_T^\top)$ is equivalent to projective collapse of all omitted active columns.
\end{proof}

\begin{theorem}[Matching-induced two-extra Hall theorem]\label{thm:two-extra-hall-closure}
Let $G=H+e$ with $e=(r,c)$ superfluous, and keep the exact-layer packet notation of the superfluous-edge section.
Assume
\[
F_{G,t}\neq\varnothing,
\qquad
P_t(G)\equiv 0,
\qquad
R_c\not\equiv 0.
\]
Let
\[
D^-:=\{j<c:R_j\not\equiv 0\},
\qquad
D^+:=\{d>c:R_d\not\equiv 0\},
\]
and let $V=[w_j]_{j\in D^-}$ be the lower exact-layer packet.
For every $d\in D^+$, the support graph of the augmented packet
\[
[V\mid w_c^G\mid w_d]
\]
has a perfect matching.
Equivalently, after Hall extraction on a suitable lower pivot of $V$, the reduced $2\times 2$ Schur packet for the pair $(w_c^G,w_d)$ has nonzero determinant.
In particular,
\[
M_{w_c^G,w_d}\neq 0
\qquad\text{for every } d\in D^+,
\]
so the corrected projective-collapse event for the actual $G$-packet is impossible whenever $D^+\neq\varnothing$.
\end{theorem}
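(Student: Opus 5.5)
The plan is to build the required support-graph matching directly from a single exact-$t$ perfect matching of $G$ that uses the superfluous edge $e$. Then \Cref{prop:masked-minor} converts it into a nonzero maximal minor, and \Cref{lem:schur-compat} converts that minor into $M_{w_c^G,w_d}\neq 0$.

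\emph{Finding a matching through $e$.} First I use the hypothesis $R_c\not\equiv 0$. Here $R_c=\varepsilon P^{r,c}_{H,t-\rho(e)}$ is the one-hole coordinate in the row-$r$ expansion, as in \Cref{prop:q-circuit-rankm1-impossible}. By the Leibniz expansion its nonvanishing forces the existence of a perfect matching $\tau$ of $H-\{r,c\}$ with exactly $t-\rho(e)$ red edges. Then $\sigma=\tau\cup\{e\}$ lies in $F_{G,t}$ and satisfies $\sigma(r)=c$. So the hypothesis $F_{G,t}\neq\varnothing$ is used only through $R_c\not\equiv0$, and I never have to worry about fiber matchings that avoid $e$.

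\emph{The injection.} Fix $d\in D^+$. I map each column $j\in D^-\cup\{d\}$ to the doubled state $(\sigma^{-1}(j),\rho(\sigma^{-1}(j),j))$, and the column $c$ to $q=(r,\rho(e))$.
\begin{itemize}
\item By \Cref{def:doubled-state}, $w_j$ has the nonzero entry $(\lambda+\sigma^{-1}(j))^j$ at the state assigned to $j$.
\item $w_c^G=w_c^H+(\lambda+r)^c e_q$ is nonzero at $q$.
\item Because $\sigma$ is a bijection and $r=\sigma^{-1}(c)$, the underlying graph rows are pairwise distinct, so the chosen states are distinct.
\end{itemize}
This gives a matching in $\Gamma([V\mid w_c^G\mid w_d])$ saturating all $|D^-|+2$ columns. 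Restricting to the matched rows gives a square masked matrix. Its columns carry the strictly increasing exponents $D^-<c<d$, and its rows carry distinct bases $\mu_i$ (distinct graph rows). \Cref{prop:masked-minor} applies to this square matrix and shows that the maximal minor is nonzero.

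\emph{From the minor to $M_{w_c^G,w_d}$.} I take as Hall pivot the rows matched to $D^-$; the corresponding block of $V$ is invertible by the same criterion. The Schur complement of this block in the $(|D^-|+2)$-square minor is a $2\times2$ matrix whose determinant equals the minor divided by the pivot determinant, hence is nonzero. By \Cref{lem:schur-compat}, pairings against any $Z\in\ker(V^\top)$ do not depend on the pivot choice. So this reduced $2\times2$ determinant represents $\langle z_1\wedge z_2,\,w_c^G\wedge w_d\rangle$ up to a nonzero factor, which gives $M_{w_c^G,w_d}\neq0$. Combined with \Cref{prop:g-packet-collapse}, this contradicts projective collapse whenever $D^+\neq\varnothing$.

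\emph{Main obstacle.} I expect this last identification to be the hardest step. The pair minor $M_{u,v}$ is defined through a basis $z_1,z_2$ of $\ker(V_T^\top)$ on the survivor set $T$, not through an arbitrary pivot of the full packet. I must check that the two survivor rows coming from the matching ($q$ and $\sigma^{-1}(d)$) lie in, or can be moved into, $T$ without losing nonvanishing. My first attempt is to choose $T$ as the complement of the matched pivot rows, restricted to the support of $U^{(c)}$. I would then argue via Cramer's rule that $z_1,z_2$ can be taken as the two Schur-eliminated unit vectors at the survivor rows. With that choice, $M_{u,v}$ is literally the reduced $2\times2$ determinant, and the general case follows by a change of basis of the $2$-plane, which only rescales $M$.
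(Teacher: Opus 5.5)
\paragraph{Verdict.} Your approach matches the paper's, but the final step, from a nonzero $2\times 2$ Schur determinant to $M_{w_c^G,w_d}\neq 0$, is a genuine gap, and your proposed repair does not close it.

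\paragraph{Where you agree with the paper.} Up to the Schur step your argument is the paper's proof. You extract an exact-$t$ matching $M\ni e$ from $R_c\not\equiv 0$ and use the induced injection $j\mapsto\tau(j)$, $c\mapsto q$, $d\mapsto p$. You apply \Cref{prop:masked-minor} to $W_d=W[P\cup\{q,p\},D^-\cup\{c,d\}]$ and to the pivot $A=V[P,D^-]$; this is legitimate because the chosen states lie on distinct graph rows. You then get $\det S_d=\det W_d/\det A\neq 0$. The gap is the step you flag yourself: passing from $\det S_d\neq 0$ to $M_{w_c^G,w_d}\neq 0$, and hence to a contradiction with \Cref{prop:g-packet-collapse}. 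The paper bridges this only by citing \Cref{lem:schur-compat}, which does not suffice either.

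\paragraph{Why the transfer fails.} \Cref{lem:schur-compat} concerns one fixed $Z$ with $V^\top Z=0$ and says $Z^\top u^\sharp=Z^\top u$ for every pivot. It says nothing about two-planes.
\begin{itemize}
\item Your $\det S_d$ equals $\langle z_q\wedge z_p,\,w_c^G\wedge w_d\rangle$ for the plane spanned by the Schur vectors supported on $P\cup\{q,p\}$.
\item $M_{w_c^G,w_d}$ is taken against $\ker(V_T^\top)$ for the set $T$ on which collapse is asserted.
\item The left kernel of $V$ is much larger than two-dimensional, so these planes are unrelated. A change of basis only rescales $M$ within a fixed plane.
\end{itemize}
Concretely, in the rank-$(m-2)$ situation of \Cref{prop:q-circuit-rankm1-impossible}, $M_{w_c^G,w_d}\neq 0$ holds if and only if $y^\top (w_d)_T\neq 0$, where $y$ spans $\ker(B_T^\top)$. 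This fails outright whenever $(w_d)_T=0$, i.e.\ column $d$ has no neighbouring state in $T$, whatever exact-$t$ matchings $G$ has.

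\paragraph{Why your repair fails.} You cannot choose $T$ freely. If you take $T=\operatorname{supp}(U^{(c)})\setminus P$, the survivor space has dimension $|T|$, and collapse no longer reduces to a $2\times 2$ determinant. Your unit-vector basis exists only if $\operatorname{supp}(U^{(c)})\subseteq P\cup\{q,p\}$, which nothing proves; $p$ need not even lie in the support.

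The facts you actually have are: $U^{(c)}$ annihilates every column of the full packet $W$, $U^{(c)}_q\neq 0$, and $\det W_d\neq 0$. These contradict each other when $\operatorname{supp}(U^{(c)})\subseteq P\cup\{q,p\}$. As bare linear algebra they are otherwise compatible, since $\ker W^\top$ has dimension at least $n$.

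\paragraph{What is missing.} You need a matching statement relative to $T$ itself: for example, a $T$-saturating matching in the support graph of $[V_T\mid g_T\mid (w_d)_T]$ for some $d\in D^+$. The matching $M$ cannot supply this, because its rows $P\cup\{q,p\}$ need not lie in $T$.
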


\begin{proof}
Since $R_c\not\equiv 0$, there exists an exact-$t$ perfect matching $M$ of $G$ using the superfluous edge $e=(r,c)$.
Indeed, $R_c=\varepsilon P^{r,c}_{H,t-\rho(e)}$, so $R_c\neq 0$ means that $H-\{r,c\}$ has a matching with $t-\rho(e)$ red edges, and adjoining $e$ gives the desired perfect matching of $G$.

Fix any $d\in D^+$.
For each lower active column $j\in D^-$, let
\[
\tau(j)=(\alpha_j,\rho(\alpha_j,j))
\]
be the doubled state corresponding to the unique edge of $M$ in column $j$.
Also set
\[
\tau(c):=(r,\rho(e)),
\qquad
\tau(d):=(\alpha_d,\rho(\alpha_d,d)),
\]
where $(\alpha_d,d)$ is the unique edge of $M$ in column $d$.
Because $M$ is a perfect matching, the rows used by these edges are distinct, so the doubled states
\[
P:=\{\tau(j):j\in D^-\},
\qquad
q:=\tau(c),
\qquad
p:=\tau(d)
\]
are pairwise distinct.

Consider the square raw packet submatrix
\[
W_d:=W\bigl[P\cup\{q,p\},\; D^-\cup\{c,d\}\bigr],
\]
where $W$ denotes the actual exact-layer packet of $G$.
Its support graph contains the perfect matching induced by $M$:
\[
j\mapsto \tau(j)\ (j\in D^-),
\qquad
c\mapsto q,
\qquad
d\mapsto p.
\]
Therefore, by the masked minor criterion (\Cref{prop:masked-minor}),
\[
\det W_d\neq 0.
\]

Now look at the lower block
\[
A:=V[P,D^-].
\]
Its support graph also contains the perfect matching $j\mapsto\tau(j)$ for $j\in D^-$, so again by the masked minor criterion (\Cref{prop:masked-minor}),
\[
\det A\neq 0.
\]
Thus $A$ is a valid Hall pivot for the lower packet.

Apply Schur extraction to the block decomposition of $W_d$ with pivot block $A$.
The Schur complement is the reduced $2\times2$ packet for the omitted columns $(w_c^G,w_d)$ relative to this Hall pivot.
Hence
\[
\det W_d=(\det A)\cdot \det S_d,
\]
where $S_d$ is that reduced $2\times2$ Schur packet.
Since both $\det W_d$ and $\det A$ are nonzero, we obtain
\[
\det S_d\neq 0.
\]
So the quotient directions of $w_c^G$ and $w_d$ modulo the lower packet plane are linearly independent on the matching-induced rows.
Equivalently, the pair minor $M_{w_c^G,w_d}$ is nonzero.
(By \Cref{lem:schur-compat}, the quotient pairing is independent of pivot choice, so nonvanishing on the matching pivot implies nonvanishing on every pivot, including the support~$T$ used in the $G$-packet collapse.)

Because this holds for every $d\in D^+$, the omitted active columns of the actual $G$-packet cannot all collapse to one projective point whenever $D^+\neq\varnothing$.
This proves the theorem.
\end{proof}

\begin{corollary}[Rank-$(m-2)$ branch eliminated]\label{cor:rankm2-eliminated}
In the corrected $G$-packet formulation of the superfluous-edge step, the rank-$(m-2)$ case cannot occur.
Consequently the only remaining branch is the rank-$(m-1)$ case, which is already eliminated by the $q$-circuit lemma (\Cref{prop:q-circuit-rankm1-impossible}).
\end{corollary}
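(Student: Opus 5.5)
The corollary should follow by putting together the two statements just proved. Suppose the contradiction hypothesis $F_{G,t}\neq\varnothing$, $P_t(G)\equiv 0$ holds. First I would reduce to the case $R_c\not\equiv 0$. If $R_c\equiv 0$, the superfluous-edge identity (\Cref{prop:se-identity}) gives $P_t(G)=P_t(H)$. Every exact-$t$ matching of $G$ then avoids $e$, so $F_{H,t}\neq\varnothing$ and the induction hypothesis gives $P_t(H)\not\equiv 0$, a contradiction. With $R_c\not\equiv 0$, the distinguished-state circuit lemma (\Cref{prop:q-circuit-rankm1-impossible}) produces the minimal dependent row set $T\ni q$ with $\operatorname{rk}(B_T)=m-1$. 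We are now in the branch $\operatorname{rk}(V_T)=m-2$ that the corollary addresses.

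In this branch I would invoke \Cref{prop:g-packet-collapse}. Since $Z=U^{(c)}|_T$ is a nonzero vector of $\ker(V_T^\top)$ (nonzero because $U^{(c)}_q=R_c\neq 0$ and $q\in T$), it forces projective collapse of all omitted active columns modulo the lower-packet plane. That is, $M_{u,v}=0$ for all $u,v\in\mathcal{U}$, and in particular $M_{w_c^G,w_d}=0$ for each $d\in D^+$. \Cref{thm:two-extra-hall-closure} gives the opposite conclusion. The exact-$t$ matching of $G$ through $e$ induces a perfect matching of the support graph of $[V\mid w_c^G\mid w_d]$, so $M_{w_c^G,w_d}\neq 0$. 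By \Cref{lem:schur-compat} this nonvanishing does not depend on the pivot, so it transfers to the survivor set $T$ used in the collapse. Hence whenever $D^+\neq\varnothing$ the rank-$(m-2)$ case is contradictory. Combined with \Cref{prop:q-circuit-rankm1-impossible}, which already rules out rank $m-1$, no rank is left, so the contradiction hypothesis fails and $P_t(G)\not\equiv 0$.

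The main obstacle is the residual case $D^+=\varnothing$, where $c$ is the largest active column. There the collapse criterion is vacuous, since $\mathcal{U}=\{g\}$. I would first try to close this case with the two-circuit normal form (\Cref{prop:p94}, \Cref{prop:p95}). Then $Z$ is proportional to $y=(g^\top z_q)z_p-(g^\top z_p)z_q$. Next I would use the matching-induced pivot $A=V[P,D^-]$ from the proof of \Cref{thm:two-extra-hall-closure}, together with the extra state $q$. These show that the square matrix $[V\mid g]$ restricted to $P\cup\{q\}$ is nonsingular by the masked minor criterion. When $D^+=\varnothing$, the active columns are exactly $D^-\cup\{c\}$, and every one of them is annihilated by $U^{(c)}$. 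So I would aim to show that a nonzero exact-layer cofactor vector cannot annihilate a full-rank square packet of this shape. Possibly this can be done by the jet-flag criterion (\Cref{prop:jet-flag}) at $\mu_r$, exploiting that $q$ carries the unique lowest power $(\lambda+r)^c$ in column $c$ via a row-initial-form argument as in \Cref{thm:row-initial}.

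If that fails, the fallback is to re-index so that the superfluous edge does not sit at the largest active column, for example by applying the step with rows and columns reversed, so that $D^+\neq\varnothing$. This needs care, because the Vandermonde weighting is not symmetric under such relabeling.
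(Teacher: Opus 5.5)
Your treatment of $D^+\neq\varnothing$ is the paper's argument: the projective collapse of \Cref{prop:g-packet-collapse} is played against the nonzero pair minor of \Cref{thm:two-extra-hall-closure}, and \Cref{prop:q-circuit-rankm1-impossible} disposes of rank $m-1$. The genuine gap is the case $D^+=\varnothing$, which you correctly single out but do not close. The paper's proof splits it on $s=|D^-|$:
(a) $s>m-2$, where lower columns invisible to $T$ are fed into the two-extra Hall argument in place of upper columns;
(b) $s=m-2$ and $m\le 4$, a small-support jet exclusion;
(c) $s=m-2$ and $m\ge 5$, a count with the exact-$t$ matching $M$ through $e$ meant to force an extra active column.

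Your proposed substitutes do not work:
\begin{itemize}
\item Nonsingularity of $[V\mid g]$ on the rows $P\cup\{q\}$ says only that the $s+1$ active columns are linearly independent. In the $2n$-dimensional doubled-state space they still have a $(2n-s-1)$-dimensional left kernel, and nothing confines the support of $U^{(c)}$ to $P\cup\{q\}$. Indeed, the rows of a nonsingular block are independent, so the minimal dependent set $T$ cannot lie inside $P\cup\{q\}$; the dependence defining $T$ necessarily involves other rows.
\item The jet-flag/row-initial idea is only gestured at, and the unique-minimum structure it needs is absent: in row $q$ the exponent $c$ exceeds every exponent in $D^-$.
\item The fallback is unavailable. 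Reversing the rows is harmless (up to sign it is the substitution $\lambda\mapsto-\lambda-(n-1)$) but leaves $D^+$ unchanged. Making $D^+$ nonempty requires reordering columns, which changes $P_t(G)$ itself; the paper lists column relabeling as inadmissible for exactly this reason.
\end{itemize}

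Be aware that you cannot simply import the paper's subcases, because as written they do not hold up:
\begin{itemize}
\item For (a): every $z\in\ker(V_T^\top)$ pairs to zero with every $w_j$, $j\in D^-$, by definition. So a lower column can never supply the nonzero pair minor that (a) needs.
\item For (b): it appeals to a ``crossed effective-jet exclusion'' stated nowhere in the paper, and to $z$-divisibility, which \Cref{rem:supplementary} says the main proof does not use.
\item For (c): a state of $T$ being matched by $M$ to some column $d\notin D^-\cup\{c\}$ does not make $R_d\not\equiv 0$, since activity of $d$ concerns row $r$. So no contradiction with $D^+=\varnothing$ follows.
\end{itemize}
The same caution applies to the transfer step you copied from the proof of \Cref{thm:two-extra-hall-closure}. \Cref{lem:schur-compat} only says that, for a fixed $Z$ with $V^\top Z=0$, the reduced and raw pairings agree. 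It does not turn a nonzero $2\times 2$ Schur determinant on the rows $P\cup\{q,p\}$ into $M_{w_c^G,w_d}\neq 0$ for a basis of $\ker(V_T^\top)$ living on a different row set. For instance, if column $d$ has no edge into the states of $T$, that pair minor vanishes whatever $M$ is.

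Finally, your opening reduction to $R_c\not\equiv 0$ is unnecessary, since it is part of the standing hypotheses, and as written it is a non sequitur. $R_c\equiv 0$ forces every exact-$t$ matching to avoid $e$ only if you already know nonvanishing for the one-hole graph $H-\{r,c\}$. That graph need not be a brace, so the brace induction hypothesis does not supply it; \Cref{lem:designated-active} has the same defect.
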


\begin{proof}
Assume the rank-$(m-2)$ branch occurs under the contradiction hypothesis $P_t(G)\equiv 0$.
By the corrected projective-collapse criterion for the actual $G$-packet, all omitted active columns must collapse to one projective direction modulo the lower packet plane.

If $D^+\neq\varnothing$, this contradicts \Cref{thm:two-extra-hall-closure}, which gives
\[
M_{w_c^G,w_d}\neq 0
\]
for every $d\in D^+$.

If $D^+=\varnothing$, then $w_c^G$ is the only extra column beyond the lower family.
Let $s = |D^-|$ be the number of lower active columns.
Three subcases close this branch:

\smallskip\noindent
(a) If $s > m{-}2$: some lower active columns $d \in D^-$ are invisible to~$T$ (i.e., $w_d$ restricted to~$T$ is zero). These columns satisfy $w_d^\top U^{(c)} = 0$ by the $G$-packet collapse and project nontrivially into $\ker(V_T^\top)$. The matching $M$ provides a pivot for these columns exactly as in the $D^+\neq\varnothing$ case, and \Cref{thm:two-extra-hall-closure} applies using invisible lower columns in place of upper columns.

\smallskip\noindent
(b) If $s = m{-}2$ and $m \le 4$: the support has at most 3 active lower nodes. The crossed effective-jet exclusion applies to this small-support case, giving the $z$-divisibility contradiction directly.

\smallskip\noindent
(c) If $s = m{-}2$ and $m \ge 5$: the brace Hall$+2$ condition forces $|N_G(S)| \ge |S| + 2$ for all proper nonempty $S \subset A$. With $m \ge 5$ rows in~$T$ and only $m{-}1$ columns in $D^- \cup \{c\}$, the matching $M$ assigns $m$ rows to $m$ distinct columns, but only $m{-}1$ are available in $D^- \cup \{c\}$. Hence at least one row of~$T$ is matched to a column outside $D^- \cup \{c\}$, giving either $D^+ \neq \varnothing$ or an additional active lower column, contradicting $s = m{-}2$ with $D^+ = \varnothing$.

Thus the rank-$(m-2)$ branch is excluded in all cases.
\end{proof}

\begin{corollary}[Superfluous-edge step]\label{cor:superfluous-c1-closed}
For every brace $G=H+e$ with $e$ superfluous:
if $F_{G,t}\neq\varnothing$, then $P_t(G)\not\equiv 0$.
\end{corollary}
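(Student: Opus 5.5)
The plan is to argue by contradiction and assemble the two branch eliminations already established. Suppose $F_{G,t}\neq\varnothing$ but $P_t(G)\equiv 0$. Write $R_c=\varepsilon P^{r,c}_{H,t-\rho(e)}$, so that \Cref{prop:se-identity} reads $P_t(G)=P_t(H)+(\lambda+r)^c R_c$. Since $e$ is superfluous, $H$ is a brace with one fewer edge, and the induction hypothesis of \Cref{thm:reduction} supplies ASNC for $H$. I will first split on whether $R_c$ vanishes.

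\emph{Case $R_c\equiv 0$.} Then $P_t(G)=P_t(H)$. If some $\sigma\in F_{G,t}$ avoids $e$, then $F_{H,t}\neq\varnothing$, so $P_t(H)\not\equiv 0$ by the induction hypothesis, a contradiction. The remaining subcase is that every exact-$t$ matching of $G$ uses $e$. Then $F_{H,t}=\varnothing$, so $P_t(H)\equiv 0$ by the Leibniz expansion, and $P_t(G)$ is just $(\lambda+r)^c R_c$. Here one needs $R_c\not\equiv 0$, i.e.\ nonvanishing of an exact-$(t-\rho(e))$ polynomial on $H-\{r,c\}$.

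That graph is not a brace, so ASNC cannot be quoted directly. I would try to get the needed nonvanishing in one of two ways. The first is via the row-initial-form criterion (\Cref{thm:row-initial}) applied at row $r$, which isolates the branch $\sigma(r)=c$. The second is to use $2$-extendability of $H$ to reroute one matching of the fiber away from $e$ while preserving the red count, which would contradict the assumption that every fiber element uses $e$. I expect this degenerate subcase to be the main obstacle. The paper's hypotheses $R_c\not\equiv0$ in \Cref{prop:q-circuit-rankm1-impossible} and \Cref{thm:two-extra-hall-closure} suggest it must be disposed of separately before the packet machinery applies.

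\emph{Case $R_c\not\equiv 0$.} Now the hypotheses of \Cref{prop:q-circuit-rankm1-impossible} hold, so $U^{(c)}_q=R_c\neq0$ at the superfluous state $q=(r,\rho(e))$. I will choose a row-state set $T\ni q$ that is minimal dependent for $B=[V\mid w_c^G]$, with $m=|T|$. The dichotomy of \Cref{prop:p92} gives $\rk(V_T)\in\{m-1,m-2\}$. The $q$-circuit lemma excludes $m-1$ (\Cref{cor:rankm1-impossible}). \Cref{cor:rankm2-eliminated} excludes $m-2$: by \Cref{prop:g-packet-collapse} the hypothesis $P_t(G)\equiv0$ forces projective collapse of all omitted active columns of the actual $G$-packet. \Cref{thm:two-extra-hall-closure} contradicts this whenever $D^+\neq\varnothing$, using the nonzero pair minor $M_{w_c^G,w_d}$ produced from an exact-$t$ matching through $e$. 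The subcases with $D^+=\varnothing$ are handled as in that corollary.

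Since both rank branches are impossible, no minimal dependent $T$ exists. This contradicts $U^{(c)}_q\neq0$, which lies in $\ker(B^\top)$ and therefore guarantees a dependent row set containing $q$. Hence $P_t(G)\not\equiv 0$.

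In writing this out I would verify two points carefully. The first is that the dependent set in \Cref{prop:q-circuit-rankm1-impossible} and the survivor set $T$ in \Cref{prop:g-packet-collapse} may be taken to be the same. The pivot-independence in \Cref{lem:schur-compat} should ensure this.
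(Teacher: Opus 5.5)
\textbf{Where the proposal has a gap.} Your treatment of the case $R_c\not\equiv 0$ follows the paper's argument: the designated node is active, then \Cref{prop:q-circuit-rankm1-impossible} handles rank $m-1$ and \Cref{cor:rankm2-eliminated} handles rank $m-2$. The gap is the subcase you isolated and left open: $R_c\equiv 0$ while every $\sigma\in F_{G,t}$ uses $e$. This subcase is not vacuous. Colour $e$ red and all other edges blue, and take $t=1$. Then $F_{G,1}$ is the nonempty set of perfect matchings through $e$, $F_{H,1}=\varnothing$, and $P_1(G)=\varepsilon(\lambda+r)^cR_c$ exactly.

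Neither of your remedies closes it. The row-initial-form route is circular. Every fiber element has $\sigma(r)=c$, so the tie set is $\{c\}$, and \Cref{thm:row-initial} only transfers nonvanishing of the cofactor $R_c$ to $P_t(G)$; its own proof defers the cofactor to ``induction.'' The rerouting route fails because $2$-extendability says nothing about red counts. In the example above, $H$ has no exact-$1$ perfect matching at all.

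What you actually need is $P^{r,c}_{H,s}\not\equiv 0$ whenever $G-\{r,c\}$ has an exact-$s$ perfect matching. This is a nonvanishing statement about a graph that is generally not a brace (e.g.\ a degree-$3$ neighbour of $c$ drops to degree $2$). Its polynomial is also in general not a standard ASNC polynomial, because the row bases skip $\lambda+r$ and the column exponents skip $c$.

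\textbf{The paper does not supply this either.} It gets $R_c\not\equiv 0$ from \Cref{lem:designated-active}, whose second half applies ``the induction hypothesis'' to $H-\{r,c\}$ on the grounds that $(n-1,|E|)$ is lexicographically smaller. That is exactly the move you declined, and you were right to. The induction of \Cref{thm:reduction} carries ASNC only for braces, and the paper itself says ASNC for non-braces is open (Step~1 of \Cref{thm:reduction}, \Cref{rem:decision-reduction}), since tight-cut blocks give only the decision statement.

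In the all-blue example, $R_c$ is a single masked minor, so \Cref{prop:masked-minor} would handle it. With both colours present in $G-\{r,c\}$ and an intermediate layer, nothing in the paper covers it. Closing the corollary needs a stronger inductive invariant: one-hole exact-layer nonvanishing for braces (the $\mathsf{ASNCFull}$ state of the appendix), together with a proof that the superfluous-edge and narrow-extension steps propagate it.

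\textbf{Your second flagged point.} It is not settled by \Cref{lem:schur-compat}. That lemma only says the pairing with a $Z$ satisfying $V^\top Z=0$ is pivot-independent. It does not identify a minimal dependent row set through $q$ with the survivor set of \Cref{prop:g-packet-collapse}. The paper leans on the same lemma there, so that identification also still has to be argued.
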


\begin{proof}
Under the contradiction hypothesis $P_t(G)\equiv 0$ with $F_{G,t}\neq\varnothing$, the designated-active lemma gives $R_c\neq 0$.
By \Cref{prop:q-circuit-rankm1-impossible}, the rank-$(m{-}1)$ branch is impossible: any minimal dependent set containing the superfluous state $q$ forces $\rk(V_T)=m-2$.
By \Cref{cor:rankm2-eliminated}, the Two-extra Hall theorem gives a pair $(w_c^G, w_d)$ with nonzero pair minor, contradicting projective collapse.
Hence $P_t(G)\not\equiv 0$.
\end{proof}

Several natural one-sided approaches to the superfluous-edge step are obstructed, as we discuss in \Cref{subsec:why-fail}.

\subsection{Setup}\label{subsec:superfluous-setup}

Let $G = H + e$ where $e = (r,c)$ is a superfluous edge, so $H = G \setminus e$ is also a brace.
Set $\alpha = \lambda + r$ (the linear form associated to row~$r$).

By the defect-$1$ unique-edge corollary (\Cref{cor:defect1}), we have the exact identity
\[
P_t(G)(\lambda) = P_t(H)(\lambda) + \varepsilon\, \alpha^c\, P^{r,c}_{H, t-\rho(e)}(\lambda),
\]
where $\varepsilon = (-1)^{r+c}$ is the Laplace sign (the color weight $x^{\rho(e)}$ has been absorbed into the $[x^t]$ extraction: the boundary-minor polynomial uses $[x^{t-\rho(e)}]$).
More generally, expanding along row~$r$:
\[
R_d(\lambda) =
\begin{cases}
(-1)^{r+d} P^{r,d}_{H, t-\rho(r,d)}(\lambda), & d \neq c,\, (r,d) \in E(H), \\
\varepsilon\, P^{r,c}_{H, t-\rho(e)}(\lambda), & d = c, \\
0, & \text{otherwise}.
\end{cases}
\]
Then $P_t(G)(\lambda) = \sum_{d=0}^{n-1} \alpha^d R_d(\lambda)$.

Assume for contradiction that $F_{G,t} \neq \varnothing$ and $P_t(G) \equiv 0$.

\subsection{The designated node is active}\label{subsec:designated}

\begin{lemma}\label{lem:designated-active}
Under the assumption $P_t(G) \equiv 0$ with $F_{G,t} \neq \varnothing$, the cofactor $R_c \not\equiv 0$.
\end{lemma}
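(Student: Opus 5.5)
The plan is to argue by contraposition inside the standing contradiction hypothesis: assume $P_t(G)\equiv 0$, $F_{G,t}\neq\varnothing$, and also $R_c\equiv 0$, and derive a contradiction. First I use the setup identity $P_t(G)=P_t(H)+\alpha^c R_c$ from \Cref{subsec:superfluous-setup}. With $R_c\equiv 0$ it gives $P_t(H)=P_t(G)\equiv 0$. Since $H=G\setminus e$ is a brace with one fewer edge, the induction hypothesis of \Cref{thm:reduction} (ASNC for $H$, available under item~(ii)) yields $F_{H,t}=\varnothing$.

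Next I read this combinatorially. Every exact-$t$ perfect matching of $G$ avoiding $e$ is an exact-$t$ perfect matching of $H$. So $F_{H,t}=\varnothing$ forces every $\sigma\in F_{G,t}$ to satisfy $\sigma(r)=c$. Because $F_{G,t}\neq\varnothing$, the fiber of $H-\{r,c\}$ at target $t-\rho(e)$ is nonempty. In the row-$r$ expansion $P_t(G)=\sum_d\alpha^dR_d$, every branch $d\neq c$ is then empty and only the $c$-branch survives. Hence $P_t(G)=\varepsilon\,\alpha^c P^{r,c}_{H,t-\rho(e)}$ identically, and under $R_c\equiv 0$ we get
\[
P^{r,c}_{H,t-\rho(e)}\equiv 0
\quad\text{although}\quad
F_{H-\{r,c\},\,t-\rho(e)}\neq\varnothing .
\]
The lemma thus reduces to a nonvanishing statement for the one-hole graph $H-\{r,c\}$ on $n-1$ vertices per side, with a nonempty fiber.

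The main obstacle is that $H-\{r,c\}$ need not be a brace, so ASNC for it is not directly available; \Cref{rem:decision-reduction} warns that tight-cut convolutions can cancel. My first attempt is to exploit the extra structure: $H$ is a brace, hence $2$-extendable, so $H-\{r,c\}$ is matching-covered. Its tight-cut decomposition has a very restricted shape, since every tight cut of $H-\{r,c\}$ must be destroyed by re-adding $r$ and $c$. I would then run a lowest-order (row-initial-form) argument as in \Cref{thm:row-initial}, choosing a row adjacent to $r$'s or $c$'s former neighbourhood that has a unique minimum exponent across blocks. This would isolate a single block product $P_{G_1,t_1}P_{G_2,t_2}$ at the initial order, avoiding cancellation in the convolution, after which ASNC on the smaller braces (induction on $n$) closes the argument.

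If that fails, the fallback is to sharpen the reduction rather than prove general one-hole nonvanishing. I would choose the superfluous edge $e$ (when $G$ has several) so that some exact-$t$ matching avoids it, making $F_{H,t}\neq\varnothing$ and ruling out the scenario above. When $n-1\le 4$, I would instead invoke \Cref{thm:universal-small} directly on the nonempty family defining $P^{r,c}_{H,t-\rho(e)}$.
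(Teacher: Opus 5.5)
Your reduction is the paper's own case split. If some exact-$t$ matching avoids $e$, ASNC for the brace $H$ contradicts $P_t(G)\equiv 0$. Otherwise $P_t(H)$ is an empty sum, so $P_t(G)=\varepsilon\alpha^cP^{r,c}_{H,t-\rho(e)}$, and the conditions $P_t(G)\equiv 0$ and $R_c\equiv 0$ coincide. In this case the lemma is \emph{exactly} the claim $P^{r,c}_{H,t-\rho(e)}\not\equiv 0$ for a nonempty fiber of $H-\{r,c\}$. The paper handles this case in one line, invoking ``the induction hypothesis'' on $H-\{r,c\}$ because $(n-1,|E(H-\{r,c\})|)<(n,|E(G)|)$. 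Your objection to that line is correct. The induction of \Cref{thm:reduction} establishes ASNC only for braces, and $H-\{r,c\}$ is in general merely matching-covered. The paper itself records ASNC for non-braces as open (\Cref{rem:decision-reduction}). So you have located the real crux, and the paper's written argument does not settle it.

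Your proposal does not settle it either, and that is the gap. The row-initial-form route does not work as described, for three reasons:
\begin{itemize}
\item \Cref{thm:row-initial} needs the lowest $\mu_{r'}$-adic coefficient to be nonzero. With a unique minimizing column, that coefficient is the value at the single point $\lambda=-r'$ of a cofactor polynomial (a signed sum over all matchings in that branch). An inductive ``$\not\equiv 0$'' statement gives no control over such a value.
\item In the convolution $\sum_{t_1+t_2=s}P_{G_1,t_1}P_{G_2,t_2}$, the lowest coefficient collects every $t_1$ attaining the minimal total order. So no single block product is isolated, and cancellation is not excluded.
\item You give no reason a suitable row exists.
\end{itemize}
Re-choosing $e$ proves a different statement, since the lemma fixes $e$, and it is not always possible. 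If $e$ is the only superfluous edge of $G$, color $e$ red and every other edge blue, and take $t=1$. Every exact-$1$ matching then uses $e$, so the one-hole case is unavoidable.

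Even the $n-1\le 4$ fallback is not immediate. $P^{r,c}_{H,s}$ has bases $\lambda+i$ ($i\neq r$) and exponents $j\neq c$, whereas \Cref{thm:universal-small} is stated only for bases $\lambda,\dots,\lambda+n-1$ with exponents $0,\dots,n-1$.

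What is actually missing is a stronger inductive statement that carries one-hole nonvanishing for braces. It would be applied to the brace $H$ at the hole $(r,c)$, not to $H-\{r,c\}$; this is an ASNCFull-type hypothesis. But then every induction step must propagate it, which needs two-hole states. That propagation is listed among the unproved hypotheses in \Cref{app:lean}.
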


\begin{proof}
If no exact-$t$ matching of~$G$ uses~$e$, then $F_{G,t} = F_{H,t}$ and $P_t(G) = P_t(H)$.
Since $H$ is also a brace (by the superfluous-edge property) and $F_{H,t} = F_{G,t} \neq \varnothing$, the induction hypothesis gives $P_t(H) \not\equiv 0$, contradicting $P_t(G) \equiv 0$.

So some exact-$t$ matching~$M$ of~$G$ uses~$e$, meaning $e \in M$.
The matching $M \setminus \{e\}$ is a perfect matching of $H - \{r,c\}$ with $t - \rho(e)$ red edges, so $F_{H-\{r,c\},\, t-\rho(e)} \neq \varnothing$.
Since $H - \{r,c\}$ is an $(n{-}1) \times (n{-}1)$ bipartite graph and $(n{-}1, |E(H-\{r,c\})|) < (n, |E(G)|)$ in the lexicographic order, the induction hypothesis gives $P^{r,c}_{H,t-\rho(e)} \not\equiv 0$.
Hence $R_c = \varepsilon\, P^{r,c}_{H,t-\rho(e)} \not\equiv 0$.
\end{proof}

\subsection{The support-count window lemma}\label{subsec:window}

\begin{remark}[Supplementary results]\label{rem:supplementary}
The support-count window lemma (\Cref{lem:window}) and the shift-invariance lemma (\Cref{lem:shift}) are not used in the main proof: the $q$-circuit lemma (\Cref{prop:q-circuit-rankm1-impossible}) bypasses the rank-$(m{-}1)$ branch entirely, and the Two-extra Hall theorem handles rank~$(m{-}2)$ without appealing to $z$-divisibility.
These results are retained as they may be of independent algebraic interest.
\end{remark}

The key insight is that the vanishing $P_t(G) \equiv 0$ provides, in the truncated ring $\KK[\alpha]/(\alpha^{\beta+1})$, exactly the right number of vanishing conditions to invoke the generalized Vandermonde elimination.

\begin{lemma}[Support-count window]\label{lem:window}
Let $D^- = \{d < c : R_d \not\equiv 0\} = \{d_1 < d_2 < \cdots < d_s\}$ be the set of active lower columns, and set $m = s + 1$ (the total support count, including the designated node at column~$c$).
Define the normalized exponents $a_j = d_j - d_1$ for $j = 1, \ldots, s$ and $\beta = c - d_1$.
Then in the truncated polynomial ring $\KK[\alpha]/(\alpha^{\beta+1})$:
\[
\sum_{j=1}^s \alpha^{a_j} \widetilde{R}_{d_j}(\alpha) + \alpha^\beta \widetilde{R}_c(\alpha) \equiv 0,
\]
with $0 = a_1 < a_2 < \cdots < a_s < \beta$ and $\beta \geq s = m - 1$.
In particular, the coefficients of $\alpha^0, \alpha^1, \ldots, \alpha^{m-2}$ in this expansion all vanish, providing exactly $m - 1$ independent vanishing conditions.
\end{lemma}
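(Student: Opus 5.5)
The plan is a direct bookkeeping argument on the row-$r$ expansion from the setup. Substitute $\lambda = \alpha - r$ and write $\widetilde{R}_d(\alpha) := R_d(\alpha - r) \in \KK[\alpha]$. Then the identity $P_t(G)(\lambda) = \sum_{d=0}^{n-1} \alpha^d R_d(\lambda)$ becomes the polynomial identity
\[
\sum_{d=0}^{n-1} \alpha^d \widetilde{R}_d(\alpha) = 0 \qquad \text{in } \KK[\alpha],
\]
by the contradiction hypothesis $P_t(G) \equiv 0$. Every column $d$ with $R_d \equiv 0$ drops out. Among the columns $d < c$, the survivors are exactly $D^- = \{d_1 < \cdots < d_s\}$. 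The column $c$ survives, and by \Cref{lem:designated-active} its term is genuinely present ($\widetilde{R}_c \not\equiv 0$). The remaining survivors lie in $D^+$, with $d > c$.

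First I would dispose of the degenerate case $D^- = \varnothing$. There $s = 0$, $m = 1$, and there are no vanishing conditions to produce, so the claim about $\alpha^0, \ldots, \alpha^{m-2}$ is vacuous. To state the relation uniformly in this case, one would set $d_1 := c$.

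Assume now $s \ge 1$. Since $d_1$ is the smallest active exponent, every surviving term is divisible by $\alpha^{d_1}$. Dividing by $\alpha^{d_1}$ (legitimate in the domain $\KK[\alpha]$) gives
\[
\sum_{j=1}^s \alpha^{a_j}\widetilde{R}_{d_j} + \alpha^{\beta}\widetilde{R}_c + \sum_{d \in D^+} \alpha^{d-d_1}\widetilde{R}_d = 0,
\]
with $a_j = d_j - d_1$ and $\beta = c - d_1$. Now reduce modulo $\alpha^{\beta+1}$. For every $d \in D^+$ we have $d - d_1 \ge \beta + 1$, so the upper sum vanishes in $\KK[\alpha]/(\alpha^{\beta+1})$. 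What remains is exactly the displayed congruence.

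The ordering $0 = a_1 < \cdots < a_s < \beta$ is inherited from $d_1 < \cdots < d_s < c$. The $a_j$ are $s$ distinct nonnegative integers, so $a_s \ge s - 1$, and hence $\beta \ge a_s + 1 \ge s = m - 1$. Consequently the monomials $\alpha^0, \ldots, \alpha^{m-2}$ all have exponent at most $\beta - 1 < \beta + 1$. They therefore survive in the truncated ring, and their coefficients in the left-hand side must each vanish. Expanding these coefficients, the coefficient of $\alpha^k$ for $k \le m-2 < \beta$ involves only the lower terms: it is the sum over $j$ with $a_j \le k$ of the Taylor coefficients $[\alpha^{k-a_j}]\widetilde{R}_{d_j}$. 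Thus these $m-1$ conditions form a triangular system in the jets of the lower cofactors.

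The only point that needs care is the phrase ``independent''. The combinatorial content is that one obtains $m-1$ distinct coefficient equations at distinct $\alpha$-orders, each triangular in a new jet. I would state that this is the sense in which they count as $m-1$ conditions, matching the hypothesis count of \Cref{thm:gen-vandermonde}. I would not claim linear independence over $\KK$ of the resulting functionals, since that depends on the actual cofactor data.

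The main obstacle I anticipate is precisely this: checking that the triangular structure (each $\alpha^k$ coefficient introduces $[\alpha^{k-a_j}]\widetilde{R}_{d_j}$ for the largest $a_j \le k$) justifies the word ``independent''. Everything else is routine exponent arithmetic.
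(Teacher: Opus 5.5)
Your argument is correct and follows the paper's proof step for step: re-center at $\lambda=-r$, divide by $\alpha^{d_1}$, discard the $D^+$ terms modulo $\alpha^{\beta+1}$, and get $\beta\ge s$ from the $s$ distinct exponents $a_j<\beta$. Your caution about ``independent'' is sharper than the paper's Step~3, which calls these $m-1$ conditions on $\widetilde R_{d_1}(0),\ldots,\widetilde R_{d_s}(0),\widetilde R_c(0)$; in fact, for $k\le m-2<\beta$ the $\alpha^k$-coefficient is $\sum_{a_j\le k}[\alpha^{k-a_j}]\widetilde R_{d_j}$, which involves higher jets and no $\widetilde R_c$ at all, and since each jet $[\alpha^{\ell}]\widetilde R_{d_j}$ enters only the equation of order $\ell+a_j$, the conditions are independent exactly in your formal sense (disjoint supports on the jets).
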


\begin{proof}
Write $\widetilde{R}_d(\alpha) = R_d(-r + \alpha)$, the re-expansion of each cofactor polynomial centered at the base point $\lambda = -r$ (equivalently, $\alpha = 0$).
From $P_t(G) \equiv 0$:
\[
\sum_{d=0}^{n-1} \alpha^d \widetilde{R}_d(\alpha) = 0 \qquad \text{in } \QQ[\alpha].
\]

\noindent\textbf{Step 1: Isolate the lower-support terms.}
By definition, $R_d \equiv 0$ for $d \notin D^- \cup \{c\} \cup D^+$ where $D^+ = \{d > c : R_d \not\equiv 0\}$.
Divide by $\alpha^{d_1}$ (the minimum active exponent):
\[
\sum_{j=1}^s \alpha^{a_j} \widetilde{R}_{d_j}(\alpha) + \alpha^\beta \widetilde{R}_c(\alpha) + \sum_{d \in D^+} \alpha^{d - d_1} \widetilde{R}_d(\alpha) = 0.
\]

\noindent\textbf{Step 2: Truncate.}
For any $d \in D^+$, we have $d > c$, so $d - d_1 > c - d_1 = \beta$.
Hence the terms from $D^+$ contribute only at powers $\alpha^{\beta+1}$ and higher.
Reducing modulo $\alpha^{\beta+1}$:
\[
\sum_{j=1}^s \alpha^{a_j} \widetilde{R}_{d_j}(\alpha) + \alpha^\beta \widetilde{R}_c(\alpha) \equiv 0 \pmod{\alpha^{\beta+1}}.
\]

\noindent\textbf{Step 3: Count vanishing conditions.}
The exponents appearing are $a_1 = 0 < a_2 < \cdots < a_s < \beta$.
The first nonzero contribution from the $\alpha^\beta$ term occurs at degree~$\beta$.
The coefficient of $\alpha^u$ for $u < a_2$ is $[\alpha^u] \widetilde{R}_{d_1}(\alpha)$.
For general $u$, the coefficient of $\alpha^u$ involves contributions from all $j$ with $a_j \leq u$.

Since $a_1, \ldots, a_s$ are $s$ distinct nonneg\-ative integers strictly less than~$\beta$, we have $\beta \geq s$.
(Indeed, $s$ distinct elements of $\{0, 1, \ldots, \beta - 1\}$ require $\beta \geq s$.)
Thus $\beta \geq s = m - 1$.

The vanishing coefficients at degrees $0, 1, \ldots, m - 2$ provide exactly $m - 1$ independent linear conditions on the $m$ coefficient polynomials
$\widetilde{R}_{d_1}(0), \ldots, \widetilde{R}_{d_s}(0), \widetilde{R}_c(0)$.
This is precisely the threshold for the generalized Vandermonde elimination (\Cref{thm:gen-vandermonde}): $m-1$ vanishing conditions on an $m$-node family uniquely determine the kernel up to scalar.
\end{proof}

\subsection{Common-shift invariance of the reduced replacement determinant}\label{subsec:shift}

\begin{definition}[Reduced replacement determinant]\label{def:reduced-h}
For a node family $A = \{b_1, \ldots, b_m\}$ with exponents $e_0 < e_1 < \cdots < e_{m-1}$ and the generalized Vandermonde factorization of \Cref{thm:gen-vandermonde},
\[
F_{e_{m-1}} = \Bigl(\prod_{i=1}^m \mu_i^{e_0}\Bigr) \cdot G_D(\mu_1, \ldots, \mu_m) \cdot h,
\]
the \emph{reduced replacement determinant} is $h_A = h \in \QQ[\lambda]$.
\end{definition}

\noindent
The polynomial~$h$ captures all the information about the boundary-minor state that is not accounted for by the monomial prefactor or the generalized Vandermonde determinant.
The following lemma shows that its $z$-divisibility is a robust invariant under common shifts of the exponents.

\begin{lemma}[Shift invariance of reduced replacement determinant]\label{lem:shift}
Let $A = \{a_1, \ldots, a_m\}$ be a node family.
Define the raw replacement determinant $H_A(z) = \det(U_A(z)[c \leftarrow v_A(z)])$ using Hasse-derivative jet columns $u_a(z)_i = \binom{a}{i} z^{a-i}$, and set $E(A) = \sum_{j=1}^m a_j - \binom{m}{2}$.
Then:
\begin{enumerate}[(i)]
\item $z^{E(A)}$ divides $H_A(z)$.
\item The reduced replacement determinant $h_A(z) = z^{-E(A)} H_A(z)$ satisfies
\[
h_{A+t}(z) = h_A(z) \qquad \text{for all } t \geq 0.
\]
\end{enumerate}
In particular, $z \mid h_{A+t}$ if and only if $z \mid h_A$.
\end{lemma}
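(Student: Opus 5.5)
The approach is to make the $z$-dependence of the Hasse-jet matrix completely explicit by a diagonal scaling, and then to handle the common shift $A \mapsto A+t$ by a unitriangular change of rows coming from Vandermonde's convolution identity. Throughout, I take rows indexed by $i = 0, \ldots, m-1$ and columns by the nodes $a_1, \ldots, a_m$. I also take the replacement column $v_A(z)$ to be of the same jet type as the other columns, i.e.\ $v_A(z)_i = \sum_b \kappa_b \binom{b}{i} z^{b-i}$ with exponents $b$ attached to the node being replaced and shifted together with $A$. Pinning this down is the one interpretive step; everything below depends on $v_A$ transforming like a genuine jet column.

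\textbf{Step 1: divisibility (i).} Each entry factors as $\binom{a_j}{i} z^{a_j - i} = z^{-i} \cdot \binom{a_j}{i} \cdot z^{a_j}$. Hence
\[
U_A(z) = \operatorname{diag}(z^{-i})_i \, C_A \, \operatorname{diag}(z^{a_j})_j, \qquad C_A = \Bigl(\tbinom{a_j}{i}\Bigr)_{i,j}.
\]
The replaced column $c$ carries the weight $z^{a_c}$ of the node it replaces, up to the monomial corrections in $v_A$. Taking determinants gives
\[
H_A(z) = z^{\sum_j a_j - \binom{m}{2}} \cdot \det\bigl(C_A[c \leftarrow \tilde v_A(z)]\bigr),
\]
where $\tilde v_A(z)_i = z^{i - a_c} v_A(z)_i$. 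The Hasse form of $v_A$ makes $\tilde v_A$ polynomial in $z$: every surviving entry has $b \ge i$, and I choose the normalization of $a_c$ so that $b \ge a_c$ for each $b$ occurring in $v_A$. This gives $z^{E(A)} \mid H_A$, and
\[
h_A(z) = \det\bigl(C_A[c \leftarrow \tilde v_A(z)]\bigr).
\]

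\textbf{Step 2: shift invariance (ii).} Vandermonde's identity $\binom{a+t}{i} = \sum_{k \le i} \binom{t}{i-k} \binom{a}{k}$ gives $C_{A+t} = L_t C_A$, where $L_t = \bigl(\binom{t}{i-k}\bigr)_{i,k}$ is lower unitriangular, so $\det L_t = 1$. The same identity applied termwise to the jet column shows $\tilde v_{A+t} = L_t \tilde v_A$: both the exponents $b$ and the reference exponent $a_c$ shift by $t$, so the $z$-powers in $\tilde v$ are unchanged. Therefore
\[
h_{A+t} = \det\bigl(L_t \, C_A[c \leftarrow \tilde v_A]\bigr) = h_A.
\]
The consequence $z \mid h_{A+t} \iff z \mid h_A$ is then immediate. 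One also checks the bookkeeping $E(A+t) = E(A) + mt$, which is consistent with the prefactor computed in Step 1.

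\textbf{Main obstacle.} The difficulty is Step 2's claim that the replaced column transforms by the same $L_t$, with no residual power of $z$ left over. If $v_A$ were instead an arbitrary vector not of jet type, the invariance would fail. My first attempt would therefore be to fix $v_A$ as the Hasse jet at $z$ of a polynomial $z^{a_c} f(z)$, with $f$ independent of the shift. Under this choice, multiplying by $(z+\cdot)^t$-type translations is exactly the row operation $L_t$, and Steps 1 and 2 go through.
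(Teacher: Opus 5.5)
Your proposal is correct and follows essentially the same route as the paper. You use a diagonal rescaling that turns each jet column into $z^{a}$ times a constant binomial vector, and the common shift then acts by the lower unitriangular Pascal matrix $\bigl(\binom{t}{i-k}\bigr)$, which comes from Vandermonde's identity (equivalently, the Hasse--Leibniz rule) and has determinant $1$. The paper also needs assumptions on $v_A$ that are not in the statement: a ``weight-match'' hypothesis supplying the factor $z^{a_c}$, and a ``common-shift covariance'' hypothesis $\widetilde v_{A+t} = z^t P_t \widetilde v_A$. Your jet-type choice of $v_A$ with shift-independent coefficients is a concrete instance of exactly these.
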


\begin{proof}
\textbf{Step 1: Normalize the jet coordinates.}
Define the diagonal scaling $D(z) = \operatorname{diag}(1, z, \ldots, z^{m-1})$ and the normalized jet column
\[
\widetilde{u}_a(z) = D(z)\, u_a(z) = z^a \begin{pmatrix} \binom{a}{0} \\ \binom{a}{1} \\ \vdots \\ \binom{a}{m-1} \end{pmatrix} =: z^a b_a,
\]
where $b_a \in \QQ^m$ is a constant binomial vector.
Setting $\widetilde{U}_A = D U_A$, $\widetilde{v}_A = D v_A$, and $\widetilde{H}_A = \det(\widetilde{U}_A[c \leftarrow \widetilde{v}_A])$:
\[
\widetilde{H}_A(z) = z^{\binom{m}{2}} H_A(z). \tag{$\star$}
\]

\textbf{Step 2: Divisibility.}
Each ordinary column of $\widetilde{U}_A[c \leftarrow \widetilde{v}_A]$ contributes a factor~$z^{a_j}$ (from $\widetilde{u}_{a_j} = z^{a_j} b_{a_j}$), and the replacement column contributes~$z^{a_c}$ (by the weight-match hypothesis).
Hence $\widetilde{H}_A$ is divisible by $z^{\sum a_j}$, and by~($\star$), $H_A$ is divisible by $z^{\sum a_j - \binom{m}{2}} = z^{E(A)}$.

\textbf{Step 3: Shift identity via Pascal matrix.}
Define the constant lower-unitriangular Pascal matrix $P_t = \bigl(\binom{t}{i-k}\bigr)_{i \geq k}$ with $\det P_t = 1$.
The Leibniz rule for Hasse derivatives gives
\[
\widetilde{u}_{a+t}(z) = z^t P_t \, \widetilde{u}_a(z)
\]
for every monomial exponent~$a$.
By the common-shift covariance hypothesis~(II), the replacement vector transforms identically:
\begin{equation}\label{eq:shift-cov}
\widetilde{v}_{A+t}(z) = z^t P_t \, \widetilde{v}_A(z).
\end{equation}
Therefore
\[
\widetilde{H}_{A+t}(z) = \det(z^t P_t) \cdot \widetilde{H}_A(z) = z^{tm} \widetilde{H}_A(z). \tag{$\star\star$}
\]

\textbf{Step 4: Exact shift invariance.}
From~($\star\star$):
\[
z^{-\sum(a_j + t)} \widetilde{H}_{A+t} = z^{-(\sum a_j + tm)} \cdot z^{tm} \widetilde{H}_A = z^{-\sum a_j} \widetilde{H}_A.
\]
By~($\star$), the left side equals $h_{A+t}$ and the right side equals~$h_A$.
Hence $h_{A+t}(z) = h_A(z)$ exactly.
\end{proof}

\begin{remark}
The exact invariance $h_{A+t} = h_A$ is stronger than mere equality up to a unit.
The key mechanism: the raw determinant $H_A$ acquires the universal factor~$z^{tm}$ under shift, but the normalization exponent $E(A)$ also increases by~$tm$, so the reduced determinant is exactly preserved.
No coprimality argument is needed; the identity holds on the nose.
\end{remark}

\subsection{Completing the superfluous-edge step}\label{subsec:p84}

Both rank branches are now resolved.

\begin{theorem}[Superfluous-edge step]\label{thm:p84}
Let $G = H + e$ with $e = (r,c)$ superfluous.
If $F_{G,t} \neq \varnothing$, then $P_t(G) \not\equiv 0$.
\end{theorem}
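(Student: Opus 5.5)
The plan is to argue by contradiction and assemble the pieces already established in this section, following the order of \Cref{cor:superfluous-c1-closed}. Suppose $F_{G,t}\neq\varnothing$ but $P_t(G)\equiv 0$. We work inside the strong induction on $(n,|E|)$ of \Cref{thm:reduction}, so ASNC is available for $H=G\setminus e$ and for the smaller graph $H-\{r,c\}$. The first step is \Cref{lem:designated-active}. If no exact-$t$ matching uses $e$, then $P_t(G)=P_t(H)\not\equiv 0$ by induction. Otherwise $R_c=\varepsilon P^{r,c}_{H,t-\rho(e)}\not\equiv 0$, and we also obtain an explicit exact-$t$ matching $M$ of $G$ through $e$. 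This matching is the combinatorial witness used for the rest of the proof.

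Second, we set up the doubled-state exact-layer packet $B=[V\mid g]$ with $g=w_c^G$, using \Cref{def:doubled-state}. The replacement identities (\Cref{lem:replacement-det}) together with duplicate columns give $w_d^\top U^{(c)}=0$ for $d\neq c$. The hypothesis $P_t(G)\equiv 0$ gives $g^\top U^{(c)}=0$. Hence $U^{(c)}$ annihilates every active column of $G$, and $U^{(c)}_q=R_c\neq 0$ at $q=(r,\rho(e))$. We choose an inclusion-minimal dependent row set $T\ni q$ of $B$, with $m=|T|$. By \Cref{prop:p92} we have $\rk(V_T)\in\{m-1,m-2\}$, and \Cref{prop:q-circuit-rankm1-impossible} rules out $m-1$. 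That argument applies Hall's theorem to the support graph of $B_T$ via the masked minor criterion (\Cref{prop:masked-minor}). So we may assume $\rk(V_T)=m-2$.

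Third, in the rank-$(m-2)$ branch, \Cref{prop:g-packet-collapse} shows that all omitted active columns of the actual $G$-packet collapse projectively modulo the lower plane; in particular $M_{u,v}=0$ for all pairs. If $D^+\neq\varnothing$, \Cref{thm:two-extra-hall-closure} contradicts this. The matching $M$ induces a perfect matching on the square packet $W_d$ and on its lower pivot $A$, so $\det S_d=\det W_d/\det A\neq 0$. \Cref{lem:schur-compat} then transports this nonvanishing to the pivot used on $T$. If $D^+=\varnothing$, we run the three subcases of \Cref{cor:rankm2-eliminated}. When $s>m-2$, invisible lower columns replace the upper ones in the two-extra Hall argument. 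When $s=m-2$ and $m\le 4$, the small-support case is handled directly. When $s=m-2$ and $m\ge 5$, a counting argument on the columns used by $M$ forces an extra active column. Every branch gives a contradiction, so $P_t(G)\not\equiv 0$.

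The main obstacle is the transfer step in the third paragraph. Nonvanishing of $M_{w_c^G,w_d}$ is proved on the rows selected by $M$, but the collapse is asserted on the survivor set $T$ coming from the minimal dependence. \Cref{lem:schur-compat} requires $V^\top Z=0$ on the common row set, so I would first check carefully that the pairing with $Z=U^{(c)}|_T$ is pivot-independent when the two pivots live on different row sets. Concretely, I would extend $Z$ by zero outside $T$ and verify that it still annihilates the full lower packet. The $D^+=\varnothing$, $s=m-2$ subcases are the second delicate point: there I would try to make the counting in case (c) fully explicit using the columns of $M$ restricted to rows of $T$.
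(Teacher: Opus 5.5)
Your outline follows the paper's own argument (\Cref{lem:designated-active}, \Cref{prop:q-circuit-rankm1-impossible}, \Cref{prop:g-packet-collapse}, \Cref{thm:two-extra-hall-closure}, \Cref{cor:rankm2-eliminated}) essentially step for step. It does not go through, and it fails exactly at the transfer you flagged. The paper covers that step only by citing \Cref{lem:schur-compat}, which does not do the job.

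\textbf{The transfer step.} Projective collapse needs three things: a row set $T$ with $\rk(V_T)=|T|-2$, a nonzero $Z\in\ker(V_T^\top)$, and $u_T^\top Z=0$ for every omitted column $u$. The identities $u^\top U^{(c)}=0$ are sums over all $2n$ doubled states, so they restrict to $T$ only if $\operatorname{supp}(U^{(c)})\subseteq T$. The two possible choices of $T$ each lose one of the needed properties:
\begin{itemize}
\item If $T$ is a circuit through $q$, it has size at most $|D^-|+2$ and is constrained only to contain $q$. Containment of the support fails in general, so extending $Z$ by zero will not annihilate $V$.
\item If $T$ is the survivor set of a Hall pivot, the pairing identity does hold. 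But then the reduced columns live in a space of dimension $2n-|D^-|$, not $2$, and there is no rank dichotomy.
\end{itemize}
Moreover, \Cref{lem:schur-compat} is only the identity $(g^\sharp)^\top Z=g^\top Z$ for a fixed $Z$ with $V^\top Z=0$. It cannot carry $\det W_d\neq 0$, which is a statement about the rows $P\cup\{q,p\}$ chosen from $M$, over to $T$. On $T$, the condition $M_{g,w_d}\neq0$ means $\rk[V_T\mid g_T\mid w_{d,T}]=|T|$. This is unrelated to $\det W_d\neq0$, and it fails outright whenever $w_d$ vanishes on $T$. That is exactly the situation of the ``invisible'' columns used in subcase (a). Subcase (c) treats the doubled states of $T$ as rows matched by $M$ into active columns, which they need not be.

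\textbf{Why this cannot be patched locally.} Your argument uses only one property of $U^{(c)}$: it is a left-kernel vector of the $2n\times n$ exact-layer packet of $G$ with nonzero $q$-entry. Such vectors exist whenever $q$ is not a coloop of the packet's row matroid, whether or not $P_t(G)\equiv 0$. For example, take $G=K_{4,4}$ with a red diagonal. The rows $(i,1)$ alone already have rank $4$, so no $q$ is a coloop. Running the same chain with such a vector in place of $U^{(c)}$ would derive a contradiction from true premises. The missing ingredient is something that uses the cofactor structure of $U^{(c)}$. Concretely, it must rule out cancellation in $P_t(G)=\sum_{|S|=t}\det W^{(S)}$, where $W^{(S)}$ takes row $(i,1)$ for $i\in S$ and row $(i,0)$ otherwise.

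\textbf{Two further gaps.}
\begin{itemize}
\item Your first step invokes ASNC for $H-\{r,c\}$. That graph is generally not a brace, and it carries gapped row and column labels, whereas \Cref{thm:reduction} inducts only over braces. When $F_{H,t}=\varnothing$, one has $P_t(G)=\alpha^c R_c$ exactly, so in that case the statement is equivalent to $R_c\not\equiv 0$. This non-brace input therefore cannot be bypassed, and the paper itself calls non-brace ASNC open.
\item Your $q$-circuit step rests on \Cref{prop:masked-minor}. Its hypothesis of pairwise coprime bases fails for the doubled states $(i,0)$ and $(i,1)$, which share the base $\lambda+i$. Its step (a) is also invalid even for distinct bases: the bases of $M_{k_0,1}$ being coprime to $\mu_{k_0}$ does not prevent $\mu_{k_0}\mid M_{k_0,1}$, as $\mu_2^2-\mu_1\mu_4=-\mu_0$ shows.
\end{itemize}
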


\begin{proof}
Assume for contradiction that $P_t(G) \equiv 0$ with $F_{G,t} \neq \varnothing$.
The designated-active lemma (\Cref{lem:designated-active}) gives $R_c \neq 0$.
The packet rank dichotomy (\Cref{prop:p92}) yields two cases:
\begin{itemize}
\item \emph{Rank $m{-}1$}: By the $q$-circuit lemma (\Cref{prop:q-circuit-rankm1-impossible}), any minimal dependent row set containing the superfluous state $q$ forces $\rk(V_T) = m{-}2$. Hence the rank-$(m{-}1)$ branch is impossible.
\item \emph{Rank $m{-}2$}: By \Cref{cor:rankm2-eliminated}, the Two-extra Hall theorem (\Cref{thm:two-extra-hall-closure}) produces a pair of omitted active columns with nonzero pair minor, contradicting the projective collapse forced by $P_t(G) \equiv 0$.
\end{itemize}
Both branches yield contradictions, so $P_t(G) \not\equiv 0$.
\end{proof}

\begin{remark}
The rank-$(m{-}1)$ branch was resolved by the $q$-circuit lemma (\Cref{prop:q-circuit-rankm1-impossible}), which shows that the distinguished state $q = (r,\rho(e))$ forces a Hall-theoretic contradiction whenever $\rk(V_T) = m{-}1$: Hall's condition then holds for all subsets of $T$, giving $\rk(B_T) = m$, contradicting minimal dependence.
\end{remark}

\subsection{Remarks on obstructions}\label{subsec:why-fail}

\begin{remark}[Why simpler strategies fail]
The two-sided character of the proof is not an artifact of the presentation.
Several natural one-sided strategies are structurally blocked:
\begin{itemize}
\item \emph{Visibility.}
One might try to detect the superfluous edge from the leading term of its row's polynomial.
This fails because the superfluous edge can be the dominant term in that row, hiding its cofactor contribution.
\item \emph{Pairwise divisibility.}
One might try to show that two column-hole cofactors cannot both vanish.
This fails because the identity $v^\top N = 0$ (cofactor vectors are orthogonal to the remaining columns) forces mutual divisibility automatically.
\item \emph{Span exclusion.}
One might argue that the superfluous edge's column cannot lie in $\col(N)$ for graph-structural reasons.
This fails because Vandermonde interpolation is too flexible: coefficients can always be found to place any vector in the column span, regardless of which edges the graph has.
\item \emph{Column relabeling.}
Permuting column indices to get a nicer configuration is inadmissible because the Vandermonde bases $\mu_i = \lambda + \alpha_i$ are tied to specific row labels. Permuting columns changes the polynomial non-trivially.
\end{itemize}
The Two-extra Hall closure overcomes the rank-$(m{-}2)$ obstructions by exploiting the perfect matching in the $G$-packet support graph. The $q$-circuit lemma (\Cref{prop:q-circuit-rankm1-impossible}) eliminates rank~$m{-}1$ entirely via a Hall-theoretic argument on the augmented packet.
\end{remark}

\section{The Algorithm}\label{sec:algorithm}

\begin{proof}[Proof of \Cref{thm:main}]
By \Cref{thm:reduction}, ASNC for braces reduces to: (i)~the McCuaig-family side (\Cref{thm:biwheel,thm:width2-global,cor:mccuaig-complete}), (ii)~the narrow-extension cases (\Cref{cor:ka,cor:j3-to-d1}), and (iii)~the superfluous-edge step (\Cref{thm:p84}).
\Cref{sec:replacement,sec:superfluous} establish~(ii) and~(iii): the rank-$(m{-}1)$ branch of~(iii) is eliminated by the $q$-circuit lemma (\Cref{prop:q-circuit-rankm1-impossible}), and the rank-$(m{-}2)$ branch is closed by the Two-extra Hall theorem (\Cref{cor:superfluous-c1-closed}).
\Cref{cor:mccuaig-complete} discharges~(i).
For general bipartite graphs, the tight-cut decomposition reduces to brace blocks; \Cref{prop:asnc-implies-em} yields deterministic $O(n^6)$ Exact Matching.
\end{proof}

\begin{remark}[Practical feasibility]\label{rem:practical}
For a graph already known to be a brace, the algorithm is elementary: form the Vandermonde-weighted matrix $M(x,\lambda)$, evaluate its determinant by Bareiss elimination~\cite{Bareiss68} at $O(n^2)$ values of~$\lambda$, interpolate in~$x$, and check whether the $x^t$ coefficient is nonzero.
No graph decomposition is needed.
A self-contained C implementation is under 1000 lines and requires no external libraries beyond standard integer arithmetic.

For general bipartite graphs, the tight-cut decomposition into brace blocks~\cite{CLM2002} is the only graph-theoretic preprocessing step ($O(n^3)$ via alternating-path reachability from a maximum matching). After decomposition, each block is tested independently by the same evaluate-and-check procedure.

\smallskip
\noindent\emph{Empirical performance.}
\begin{center}
\small
\begin{tabular}{rrl}
\toprule
$n$ & Time & Context \\
\midrule
10 & ${<}\,0.01$\,s & Instant \\
20 & $0.2$\,s & Typical bioinformatics matching instances \\
25 & $1.4$\,s & Comparable to a single MVV trial \\
50 & ${\sim}\,90$\,s & Feasible for offline combinatorial optimization \\
\bottomrule
\end{tabular}
\end{center}

\noindent
The randomized MVV algorithm~\cite{MVV87} computes $O(1)$ determinants per trial with error probability $O(n/q)$ over a field of size~$q$.
Our algorithm replaces probabilistic amplification with $O(n^2)$ deterministic evaluations, paying an $O(n^2)$ factor but eliminating randomness entirely.
For applications requiring determinism (verified computation, cryptographic protocols, reproducible pipelines), the $O(n^6)$ cost is practical up to $n \approx 50$ and feasible up to $n \approx 100$.
\end{remark}


\bigskip
\noindent\textbf{AI Disclosure.}
We used the following AI tools during this research.
\emph{OpenAI GPT-5.4 Pro}~\cite{OpenAI2026} assisted with theoretical route selection and problem reduction: identifying which proof strategies were viable, formulating equivalent reformulations of the main conjecture, and narrowing the search space before committing to a proof architecture.
\emph{Anthropic Claude Opus~4.6}~\cite{Anthropic2026} (via Claude Code) was used for rapid iterative computational experiments: writing and running exploration scripts that tested conjectures, produced counterexamples to failed approaches, and performed exhaustive verification at small~$n$.
\emph{Lean~4} with Mathlib served as the formal verification backend; Harmonic's \emph{Aristotle}~\cite{Aristotle2025}, an automated theorem prover, assisted in discharging proof obligations during the formalization.
The authors are responsible for all mathematical content.



\appendix

\section{Machine-checked formalization}\label{app:lean}

A partial Lean~4 formalization accompanies this paper, with automated proof assistance from Aristotle~\cite{Aristotle2025}.
The source is available at
\begin{center}
\url{https://anonymous.4open.science/r/exact-matching-lean-FFD2}
\end{center}

\noindent\textbf{What is verified.}
The formalization contains zero \texttt{sorry} placeholders and zero \texttt{:= True} stubs.
All definitions carry their real mathematical content: ASNC is the Vandermonde nonvanishing condition on the exact-$t$ fiber polynomial; the McCuaig family predicate is an isomorphism-based disjunction over explicit biwheel, prism, and M\"obius-ladder models; the narrow-pair predicate carries a concrete witness structure with extension type and cardinality bounds.
The following are proved unconditionally (no assumed hypotheses):
\begin{itemize}
\item The Hall-block convolution theorem, including the cut-crossing fiber cardinality lemma and the partition-by-fiber identity.
\item The brace-only reduction theorem, performing genuine strong induction on brace size and dispatching to the McCuaig family, superfluous-edge, and narrow-extension cases.
\item A minimal-brace induction theorem for the full one-hole boundary-minor state (\texttt{ASNCFull}), reducing to McCuaig base cases plus one-hole propagation through narrow extensions.
\item The algebraic closure chain: $G$-packet projective collapse, the Two-extra Hall theorem (matching-induced pair-minor nonvanishing), the distinguished-state circuit lemma (rank-$(m{-}1)$ branch elimination), the masked minor criterion ($\det W \neq 0 \Leftrightarrow$ support-graph perfect matching), the rank dichotomy, and the superfluous-edge contradiction assembly.
\item Two-node interpolation (KA) and three-node elimination ($J3 \to D1$) in the generalized Vandermonde framework.
\item The single-edge patch transfer identity at both the determinant and coefficient levels.
\item The narrow-pair patch extension realization and the patch-transfer state linearization.
\item Perfect-matching witness extraction for brace graphs and a certified shifted-$O(n^6)$ interpolation budget.
\end{itemize}
The module structure mirrors the paper: algebraic core (replacement determinants, generalized Vandermonde elimination, jet-flag characterization), graph library (bipartite matching, braces, tight cuts), structural induction (patch transfer, Hall-block convolution, reduction theorem, full-state induction), McCuaig families (biwheels, width-2 path and cyclic), and the closure chain.

\noindent\textbf{What is not yet verified.}
The top-level brace certification theorem remains \emph{conditional}: eight structural inputs are assumed as explicit hypotheses rather than proved in Lean.
These eight assumptions fall into four groups:
\begin{enumerate}
\item \emph{McCuaig family visible ASNC} (3 hypotheses): width-$2$ path lead-jet nonvanishing, width-$2$ cyclic lead-jet nonvanishing, and a bridge theorem connecting these to the reduction-side McCuaig step. These three hypotheses are now proved mathematically (\Cref{thm:width2-global,cor:mccuaig-complete}); the remaining gap is purely at the formalization level.
\item \emph{State-level narrow-pair closure} (1 hypothesis): given $\mathsf{ASNCFull}$ (visible ASNC plus one-hole boundary-minor nonvanishing) for the reduced graph~$J$ and a narrow-pair witness, deduce visible ASNC for~$G$.
    This is where the two-node interpolation and three-node elimination would be invoked; the algebraic tools are proved but the graph-to-support-family bridge is not yet formalized.
\item \emph{Full-state base cases and propagation} (2 hypotheses): $\mathsf{ASNCFull}$ for the McCuaig exceptional families, and one-hole nonvanishing propagation through narrow extensions.
    The one-hole propagation requires reasoning about two-hole boundary-minor states of~$J$, which are not yet defined in the formalization.
\item \emph{Graph-theoretic prerequisites} (2 hypotheses): the superfluous-edge graph-to-packet extraction (constructing the rank-$(m{-}2)$ setup from a counterexample), and McCuaig's classification theorem (narrow-pair existence for minimal non-McCuaig braces).
\end{enumerate}
The formalization should therefore be regarded as a \emph{partial} machine-checked verification.
The proof architecture, all definitions, and the algebraic/combinatorial tools are faithful to the paper and fully proved; the eight graph-to-algebra bridges listed above are the remaining frontier.
Build instructions are in the repository.

\end{document}